\date{}
\theoremstyle{plain}
  \newtheorem{theorem}{Theorem}
  \newtheorem{corollary}[theorem]{Corollary}
  \newtheorem{observation}[theorem]{Observation}
  \newtheorem{lemma}[theorem]{Lemma}
  \newtheorem{proposition}[theorem]{Proposition}
  \newtheorem{claim}{Claim}[theorem]
  \newcounter{theoremspecial}
  \newtheorem{claimspecial}{Claim}[theoremspecial]
  \newtheorem*{theorem*}{Theorem}
  \newtheorem*{corollary*}{Corollary}
  \newtheorem*{lemma*}{Lemma}
  \newtheorem*{proposition*}{Proposition}
  \newtheorem*{claim*}{Claim}
\theoremstyle{definition}
  \newtheorem{definition}[theorem]{Definition}
  \newtheorem{example}[theorem]{Example}
  \newtheorem{property}[theorem]{Property}
  \newtheorem*{definition*}{Definition}
  \newtheorem*{example*}{Example}
  \newtheorem*{question*}{Question}
  \newtheorem*{conjecture*}{Conjecture}
  \newtheorem*{remark*}{Remark}
\newenvironment{claimproof}
{\noindent {\em Proof of claim:} }
{\hfill $\diamond$ \smallskip}
\DeclarePairedDelimiter\abs{\lvert}{\rvert}
\DeclarePairedDelimiter\norm{\lVert}{\rVert}
\mathchardef\mhyphen="2D
\newcommand*{\integers}{\mathbb{Z}}
\newcommand*{\rationals}{\mathbb{Q}}
\newcommand*{\naturals}{\mathbb{N}}
\newcommand*{\range}[2]{\ensuremath{\{#1,\dots,#2\}}}
\newcommand{\bigoh}{\mathcal{O}}
\newcommand{\cc}[1]{{\mbox{\textnormal{\textsf{#1}}}}\xspace}
\renewcommand{\P}{\cc{P}}
\newcommand{\NP}{\cc{NP}}
\newcommand{\FPT}{\cc{FPT}}
\newcommand{\Weft}{{\cc{W}}}
\newcommand{\W}[1]{{\Weft}{{[#1]}}}
\DeclareMathOperator\supp{supp}
\DeclareMathOperator\lcm{lcm}
\newcommand{\pbDef}[3]{%Ff
  \noindent
  \begin{center}
  \begin{boxedminipage}{0.98 \columnwidth}
  {\sc #1}\\[5pt]
  \begin{tabular}{l p{0.70 \columnwidth}}
  {\sc Instance}: & #2\\
  {\sc Question}: & #3
  \end{tabular}
  \end{boxedminipage}
  \end{center}
}
\newcommand{\pbDefP}[4]{%
  \noindent
  \begin{center}
  \begin{boxedminipage}{0.98 \columnwidth}
  {\sc #1}\\[5pt]
  \begin{tabular}{l p{0.70 \columnwidth}}
  {\sc Instance}: & #2\\
  {\sc Parameter:} & #3\\
  {\sc Question}: & #4
  \end{tabular}
  \end{boxedminipage}
  \end{center}
}
\newcommand*{\csp}[1]{\ensuremath{\textsc{CSP}(#1)}}
\newcommand*{\mincsp}[1]{\ensuremath{\textsc{MinCSP}(#1)}}
\newcommand*{\lin}[2]{\textsc{\ensuremath{#1}-Lin(\ensuremath{#2})}}
\newcommand*{\minlin}[2]{\textsc{Min-\ensuremath{#1}-Lin(\ensuremath{#2})}}
\newcommand*{\mintwolin}{\textsc{Min-2-Lin}}
\newcommand*{\twolin}{\textsc{2-Lin}}
\newcommand*{\F}{\mathbb{F}\xspace}
\newcommand*{\G}{\mathbb{G}\xspace}
\newcommand*{\D}{\mathbb{D}\xspace}
\newcommand*{\K}{\mathbb{K}\xspace}
\newcommand*{\B}{\mathcal{B}\xspace}
\newcommand*{\cG}{\mathcal{G}\xspace}
\newcommand*{\cH}{\mathcal{H}\xspace}
\newcommand*{\cF}{\mathcal{F}\xspace}
\newcommand*{\cP}{\mathcal{P}\xspace}
\newcommand*{\STS}{\textnormal{star}}
\newcommand*{\DML}{\textnormal{DML}\xspace}
\newcommand*{\BGC}{\textnormal{BGC}\xspace}
\newcommand*{\RBGC}{\textnormal{RBGC}\xspace}
\newcommand*{\RBGCE}{\textnormal{RBGCE}\xspace}
\newcommand*{\PPC}{\textsc{Pair Partition Cut}\xspace}
\title{Almost Consistent Systems of Linear Equations}
\author{
Konrad K. Dabrowski\thanks{Newcastle University, UK, \texttt{konrad.dabrowski@newcastle.ac.uk}} \and
Peter Jonsson\thanks{Link{\"o}ping University, Sweden, \texttt{peter.jonsson@liu.se}} \and
Sebastian Ordyniak\thanks{University of Leeds, UK, \texttt{sordyniak@gmail.com}} \and
George Osipov\thanks{Link{\"o}ping University, Sweden, \texttt{george.osipov@pm.me}} \and
Magnus Wahlstr{\"o}m\thanks{Royal Holloway, University of London, UK, \texttt{Magnus.Wahlstrom@rhul.ac.uk}}
}
\date{\today}
\begin{document}

\maketitle

\begin{abstract}
Checking whether a system of linear equations is consistent is a basic computational problem with ubiquitous applications.
When dealing with inconsistent systems, one may seek an assignment that minimizes the number of unsatisfied equations.
This problem is NP-hard and UGC-hard to approximate within any constant even for two-variable equations over the two-element field.
We study this problem from the point of view of parameterized complexity, with the parameter being the number of unsatisfied equations. 
We consider equations defined over Euclidean domains---a family of commutative rings that generalize finite and infinite fields including the rationals, the ring of integers, and many other structures.
We show that if every equation contains at most two variables, the problem is fixed-parameter tractable.
This generalizes many eminent graph separation problems such as Bipartization, Multiway Cut and Multicut parameterized by the size of the cutset. 
To complement this, we show that the problem is W[1]-hard when three or more variables are allowed in an equation, as well as for many commutative rings that are not Euclidean domains.
On the technical side, we introduce the notion of important balanced subgraphs, 
generalizing important separators of Marx [Theor. Comput. Sci. 2006] to the setting of biased graphs.
Furthermore, we use recent results on parameterized MinCSP [Kim et al.,
SODA 2021] to efficiently solve a generalization of Multicut with
disjunctive cut requests.
\end{abstract}

\newpage

\tableofcontents

\newpage

\section{Introduction}

Algorithms for systems of linear equations 
have been studied since ancient times~\cite{grcar2011ordinary}.
As H{\aa}stad~\cite{haastad2001some} aptly remarks, for computer science
``[t]his problem is in many regards as basic as satisfiability''.
Well-known methods like Gaussian elimination can recognize
and solve consistent systems of equations.
However, these methods are not well suited for dealing with inconsistent systems.
In the optimization version of the problem called 
\textsc{MaxLin} one seeks an assignment 
maximizing the number of satisfied equations.
In its dual, called \textsc{MinLin}, the objective is to minimize
the number of unsatisfied equations.
Both \textsc{MaxLin} and \textsc{MinLin} remain \NP-hard in severely
restricted settings, which has motivated an extensive study
of approximation algorithms for these problems.
However, the problems resist approximation within reasonable bounds:
in particular, \textsc{MinLin} over the two-element field
restricted to equations with at most two variables
is not approximable within any constant factor under the 
Unique Games Conjecture
(UGC)---in fact, it has been suggested that
constant-factor inappoximability of this version of \textsc{MinLin}
may be equivalent to UGC~(see Definition~3~in~\cite{khot2016candidate} and the following discussion).
This motivates exploring other approaches to resolving inconsistent systems.

Crowston et al.~\cite{crowston2013parameterized} initiated the study of the parameterized complexity of 
\textsc{MinLin} with the parameter $k$ being the number of unsatisfied equations.
They focus on systems over the two-element field and prove that 
when every equation has at most two variables, the problem admits a 
$\bigoh^*(2^k)$~\footnote{$\bigoh^*$ hides polynomial factors in the bit-size of the instance.}  
algorithm, while allowing three or more variables in an equation leads to \W{1}-hardness.
This rules out the existence of an algorithm for this problem running in $\bigoh^*(f(k))$ time
for any computable function $f$ under the standard assumption $\FPT \neq \W{1}$.
In this paper we substantially extend the study of the parameterized complexity of \textsc{MinLin}
by considering equations over commutative rings.
Thus, we study \emph{Euclidean domains}, 
which include the finite fields $\F_q$, infinite fields such as the rationals $\rationals$, 
 the ring of integers $\integers$, the Gaussian integers $\integers[i]$,
 the ring of polynomials~$\F[x]$ over a field~$\F$, and many more structures.
Perhaps unsurprisingly, we show that with 
three or more variables per equation,
the problem over Euclidean domains is \W{1}-hard (in fact, the hardness proof only uses coefficients $0$, $1$ and $-1$,
so the result holds for equations over any abelian group).
On the other hand, $\mintwolin$, where each equation has two variables 
turns out to be much more interesting: the problem is fixed-parameter
tractable for every effective Euclidean domain, i.e. those that admit representations such that 
the basic operations are
polynomial-time computable and multiplication is well behaved
(see Property~\ref{property:edom-product} for details).
Note that asking about the parameterized complexity of $\mintwolin$ over a domain
only makes sense if the problem of checking consistency of a system is not \NP-hard
(otherwise, the problem is intractable even for $k=0$).
This is where the effectiveness of Euclidean domains becomes important.
To the best of our knowledge, there are no published algorithms for solving systems of equations
over Euclidean domains in the literature even for the special case with at most two variables per equation.
Thus, we develop methods for checking consistency of such systems in Section~\ref{sec:edom-algorithm}.
These methods form the underpinning of our fpt algorithm for \mintwolin{} over Euclidean domains.

\begin{table}
    \centering
    \begin{tabular}{| l | l | l | l |}
    \hline
    Problem & Solution & Method & Reduces to \\
    \hline 
    \hline
    \textsc{Bipartization} & \cite{reed2004finding} in 2004 & Iterative compression & $\minlin{2}{\F_2}$ \\
    \hline
    \textsc{$q$-Multiway Cut} & \cite{marx2006parameterized} in 2006 & Important separators (IS) & \minlin{2}{\F_q} \\
    \hline
    \textsc{Multiway Cut} & \cite{marx2006parameterized} in 2006 & Important separators & \minlin{2}{\rationals} \\
    \hline
    \textsc{Multicut} & \cite{bousquet2018multicut}, \cite{marx2014fixed}$^\dag$ in 2011 & $^\dag$Random sampling of IS & \minlin{2}{\integers} \\
    \hline
    \end{tabular}
    \caption{Graph separation problems related to $\mintwolin$.}
    \label{tab:problems}
\end{table}

\medskip \noindent \textbf{Background.}
We start with a few basic definitions.
Let $\D=(D;+,\cdot)$ denote a commutative ring.
An expression $c_1 \cdot x_1+\dots+c_r \cdot x_r=c$ is a {\em (linear) equation over} $\D$ if $c_1,\dots,c_r,c \in D$ and $x_1,\dots,x_r$ are variables with domain $D$.
Let $S$ denote a set (or equivalently a system) of equations over $\D$.
We let $V(S)$ denote the variables appearing in $S$, and we
say that $S$ is {\em consistent} if there is an assignment
$\varphi : V(S) \rightarrow D$
that satisfies all equations in $S$. 
An instance of the computational problem $\lin{r}{\D}$
is a system $S$ of equations in $r$ variables
over $\D$, and the question is whether $S$ is consistent.
To assign positive integer weights to the elements of any set $Y$, 
we use a weight function $w : Y \rightarrow \naturals^+$
and write $w(X)$ for any subset $X \subseteq Y$
as a shorthand for $\sum_{e \in X} w(e)$.
The following is the main computational problem in this paper.

\pbDefP{$\minlin{r}{\D}$}
{An instance $S$ of $\lin{r}{\D}$, a weight function 
$w : S \rightarrow \naturals^+$, and an integer $k$.}
{$k$.}
{Is there a set $Z \subseteq S$ such that $S - Z$ is consistent and 
$w(Z) \leq k$?}

Crowston et al.~\cite{crowston2013parameterized} studied
the problem $\minlin{r}{\F_2}$ and prove that
$\minlin{2}{\F_2}$ is in \FPT.
However, their methods do not seem sufficient to solve $\mintwolin$ over structures larger than $\F_2$.
As a possible explanation and additional motivation,
we note that $\mintwolin$ over $\F_2$, $\F_q$, $\rationals$ and $\integers$
generalize well-known graph separation problems that have served as milestones
for the development of parameterized algorithms:
these are \textsc{Bipartization},
\textsc{$q$-Terminal Multiway Cut}, 
\textsc{(General) Multiway Cut} and 
\textsc{Multicut}, respectively
(see Table~\ref{tab:problems} for a short summary of the progress).

In \textsc{Bipartization}, given a graph $G$ and an integer $k$, 
the goal is to remove at most $k$ edges from the graph to make it bipartite.
To reduce to $\minlin{2}{\F_2}$, create a variable for every vertex and
add an equation $x - y = 1$ for every edge $\{x,y\}$ in $G$.
The parameterized complexity status of this problem was resolved
by Reed~et~al.~\cite{reed2004finding} using the newly introduced method of
\emph{iterative compression}, which has since become a common opening
of fpt algorithms including those presented in this paper 
(see Chapter~4~in~\cite{cygan2015book} for many more examples).

In \textsc{$q$-Terminal Multiway Cut}, 
given a graph $G$, a set of $q$ vertices $t_1,\dots,t_q$ called terminals, 
and an integer $k$,
the goal is to remove at most $k$ edges from $G$ 
to separate the terminals into distinct connected components.
The problem is in \P for $q = 2$ and \NP-hard for $q \geq 3$.
The reduction to $\minlin{2}{\F_q}$ works 
by introducing an equality $x = y$ for every edge $\{x,y\}$ in $G$,
and assigning a distinct field element $\alpha_i$ to every terminal $t_i$
by adding equation $t_i = \alpha_i$
with weight $k + 1$ (prohibiting its deletion).
Note that the construction above does not work if there are more than $q$ terminals.
This limitation does not arise over infinite fields, 
so $\minlin{2}{\rationals}$ generalizes \textsc{Multiway Cut} with 
arbitrarily many terminals.
Marx~\cite{marx2006parameterized} presented the first fpt algorithm 
for \textsc{Multiway Cut} based on \emph{important separators}. 
This work was followed by a string of 
improvements~\cite{chen2009improved,xiao2010simple} 
including the approach based on
linear programming~\cite{cygan2013multiway,guillemot2011fpt} 
that is especially relevant to our work.

In \textsc{Multicut}, given an graph $G$,
a set of $m$ cut requests $(s_1,t_1),\dots,(s_m,t_m)$,
and an integer $k$, the goal is to remove at most $k$
edges from $G$ to separate $s_i$ from $t_i$ for all $i$.
This problem clearly generalizes \textsc{Multiway Cut}:
a reduction may introduce a request for every pair of terminals.
In turn, $\minlin{2}{\integers}$ generalizes it as follows:
add an equation $x = y$ for every edge $\{x,y\}$ in $G$;
then, for every pair of terminals $(s_i, t_i)$, 
introduce two new variables $s'_i$ and $t'_i$, and 
add two equations $s_i = p_i s'_i$ and 
$t_i = p_i t'_i + 1$, where 
$p_i$ is the $i$th prime number. 
Clearly, no path connecting $s_i$ and $t_i$ 
may exist in a consistent subset of equations, 
since this would imply a contradiction (different 
remainders modulo $p_i$). 
Moreover, if all cut requests are fulfilled, 
a satisfying assignment can be obtained 
by applying the Chinese Remainder Theorem in each component.
The parameterized complexity status of \textsc{Multicut} was 
resolved simultaneously by Bousquet et al.~\cite{bousquet2018multicut} 
and Marx~and~Razgon~\cite{marx2014fixed}. 
The latter introduced the method of 
\emph{random sampling of important separators},
also known as \emph{shadow removal}.

Another problem related to $\mintwolin$ is 
\textsc{Unique Label Cover} 
(ULC)~\cite{ChitnisCHPP16contract,iwata2016half,khot2002power}.
In ULC over an alphabet $\Sigma$, 
the input is a set of constraints of the form $\pi(x) = y$, 
where $x$ and $y$ are variables and $\pi$ is a permutation of $\Sigma$. 
Constraints are consistent if there is an assignment of values from $\Sigma$ 
to the variables that satisfies all constraints. 
The question is whether the input set can be made consistent 
by removing at most $k$ constraints. 
ULC lies at the heart of the Unique Games Conjecture. 
In the realm of parameterized complexity, it is known 
that ULC is fixed-parameter tractable when parameterized by 
$k + \abs{\Sigma}$, but \W{1}-hard when parameterized by $k$ alone.
To connect this problem with $\mintwolin$, consider for example
a field $\F$ and an equation $ax + by = c$ with $a, b, c \in \F$. 
For every value of $y$ there is exactly one value of $x$ 
that satisfies this equation. 
Thus, any equation is equivalent to a permutation constraint over $\F$, 
and ULC generalizes $\minlin{2}{\F}$. 
As an immediate consequence, observe that $\minlin{2}{\F_q}$ is in \FPT 
for all finite fields $\F_q$, and can be solved in
$\bigoh^*(q^{2k})$ time using the best known algorithm for ULC~\cite{iwata2016half,iwata201801all}. 
On the other hand, ULC is strictly more general than $\minlin{2}{\F}$: 
Consider $\Sigma = \{0,1,2,3,4\}$ and a permutation 
$\pi$ mapping $(0,1,2,3,4)$ to $(1,0,3,2,4)$ 
(see Figure~\ref{fig:permutation}). 
It is easy to see that no linear equation over~$\F_5$ defines this permutation. 

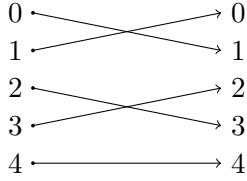
\begin{figure}[tb]
\centering
\begin{tikzpicture}[scale=0.5]
\filldraw[black] (0,0) circle (1pt) node[anchor=east]{4};
\filldraw[black] (0,1) circle (1pt) node[anchor=east]{3};
\filldraw[black] (0,2) circle (1pt) node[anchor=east]{2};
\filldraw[black] (0,3) circle (1pt) node[anchor=east]{1};
\filldraw[black] (0,4) circle (1pt) node[anchor=east]{0};

\filldraw[black] (5,0) circle (0pt) node[anchor=west]{4};
\filldraw[black] (5,1) circle (0pt) node[anchor=west]{3};
\filldraw[black] (5,2) circle (0pt) node[anchor=west]{2};
\filldraw[black] (5,3) circle (0pt) node[anchor=west]{1};
\filldraw[black] (5,4) circle (0pt) node[anchor=west]{0};

\draw[->] (0,4) -- (5,3);
\draw[->] (0,3) -- (5,4);
\draw[->] (0,2) -- (5,1);
\draw[->] (0,1) -- (5,2);
\draw[->] (0,0) -- (5,0);
\end{tikzpicture}
\caption{A permutation of $\{0,1,2,3,4\}$ not expressible as a linear equation.}
\label{fig:permutation}
\end{figure}

\medskip \noindent \textbf{Results.}
We prove that $\minlin{2}{\D}$, where $\D$ 
is an efficient Euclidean domain,
is fixed-parameter tractable.
For the special case when $\D$ is a field,
we provide a faster $\bigoh^*(2^{\bigoh(k \log k)})$ algorithm.
Furthermore, if $\D$ is a finite, $q$-element field,
we provide a $\bigoh^*((2q)^k)$ algorithm
improving upon the $\bigoh^*(q^{2k})$ upper bound
obtained by reduction to ULC.
To complement the results, we show that
$\minlin{3}{\D}$ is \W{1}-hard
(ruling out the existence of fpt algorithms for
$\minlin{r}{\D}$ when $r \geq 3$)
and we show that $\minlin{2}{\K}$ is \W{1}-hard for certain commutative rings $\K$ that are not Euclidean domains.
For example, the hardness result holds if $\K$ is
isomorphic to the direct product of nontrivial rings (such as the ring $\integers/6 \integers$ of integers modulo 6).

\medskip \noindent \textbf{Important balanced subgraphs.}
Our main technical contribution is the notion of
\emph{important balanced subgraphs},
which is a substantial generalisation of the \emph{important separators}
of Marx~\cite{marx2006parameterized}.
We believe that they can be applied to other problems as well,
so we give a general explanation.
Consider a parameterized deletion problem
where the input consists of an edge-weighted graph~$G$,
an integer $k$, and a polynomial-time
membership oracle to a family $\cF$ 
of minimal forbidden subgraphs of $G$ 
that we call \emph{obstructions}.
A (sub)graph of $G$ is \emph{balanced} if it does not 
contain any obstructions.
The goal is to find a set of edges of total weight at most $k$
that intersects all obstructions in $\cF$.
This objective is dual to finding 
a maximum-weight balanced subgraph of $G$.
For example, in \textsc{Bipartization} 
a graph is balanced if it is bipartite, and
the set of obstructions consists of all odd cycles.
Wahlstr\"{o}m~\cite{wahlstrom2017lp} presented 
a general method based on LP-branching for 
solving this problem in fpt time
when the obstructions $\cF$ are a family of cycles with the 
\emph{theta property}. This property can
roughly be defined as follows:
if a chordal path $P$ is added to a cycle $C$ from $\cF$,
then at least one of the smaller cycles formed by $P$ and $C$ is also in $\cF$.
For illustration, consider the theta graph in Figure~\ref{fig:theta}:
here $C = x_1 x_2 x_3 x_4$ is a cycle and $x_2 x_4$
is a chordal path that cuts it into two smaller cycles 
$C_1 = x_2 x_1 x_4$ and $C_2 = x_2 x_3 x_4$. 
If $C$ is in the family, then either $C_1$ or $C_2$ is in the family.
For instance, the set of all odd cycles in 
a graph has the theta property since 
any chordal path added to an odd cycle 
forms an odd and an even cycle.
Alternatively, the problem can be defined in terms of biased graphs.
A \emph{biased graph} is a pair $(G,\B)$ where $G$ is a graph
and $\B$ is a set of simple cycles in $G$ such that the complement
of $\B$ has the theta property; cycles outside $\B$ are referred to as the \emph{unbalanced}
cycles in $(G,\B)$. 
Biased graphs are encountered, for instance, in matroid theory~\cite{Zaslavsky:jctb89}.
The problem is then to find a set of $k$ edges that intersects every unbalanced cycle in $(G,\B)$.
In the case of \textsc{Graph Bipartization}, the set $\B$ contains all even cycles.

\begin{figure}[bt]
\centering
\begin{tikzpicture}
  \draw (0,0) ellipse (1cm and 2cm);
  \filldraw[black] (0,2) circle (1pt) node[anchor=south]{$x_1$};
  \filldraw[black] (-1,0) circle (1pt) node[anchor=east]{$x_2$};
  \filldraw[black] (0,-2) circle (1pt) node[anchor=north]{$x_3$};
  \filldraw[black] (1,0) circle (1pt) node[anchor=west]{$x_4$};

  \draw (-1,0) sin (-0.5,0.5) cos (0,0) sin (0.5,-0.5) cos (1,0);
\end{tikzpicture}
\caption{An example of a theta graph.}
\label{fig:theta}
\end{figure}
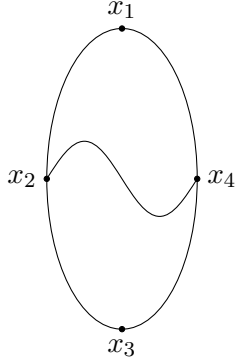

It is instructive to approach this global problem
by instead considering a local version
where a single root vertex $x$ is distinguished,
and the goal is to remove edges of total weight at most~$k$
to make the connected component of $x$ balanced.
For a balanced subgraph $H$ of $G$, 
define $c(H)$ as the cost of carving $H$ out of $G$ 
i.e. the sum of weights on all edges between 
$V(H)$ and $V(G) \setminus V(H)$ plus
the weights of edges in the subgraph of $G$ induced by $V(H)$ 
that are not in $H$. 
To solve the global problem, we can
choose a root, enumerate
solutions to the local problem
i.e. balanced subgraphs of cost at most $k$ that include $x$, 
and solve the remaining part recursively
(possibly with some branching 
to guess the intersection of the local and global solutions).
The caveat is that the number of balanced subgraphs of cost at most $k$
does not have to be bounded by any function of $k$.
To overcome this obstacle, we need another observation:
if there are two balanced subgraphs $H_1$ and $H_2$ 
such that $c(H_2) \geq c(H_1)$ and $V(H_2) \subseteq V(H_1)$, 
then $H_1$ is clearly a better choice than $H_2$
both in terms of cost and in the amount of ``work'' left in the remaining graph.
If the conditions above hold for $H_1$ and $H_2$,
we say that $H_1$ \emph{dominates} $H_2$.
See Figure~\ref{fig:ibs} for an illustration.
Formally, we want to compute a set $\cH$ of 
\emph{important balanced subgraphs} defined
analogously to the important separators: 
for any balanced subgraph $H'$ including $x$ of cost at most~$k$,
there is a subgraph $H \in \cH$ such that
$V(H') \subseteq V(H)$ and $c(H') \geq c(H)$.
In other words, the balanced subgraphs in~$\cH$
are Pareto efficient balanced subgraphs
in terms of maximizing the set of covered vertices
and minimizing the weight. 
However, note that the 
number of incomparable solutions is not bounded in $k$.
For example, if the input consists of a single, 
unbalanced cycle $C_n$ on $n$ vertices, then we may
output a single important balanced subgraph $H$,
with $V(H)=V(C_n)$ and $c(H)=1$, but there are $n$
incomparable solutions with these parameters, produced
by deleting any one edge of the cycle.
We handle this by proving that there is a 
\emph{dominating family} $\cH$ of
important balanced subgraphs such that
$\abs{\cH} \leq 4^k$ and every balanced subgraph of $G$
is dominated by some member of~$\cH$.
Moreover, there is an fpt algorithm that computes $\cH$
by branching based on the optimum
of the half-integral LP-relaxation of the local problem.

We note that important balanced subgraphs strictly generalize
important separators:
given a graph and two subsets of vertices,
one can recover important separators
by computing a dominating family of important balanced subgraphs---see 
Example~\ref{ex:important-separators} for further details.
In fact, the bounds achieved are identical:
using the important balanced subgraph framework to enumerate important
separators yields at most $4^k$ important separators of cost at most $k$,
and they can be enumerated in $\bigoh^*(4^k)$ time,
both of which match the bounds for important separators~\cite{chen2009improved,marx2006parameterized}.
Moreover, the increased generality is crucial for our algorithms, since
the cost of carving out a subgraph includes 
both the cost of a separator and a transversal of
unbalanced cycles reachable from the root after separation.
Thus, important balanced subgraphs can be used for graph separation
problems in a more general sense than simply enumerating graph cuts,
e.g. for computing transversals of obstruction families
with the theta property. 
As we show in what follows, removing a family of obstructions
is a key step in our $\mintwolin$ algorithms.

Let us illustrate important balanced subgraphs using a few examples. 
\begin{itemize}
\item Consider the biased graph $(G,\B)$ defined above, where $\B$ contains the
set of even cycles. Then a balanced subgraph of $(G,\B)$ is
precisely a bipartite graph, and our result outputs
connected bipartite subgraphs containing the root vertex $x$.
\item More generally, a \emph{group-labelled graph} is a graph $G$ in which
every oriented edge of $G$ is assigned a \emph{group label} $\gamma$ from a group $\Gamma$
so that for any edge $\{u,v\} \in E(G)$, the labels observe $\gamma(uv)=\gamma(vu)^{-1}$. 
Let a cycle $C=(v_1,\ldots,v_n)$ be \emph{balanced} if the product $\gamma(v_1v_2)\gamma(v_2v_3) \dots \gamma(v_nv_1)$
of the group labels of the edges of $C$ is the identity element of $\Gamma$.
This forms a biased graph $(G,\B)$ where $\B$ is the class of balanced cycles in $G$.
Thus the important balanced subgraph theorem can be used to, for example,
output connected subgraphs where every cycle has a length which is a multiple of $b$ 
for some $b \in \naturals$. This holds even if $\Gamma$ is non-abelian.
\item As a special case of the previous example, consider the \textsc{Subset Feedback Edge Set} problem.
In this problem, the input is a graph $G$ with a set of special edges $F \subseteq E(G)$, 
and the goal is to find a set of edges $X \subseteq E(G)$ such that no edge of $F$
is contained in a cycle in $G-X$. It is easy to observe that the class of cycles intersecting $F$
has the theta property (and in fact, can be captured as the unbalanced
cycles in a group-labelled graph). Then a subgraph $H$ of $G$ is balanced if any edge 
of $F \cap E(H)$ is a bridge in $H$. 
\end{itemize}
For more examples, see Wahlstr\"om~\cite{wahlstrom2017lp} and Zaslavsky~\cite{Zaslavsky:jctb89}.

From a technical perspective, the result follows from a refined analysis of the
LP formulation of Wahlstr\"om~\cite{wahlstrom2017lp} for the problem of
computing a minimum (vertex) transversal for the set of unbalanced cycles.
Wahlstr\"om showed that this can be solved in $\bigoh^*(4^k)$ time,
given oracle access to $\B$, where $k$ is the solution size. The result works
in two parts. First, consider the rooted case outlined above,
where the input additionally distinguishes a root vertex $x \in V(G)$
and the task is to find a balanced subgraph of minimum cost, rooted in $x$.
Wahlstr\"om provided a half-integral LP-relaxation for this problem, and showed
that it can be used to guide a branching process for an fpt algorithm
computing a min-cost rooted balanced subgraph. Second, the 
LP-relaxation is shown to have some powerful \emph{persistence properties} (see Section~\ref{sec:biasedgraphs}), that 
allow the solution from the rooted case to be reused in solving
the general problem. We reformulate and simplify these results for the edge deletion case.
We find that the extremal (``furthest'', or \emph{important}) optima to the LP
are described by a rooted, connected, balanced subgraph $H$ of $G$, 
where edges of value~$1$ in the LP are deleted edges within $V(H)$,
and the half-integral edges are the edges leaving $H$, i.e.
with precisely one endpoint in $V(H)$. Furthermore, every balanced
subgraph $H'$ of $G$ with $x \in V(H')$ can be ``improved'' so that
$V(H) \subseteq V(H')$ without increasing $c(H')$ (and such that
the edges of value $1$ in the LP are not contained in $E(H')$). 
The dominating family of important balanced subgraphs of $(G,\B)$ rooted in $x$ 
can then be obtained by branching over the status of the half-integral edges
leaving $V(H)$, in an analysis similar to that of Chen et al.~\cite{chen2009improved}
for the bound $4^k$ on the number of important separators.

\def\SCCF{0.3}
\begin{figure}
  \centering
  \begin{subfigure}{0.3\textwidth}
    \centering
    \includegraphics[scale=\SCCF]{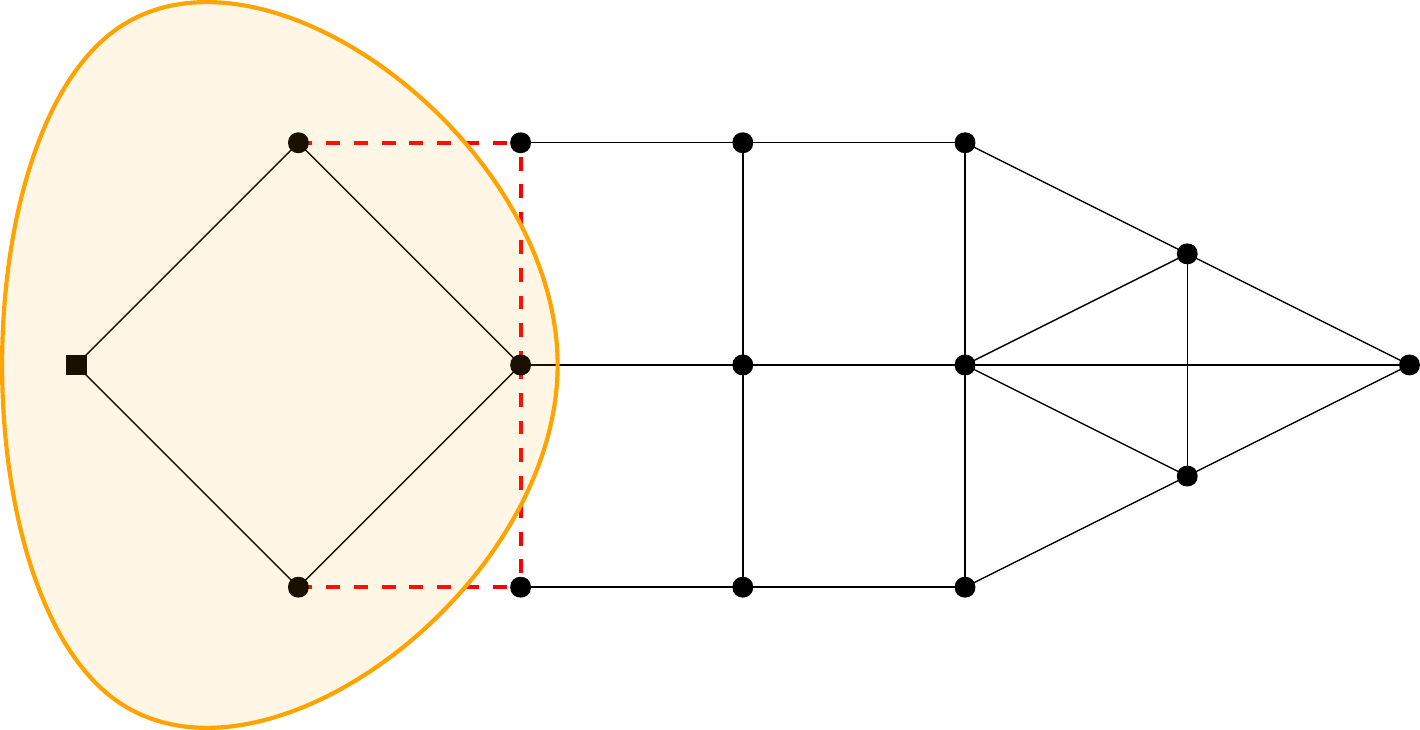}
    \caption{Subgraph $H_1$.}
  \end{subfigure}%
  \begin{subfigure}{0.3\textwidth}
    \centering
    \includegraphics[scale=\SCCF]{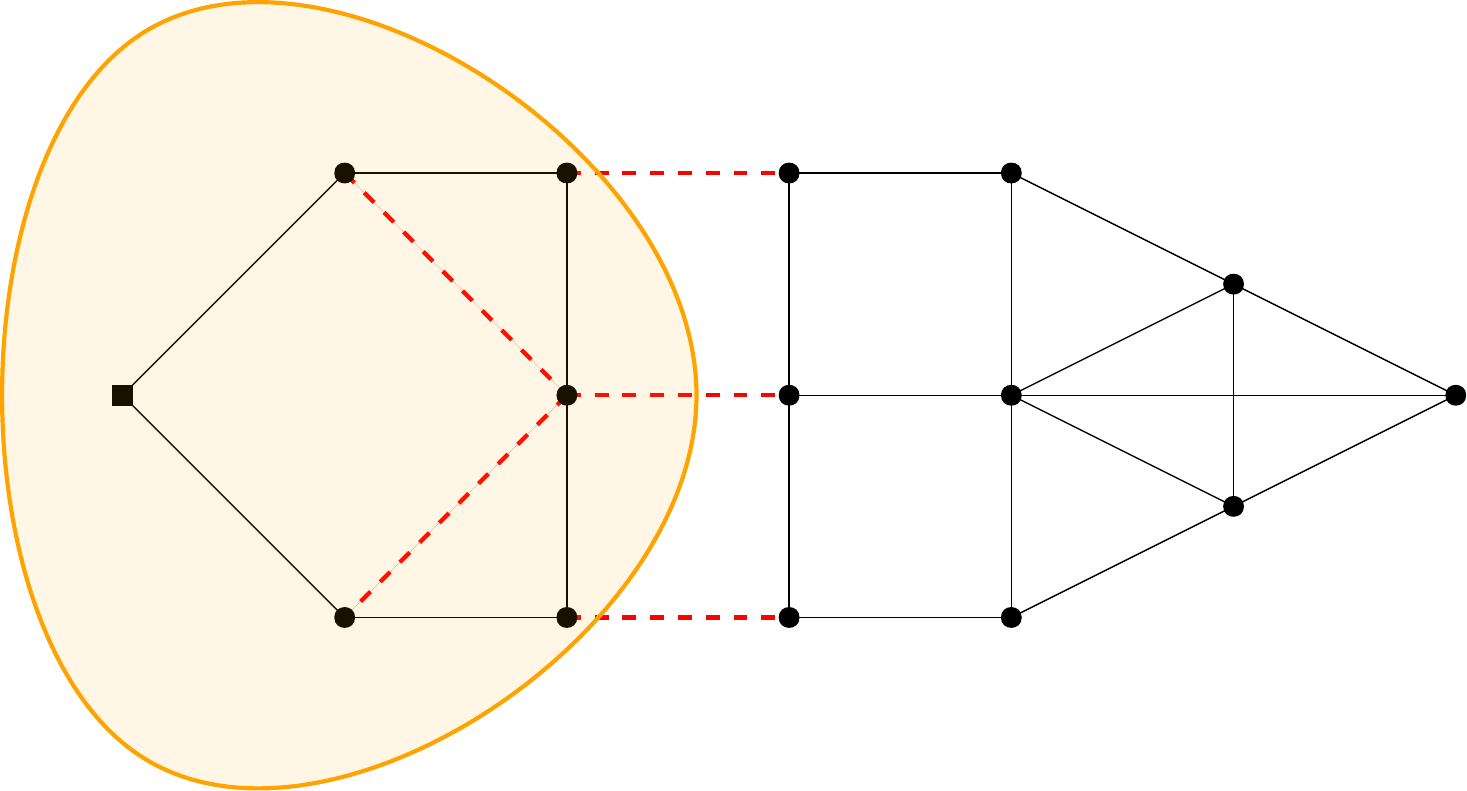}
    \caption{Subgraph $H_2$.}
  \end{subfigure}%  
  \begin{subfigure}{0.3\textwidth}
    \centering
    \includegraphics[scale=\SCCF]{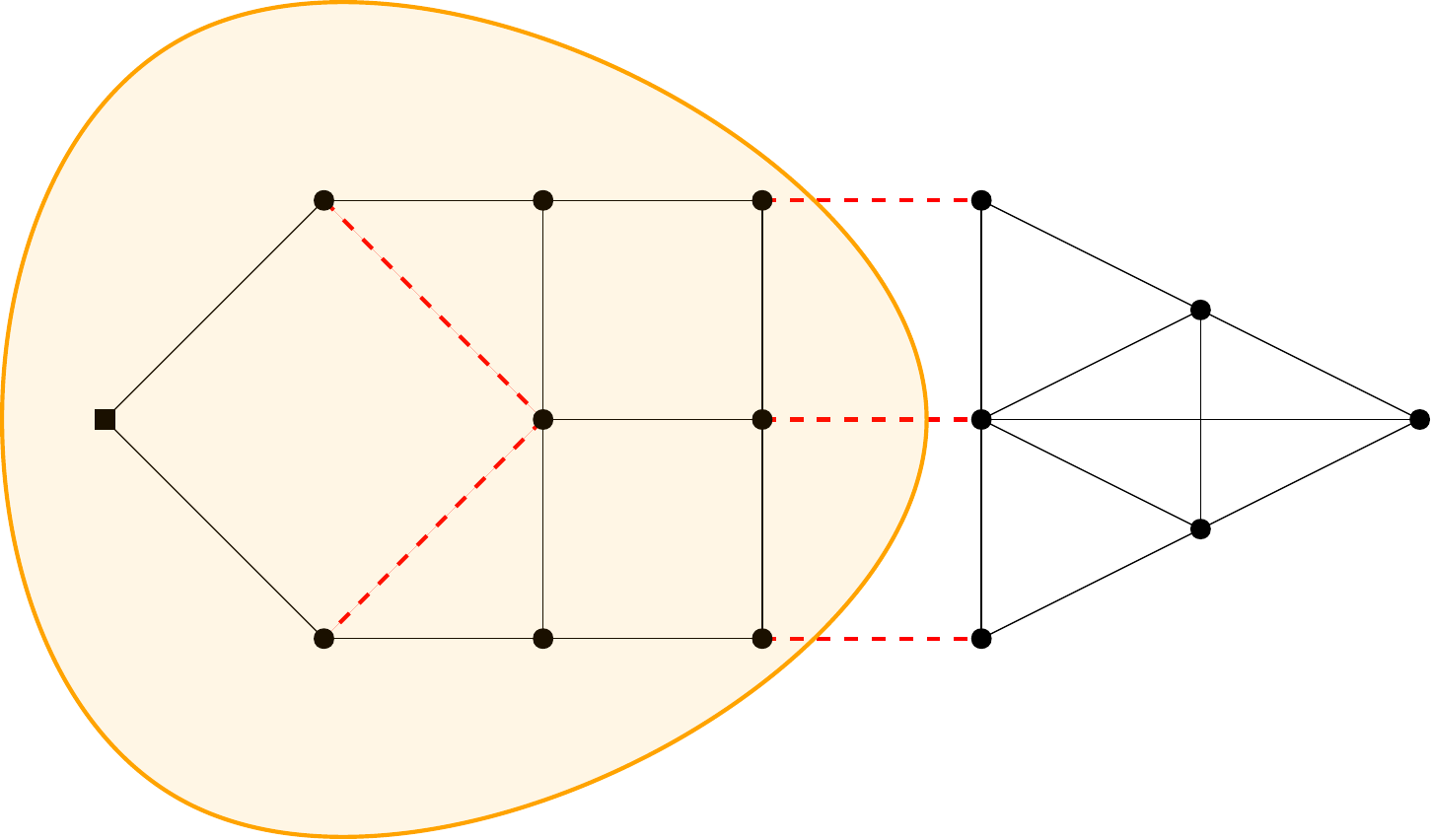}
    \caption{Subgraph $H_3$.}
  \end{subfigure}%
  \caption{Examples of rooted balanced subgraphs of the same graph.
  The root is the leftmost vertex,
  balanced cycles are even cycles, and
  all edges have unit weight.
  Red dashed edges are deleted, and the orange area
  covers all vertices reachable from the root.
  The cost of $H_1$ is $4$, while the cost of 
  $H_2$ and $H_3$ is $5$.
  Subgraph $H_1$ is incomparable with $H_2$ and $H_3$ since it has lower cost
  but $V(H_1)$ is a strict subset of $V(H_2)$ and $V(H_3)$.
  On the other hand, $H_2$ is dominated by $H_3$  since $V(H_2) \subsetneq V(H_3)$ while they have the same cost.}
  \label{fig:ibs}
\end{figure}

\medskip \noindent \textbf{$\mintwolin$ Algorithms for Fields.}
In short, our fpt algorithms are based on three steps:
compression, cleaning, and cutting.
Given an instance $(S, w_S, k)$ of $\mintwolin$,
we first use iterative compression to compute 
a slightly suboptimal ``solution'' $X$.
In the cleaning step we consider the primal graph of $S$
i.e. the graph with vertices for variables of $S$
and edges for equations, and 
produce a dominating family
of important balanced subgraphs
around a subset of vertices in $V(X)$.
Finally, the problem reduces to computing
a cut in the cleaned graph
that fulfills certain requirements.

For a basic example, consider $\minlin{2}{\rationals}$ i.e.
$\mintwolin$ over the field of rationals.
Every such instance can be viewed as a graph where an edge
connecting two variables is labelled by an equation
from $S$ ranging over these two variables.
Observe that any acyclic instance (with respect to the underlying primal graph) of $\lin{2}{\rationals}$
is consistent, since we can pick an arbitrary variable,
assign any value to it, and 
then propagate to the remaining variables
according to the equations labelling the edges. 
Thus, any inconsistency in an instance of $\minlin{2}{\rationals}$ 
is due to cycles.
By standard linear algebra,
a cycle may have zero, one, or infinitely many satisfying 
assignments---we call such cycles \emph{inconsistent}, \emph{non-identity} or \emph{identity}, respectively.
If an instance contains only identity cycles, we call it \emph{flexible},
and observe that, similarly to acyclic instances,
all flexible instances are consistent.
This follows from propagating a value in the same way as above.

By iterative compression, we may assume 
that we have an over-sized solution $X$ at our disposal i.e.
a set of equations of total weight $k + 1$
such that $S - X$ is consistent.
In the special case when $S - X$ is not only consistent but flexible,
a solution to the instance is a minimum cut $Z$ in $S - X$ 
that separates vertices $V(X)$ 
into distinct connected components according to some partition.
Since $\abs{V(X)} \leq 2k + 2$, we can enumerate partitions of $V(X)$
in fpt time, and compute a minimum cut $Z$ 
using the algorithm for \textsc{Multiway Cut}.
In the general case when $S - X$ is not flexible,
we assume that $X$ is minimal and hence
every connected component of $S - X$ contains
a non-identity cycle. This implies that $S - X$ admits 
a unique satisfying assignment $\varphi_X$
(otherwise, there is an edge in~$X$ connecting 
a flexible component with another component of $S - X$,
and the equation labelling that edge 
can be added back to $S - X$ without causing inconsistency).
Let $\varphi_Z$ be the assignment that satisfies $S - Z$.
We guess which variables in $V(X)$ have the same value in $\varphi_X$ 
and $\varphi_Z$ and which do not.
Let $T \subseteq V(X)$ be the subset of variables
that receive different values under these assignments.
The key observation is that the change propagates i.e.
every variable reachable $T$ in $S - Z$ has
a different value under $\varphi_X$ and $\varphi_Z$.
Since non-identity cycles in $S - X$ admit a unique satisfying assignment 
(which is $\varphi_X$),
none of them can remain in $S - Z$ and be reachable from $T$.
We show that the set of non-identity cycles in $S - X$
has the theta property.
This allows us to use the method of important
balanced subgraphs to get rid of non-identity cycles
reachable from the changing terminals $T$.
More specifically, in one of the branches
we obtain a set $F$ of size at most $k$
such that the connected components of $T$ in $S - (X \cup F)$
are free from non-identity cycles and thus flexible.
Moreover, all variables in the remaining components 
have the same value in $\varphi_X$ and $\varphi_Z$.
Thus, we can concentrate on the flexible part of the cleaned instance
and use the partition-guessing and cutting idea outlined above.
We remark that the reduction to the flexible case is analogous
to the \emph{shadow removal} process of Marx and Razgon~\cite{marx2014fixed},
but works in $\bigoh^*(4^k)$ time instead of $\bigoh^*(2^{k^3})$
(later improved to $\bigoh^*(2^{k^2})$ in~\cite{chitnis2015directed}) time
required by random sampling of important separators.
We also note that essentially the same algorithm work for $\minlin{2}{\F}$,
where $\F$ is a field.

\medskip \noindent \textbf{$\mintwolin$ Algorithms for Euclidean Domains.}
Let us now consider the general $\minlin{2}{\D}$ problem 
where $\D$ is a Euclidean domain.
Euclidean domains are less restricted than fields and they consequently
capture a wider and more multifaceted range of problems.
There are many examples of interesting Euclidean domains
that are not fields; the two prime examples are probably the ring
of integers $\integers$ and the rings
of univariate polynomials $\F[x]$ where the coefficients are
members of some field $\F$.
Important differences between fields and Euclidean domains become apparent
even when considering simple structures such as the ring of integers.
While in the case of fields all obstructions to consistency of $\lin{2}{\F}$ 
instances are cycles, in Euclidean domains paths may also be obstructions.
For example, consider the following system of equations over $\integers$:
$\{ y - 2x = 1, y - 2z = 0 \}$.
While both equations have integer satisfying assignments
(e.g. $(y,x) \mapsto (1,0)$ and $(y,z) \mapsto (2,1)$, respectively),
they are not simultaneously satisfiable:
the equation obtained by cancelling out $y$ is $2x - 2z = 1$
and it has no integer solutions.
This complicates the handling of Euclidean domains in algorithms.
Another issue is that $\lin{2}{\D}$ is less studied for Euclidean domains where,
in contrast, Gaussian elimination has been known for centuries and solves $\lin{r}{\F}$ 
for every $r$ in polynomial time.
Polynomial-time algorithms are known for $\lin{r}{\integers}$ and
$\lin{r}{\F[x]}$~\cite{kannan1985solving,Kannan:Bachem:sicomp79} for every $r$,
but to the best of our knowledge, 
there are no general algorithms for arbitrary Euclidean domains 
described in the literature, even for the simpler $\lin{2}{\D}$ problem. 
This forces us to develop methods for checking
consistency of $\lin{2}{\D}$ instances to be used in our \mintwolin\ algorithms. 
In short,  
$\minlin{2}{\D}$ is intrinsically a more complicated problem and this is reflected
in the more complicated fpt algorithm.

We show that 
after taking similar steps to those in the $\minlin{2}{\rationals}$ algorithm
(iterative compression, and cleaning by the method of important balanced subgraphs),
the solution is again a certain cut in the cleaned graph.
However, this time it is not sufficient to partition $T$ into connected components,
but we additionally need to break some paths that have no solutions in $\D$.
The latter requirements can be expressed as 
certain disjunctive cut request for the non-terminals.
The requests are of the form $(\{x,s\}, \{y,t\})$, 
where $s$ and $t$ are terminals, and
the cut is required to either separate $x$ from $s$ or $y$ from $t$.
We refer to Section~\ref{ssec:ppc} for the formal definition of the problem.
These cut requests are used to deal with inconsistent paths.
More concretely, we can check whether the path going 
from $x$ to $s$, then from $s$ to $t$, and finally from $t$ to $y$
is inconsistent in $\D$, and then add the cut request for it.
By iterative compression, the optimal solution is disjoint from $X$,
so the part of this path between $s$ and $t$ cannot be cut. 
Computing all cut requests requires polynomial time:
we consider every pair of terminals $s, t$ and non-terminals $x, y$,
and add a corresponding cut request if necessary.
To compute a separator that satisfies such disjunctive requests in fpt time, 
we reduce the cut problem to a special case of the \textsc{MinCSP}
parameterized by the solution cost.
Kim~et~al.~\cite{KimKPW21flow-arXiv,KimKPW21flow} solve this \textsc{MinCSP}
using the recently introduced technique of flow augmentation.

\medskip \noindent \textbf{Roadmap.}
The remainder of the paper is structured as follows.
In Section~\ref{sec:biasedgraphs} we describe the LP-based approach to
parameterized deletion problems, define important balanced subgraphs
and develop the fpt algorithm producing a dominating family of
important balanced subgraphs.
Section~\ref{sec:graphseparation} contains fpt algorithms for
the graph separation problems used in the $\mintwolin$ algorithms.
In Sections~\ref{sec:edom-algorithm}~and~\ref{sec:fields-algorithm}
we present the general algorithm for Euclidean domains
and faster algorithms for fields, respectively.
Section~\ref{sec:hardness} is devoted to \W{1}-hardness results.
We finish off in Section~\ref{sec:conclusion-and-discussion},
summarizing and discussing the results, open questions and 
possible directions for future work.

\medskip \noindent \textbf{Preliminaries and notation.}
We assume familiarity with the basics of graph theory, linear and abstract algebra, and combinatorial optimization. 
The necessary material can be found in, for instance, the textbooks by Diestel~\cite{Diestel:GT}, Artin~\cite{Artin:A}, and Schrijver~\cite{Schrijver:PE}, respectively.

We use the following graph-theoretic terminology in what follows.
Let $G$ be an undirected graph.
We write $V(G)$ and $E(G)$ to denote the vertices and edges of $G$, respectively.
For every vertex $v \in V(G)$,
let the {\em neighbourhood of $v$ in $G$} denoted by $N_G(v)$ be the set
$\{u \in V(G) \mid \{u, v\} \in E(G)\}$ and the \emph{closed neighbourhood} $N_G[v]=N_G(v) \cup \{v\}$.
We extend this notion to sets of vertices $S \subseteq V(G)$
in the natural way: $N_G(S)= (\bigcup_{v \in S} N_G(v)) \setminus S$.
If $U \subseteq V(G)$, then the {\em subgraph of $G$ induced
by $U$} is the graph $G'$ with
$V(G')=U$ and $E(G')=\{ \{v, w\} \mid v,w \in U \; {\rm and} \; \{v, w\} \in E(G)\}$. 
We denote this graph by $G[U]$.
If $Z$ is a subset of edges in $G$, we write $G-Z$ to denote the graph~$G'$ with $V(G') = V(G)$ and $E(G') = E(G) \setminus Z$.
For $X,Y \subseteq V(G)$, 
an \emph{$(X,Y)$-cut} is a subset of edges $Z$
such that $G - Z$ does not contain a path with 
one endpoint in $X$ and another in $Y$.
When $X,Y$ are singleton sets $X=\{x\}$ and
$Y=\{y\}$, we simplify the notation and write $xy$-cut instead
of $(X,Y)$-cut.

\section{Graph Cleaning}
\label{sec:biasedgraphs}

We will now consider one of the cornerstones in our algorithms
for \mintwolin: {\em graph cleaning}.
The framework we present is intimately connected with {\em  biased graphs}.
These are combinatorial
objects of importance especially to matroid theory~\cite{Zaslavsky:jctb89}.
To introduce biased graphs,
we recall that a \emph{theta graph}
is a collection of three vertex-disjoint paths with shared
endpoints---see Figure~\ref{fig:theta} for an illustration.
A \emph{biased graph} is a pair
$(G,\B)$ where $G$ is an undirected graph and $\B \subseteq
2^{E(G)}$ is a set of cycles in $G$ (referred to as the \emph{balanced
cycles of $G$}) with the property that if two cycles $C,C' \in \B$
form a theta graph, then the third cycle of $C\cup C'$ is
also in $\B$. We refer to a set of cycles~$\B$ with this property
as \emph{linear}. 
An example of two cycles forming a theta graph is given in Figure~\ref{fig:theta}
with $C$ following $x_1 \rightarrow x_2 \rightarrow x_4 \rightarrow x_1$ 
and $C'$ following $x_2 \rightarrow x_3 \rightarrow x_4 \rightarrow x_2$.
Given a biased graph $(G,\B)$, we always assume that $\B$ is defined via a membership oracle that
takes as input a cycle $C$ (provided as an edge set) and tests whether $C \in \B$. 

The most basic biased graph cleaning problem
is the following.

\pbDef{Biased Graph Cleaning (\BGC)}
{A biased graph $(G,\B)$ and an integer $k$.}
{Is there a set $X \subseteq V(G)$ such that $|X| \leq k$ and all cycles
in $G-X$ are balanced, i.e. members of $\B$.}

We will consider an LP-relaxation of \BGC and 
its rooted variant
in Section~\ref{sec:LP-relaxation}.
Results based on this LP-relaxation will then be
used in
Section~\ref{sec:biasedgraphcleaning}
where we present fpt algorithms for 
various biased graph cleaning problems.
These results are directly used in our single-exponential time
algorithm for \mintwolin\ over finite fields (Section~\ref{ssec:finfields-algorithm}).
Inspired by these kinds of problems and their 
solution structure, we introduce the concept of
{\em important balanced subgraphs} in Section~\ref{sec:importantbalancedsubgraphs}.
Our main result shows that we can efficiently compute a small family of important balanced subgraphs such that every other balanced subgraph is dominated by a member in the set. This
forms an important step of our later fpt algorithms for
\mintwolin\ over Euclidean domains.

\subsection{LP-relaxation for Rooted Biased Graph Cleaning}
\label{sec:LP-relaxation}

Previous work by Wahlstr\"om~\cite{wahlstrom2017lp} shows that \BGC has an fpt algorithm
running in $O^*(4^k)$ time. The algorithm is
based around a particular LP-relaxation.
The workhorse of this result is the following \emph{rooted} variant of \BGC.
Note that we have extended the problem with vertex weights.

\pbDef{Rooted Biased Graph Cleaning (\RBGC)}
{A biased graph $(G,\B)$, a vertex-weight function $w : V(G)
  \rightarrow \naturals$, a vertex $v_0 \in V(G)$, and an integer $k$.}
{Is there a set $X \subseteq V(G)$ such that $w(X) \leq k$, $v_0 \not\in X$, and all cycles in the connected component 
of $v_0$
in $G-X$ are balanced.}

We will now review the LP-relaxation that underlies the
fpt algorithms for \BGC and \RBGC.
Note that both \BGC and \RBGC can be defined in terms of seeking a set
of vertices $X$ that intersects a class of obstructions. In \BGC, the
obstructions  are simply all unbalanced cycles of $(G,\B)$. 
In \RBGC, we can define a class of obstructions consisting of the
combination $(P,C)$ of an unbalanced cycle $C$ and a (possibly empty)
path $P$ connecting $C$ to $v_0$. Such a rooted unbalanced
cycle is referred to as a \emph{balloon}~\cite{wahlstrom2017lp}.
Then we see that a set $X \subseteq V(G)$ is a solution to an instance
$I=((G,\B), w, v_0, k)$ of \RBGC if and only if $X$ intersects every
balloon (where the balloons are implicitly rooted in $v_0$).

A balloon $(P,C)$ can obviously be decomposed into two paths:
for any $v \in V(C) \setminus V(P)$, $(P,C)$ is the union of two paths $P_1$, $P_2$ from $v_0$ to $v$.
The LP-relaxation of an \RBGC instance $I=((G,\B), v_0, w, k)$ is then
defined as follows.
Let $x \in [0,1]^{V(G)}$ be an assignment, and for a path~$P$
define $x(P)=\sum_{v \in V(P)} x(v)$. The LP-relaxation of
$I$ has objective
\[
  \min \sum_{v \in V(G)} w(v) x(v)
\]
subject to the constraints $x(v_0)=0$, $x(v) \geq 0$ for every $v \in V(G) \setminus \{v_0\}$  
and
\[
x(P_1) + x(P_2) \geq 1
\]
for every balloon $(P,C)$ decomposed into two paths $P_1$ and $P_2$.
Wahlstr\"om~\cite{wahlstrom2017lp} showed several properties of this LP.
First, an optimal solution can be found in polynomial time, given
access to a membership oracle for $\B$. Second, it is \emph{half-integral}
i.e. the LP
always has an optimum $x^* \in \{0, 1/2, 1\}^{V(G)}$. 
Finally, it is \emph{persistent} in the sense that there is an optimal
solution $X \subseteq E(G)$ such that if $x^*(v)=1$ then $v \in X$
(and for certain vertices $v$ with $x^*(v)=0$ we can conclude $v \notin X$).

More precisely, we have the following. The \emph{support} of an
LP-solution $x$ is the set 
$\supp(x)=\{v \in V(G) \mid x(v)>0\}$.
  Let $I=((G,\B), v_0, w, k)$ be an instance of \RBGC
  and let $x=V_1 + \tfrac 1 2 V_{1/2}$ be a half-integral optimum to
  the LP-relaxation of $I$, i.e. $x(v)=1$ for $v \in V_1$, $x(v)=1/2$ for $v \in V_{1/2}$ and $x(v)=0$ otherwise.
  Let $V_R(x) \subseteq V(G)$ be the set of vertices connected to $v_0$
  in $G-\supp(x)$. Then $x$ is an \emph{extremal LP-optimum}
  if $V_R(x)$ is maximal among all LP-optima $x$.
If~$x$ is a half-integral extremal LP-optimum
for $I$, then there is an optimal solution $X \subseteq V(G)$ to $I$
such that $V_1(x) \subseteq X$ and $V_R(x) \cap X = \emptyset$. 
Via these properties, we can design an fpt-algorithm for \RBGC 
by a branch-and-bound approach over the LP~\cite{wahlstrom2017lp}.

In fact, an even stronger, more technical property holds, which we will
need in what follows. 
Let~$G$ be an undirected graph with vertex weights $w : V(G) \rightarrow {\mathbb N}$. 
For a set $U \subseteq V(G)$, we let $w(U)=\sum_{v \in U} w(v)$.

\begin{lemma}[Wahlstr\"om~{\cite[Lemma~6]{wahlstrom2017lp}}] \label{lem:v-persistence}
  Let $x=V_1+\tfrac 1 2 V_{1/2}$ be a half-integral extremal
  LP-optimum for a \RBGC instance $I=((G, \B), w, v_0, k)$
  with vertex weights $w : V(G) \rightarrow {\mathbb N}$  and let $V_R(x)$ be defined as above.
  Let $S \subseteq V(G)$ be a vertex set with $v_0 \in S$ such that $G[S]$ is balanced and connected.
  Then we can find a replacement solution that grows the closed neighbourhood  $N_G[S]$ to $N_G[S \cup V_R(x)]$ without
  paying a larger cost for deleting vertices. 
  More formally, there is a set of vertices $S^+$ and a set $S' \subseteq S^+$ 
  such that $G[S']$ is balanced and the following hold.
  \begin{enumerate}
  \item $S^+=N_G[S \cup V_R(x)]$;
  \item $N_G[S'] \subseteq S^+$;
  \item $V_R(x) \subseteq S'$;
  \item $V_1(x) \subseteq (S^+ \setminus S')$;
  \item $w(S^+ \setminus S') \leq w(N_G(S))$.
  \end{enumerate}
\end{lemma}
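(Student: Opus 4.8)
The plan is to prove this ``local persistence'' statement by \emph{uncrossing} the vertex boundary of the given balanced connected set $S$ with the structure of the extremal half-integral optimum $x$. Two observations make the uncrossing work. First, $N_G(S)$ is itself a feasible solution of the \RBGC instance: after deleting it, the connected component of $v_0$ is exactly $S$, which is balanced, so the LP-optimum value $w(V_1)+\tfrac12 w(V_{1/2})$ is at most $w(N_G(S))$, and in particular $w(V_1)\le w(N_G(S))$. Second, the set function $T\mapsto w(N_G(T))$ on vertex subsets is submodular, so the boundaries of two sets can be merged without doubling their weight. Before the main argument I would record the basic structural facts about $x$: $G[V_R(x)]$ is balanced and connected (an unbalanced cycle inside $V_R(x)$ together with an internal path to $v_0$ is a balloon carrying zero $x$-mass, contradicting feasibility); $N_G(V_R(x))\subseteq\supp(x)=V_1\cup V_{1/2}$ by definition of $V_R(x)$; and $\supp(x)\subseteq N_G[V_R(x)]$ by extremality of $x$~\cite{wahlstrom2017lp} (so $V_1\subseteq S^+$, which is what makes item~4 well-posed).

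Concretely, I would set $S^+:=N_G[S\cup V_R(x)]$ as item~1 forces, and build $S'$ roughly as follows. Using half-integrality, identify the ``light side'' $V_{1/2}'\subseteq V_{1/2}$ with $w(V_{1/2}')\le\tfrac12 w(V_{1/2})$ such that $V_1\cup V_{1/2}'$ is still a feasible solution — this is one of the two integral ``sides'' of the half-integral support, whose existence is exactly the persistence of Wahlstr\"{o}m's LP-relaxation — and plan to keep $V_1\cup V_{1/2}'$ in the deletion set while the complementary heavy half $V_{1/2}\setminus V_{1/2}'$ is absorbed into the surviving region; note that $G[V_R(x)\cup(V_{1/2}\setminus V_{1/2}')]$ is already balanced, being a subgraph of the (balanced) $v_0$-component of $G-(V_1\cup V_{1/2}')$. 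Set $S_0:=S\cup V_R(x)\cup(V_{1/2}\setminus V_{1/2}')$, delete from $S_0$ a cleaning set $Y$ meeting every unbalanced cycle of $G[S_0]$ that straddles the pieces $S$ and $V_R(x)\cup(V_{1/2}\setminus V_{1/2}')$, and let $S'$ be the $v_0$-component of $G[S_0\setminus Y]$, with $D:=S^+\setminus S'$. For the cleaning, the theta (linearity) property of $\B$ is the tool: a straddling unbalanced cycle plus a chord along the interface reroutes to a strictly shorter unbalanced cycle that lies entirely inside $G[S]$ or inside $G[V_R(x)\cup(V_{1/2}\setminus V_{1/2}')]$ (both impossible) or runs through a vertex already destined for deletion, so $Y$ stays charge-able against weight already being spent.

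The verification then splits into five parts. Items~1--3 are immediate from the construction ($V_R(x)$ survives because $Y$ avoids it and $V_R(x)$ is connected through $v_0$); item~4 holds because $V_1\subseteq S^+$ while $V_1\subseteq\supp(x)$ is disjoint from $V_R(x)\cup(V_{1/2}\setminus V_{1/2}')$ and from any acyclic remnant of $S$ left by the cleaning, so $V_1\subseteq S^+\setminus S'$; and item~5 is the cost bound $w(D)\le w(N_G(S))$, obtained by splitting $D$ into the part of the outer boundary $N_G(S\cup V_R(x))$ not absorbed into the heavy half — charged via submodularity of $N_G(\cdot)$, the LP inequality $w(V_1)+\tfrac12 w(V_{1/2})\le w(N_G(S))$, and the fact that absorbing $V_{1/2}\setminus V_{1/2}'$ frees a full $\tfrac12 w(V_{1/2})$ — plus the cleaning set $Y$, charged as above.

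The main obstacle is that item~2 (confinement $N_G[S']\subseteq S^+$), balancedness of $G[S']$, and item~5 (cost) must hold \emph{simultaneously}, and they pull against each other: enlarging $D$ helps balancedness but hurts the cost; keeping $V_R(x)\subseteq S'$ and absorbing the heavy half of $V_{1/2}$ is forced by the cost bound but risks both creating straddling unbalanced cycles and pushing $S'$ outside $S^+$ (a neighbour of an absorbed half-integral vertex might lie beyond $N_G[V_R(x)]$). Controlling all three at once is exactly where the fine structure of the \emph{half-integral persistent} optimum is needed — an arbitrary fractional optimum would not do — together with the theta property to keep the cleaning set small and extremality (maximality of $V_R(x)$) to keep the interface ``tight'' so the uncrossing does not overpay; getting the weight bookkeeping to close, so that the part of $D$ facing $V_R(x)$ is paid for entirely by the half of $V_{1/2}$ that was \emph{not} absorbed, is the delicate step and is precisely the content this lemma adds on top of the bare persistence statement.
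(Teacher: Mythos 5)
The paper does not prove this lemma at all: it is stated with a citation to Wahlstr\"om~\cite[Lemma~6]{wahlstrom2017lp} and used as a black box, so there is no ``paper's own proof'' to compare against. Evaluated on its own terms, your sketch contains a step that I believe is simply false, and it is load-bearing.

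You propose to ``identify the light side $V_{1/2}'\subseteq V_{1/2}$ with $w(V_{1/2}')\le\tfrac12 w(V_{1/2})$ such that $V_1\cup V_{1/2}'$ is still a feasible solution,'' attributing this to the persistence of the half-integral LP. That is not what persistence says. Persistence says there exists an \emph{optimal integral} solution $X$ with $V_1\subseteq X$ and $X\cap V_R(x)=\emptyset$; it places no upper bound on $w(X\cap V_{1/2})$, and in general $X$ may have to contain all of $V_{1/2}$. If a feasible $V_1\cup V_{1/2}'$ of weight $\le w(V_1)+\tfrac12 w(V_{1/2})$ existed, the LP would have integrality gap~$1$, the integral optimum would equal the LP optimum, and the $\bigoh^*(2^k)$ LP-branching algorithm of Proposition~\ref{pro:rbgc} would be pointless — one would just round. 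Half-integral LPs in general do \emph{not} split into two feasible integral ``sides'' of half the fractional mass each (vertex cover on an odd cycle is the standard counterexample). Everything downstream of this step — the claim that $G[V_R(x)\cup(V_{1/2}\setminus V_{1/2}')]$ is balanced, the choice of $S_0$, and the bookkeeping that ``absorbing $V_{1/2}\setminus V_{1/2}'$ frees a full $\tfrac12 w(V_{1/2})$'' — inherits this falsehood.

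There is a second, independent gap: the cleaning set $Y$ that you delete to break unbalanced cycles straddling $S$ and $V_R(x)$ is asserted to be ``charge-able against weight already being spent,'' but you never bound $w(Y)$ or exhibit the charging map, and the theta-property rerouting you invoke presupposes a chord connecting the two arcs of a straddling cycle, which need not exist. Since item~5 is exactly the quantitative content of the lemma, leaving both $w(Y)$ and the (false) $V_{1/2}$-split uncontrolled means the cost bound does not close. To prove this lemma one really has to open up the LP structure in the style of Wahlstr\"om~\cite{wahlstrom2017lp}; a purely combinatorial uncrossing of $N_G(S)$ against $\supp(x)$ together with a ``light half'' of $V_{1/2}$ does not work.
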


\subsection{Biased Graph Cleaning}
\label{sec:biasedgraphcleaning}

In this section we consider further variants of biased graph cleaning problems.
Our goal is to show that the vertex-weighted extension of \textsc{Biased Graph Cleaning}
and the edge version of \textsc{Rooted Biased Graph Cleaning} (\RBGCE)
both admit fpt-algorithms. These results are standard,
but they are needed for later results in the paper.

We begin by noting that \RBGC is in \FPT, using the same LP-branching
algorithm as in the unweighted case~\cite{wahlstrom2017lp}.

\begin{proposition}[Wahlstr\"{o}m~{\cite[Theorem 1]{wahlstrom2017lp}}]\label{pro:rbgc}
  \RBGC{} admits an fpt-algorithm with run-time $\bigoh^*(2^k)$ assuming
  that $\B$ is given by a polynomial-time membership oracle.
\end{proposition}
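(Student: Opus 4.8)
The statement restates Theorem~1 of Wahlström~\cite{wahlstrom2017lp}, and the plan is to reproduce that LP-branching algorithm using the properties of the balloon LP recalled in Section~\ref{sec:LP-relaxation}. Concretely, we use that (i) given the membership oracle for~$\B$, an \emph{extremal} half-integral optimum $x=V_1+\tfrac12V_{1/2}$ of the LP-relaxation can be computed in polynomial time --- separating the balloon constraints reduces to computing shortest paths from $v_0$ in $G$ together with oracle calls testing whether the resulting cycle is unbalanced --- and (ii) at an extremal optimum \emph{persistence} holds: there is an optimal solution $X$ with $V_1\subseteq X$ and $V_R(x)\cap X=\emptyset$. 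As a preprocessing step we delete all vertices other than $v_0$ of weight~$0$, which is safe since some minimum-cost solution contains all of them; hence every remaining vertex has weight at least~$1$.

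The algorithm is the natural LP-guided branch-and-bound. Compute the extremal half-integral optimum $x$ of value $\tau^*$. If $\tau^*>k$, answer NO. If $V_1\neq\emptyset$, delete $V_1$ (safe by persistence), decrease $k$ by $w(V_1)$, and recompute --- the new extremal optimum is then purely half-integral, so we may assume $V_1=\emptyset$. If also $V_{1/2}=\emptyset$, then $\tau^*=0$ and the component of $v_0$ is already balanced, so return the empty set. Otherwise pick any $v\in V_{1/2}$ and branch on whether $v$ is in the solution: in one branch delete~$v$ and decrease $k$ by $w(v)\geq 1$; in the other forbid~$v$ (formally, set $w(v):=k+1$) and recompute. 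These two branches are exhaustive, so correctness is immediate.

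For the running time, in the ``delete'' branch $k$ drops by at least~$1$, and in the ``forbid'' branch extremality of $x$ forces the LP optimum to \emph{strictly} increase, hence to increase by at least $\tfrac12$ by half-integrality. A measure-and-conquer analysis over $k$ and the LP value --- exactly the one carried out in~\cite[Theorem~1]{wahlstrom2017lp} --- then bounds the branching tree by $\bigoh^*(2^k)$ leaves, and each node performs only polynomial work (a single LP solve, which is polynomial given the oracle), for a total running time of $\bigoh^*(2^k)$.

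The one genuinely technical point, and where I expect the difficulty to lie, is the claim that forbidding the half-integral vertex $v$ strictly raises the LP optimum. This is exactly what the \emph{extremality} of $x$ buys: if the optimum stayed the same, there would be an optimal solution placing no weight on $v$, and a short exchange argument shows that such a solution has a strictly larger reachable set $V_R$, contradicting the maximality in the definition of an extremal optimum. Pinning down this exchange argument --- together with the bookkeeping that the reduction rule keeps the recomputed optimum extremal and that the measure behaves monotonically across it --- is the substance of the proof, and it is precisely what~\cite{wahlstrom2017lp} supplies.
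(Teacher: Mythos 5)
Your proposal reproduces the paper's LP-branching proof essentially verbatim: compute an extremal half-integral optimum of the balloon LP, remove the integral part by persistence, and branch on a half-integral support vertex with a potential-function analysis, deferring the finer invariants to Wahlstr\"om's original argument exactly as the paper does. The only differences are cosmetic: the paper ``forbids'' or ``includes'' a vertex by reweighting ($w(v):=2W$ or $w(v):=0$) rather than by deleting it (which yields the same LP value), and it tracks the single potential $k-\lambda$, which it notes must drop by $\geq\tfrac12$ in \emph{both} branches (so in the include/delete branch one also needs $\lambda$ to drop by at most $w(v)/2$, not just $k$ to drop by $\geq 1$) to conclude depth $\leq 2(k-\lambda)\leq k$.
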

\begin{proof}[Proof sketch]
  Let $I=((G,\B), w, v_0, k)$ be an instance of \RBGC{},
  where $(G,\B)$ is a biased graph, $w$ is a set of vertex weights on
  $G$, $v_0 \in V(G)$ is the root vertex and $k \in \naturals$ is the deletion budget. 
  We note that the algorithm for the unweighted case~\cite[Lemma~8]{wahlstrom2017lp}
  also applies in the presence of integer weights. We use the slightly
  more careful version of the extended preprint~\cite{wahlstrom17preprint}.
  Let $W=\sum_{v \in V(G)} w(v)$, and assume $k<W$, as otherwise the
  instance is trivial. For a vertex $v \in V(G)$, we say
  \emph{fix $v=0$} to refer to setting a weight $w(v)=2W$, and
  \emph{fix $v=1$} to refer to setting $w(v)=0$. 
  Let $x^*$ be a half-integral extremal LP-optimum as in
  Lemma~\ref{lem:v-persistence}; such an optimum can be computed
  efficiently in a greedy manner~\cite{wahlstrom17preprint}.
  Let $\lambda$ be the cost of $x^*$. If $\lambda > k$, then we reject the instance.
  Otherwise $\lambda < W$ and as $x^*$ is half-integral, if we have fixed $v=0$, then
  we must have $x^*(v)=0$. Furthermore, if $\lambda<k/2$,
  then $\supp(x^*)$ is an integral solution of cost at most $k$.
  Now, as in~\cite{wahlstrom17preprint}, we either find an integral
  LP-optimum, or we find a half-integral vertex $v \in \supp(x^*)$
  such that $x^*(v)=1/2$, fixing $v=0$ increases the LP-optimum cost, and $w(v)>0$.
  Then we can recursively branch on fixing $v=0$, and on fixing $v=1$
  and decreasing $k$ by $w(v)$. In the former case $\lambda$
  increases by at least $1/2$ since the LP is half-integral, and $k$
  is unchanged.  In the latter case $\lambda$ decreases by $w(v)/2$,
  but $k$ decreases by $w(v)$. Hence the value of $k-\lambda$
  decreases by at least $1/2$ in both branches. It follows
  that an exhaustive branching takes $\bigoh^*(2^{2(k-\lambda)})$ time,
  and since $\lambda \geq k/2$ initially, this is $\bigoh^*(2^k)$. 
\end{proof}

Next, we consider the \emph{edge deletion} version of the problem,
\textsc{Rooted Biased Graph Cleaning Edge} (\RBGCE{}),
which is defined similarly to \RBGC{} except that the solution is an edge
set, not a vertex set, and where the input comes with edge weights $w$ 
instead of vertex weights. By a standard reduction, this problem is also in \FPT.

\begin{proposition}\label{pro:rbgce}
  \RBGCE{} admits an fpt-algorithm with run-time $\bigoh^*(2^k)$ assuming
  that $\B$ is given by a polynomial-time membership oracle.
\end{proposition}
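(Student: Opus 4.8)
The plan is to reduce \RBGCE{} to \RBGC{} by the classical edge-subdivision trick and then invoke Proposition~\ref{pro:rbgc}. Given an instance $I=((G,\B),w,v_0,k)$ of \RBGCE{} with edge weights $w:E(G)\to\naturals$, I would build a graph $G'$ by replacing every edge $e=\{u,v\}$ of $G$ with a path $u-v_e-v$ through a fresh vertex $v_e$. I set $w'(v_e)=w(e)$ for each subdivision vertex and $w'(u)=W+1$ for each original vertex $u\in V(G)$, where $W=\sum_{e\in E(G)}w(e)$; the root $v_0$ and the budget $k$ are kept unchanged. Contracting the degree-two subdivision vertices gives a bijection between cycles of $G'$ and cycles of $G$, and I let $\B'$ be the image of $\B$ under this bijection. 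The membership oracle for $\B'$ is then immediate: given a cycle $C'\subseteq E(G')$, recover the corresponding $C\subseteq E(G)$ by mapping each pair $\{u,v_e\},\{v_e,v\}$ back to $e$, and query the oracle for $\B$; this takes polynomial time.

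The one point needing care is that $(G',\B')$ is a genuine biased graph, i.e. that $\B'$ is linear. Suppose $C_1',C_2'\in\B'$ form a theta graph $\Theta'$ in $G'$. Every subdivision vertex has degree exactly two in $G'$, whereas the two branch vertices of a theta graph have degree three; hence both branch vertices of $\Theta'$ are original vertices of $G$. Contracting the subdivision vertices therefore turns $\Theta'$ into a theta graph $\Theta$ in $G$ whose three constituent cycles are precisely the preimages of the three cycles of $\Theta'$. Since $C_1,C_2\in\B$ and $\B$ is linear, the third cycle of $\Theta$ lies in $\B$, so the third cycle of $\Theta'$ lies in $\B'$. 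Thus $\B'$ is linear and $I'=((G',\B'),w',v_0,k)$ is a valid \RBGC{} instance of size polynomial in $|I|$.

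It remains to check that $I$ and $I'$ are equivalent. Because every original vertex has weight $W+1>k$, any solution $X'$ to $I'$ consists solely of subdivision vertices, hence corresponds to an edge set $X\subseteq E(G)$ with $w(X)=w'(X')\le k$ (and $v_0\notin X'$ automatically, since $v_0$ is original). Deleting $v_e$ from $G'$ has exactly the connectivity effect of deleting $e$ from $G$; under the cycle bijection, a cycle lies in the $v_0$-component of $G'-X'$ if and only if its contraction lies in the $v_0$-component of $G-X$, and balancedness is preserved since $\B'$ is the image of $\B$. Hence $X'$ solves $I'$ if and only if the associated $X$ solves $I$, and conversely each solution $X$ of $I$ lifts to the subdivision-vertex set of the corresponding $X'$. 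Applying Proposition~\ref{pro:rbgc} to $I'$ with the inherited membership oracle thus solves $I$ in $\bigoh^*(2^k)$ time, as the parameter is unchanged and $G'$ is only polynomially larger than $G$.

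The main (and only mildly technical) obstacle is the linearity check of the second paragraph; the rest is bookkeeping. One should also note the degenerate cases: parallel edges are handled automatically by subdivision, and loops can be dealt with directly (a loop $e$ is an unbalanced length-one cycle exactly when $e\notin\B$, in which case $e$ is forced into every solution, and it can otherwise be discarded), so they do not affect the argument.
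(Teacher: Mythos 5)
Your proof is correct and takes essentially the same approach as the paper's: reduce \RBGCE{} to \RBGC{} by subdividing every edge once, give subdivision vertices the original edge weight and original vertices a prohibitively large weight, and transfer $\B$ along the cycle bijection. Your explicit verification that $\B'$ is linear (via the degree-three branch vertices of a theta graph being original) is a detail the paper leaves implicit, but the argument is otherwise the same.
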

\begin{proof}
  Let $I=((G,\B),w,v_0,k)$ be an instance of \RBGCE{}.  We provide a
  polynomial-time and parameter preserving reduction to \RBGC{}, which together with
  Proposition~\ref{pro:rbgc} shows the proposition. The instance
  $I'=((G',\B'),w',v_0',k')$ of \RBGC{} is obtained from $I$ as
  follows. We set $k'=k$. The graph $G'$ is obtained from $G$ by
  subdividing every edge of $G$ exactly once. We set the
  weight $w'$ of every original vertex to $k+1$ and the weight of
  every new (subdividing) vertex $x_e$, subdividing an edge $e$, to $w'(x_e)=w(e)$. Finally, we let $\B'$ be the set of
  all cycles $C$ in $G'$ such that the cycle obtained from $C$ after
  reversing the subdivision is in $\B$.
  This completes the construction of $I'$, which can clearly be
  achieved in polynomial-time and is parameter preserving. It remains to show that $I$ is a
  yes-instance if and only if $I'$ is a yes-instance.

  Towards showing the forward direction, let $X \subseteq E(G)$ be a
  solution for $I$ and let $X'\subseteq V(G')$ be the set of vertices
  used for subdividing the edges in
  $X$. We claim that $X'$ is a solution for $I'$. Suppose for
  contradiction that this is not the case. Then, there is a cycle $C'$
  in $G'-X'$ reachable from $v_0$ with $C' \notin \B'$. But then, the cycle $C$ obtained from $C'$ after reversing the
  subdivision of all edges is also in $G-X$ and reachable from $v_0$. Finally, because
  $C'\notin \B'$, we obtain that $C \notin \B$, contradicting our
  assumption that $X$ is a solution of $I$.

  Towards showing the reverse direction, let $X' \subseteq V(G')$ be a
  solution for $I'$. Then, because the weight of every original vertex
  is $k+1$, $X'$ only contains vertices used for the subdivision of
  edges of $G$.
  We claim that the set $X$ containing all edges of $G$
  whose corresponding vertex in $G'$ is in~$X'$ is a solution for
  $I$. Suppose for contradiction that this is not the case and there
  is a cycle $C$ in $G-X$ reachable from $v_0$ with $C \notin \B$. Let $C'$ be the
  corresponding cycle in $G'$, i.e. $C'$ is obtained from $C$ after
  subdividing each edge of $C$. Then, $C'$ is also in $G'-X'$ because
  $X'$ does not contain any (original) vertex of $C$. Moreover, $C'$
  is reachable from $v_0$ and $C' \notin \B'$, a contradiction to our
  assumption that $X'$ is a solution for $I'$.
\end{proof}

On a side note, we observe that the standard, non-rooted versions of the problems 
\RBGC{} and
\RBGCE{} are in \FPT.
The result follows from same procedure as in the original paper~\cite{wahlstrom2017lp},
building on Prop.~\ref{pro:rbgc} and~\ref{pro:rbgce}.

\begin{proposition}\label{pro:bgcw-fpt}
  The non-rooted variants of \RBGC{} and \RBGCE{} both admit an fpt-algorithm with run-time $\bigoh^*(4^k)$ assuming
  that $\B$ is given by a polynomial-time membership oracle.
\end{proposition}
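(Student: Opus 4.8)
The plan is to reduce both non-rooted problems to their rooted counterparts, Propositions~\ref{pro:rbgc} and~\ref{pro:rbgce}, paying an extra factor of $2^k$. First, the subdivision gadget from the proof of Proposition~\ref{pro:rbgce} reduces the non-rooted edge-deletion problem to the non-rooted vertex-deletion problem (non-rooted \RBGC): subdivide every edge once, give each original vertex weight $k+1$ and each subdivision vertex the weight of its edge, and declare a cycle balanced iff reversing the subdivisions produces a cycle in $\B$. So it suffices to solve non-rooted \RBGC in $\bigoh^*(4^k)$ time. We may further assume that every deletable vertex has weight between $1$ and $k$: weight-$0$ vertices can be added to the solution for free and removed up front (deleting a vertex never creates cycles), and vertices of weight exceeding $k$ lie in no solution of weight at most $k$ and can be pinned.

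Next, apply iterative compression on the deletable vertices: start from $G$ induced on the pinned vertices (reject if it is already unbalanced) and add the deletable vertices one at a time. After adding a deletable vertex $v$, all new cycles pass through $v$, so $X \cup \{v\}$ is a solution of the enlarged subgraph whenever $X$ was a solution of the previous one; since weights are positive, this oversized solution $Z$ has at most $k+1$ vertices. To compress $Z$ to weight at most $k$, guess the intersection $A = Z \cap X^\star$ with a hypothetical optimum $X^\star$ ($2^{k+1}$ branches); deleting $A$, setting $k' = k - w(A)$ and $B = Z \setminus A$, we obtain a \emph{disjoint} instance: a biased graph $(G',\B)$ with $G' = G - A$, in which $B$ is itself a solution, and we must find a solution of weight at most $k'$ that avoids $B$.

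The point is that the disjoint instance can be solved without further branching. In any feasible solution $X'$ the set $B$ survives, and every component of $G'-X'$ missing $B$ is already balanced (because $B$ solves $(G',\B)$), so $X'$ only needs to balance the components meeting $B$. Peel the vertices of $B$ one at a time: for the current root $b$, call Proposition~\ref{pro:rbgc} on $(G',\B)$ rooted at $b$ with $B\setminus\{b\}$ pinned, obtaining a minimum-weight rooted solution $Y_b$; using the LP persistence of Lemma~\ref{lem:v-persistence} we may assume $Y_b$ is extremal, hence (replacing the part of any feasible $X'$ that borders $b$'s component by $Y_b$, which by the lemma does not cost more) it is safe to commit to $Y_b$, delete it together with $b$'s now-balanced component, reduce the budget by $w(Y_b)$, discard the absorbed vertices of $B$, and recurse; the residual subset of $B$ still certifies feasibility and has at least one fewer vertex, so the peeling stops after at most $k+1$ rounds. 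Each round makes one call to Proposition~\ref{pro:rbgc} with budget at most $k$, so the disjoint instance is solved in $\bigoh^*(2^k)$ time; combined with the $2^{k+1}$ branches guessing $A$ and the $\bigoh(|V(G)|)$ compressions, the total running time is $\bigoh^*(4^k)$.

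I expect the delicate part to be the commitment step inside the disjoint instance: one has to verify, via Lemma~\ref{lem:v-persistence}, that swapping the ``near-$b$'' portion of an arbitrary disjoint solution for the extremal rooted solution $Y_b$ keeps the whole instance feasible without increasing the weight --- only then is committing to $Y_b$ and recursing correct. The remaining ingredients (the subdivision reduction, the weight normalisation, and the iterative-compression bookkeeping) are routine and follow Wahlstr\"om~\cite{wahlstrom2017lp}.
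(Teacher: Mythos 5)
Your overall structure (subdivision reduction from edge to vertex deletion, weight normalisation, iterative compression with $2^{k+1}$ branches, then a disjoint‑compression phase exploiting that only cycles meeting the certificate $B$ matter) is the same structure that underlies Wahlstr\"om's Theorem~1, which the paper simply cites. The subdivision reduction and the iterative‑compression bookkeeping are fine. The gap is exactly the step you flag yourself: the commitment to $Y_b$.

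Lemma~\ref{lem:v-persistence} is a statement about a \emph{half-integral extremal LP-optimum} $x^*$ and its reachability set $V_R(x^*)$. It lets you conclude that the value-$1$ vertices $V_1(x^*)$ may be committed to the solution and that $V_R(x^*)$ may be pinned, but it says nothing about the integral set $Y_b$ produced at a leaf of the branching tree in Proposition~\ref{pro:rbgc}. There is no notion of ``$Y_b$ is extremal'': the branching algorithm returns some minimum rooted solution, and its reachable set $C_b$ need not coincide with $V_R(x^*)$ (indeed it is typically much larger, since $x^*$ may be half-integral). The sentence ``replacing the part of any feasible $X'$ that borders $b$'s component by $Y_b$ \ldots does not cost more'' therefore does not follow from the lemma: the lemma shows you can grow the component of $b$ in $G'-X'$ to include $V_R(x^*)$ at no extra cost and with $V_1(x^*)$ deleted, but that leaves the half-integral vertices $X_{1/2}(x^*)$ undecided, and an arbitrary $Y_b$ may resolve them in a way that is incompatible with the rest of $X'$. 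Concretely, the residual weight bound you need is $w\bigl(X^* \cap (Y_b \cup C_b)\bigr) \geq w(Y_b)$, and this does not follow from $w(Y_b) \leq w\bigl(N_{G'}(S^*)\bigr)$ because $N_{G'}(S^*)$ (the boundary of $b$'s component in $G'-X^*$) need not lie inside $Y_b \cup C_b$ — an optimal $X^*$ can kill a cycle through $b$ with a vertex far outside $Y_b \cup C_b$.

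What actually closes the argument (and what ``the same procedure as in the original paper'' refers to) is to interleave the peeling with the LP-branching rather than calling Proposition~\ref{pro:rbgc} as a black box: root at $b$, compute the extremal half-integral optimum $x^*$, commit to $V_1(x^*)$ and pin $V_R(x^*)$ (this \emph{is} justified by Lemma~\ref{lem:v-persistence}, applied to a hypothetical disjoint global optimum $X^*$), then branch on a half-integral vertex of $X_{1/2}(x^*)$ exactly as in the proof of Proposition~\ref{pro:rbgc}. When a branch reaches an integral LP optimum the component of $b$ is discarded and the next vertex of $B$ becomes the root, but the budget-versus-LP-gap invariant is carried across roots, so the total branching depth stays $\bigoh(k-\lambda_0)$ with $\lambda_0 \geq k/2$, giving $\bigoh^*(2^k)$ for the disjoint phase and $\bigoh^*(4^k)$ overall. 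Your peel-and-commit version may well compute a correct answer, but as written it is justified by an appeal to a lemma that does not apply to the object $Y_b$ you commit to, so the crucial step is unproven.
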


\subsection{Important Balanced Subgraphs}
\label{sec:importantbalancedsubgraphs}

\emph{Important separators} are a central concept in fpt algorithms for graph separation problems that was
originally defined by Marx~\cite{marx2006parameterized}. Let $G$ be an undirected graph,
and let $X, Y \subset V(G)$ be disjoint sets of vertices. For 
an $(X,Y)$-cut $C \subseteq V(G)$, let $R(X,C)$ be the set of vertices reachable
from $X$ in $G-C$. Then $C$ is an \emph{important $(X,Y)$-separator}
if, for every $(X,Y)$-cut $C'$ such that $|C'| \leq |C|$
and $R(X,C) \subseteq R(X,C')$, we have $C'=C$.
In other words, for any $(X,Y)$-cut $C'$ such that $R(X,C) \subsetneq R(X,C')$
we must have $|C'| > |C|$.  Marx showed that for any graph $G$
and sets $X, Y$, there are at most $f(k)$ distinct important separators
$C$ with $|C| \leq k$~\cite{marx2006parameterized}, and this
bound was later improved to $f(k)=4^k$ (see~\cite{cygan2015book}).
The same bound applies to both undirected and directed graphs,
and by using standard reductions it also applies to edge cuts.
Important separators are a key component in many fpt algorithms,
including the algorithms for \textsc{Multiway Cut}~\cite{marx2006parameterized} and
\textsc{Multicut}~\cite{marx2014fixed} (see Cygan et al.~\cite{cygan2015book} for more applications).

We show a new result on the solution structure of \RBGCE that
generalizes important separators for undirected edge cuts.
We first note that the number of (in some sense) incomparable
solutions to \RBGCE is not bounded in $k$.
Indeed, if $C$ is an unbalanced cycle on $n$ vertices, then deleting
any one edge of $C$ is a minimal solution, and there is no clear order
of preference between these solutions. 
On the other hand, it turns out that a result in the style of
important separators does hold in terms of vertex sets of balanced
connected subgraphs of a biased graph.

Let us introduce some terminology. 
Let $(G, \B)$ be a biased
graph and let $w \colon E(G) \to \mathbb{Z}$ be a set of edge weights.
Let $H$ be a subgraph of $G$.
Let $\delta_G(X)$ for $X \subseteq V(G)$ denote the set of edges in
$G$ with precisely one endpoint in $X$, and for $F \subseteq E(G)$
we denote $w(F)=\sum_{e \in F} w(e)$.
We then define the \emph{cost of $H$} as
the cost of the edges in $G$ incident with $V(H)$ but not present in $H$ i.e.
\[
  c_G(H) = w(E(G[V(H)]) \setminus E(H)) + w(\delta_G(V(H))).
\]
We refer to $(E(G[V(H)]) \setminus E(H)) \cup \delta_G(V(H))$ as the
\emph{deleted edges} of $H$.

\bigskip

Let $H$ and $H'$ be balanced subgraphs
(with respect to $\B$) of $G$. We say that
$H'$ \emph{dominates} $H$ if $V(H) \subseteq V(H')$ and $c_G(H) \geq c_G(H')$,
and that $H'$ \emph{strictly dominates} $H$ if at least one of these two
inequalities is strict. Analogously to important separators, we refer to $H$
as an \emph{important balanced subgraph} in $(G,\B)$ if $H$ is a connected,
balanced subgraph of $G$ and no balanced subgraph $H'$ 
of~$G$ strictly dominates~$H$. 
We refer the reader to Figure~\ref{fig:ibs} for an illustration. 
Note by Lemma~\ref{lemma:dom-is-connected} that we may assume here that $H'$ is also connected.
Importantly, observe that if $H$ is dominated by~$H'$, then $H$ might not be a subgraph of $H'$. In the example of a single unbalanced cycle $C$, the subgraphs $C-\{e\}$ for $e \in E(C)$ all mutually dominate each other, although not strictly. 

Let $\cG := \cG(G, \B, k, v_0)$ be the family of connected balanced subgraphs in $(G, \B)$ 
that contain $v_0$ and have cost at most $k$.
A subset $\cH \subseteq \cG$ is a \emph{dominating family for $\cG$}
if for any $H \in \cG$ there is a subgraph $H'\in \cH$ that dominates $H$.
We show the following result.

\begin{theorem}[Dominating family of important balanced subgraph] \label{thm:important-subgraphs}
\setcounter{theoremspecial}{\thetheorem}
  Let $(G, \B)$ be a biased graph with positive integer edge
  weights $w$, let $v_0 \in V(G)$ and let $k$ be an integer.
  Let $\cG := \cG(G, \B, k, v_0)$ be the family of connected balanced subgraphs in $(G, \B)$ 
  that contain $v_0$ and have cost at most $k$.
  Then, in $\bigoh^*(4^k)$ time we can compute a dominating family $\cH$ for $\cG$
  such that $|\cH| \leq 4^k$. Furthermore, every member of $\cH$ is an important balanced subgraph of $(G, \B)$.
\end{theorem}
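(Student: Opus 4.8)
The plan is to lift the fpt algorithm for \RBGCE{} (Proposition~\ref{pro:rbgce}) to an \emph{enumeration} algorithm, mirroring the way the bound of $4^k$ on important separators is obtained from a single minimum cut. The starting point is the LP-relaxation underlying \RBGCE{}: via the subdivision reduction in the proof of Proposition~\ref{pro:rbgce}, the vertex LP-relaxation of \RBGC{} becomes an \emph{edge} LP-relaxation of \RBGCE{} that is polynomial-time solvable, half-integral, and satisfies the persistence properties, in particular Lemma~\ref{lem:v-persistence}. First I would make the correspondence between extremal half-integral optima and rooted balanced subgraphs precise: if $x^*$ is a half-integral extremal optimum of value $\lambda$, then the set $V_R(x^*)$ of vertices reachable from $v_0$ in $G-\supp(x^*)$ is the vertex set of a connected balanced subgraph $H^*$ whose deleted edges are exactly the edges of $G$ with $x^*$-value $1$ (these lie inside $V(H^*)$) together with the edges with $x^*$-value $\tfrac12$ (which form $\delta_G(V(H^*))$); writing $D$ and $\Delta$ for these two edge sets, $c_G(H^*)=w(D)+w(\Delta)$ while $\lambda=w(D)+\tfrac12 w(\Delta)$. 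Two easy facts get used repeatedly: the indicator vector of the deleted edges of any $H\in\cG$ is LP-feasible, so $\lambda\le c_G(H)$ for every $H\in\cG$, and hence $\lambda>k$ implies $\cG=\emptyset$; and if $\Delta=\emptyset$ then $c_G(H^*)=\lambda$, i.e.\ $H^*$ has minimum cost.

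The structural heart is a domination lemma obtained from Lemma~\ref{lem:v-persistence}. Applying it in the subdivided graph with $S$ taken to be the subdivision of an arbitrary $H\in\cG$ (balanced, connected, containing $v_0$), and using that original vertices carry weight $k+1>k\ge c_G(H)=w(N_{G'}(S))$ so that no original vertex may be dropped, one extracts a connected balanced subgraph $\tilde H$ of $G$ with $V_R(x^*)\cup V(H)\subseteq V(\tilde H)$, $c_G(\tilde H)\le c_G(H)$, and whose deleted edges contain $D$. In particular $\tilde H$ dominates $H$, so it suffices to enumerate a dominating family for the \emph{important} members of $\cG$; moreover every important member of $\cG$ contains $V(H^*)$. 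When $\Delta=\emptyset$ this already shows $\{H^*\}$ dominates $\cG$, because a minimum-cost balanced subgraph whose deleted edges contain $D$ and whose vertex set contains $V_R(x^*)$ must equal $H^*$ (here extremality of $x^*$, i.e.\ maximality of $V_R(x^*)$ over LP-optima, is what pins the vertex set down).

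The algorithm is then a branching analogous to that of Chen et al.~\cite{chen2009improved}. At a node we compute a half-integral extremal optimum $x^*$ of value $\lambda$ together with $H^*$, $D$, $\Delta$. If $\lambda>k$ we return $\emptyset$; if $\Delta=\emptyset$ we return $\{H^*\}$. Otherwise we pick $e=\{u,v\}\in\Delta$ with $u\in V(H^*)$, $v\notin V(H^*)$, and branch on the \emph{status of $e$} in the target subgraph. In the first branch $e$ is a deleted edge: recurse on $(G-e,\B|_{G-e},w,v_0,k-w(e))$ and read each returned subgraph back as a subgraph of $G$ (in which $e$ is then automatically a deleted edge). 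In the second branch the far endpoint $v$ belongs to the target: recurse on the instance obtained by contracting $e$ (merging $v$ into the root side), with budget $k$, and lift the returned subgraphs back by un-contracting. Correctness uses the domination lemma: for an important $H\in\cG$ we have $u\in V(H^*)\subseteq V(H)$, so $e$ is either in $E(H)$ (handled by the first branch, where $c_{G-e}(H)=c_G(H)-w(e)\le k-w(e)$) or a deleted edge of $H$ (handled by the first branch as well with $v\notin V(H)$, or — if we instead guess $v\in V(H)$ — by the second branch), and in each case the recursive output contains a subgraph dominating the corresponding restriction/contraction of $H$, which lifts to a subgraph of $G$ still dominating $H$; a short post-processing step replaces each output subgraph by an important one that dominates it (preserving the dominating property and the size bound). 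For the running time and the size bound, take the measure $\mu=2k-\lambda$. In the first branch $\mu$ drops by $w(e)\ge1$, since $\lambda(G-e)\ge\lambda-w(e)$ (any $(G-e)$-optimum, extended by deleting $e$, is feasible for $G$). In the second branch $\mu$ drops because the LP value strictly increases under the contraction — the analogue of "the minimum cut closest to $v_0$'' — which again relies on extremality of $x^*$ and on biased-graph contraction preserving linearity of $\B$. Since $0\le\lambda\le k$ at every branching node, $\mu$ starts at most $2k$ and stays nonnegative, so the recursion tree has at most $2^{2k}=4^k$ leaves, each producing at most one subgraph, and every node does polynomial work given the membership oracle; hence $|\cH|\le4^k$ and the total time is $\bigoh^*(4^k)$.

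I expect the main obstacle to be the second branch: establishing that contracting a half-integral boundary edge raises the LP optimum by enough for the measure to drop by $1$ (so that the bound is $4^k$ rather than, say, $16^k$), which forces one to exploit both the extremality of $x^*$ and the theta structure of $\B$, and to check that contraction of a biased graph keeps $\B$ linear and behaves well with respect to balloons. A secondary, purely technical nuisance is the bookkeeping between the subdivided-graph vertex sets produced by Lemma~\ref{lem:v-persistence} and honest subgraphs of $G$ — in particular discarding components not containing $v_0$ to keep the replacement subgraph $\tilde H$ connected — and verifying that the lifted subgraphs at the leaves really are important balanced subgraphs of $(G,\B)$ and not merely of the modified instances along their branch.
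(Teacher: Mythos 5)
Your proposal follows the same high-level blueprint as the paper (edge LP-relaxation obtained by subdivision, extremal half-integral optima, a domination/persistence lemma that is essentially the paper's Lemma~\ref{edge-persistence}, and Chen-et-al-style branching over half-integral boundary edges), and most of the structural setup is correct. The gap lies exactly where you flag it, in the second branch, but it is deeper than you anticipate: contracting the single half-integral boundary edge $e$ does \emph{not} raise the LP value at all, let alone by $1$. Take $G$ to be the path $v_0 - v_1 - v_2$ with an unbalanced triangle on $\{v_2,a,b\}$, unit weights, rooted at $v_0$. The extremal optimum is $x^*(\{v_1,v_2\})=\tfrac12$ with $\lambda=\tfrac12$ and $V_R(x^*)=\{v_0,v_1\}$; every balloon passes through $\{v_0,v_1\}$ and $\{v_1,v_2\}$ with coefficient $2$, and the LP can ``retreat'' to $\{v_0,v_1\}$. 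After contracting $e=\{v_1,v_2\}$, the new extremal optimum is $x'(\{v_0,v'\})=\tfrac12$, again of cost $\tfrac12$: the reachable set has shrunk (to $\{v_0\}$ pre-contraction) rather than grown, because nothing prevents the LP in $G/e$ from sliding further back toward the root. The analogy to ``the min cut closest to $v_0$'' fails here precisely because the LP boundary, unlike a min cut, is not nested under restriction of a single boundary edge.

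The paper avoids this by branching on an auxiliary state $(E_0,E_1)$ and, crucially, before touching $e$, first updating $E_0 \mapsto E_0\cup E(G_B)$ so that \emph{all} edges of the current rooted balanced subgraph $G_B$ become undeletable (weight $2k+1$), in addition to accumulating the $1$-valued edges $X_1$ into $E_1$. Then in the ``$e$ undeletable'' branch the full set $E(G_B)\cup\{e\}$ has huge weight, forcing the new reachable set to contain $V(G_B)\cup\{v\}\supsetneq V(G_B)$; by extremality of $B^*$ (which maximizes $V_R$ over optima) the resulting solution cannot be an optimum of $LP_e(B)$, so $\lambda$ strictly increases, and by half-integrality it increases by at least $\tfrac12$. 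You would need to contract (or weight-protect) all of $E(H^*)$, not just $e$, and to carry forward the deleted $D$-edges, to obtain an analogous guarantee. Even then, the increase is only $\tfrac12$ (half-integrality), so your measure $\mu=2k-\lambda$, which starts at up to $2k$, would only yield depth $4k$ and $16^k$ leaves. The paper's measure $k-\bigl(w(E_1)+\lambda\bigr)$ starts at at most $k$ and decreases by at least $\tfrac12$ in both branches (in the deletion branch, $w(E_1)$ goes up by $w(e)$ while $\lambda$ drops by strictly less than $w(e)$, again by extremality), giving depth at most $2k$ and the $4^k$ bound. So while you correctly identified the pressure point, the fix requires both protecting $E(G_B)$ and reworking the potential function.
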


Before we present our proof of Theorem~\ref{thm:important-subgraphs}, we illustrate that important
biased subgraphs are indeed a generalisation of important separators.

\begin{example} \label{ex:important-separators}
Let $G$ be an undirected graph and $s, t \in V(G)$ be distinguished vertices. Note 
that since we are considering edge cuts, the assumption that $s$ and $t$ are single
vertices (as opposed to disjoint vertex sets $X$ and $Y$) can be
made without loss of generality. Now, add two vertices $z, z'$
and three edges $e_1=\{t,z\}$, $e_2=\{z,z'\}$, $e_3=\{t,z'\}$. Let $G'$ be the resulting graph,
and let $\mathcal{B}$ be the set containing all cycles except $C_t=\{e_1,e_2,e_3\}$.
Note that $\mathcal{B}$ trivially defines a linear class, so $(G', \mathcal{B})$ is a biased graph.
Finally, set edge weights $w(e_i)=k+1$, for $i \in \{1,2,3\}$, and $w(e)=1$ for every other edge $e \in E(G)$,
and use $v_0=s$ as the root vertex. 
Then a connected subgraph $H$ of $G'$ with $s \in V(H)$ and of cost at most $k$ is balanced
if and only if $t \notin V(H)$ (since breaking the unbalanced cycle $C_t$
would exceed the budget). Hence $H$ is a connected, balanced subgraph of $G'$
with $s \in V(H)$ and of cost at most $k$ if and only if the set of deleted edges $C$ of $H$ 
contains an $st$-cut in $G$. Furthermore, in such a case $V(H)=R(\{s\},C)$.
We see that the important $(s,t)$-separators in $G$ of cost at most $k$ directly correspond to the 
deleted edges of important balanced subgraphs~$H$ of $G'$ with $s \in V(H)$ and of cost at most $k$.
\end{example}

We proceed to prove Theorem~\ref{thm:important-subgraphs}; this occupies the rest of the subsection. 
We first note that we may assume that strictly dominating subgraphs are connected. 

\begin{lemma} \label{lemma:dom-is-connected}
  Assume that $G$ is a connected undirected graph that has no zero-weight edges.  Let $H$ be
  a connected balanced subgraph of $G$. If there is a balanced graph
  $H'$ that strictly dominates $H$, then there is also a connected balanced 
  graph $H'$ that strictly dominates $H$. Furthermore, if $H'$ is of minimum cost
  among all balanced graphs that dominate $H$, then $H'$ is connected. 
\end{lemma}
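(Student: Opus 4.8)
The plan is to take an arbitrary balanced subgraph $H'$ that strictly dominates $H$ and turn it into a connected one by two ``improvement'' moves, each of which keeps (or strengthens) domination; throughout I would use that all edge weights are positive, so that shrinking the set of deleted edges of a subgraph strictly lowers the cost $c_G$.

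First I would restrict $H'$ to a single component of $G[V(H')]$. Since $H$ is connected, $G[V(H)]$ is connected and sits inside $G[V(H')]$, so there is a unique component $W^*$ of $G[V(H')]$ with $V(H)\subseteq W^*$; replace $H'$ by $H_1:=(W^*,\,E(H')\cap E(G[W^*]))$. This $H_1$ is a subgraph of $H'$, hence balanced, and still contains $V(H)$. Its deleted edges form a subset of those of $H'$: any edge of $G[W^*]$ missing from $H_1$ is also missing from $H'$, and $\delta_G(W^*)\subseteq\delta_G(V(H'))$ because $G$ has no edge between distinct components of $G[V(H')]$. Hence $c_G(H_1)\le c_G(H')$. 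If $W^*\neq V(H')$, then some other component is nonempty and, as $G$ is connected, carries an edge to $V(G)\setminus V(H')$ that is deleted by $H'$ but not by $H_1$, so $c_G(H_1)<c_G(H')$; otherwise $H_1=H'$. Either way $H_1$ still strictly dominates $H$, and now $G[V(H_1)]$ is connected.

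Second I would make $H_1$ edge-maximal: greedily add edges of $G[V(H_1)]\setminus E(H_1)$ as long as the result stays balanced, obtaining $H_2$ with the same vertex set and $c_G(H_2)\le c_G(H_1)$, so $H_2$ still strictly dominates $H$. Now $H_2$ must be connected: if it had two components, then, as $G[V(H_2)]=G[V(H_1)]$ is connected, some edge of $G[V(H_2)]$ joins them, and adding that edge to $H_2$ creates no cycle and so keeps it balanced, contradicting maximality. Thus $H_2$ is a connected balanced subgraph strictly dominating $H$, which proves the first claim. For the ``furthermore'': if $H'$ already has minimum cost among the balanced graphs dominating $H$, then the first move cannot strictly lower its cost, so by the analysis above it does not change $V(H')$, i.e. $G[V(H')]$ is connected; and were $H'$ disconnected, adding an edge of $G[V(H')]$ between two of its components would keep it balanced and, by positivity, strictly decrease $c_G$ while still dominating $H$, contradicting minimality. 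Hence $H'$ is connected.

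The only step needing care is the cost bookkeeping in the first move, since $c_G$ is built from both the internal missing edges of a subgraph and its boundary $\delta_G$: one has to check precisely that passing to $W^*$ creates no new kept edge and to pin down exactly when the cost strictly drops (which is what upgrades ``dominates'' to ``strictly dominates''). The remaining ingredients --- balancedness being inherited by subgraphs, an edge joining two different components creating no cycle, and termination of the greedy maximization --- are routine.
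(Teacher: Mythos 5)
Your proof is correct but takes a genuinely different route from the paper's. The paper uses an extremal argument: fix $H'$ to be a cost-minimizer among all strictly dominating balanced subgraphs, then derive a contradiction from disconnectedness, first by deleting a component of $H'$ disjoint from $V(H)$, and then, once every component of $H'$ meets $V(H)$, by adding an edge of $E(H)$ that joins two components of $H'$. Yours is constructive: you pin down the target vertex set in one step by passing to the component $W^*$ of $G[V(H')]$ that contains $V(H)$, and then greedily saturate with balance-preserving edges until the result is connected. This avoids the two-case split and never needs edges of $E(H)$ --- any edge of $G[W^*]$ between distinct components of the current subgraph suffices, since adding such an edge is a bridge-addition and creates no new cycle. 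In exchange, the paper's argument is shorter, because the minimal-counterexample choice absorbs both repair moves into a single ``hence by choice of $H'$'' conclusion. You carry out the cost bookkeeping explicitly (in particular the strict drop when $W^* \subsetneq V(H')$, via a deleted edge from another component of $G[V(H')]$ to $V(G) \setminus V(H')$), where the paper handles it implicitly through minimality, and you address the ``furthermore'' clause directly using the same core observation as the paper --- a deleted edge inside $G[V(H')]$ joining distinct components of $H'$ strictly lowers the cost --- with positivity of edge weights used in both proofs.
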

\begin{proof}
  Assume that there exists a balanced subgraph $H'$ of $G$ that strictly
  dominates $H$, and let~$H'$ be chosen to minimize $c_G(H')$ among all
  such subgraphs $H'$. Suppose that $H'$ is not connected, and first
  suppose that $H'$ contains a connected component $C$ such that $C
  \cap V(H)=\emptyset$.  Then $H'-C$ is balanced, $c_G(H'-C) < c_G(H')$
  since $G$ is connected, and $V(H) \subseteq V(H'-C)$. Hence $H'-C$
  also dominates $H$, and would be the preferred choice over $H'$.
  Hence we proceed assuming that every connected component of $H'$
  intersects $V(H)$. Now let $e \in E(H)$ be an edge connecting
  distinct connected components in $H'$. Such an edge clearly exists,
  e.g. follow a path in $H$ whose endpoints lie in distinct
  components of $H'$. Then adding $e$ to $H'$ yields another balanced
  subgraph $H''$, since no cycle passes through $e$ in $H''$.
  Then $c_G(H'')<c_G(H')$ and $H''$ dominates $H$; hence by choice of~$H'$,
  no such edge can exist. We conclude that $H'$ is connected. 
\end{proof}

For the proof of Theorem~\ref{thm:important-subgraphs},
we begin by adapting the LP-relaxation for \RBGC
and Lemma~\ref{lem:v-persistence} to the edge-deletion version \RBGCE.
More precisely, the LP-relaxation for \RBGC and Lemma~\ref{lem:v-persistence} are
both defined in terms of a solution space $x \in \{0,1/2,1\}^{V(G)}$ 
of half-integral relaxed solutions over the vertex set of a graph.
We give a natural reduction from \RBGCE to \RBGC,
and as a consequence construct 
half-integral optimal solutions $x^* \in \{0,1/2,1\}^{E(G)}$
to the edge-deletion version of the problem.
We also observe the following persistence properties
of such a solution~$x^*$, as a simplification
of  Lemma~\ref{lem:v-persistence}.
We refer to these solutions as the LP-relaxation of \RBGCE
(indeed, they are a projection of the solutions to the LP-relaxation
of the \RBGC-instance resulting from the reduction, so they
correspond to an LP over variables $E(G)$).

\begin{lemma} \label{edge-persistence}
  Let $I=((G, \B), v_0, k)$ be an instance of \RBGCE.
  In polynomial time, we can compute a half-integral extremal optimum
  $x^*=X_1 + \tfrac 1 2 X_{1/2}$ of the LP-relaxation of $I$ such
  that the following holds. 
  Let $X = X_1 \cup X_{1/2}$ be the support of $x^*$, $X \subseteq E(G)$. Let $G_R$ be the
  subgraph consisting of edges reachable from $v_0$ in $G-X$.
  Let $H$ be any connected balanced subgraph of $G$ with
  $v_0 \in V(H)$.  Then there is a balanced subgraph $H'$ of $G$ on
  vertex set $V(G_R) \cup V(H)$ such that $G_R$ is a subgraph of $H'$,
  $c_G(H') \leq c_G(H)$, and $X_1 \cap E(H') = \emptyset$. 
  
  In particular, unless $V(G_R) \subseteq V(H)$, there is a graph $H'$ that strictly dominates $H$ and has $V(G_R) \subseteq V(H')$.  \label{lemma:gr-in-dom}
\end{lemma}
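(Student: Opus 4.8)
The plan is to derive Lemma~\ref{edge-persistence} from the vertex version, Lemma~\ref{lem:v-persistence}, via the edge-to-vertex reduction and then to translate the persistence statement back to edge-subgraphs of $G$. First I would form the \RBGC-instance $I'=((G',\B'),w',v_0,k)$ essentially as in the proof of Proposition~\ref{pro:rbgce}: $G'$ subdivides every edge $e$ of $G$ once with a new vertex $x_e$, $\B'$ consists of the cycles of $G'$ whose de-subdivision lies in $\B$, and $w'(x_e)=w(e)$; the only change is that I assign every original vertex the large weight $w'(v):=1+\sum_{e\in E(G)}w(e)$. Balloons of $I'$ correspond to balloons of $I$ up to subdivision, so the LP-relaxations are in correspondence, and I will work with the LP-relaxation of $I'$ and with its polynomial-time computable, half-integral extremal optimum as supplied by Wahlstr\"om.

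The crucial intermediate step is to show that \emph{every} optimal half-integral solution $x'$ of the LP-relaxation of $I'$ vanishes on every original vertex. Assume $x'(v)>0$ for some $v\ne v_0$, and define $x''$ from $x'$ by setting $x''(v)=0$ and $x''(x_e)=\min(1,\,x'(x_e)+x'(v))$ for every edge $e$ incident to $v$, all other values unchanged; this preserves half-integrality. For feasibility, fix a balloon constraint decomposed into $v_0$-paths $P_1',P_2'$ with $x'(P_1')+x'(P_2')\ge 1$. The key observation is that whenever $v$ lies on $P_i'$ --- necessarily as an interior vertex or as the distinguished endpoint --- the path also visits some neighbour $x_e$ of $v$ with $e\ni v$; hence either $x''(x_e)=1$ so that $x''(P_i')\ge 1$, or the increase of $x''(x_e)$ over $x'(x_e)$ on $P_i'$ is exactly $x'(v)$ and compensates the drop at $v$, so $x''(P_i')\ge x'(P_i')$. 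Since for each $i$ one of these two alternatives holds (trivially so if $v\notin P_i'$), $x''(P_1')+x''(P_2')\ge 1$. As $w'(v)$ exceeds the total weight of the edges incident to $v$, the cost of $x''$ is strictly smaller than that of $x'$, contradicting optimality. Consequently Wahlstr\"om's half-integral extremal optimum $x'$ of $I'$ is supported on subdivision vertices only, and $x^*\in\{0,1/2,1\}^{E(G)}$ given by $x^*(e):=x'(x_e)$ has the same cost; this projection is by definition the LP-relaxation of $I$, and it is extremal because the set reachable from $v_0$ in $G'-\supp(x')$ is precisely the subdivision of $G_R$, so maximality of the reachable set transfers between $x'$ and $x^*$.

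For the persistence statement, given a connected balanced subgraph $H$ of $G$ with $v_0\in V(H)$, I would apply Lemma~\ref{lem:v-persistence} to $I'$, $x'$ and the vertex set $S:=V(H)\cup\{x_e: e\in E(H)\}$, i.e.\ the subdivision of $H$; note $v_0\in S$, and $G'[S]$ is the subdivision of $H$, hence connected and balanced iff $H$ is. This yields $S^+$ and $S'\subseteq S^+$ with $G'[S']$ balanced and properties (1)--(5), and additionally $S\subseteq S'$ (Wahlstr\"om's construction only grows the balanced set). Short computations give $N_{G'}(S)=\{x_f: f\text{ incident to }V(H),\ f\notin E(H)\}$, so that $w'(N_{G'}(S))=c_G(H)$, and identify $V_R(x')$ as the subdivision of $G_R$, so that $S^+=(V(H)\cup V(G_R))\cup\{x_e: e\text{ incident to }V(H)\cup V(G_R)\}$. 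I then set $V(H'):=V(H)\cup V(G_R)$ and $E(H'):=\{e: x_e\in S'\}$. Using $N_{G'}[S']\subseteq S^+$ one checks that every $e$ with $x_e\in S'$ has both endpoints in $V(H')$, hence $S'\cap V(G)=V(H')$ and $G'[S']$ is exactly the subdivision of $H'$; therefore $G'[S']$ balanced implies $H'$ balanced. Moreover $G_R$ is a subgraph of $H'$ (its edges $e$ have $x_e\in V_R(x')\subseteq S'$); $X_1=V_1(x')$ is disjoint from $S'$ by property (4), so $X_1\cap E(H')=\emptyset$; and $S^+\setminus S'$ consists exactly of the subdivision vertices of the deleted edges of $H'$, so $c_G(H')=w'(S^+\setminus S')\le w'(N_{G'}(S))=c_G(H)$ by property (5). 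The ``in particular'' clause follows immediately: if $V(G_R)\not\subseteq V(H)$ then $V(H)\subsetneq V(H')$ while $c_G(H')\le c_G(H)$, so $H'$ strictly dominates $H$.

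I expect the main obstacle to be the rerouting argument of the second paragraph: one must verify that LP-feasibility survives over \emph{every} balloon constraint, and an original vertex of $G$ can sit on a decomposed balloon path in several configurations (interior to the connecting path, interior to a cycle-arc, or as the distinguished endpoint), each needing to be checked. A secondary point requiring care is the bookkeeping of the last paragraph --- pinning down $S\subseteq S'$ and verifying that $G'[S']$ de-subdivides to exactly $H'$ --- but this is routine.
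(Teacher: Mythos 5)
Your proof follows the same route as the paper's: subdivide, pass to the vertex-deletion problem \RBGC{}, apply Lemma~\ref{lem:v-persistence} to $S=V'(H)$, and read off the edge-subgraph $H'$ from $S'$ and $S^+$. The rerouting argument in your second paragraph is correct and fills a genuine lacuna (the paper silently uses $R=V'(G_R)$ in place of $V_R(x')$, which presupposes that $\supp(x')$ consists only of subdivision vertices), though with the paper's heavier choice of original-vertex weight $2w(E(G))+1$ the same fact follows at once by comparing the LP optimum to the feasible solution that sets $x'(x_e)=1$ on all $e\ni v_0$ and $0$ elsewhere: that solution has cost at most $w(E(G))$, while an original vertex with LP value at least $\tfrac12$ alone contributes more than $w(E(G))$.

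The one real gap is the assertion ``and additionally $S\subseteq S'$ (Wahlstr\"om's construction only grows the balanced set).'' This is not implied by Lemma~\ref{lem:v-persistence}, and it is false in general. Property~4 gives $V_1(x')\subseteq S^+\setminus S'$, so $V_1(x')$ is disjoint from $S'$, yet nothing prevents $V_1(x')$ from meeting $S=V'(H)$: the LP has no knowledge of $H$ and may place value $1$ on a subdivision vertex $x_e$ with $e\in E(H)$ (this is precisely the situation the conclusion $X_1\cap E(H')=\emptyset$ is meant to absorb, as $H'$ must route around the value-$1$ edges that $H$ uses). So $S\subseteq S'$ can fail. What your bookkeeping actually needs is the weaker inclusion $V(H)\cup V(G_R)\subseteq S'$, i.e.\ that no \emph{original} vertex of $S^+$ lands in $S^+\setminus S'$. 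That does hold, and for the reason you already invoke at the end: an original vertex has weight exceeding $w'(N_{G'}(S))=c_G(H)$, so its presence in $S^+\setminus S'$ would violate property~5. Replace the false claim $S\subseteq S'$ with this cost argument and the rest of your proof --- identifying $S'\cap V(G)$ with $V(H')$, reading $S^+\setminus S'$ as the subdivision vertices of the deleted edges of $H'$, the cost bound, and the ``in particular'' clause --- goes through as written, and matches the paper's argument.
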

\begin{proof}
  Let $(G', \B')$ be the biased graph obtained from $(G,\B)$ by subdividing every 
  edge $e \in E(G)$ by a new vertex $z_e$. Here, $\B'$ contains a cycle $C'$
  if and only if it is a subdivision of a cycle $C \in \B$. 
  Apply Lemma~\ref{lem:v-persistence} to $(G', \B')$ 
  giving every vertex $v \in V(G)$ weight $w(v)=2w(E(G))+1$
  and the subdividing vertices weight $1$. For an edge $e \in E(G)$,
  we say that the vertex $z_e$ which subdivides~$e$ in $G'$ \emph{represents} $e$ in $G'$. 
  For a subgraph $G_0$ of $G$, let $V'(G_0) \subseteq V(G')$ contain the copy
  in $G'$ of every vertex $v \in V(G_0)$ as well as the vertex $z_e$ subdividing $e$
  for every edge $e \in E(G_0)$. Note that $V'$ maps connected, respectively balanced
  subgraphs of $G$ to vertex sets $S$ such that $G'[S]$ is connected, respectively balanced.
  Consider the vertex sets $S=V'(H)$ and $R=V'(G_R)$.

  Since $G'[S]$ is balanced and connected,
  Lemma~\ref{lem:v-persistence} provides sets
  $S', S^+ \subseteq V(G')$, where $R \subseteq S'$,
  $N_{G'}[S'] \subseteq S^+$ and $S^+=N_{G'}[S \cup R]$.
  Let $H'$ be the subgraph of $G$ defined
  by $S'$, i.e. $V(H')=S' \cap V(G)$ and $e \in E(H')$ for
  $e \in E(G)$ if and only if the vertex subdividing $e$ is contained
  in~$S'$. We claim that $V(H')=V(H) \cup V(G_R)$.  In one direction,
  $V(H') \subseteq V(H) \cup V(G_R)$, since $S' \subseteq S^+ = N_{G'}[S \cup R]$
  and every vertex of $N_{G'}(S \cup R)$ 
  represents an edge in $G$. In the other direction, 
  we claim $V(H) \cup V(G_R) \subseteq S'$.
  Indeed, $V(G_R) \cup V(H) \subseteq S^+ = N_{G'}[S \cup R]$, and if there were a vertex
  $v \in V(G) \cap (S^+ \setminus S')$, then the cost of $S^+ \setminus S$ would
  exceed $w(N_{G'}(S))=c_G(H)$. Thus $V(H')=V(G_R) \cup V(H)$.
  Furthermore $H'$ is balanced, since any unbalanced cycle in $H'$
  would correspond to an unbalanced cycle in $G'[S']$.
    
  Next, we note that $G_R$ is a subgraph of $H'$ since $R=V'(G_R) \subseteq S'$.
  Finally, $c_G(H') \leq w(S^+ \setminus S') \leq w(N_{G'}(S))=c_G(H)$,
  and $X_1 \cap E(H')=\emptyset$, since the vertices subdividing $X_1$
  are contained in $S^+ \setminus S'$. 

  For the last part, let $H'$ be the subgraph produced above
  from $G_R$ and $H$. Then $H'$ is balanced, $c_G(H') \leq c_G(H)$,
  and $V(H')=V(G_R) \cup V(H)$ is a strict superset of $V(H)$. 
\end{proof}

We show one more property of the LP-relaxation. 
Recall that the constraints of the LP
are written as
$
x(P_1) + x(P_2) \geq 1
$
for every balloon $B=(P,C)$ decomposed into two paths $P_1$ and~$P_2$.
Equivalently, for every balloon $B=(P,C)$, there is a constraint
where the edges of $P$ have coefficient~$2$, and the edges of $C$ have coefficient $1$. 
The constraint for $B=(P,C)$ is \emph{tight} if equality holds in the constraint.
By so-called \emph{slackness conditions}, it is known that for any LP optimum $x^*$
and any edge $e$ in the support of $x^*$ there is a tight constraint involving $e$,
i.e. a balloon $B=(P,C)$ with $e \in E(B)$ such that the constraint for $B$ is tight.

\begin{lemma} \label{lemma:half-int-form}
  Let $x^* = X_1 + \tfrac 1 2 X_{1/2}$ be a half-integral extremal optimum
  computed in Lemma~\ref{edge-persistence}. Let $X=X_1 \cup X_{1/2}$,
  let $G_R$ be the subgraph corresponding to the connected component of $v_0$
  in $G-X$, and let $V_R=V(G_R)$.
  Then $X_1=E(G[V_R]) \setminus E(G_R)$ and $X_{1/2}= \delta_G(V_R)$.
\end{lemma}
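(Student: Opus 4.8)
The plan is to pin down, in order: (i) what $E(G_R)$ and $\delta_G(V_R)$ are in terms of the support $X=X_1\cup X_{1/2}$; (ii) that the value-$1$ edges sit strictly inside $V_R$; (iii) that no support edge has both endpoints outside $V_R$ (so $X = A\cup B$ with $A := E(G[V_R])\setminus E(G_R)$, $B:=\delta_G(V_R)$); and finally (iv) that every edge of $A$ has value exactly $1$. Steps (i)--(iii) are routine from complementary slackness and half-integrality; step (iv) is the heart of the argument and uses optimality of $x^*$.

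For (i): an edge of $G[V_R]$ not in $X$ has both endpoints reachable from $v_0$ in $G-X$, hence is itself in $G_R$; conversely $E(G_R)\subseteq E(G[V_R])\setminus X$. So $E(G_R)=E(G[V_R])\setminus X$, and thus $A = X\cap E(G[V_R])\subseteq X$. Since $V_R$ is exactly the vertex set of the component of $v_0$ in $G-X$, also $B=\delta_G(V_R)\subseteq X$, and $A\cap B=\emptyset$. Moreover $G_R$ is balanced: an unbalanced cycle inside it, together with a path from $v_0$ inside $G_R$, would be a balloon all of whose edges have value $0$, violating feasibility of $x^*$.

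For (ii) and (iii), I would use that every $f\in X=\supp(x^*)$ lies on a tight balloon $(P,C)$, i.e.\ $2x^*(P)+x^*(C)=1$. If $x^*(f)=1$, then $f$ cannot lie on the handle $P$ (this would make the left-hand side at least $2$), so $f\in E(C)$, and tightness with half-integrality forces $x^*(P)=0$ and $x^*(f')=0$ for every $f'\in E(C)\setminus\{f\}$; hence $P$ and $C-f$ are paths in $G-X$ joining $v_0$ to both endpoints of $f$, so $f\in E(G[V_R])\cap X=A$. Thus $X_1\subseteq A$. If $x^*(f)=\tfrac12$ and both endpoints of $f$ lay outside $V_R$, a short case analysis on the tight balloon gives a contradiction: either $f$ is the unique support edge on the handle (so a zero subpath from $v_0$ along $P$ reaches an endpoint of $f$, placing it in $V_R$); or $f\in E(C)$, $x^*(P)=0$, and $C$ carries exactly two support edges each of value $\tfrac12$, and then walking along $C$ minus those two edges from the attachment vertex of $P$ stays in $G-X$ and reaches an endpoint of $f$. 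Hence no support edge has both endpoints outside $V_R$, so $X\subseteq E(G[V_R])\cup\delta_G(V_R)=A\cup B$, giving $X=A\cup B$ and $B\subseteq X_{1/2}$ (since $X_1\subseteq A$ and $A\cap B=\emptyset$).

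The main obstacle is step (iv): no edge $e=\{u,v\}\in A$ has $x^*(e)=\tfrac12$. I would argue by contradiction, showing that in that case $x^*-\tfrac12\mathbf 1_e$ is still feasible, which contradicts optimality of $x^*$ because the weights are positive. Here $u,v\in V_R$ and $G_R$ is connected and balanced, and for any $u$--$v$ path $Q$ in $G_R$ the cycle $C_0:=Q\cup\{e\}$ must be balanced (otherwise the balloon rooted at $v_0$ through $C_0$ has zero handle and $x^*(C_0)=\tfrac12<1$). Now take any balloon $(P,C)$ with $e\in E(P)\cup E(C)$ and check its constraint survives decreasing $x^*(e)$ to $0$. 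If $e\in E(C)$: $C\triangle C_0=(C-e)\triangle Q$ is an edge-disjoint union of cycles, and since $C$ is unbalanced and $C_0$ balanced, the standard biased-graph fact that a symmetric difference of balanced cycles which forms a single cycle is balanced (see Zaslavsky~\cite{Zaslavsky:jctb89}) forces one of these cycles, $D$, to be unbalanced; $D$ uses an edge of $Q$ (as $C-e$ is acyclic), hence meets $V_R$, and extending a zero-valued path from $v_0$ to $D$ yields a balloon forcing $x^*(D)\geq 1$; since $x^*(D)\leq x^*((C-e)\triangle Q)\leq x^*(C-e)=x^*(C)-\tfrac12$ (using $x^*(Q)=0$), we get $x^*(C)\geq\tfrac32$, so the constraint survives. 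If $e\in E(P)$: rerouting the handle through $Q$ produces a balloon on the same cycle $C$ with handle of $x^*$-value at most $x^*(P)-\tfrac12$, so $2x^*(P)+x^*(C)\geq 2$, and again the constraint survives. This contradiction shows $A\subseteq X_1$, so $X_1=A=E(G[V_R])\setminus E(G_R)$ and $X_{1/2}=X\setminus X_1=B=\delta_G(V_R)$. The delicate points are the symmetric-difference property for biased graphs and the routine bookkeeping that a walk contains a suitable path when rerouting a handle or a cycle through $G_R$.
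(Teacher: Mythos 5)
Your overall decomposition (establishing $A\cup B\subseteq X$, showing $X_1\subseteq A$, ruling out support edges with both endpoints outside $V_R$, and finally ruling out half-integral edges inside $V_R$) is the right structure, and steps (i)--(iii) are correct. In fact, step (iii) explicitly handles a point that the paper's proof glosses over: the paper shows $X_1\subseteq A$ and $X_{1/2}\cap E(G[V_R])=\emptyset$ and then concludes, without spelling out why no support edge can have both endpoints outside $V_R$. Your slackness-based case analysis supplies exactly that missing observation, so this part is an improvement in explicitness.

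However, step (iv) departs from the paper and I believe the departure introduces a real gap. The paper, in the case $e\in E(C)$ with $e$ spanned by $V_R$, observes that $C$ carries exactly two half-integral edges, that the two zero-weight arcs $P_1,P_2$ of $C-X_{1/2}$ then both lie in $V_R$, and adds a shortest $P_1$--$P_2$ path $P'$ in $G_R$; this is a \emph{chordal path} for $C$, so $C\cup P'$ is a genuine theta graph, and one direct application of the theta property yields a new unbalanced cycle of LP-weight $1/2$ and hence a violated balloon constraint. Your argument instead forms $C\triangle C_0$ with $C_0=Q\cup\{e\}$ and decomposes it into cycles $D_1,\dots,D_m$, then asserts that some $D_i$ must be unbalanced because ``a symmetric difference of balanced cycles which forms a single cycle is balanced.'' The quoted fact is essentially the theta property, which applies when two circles meet along a single path so that their union is a theta. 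But $C$ and $C_0$ can share several maximal subpaths, in which case $C\triangle C_0$ is a disjoint union of several cycles and $C\cup C_0$ is not a theta; deducing that one $D_i$ is unbalanced then amounts to iterating symmetric-difference closure, which does not follow from the theta property in general biased graphs (it holds for gain graphs, but $\B$ here is only assumed to be a linear class given by an oracle). Your handle case, $e\in E(P)$, via rerouting through $Q$, is fine. To repair the cycle case, replace the symmetric-difference step with the paper's chordal-path argument (after first noting, as the paper does, that $V(C)\subseteq V_R$ and so a shortest $P_1$--$P_2$ path exists inside $G_R$), which gives the required unbalanced cycle with a single, clean invocation of the theta property.
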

\begin{proof}
  For the first item, let $e \in X_1$. By the slackness conditions, there must be a tight
  constraint in the LP which contains $e$. By inspection of the constraints,
  this implies that there is a balloon $B_e=(P,C)$ rooted in $v_0$ such that $e$
  is contained in $B_e$. 
  Since $x^*(e) = 1$, it may only appear with coefficient~$1$
  in the summation of the constraint, hence $e \in C$.
  Moreover, we have $x^*(e')=0$ for every edge $e' \in P \cup C \setminus \{e\}$.
  Then $B_e-e$ is contained in $G_R$. Similarly, let $e \in X_{1/2}$ and
  assume towards a contradiction that $e$ is spanned by $G_R$, i.e. $e \subseteq V_R$.
  Let $B_e=(P,C)$ be a balloon with $e \in E(B_e)$ such that the corresponding
  constraint is tight. There are two cases. First suppose that~$e$ occurs
  in the path $P$ of $B_e$. Since $e$ occurs with coefficient $2$
  in the constraint corresponding to~$B_e$, for every other edge $e' \in E(B_e)$
  we must have $x^*(e')=0$. But since $e \subseteq V_R$, this implies that every vertex of
  $B_e$ is in $V_R$. In particular, there is a path from $v_0$ to $C$ entirely contained in $G_R$,
  and considering a shortest such path we find a path $P'$ that is
  internally disjoint from $C$. This produces a balloon $B_e'=(P',C)$ disjoint from $X$,
  which is a contradiction. Next, suppose that $e \in E(C)$. Note that by tightness,
  $V(C) \subseteq V_R$. Indeed, by tightness $C$ intersects precisely two edges of $X_{1/2}$
  and none of $X_1$, and since $e \notin \delta_G(V_R)$ by assumption,
  it follows that both edges of $E(C) \cap X$ are spanned by $G_R$, i.e. $V(C) \subseteq V_R$.
  Then $C \setminus X_{1/2}$ splits into two paths $P_1$ and $P_2$, where
  one of them may be edgeless but both consist entirely of vertices of $V_R$. 
  Let $P'$ be a shortest path in $G_R$ from $P_1$ to $P_2$. Then $P'$
  forms a chordal path for the unbalanced cycle $C$, hence results in at least one new
  unbalanced cycle $C'$ of weight $1/2$ in $x^*$. Furthermore, there is a
  path $P''$ contained in $G_R$ forming a balloon $B_e'=(P'',C')$ of weight $1/2$
  in $x^*$, which is a contradiction to $x^*$ being an LP solution. 
  Hence $X_{1/2}=\delta_G(V_R)$. 
\end{proof}

We can now show the main result. 

\begin{proof}[Proof of Theorem~\ref{thm:important-subgraphs}]
  We assume that $G$ is connected, or otherwise restrict our attention to the connected component
  of $G$ containing the vertex $v_0$. Furthermore, by assumption the edge weights of $G$ are positive.
  Hence Lemma~\ref{lemma:dom-is-connected} applies. 
  Now, recall that $\cG$ denotes the family of all connected, balanced subgraphs in $(G,\B)$ that
  contain $v_0$ and have cost at most $k$ and let $\cH' \subseteq \cG$ be all subgraphs $H \in \cG$
  that are not strictly dominated by any member of $\cG$. We observe that every member of $\cH'$ is 
  important. Indeed, let $H \in \cG$ and assume that there is a balanced subgraph $H'$ of $(G,\B)$ that dominates $H$. 
  Choose $H'$ to minimize $c_G(H')$. Then by Lemma~\ref{lemma:dom-is-connected} $H'$ is connected.
  Furthermore $c_G(H') \leq c_G(H) \leq k$ and $v_0 \in V(H) \subseteq V(H')$. Thus $H' \in \cG$.
  Thus any subgraph~$H$ of $(G,\B)$ that is ``domination maximal'' within $\cG$ is important in $(G,\B)$,
  and we can focus on computing a dominating family $\cH \subseteq \cG$.

  For this, we present a branching procedure over the LP-optimum.  Let a
  \emph{branching state} be defined by a tuple $(E_0, E_1)$ where
  $E_0, E_1 \subseteq E(G)$ are disjoint edge sets.  For a branching
  state $B=(E_0, E_1)$, we
  let $LP_e(B)$ denote the LP
  on the graph $G-E_1$, with
  edge weights modified so that $w(e)=2k+1$ for every $e \in E_0$.
  Intuitively, edges in $E_0$ can be thought of as undeletable
  while edges in $E_1$ as deleted.
  We let $B^*$ denote the half-integral solution
  to $LP_e(B)$ and let $G_B$ denote
  the corresponding subgraph of $G$, i.e.
  $G_B$ is the connected component of 
  $G - (E_1 \cup \supp{(B^*)})$ containing $v_0$.
  Let us consider the following branching procedure.
  \begin{enumerate}
      \item Let $B=(E_0,E_1)$ be a branching state that initially is set to $(\emptyset,\emptyset)$. 
      \item Let $B^*=X_1 + \tfrac 1 2 X_{1/2}$ and let $X=X_1 \cup X_{1/2}$ be its support.
            Let $k'$ be the cost of $B^*$.
      \item If $|E_1|+k' > k$, then abort the branch without output.
      \item If $X_{1/2}=\emptyset$, output $G_B$ as a potential solution and abort the branch.
      \item Otherwise, initialize a new branching state $B'=(E_0', E_1')$ with $E_0'=E_0 \cup E(G_B)$ and $E_1'=E_1 \cup X_1$. 
\item Let $e \in X_{1/2}$ be a half-integral edge and branch
  recursively on the two states $B_1=(E_0' \cup \{e\},E_1')$ and $B_2=(E_0',E_1' \cup \{e\})$.
  \end{enumerate} 
  We will show that for any balanced, connected subgraph $H$ of $G$ with $c_G(H) \leq k$,
  at least one of the produced subgraphs $G_B$ dominates $H$.  Towards this,
  we need some support claims  about the branching process. 
  
  \begin{claimspecial} \label{claim:zero-part-bconn}
  In every branching state $B=(E_0,E_1)$ encountered by the algorithm, 
  the edge set $E_0$ forms a balanced connected subgraph of $G$ rooted in $v_0$.
\end{claimspecial}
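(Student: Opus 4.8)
The plan is to prove the claim by induction on the recursion depth of the branching procedure, the induction hypothesis being exactly the statement of the claim: in every branching state $B=(E_0,E_1)$ reached by the algorithm, $E_0$ forms a balanced connected subgraph of $G$ rooted in $v_0$ (and, along the way, $E_0\cap E_1=\emptyset$, which is part of the definition of a branching state). The base case is the initial state $(\emptyset,\emptyset)$: here $E_0=\emptyset$ is the trivial subgraph consisting of the single vertex $v_0$, which is connected, contains $v_0$, and is vacuously balanced.

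For the inductive step, fix a branching state $B=(E_0,E_1)$ that is actually processed, so it is not aborted in step~3 (hence the cost $k'$ of $B^*$ satisfies $|E_1|+k'\le k$, in particular $k'\le k$) and $X_{1/2}\neq\emptyset$ (so it is not a leaf in step~4); assume the claim for $B$. Its children are $B_1=(E_0'\cup\{e\},E_1')$ and $B_2=(E_0',E_1'\cup\{e\})$ with $E_0'=E_0\cup E(G_B)$ and $e\in X_{1/2}$, so it suffices to show that both $E_0'$ and $E_0'\cup\{e\}$ form balanced connected subgraphs rooted in $v_0$. First I would record three facts about $G_B$. (i) $G_B$ is connected and contains $v_0$, directly by its definition as the component of $v_0$ in $G-(E_1\cup\supp(B^*))$. (ii) $G_B$ is balanced: if some cycle $C$ of $G_B$ were unbalanced, then taking a path $P$ from $v_0$ to $C$ inside $G_B$ gives a balloon $(P,C)$ entirely disjoint from $\supp(B^*)$, so the corresponding constraint of $LP_e(B)$ reads $0\ge 1$, contradicting feasibility of $B^*$. (iii) $E_0\subseteq E(G_B)$: every edge of $E_0$ has weight $2k+1$ in $LP_e(B)$, so a half-integral LP-value on it would cost at least $k+\tfrac12>k\ge k'$; hence no edge of $E_0$ lies in $\supp(B^*)$, and since also $E_0\cap E_1=\emptyset$, all edges of $E_0$ survive in $G-(E_1\cup\supp(B^*))$; as $E_0$ together with $v_0$ is connected, every endpoint of every edge of $E_0$ is reachable from $v_0$ there, so each such edge has both endpoints in $V(G_B)$ and therefore lies in $E(G_B)$. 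Combining (iii) with the definition gives $E_0'=E_0\cup E(G_B)=E(G_B)$, which by (i) and (ii) is a balanced connected subgraph rooted in $v_0$; this settles $B_2$ (and also confirms $E_0'\cap E_1'=\emptyset$ since $E(G_B)$ avoids $E_1\cup\supp(B^*)\supseteq E_1\cup X_1$).

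For $B_1$ it remains to show that adjoining $e\in X_{1/2}$ to $G_B$ preserves balance, connectedness, and containment of $v_0$. Here I would invoke Lemma~\ref{lemma:half-int-form} for the \RBGCE-instance defining $LP_e(B)$: it yields that every edge of $X_{1/2}$ has exactly one endpoint in $V(G_B)$. Hence in $G_B+e$ the second endpoint of $e$ has degree one, so $e$ is a bridge, no cycle of $G_B+e$ passes through $e$, and every cycle of $G_B+e$ is already a cycle of $G_B$ and thus balanced; and $G_B+e$ is clearly connected and still contains $v_0$ (we merely attach one pendant vertex). This closes the induction.

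The only genuinely non-bookkeeping ingredients are fact~(ii), which is precisely where feasibility of $B^*$ as an LP solution is used, and the appeal to Lemma~\ref{lemma:half-int-form} to know that the half-integral edges leave $V(G_B)$ (so that $e$ enters as a bridge). I expect fact~(iii), $E_0\subseteq E(G_B)$, to be the step that most needs care, since it is what collapses $E_0'$ to exactly $E(G_B)$ and thereby makes the induction go through cleanly; everything else reduces to the definition of a connected component and the disjointness of $E_0$, $E_1$, and $\supp(B^*)$.
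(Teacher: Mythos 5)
Your proof is correct and takes essentially the same route as the paper: the same induction, the same weight argument showing $E_0\subseteq E(G_B)$ and hence $E_0'=E(G_B)$, and the same appeal to Lemma~\ref{lemma:half-int-form} to conclude that a half-integral edge is pendant in $E_0'\cup\{e\}$. The only difference is that you spell out why $G_B$ itself is balanced (the feasibility/balloon argument), a step the paper compresses to ``by construction.''
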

\begin{claimproof}
  We choose to interpret the initial empty edge set as the subgraph of
  $G$ containing the root $v_0$ and no edges or any further vertices.
  The claim now holds by induction from the root. Note that there are
  two places where the $E_0$-part of a branching state is modified.
  First, let $B=(E_0,E_1)$ be a branching state and let $B^*$ be the
  half-integral optimum of $LP_e(B)$ used by the algorithm. Assume
  that the cost of $B^*$ is at most $k'=k-|E_1|$ as otherwise no
  further branching state is produced. By assumption, $E_0$ forms a
  connected subgraph of $G$, and every edge of $E_0$ has cost $2k+1$
  in $LP_e(B)$. Hence $B^*(e)=0$ for every $e \in E_0$, and
  $E_0 \subseteq E(G_B)$. Thus in the new branching state
  $B'=(E_0',E_1')$ we in fact have $E_0'=E(G_B)$ which is a connected,
  balanced, rooted subgraph of $G$ by construction.
  Otherwise, assume that a new state is formed as $B'=(E_0 \cup \{e\},E_1)$
  for some edge $e$ that is half integral in $LP_e(B)$.
  Then by Lemma~\ref{lemma:half-int-form}, $e$ is an edge leaving $G_B$,
  hence $E_0'=E(G_B) \cup \{e\}$ forms a connected subgraph.
  Finally, we note that $E_0'$ is balanced, since otherwise
  there would exist an unbalanced cycle $C$ in $E_0'$ using the edge
  $e$, but since $e$ is leaving $G_B$, $e$ is a pendant edge in $E_0'$. 
\end{claimproof}

\begin{claimspecial} \label{claim:one-part-touches}
  In every branching state $B=(E_0,E_1)$ encountered by the algorithm,
  every edge of $E_1$ has at least one endpoint in $V(E_0)$. 
\end{claimspecial}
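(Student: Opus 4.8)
The plan is to prove the claim by induction along the branching tree, in the same spirit as (and building directly on) Claim~\ref{claim:zero-part-bconn}. Throughout, I would treat the ``undeletable'' edge set $E_0$ as the subgraph of $G$ it spans, which by the convention fixed in the proof of Claim~\ref{claim:zero-part-bconn} always contains the root $v_0$; accordingly $V(E_0)$ denotes the vertex set of that subgraph. The base case is the initial state $(\emptyset,\emptyset)$, for which $E_1=\emptyset$ and there is nothing to prove.

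For the inductive step I would fix a branching state $B=(E_0,E_1)$ satisfying the claim and assume the algorithm does not abort at $B$, so that child states are produced. Let $B^*=X_1+\tfrac12 X_{1/2}$ have support $X=X_1\cup X_{1/2}$, and set $V_R=V(G_B)$. First, exactly as in the proof of Claim~\ref{claim:zero-part-bconn}, every edge of $E_0$ has weight $2k+1$ in $LP_e(B)$, so $B^*$ assigns it value $0$; since $E_0$ is a connected subgraph rooted in $v_0$, this forces $E_0\subseteq E(G_B)$, hence $E_0':=E_0\cup E(G_B)=E(G_B)$, so $V(E_0')=V_R$ and in particular $V(E_0)\subseteq V(E_0')$. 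Next I would apply Lemma~\ref{lemma:half-int-form} to $LP_e(B)$ (that is, in the graph $G-E_1$ with the modified weights): it yields $X_1=E(G[V_R])\setminus E(G_B)$, so every edge of $X_1$ has \emph{both} endpoints in $V_R=V(E_0')$, and $X_{1/2}=\delta_G(V_R)$, so every edge of $X_{1/2}$ has exactly one endpoint in $V_R=V(E_0')$.

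With these two facts the two children are handled immediately. Put $E_1':=E_1\cup X_1$; by the induction hypothesis every edge of $E_1$ has an endpoint in $V(E_0)\subseteq V(E_0')$, and every edge of $X_1$ has an endpoint in $V(E_0')$, so every edge of $E_1'$ has an endpoint in $V(E_0')$. For the child $B_1=(E_0'\cup\{e\},E_1')$ the deletion set is $E_1'$ and $V(E_0'\cup\{e\})\supseteq V(E_0')$, so the claim holds. For the child $B_2=(E_0',E_1'\cup\{e\})$ the only new deleted edge is $e\in X_{1/2}$, which has an endpoint in $V_R=V(E_0')$; combined with the statement for $E_1'$, every edge of $E_1'\cup\{e\}$ has an endpoint in $V(E_0')$, which closes the induction. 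The step needing the most care --- indeed essentially the only one that is not a one-line deduction --- is keeping straight which ambient graph the expressions $\delta_G(\cdot)$ and $E(G[\cdot])$ in Lemma~\ref{lemma:half-int-form} refer to, namely $G-E_1$ rather than $G$; but since $E((G-E_1)[V_R])\subseteq E(G[V_R])$, $\delta_{G-E_1}(V_R)\subseteq\delta_G(V_R)$, and the edges of $X$ are disjoint from $E_1$ by construction, this bookkeeping is harmless, and the content of the claim reduces to Lemma~\ref{lemma:half-int-form} together with the inclusion $E_0\subseteq E(G_B)$ established in Claim~\ref{claim:zero-part-bconn}.
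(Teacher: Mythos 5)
Your proof is correct and follows essentially the same inductive argument as the paper, with the same base case, the same appeal to Lemma~\ref{lemma:half-int-form} to locate the edges of $X_1$ and $X_{1/2}$ relative to $V(G_B)$, and the same case split over the two children. The extra care you take with the ambient-graph bookkeeping ($G$ vs.\ $G-E_1$) is a reasonable clarification of a point the paper leaves implicit, but it does not change the structure of the argument.
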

\begin{claimproof}
  Shown by induction. In the initial state $(\emptyset,\emptyset)$,
  it holds vacuously. Thereafter, the $E_1$-part of a branching state
  is modified in two ways. First, let $B=(E_0,E_1)$ be a branching state
  and let $B^*=X_1 + \tfrac 1 2 X_{1/2}$ be the optimum of $LP_e(B)$.
  Let $G_B$ be the corresponding subgraph of $G$ and let $B'=(E_0',E_1')$
  be the new resulting branching state.  Then $E_1'=E_1 \cup X_1$,
  edges of~$E_1$ intersect $V(E_0) \subseteq V(E_0')$ by assumption,
  and edges of $X_1$ are spanned by $E(G_B) \subseteq E_0'$
  by Lemma~\ref{lemma:half-int-form}. Otherwise, we have a
  modification $E_1'=E_1 \cup \{e\}$ for some $e \in X_{1/2}$,
  where $e$ intersects $V(E_0)$
  by Lemma~\ref{lemma:half-int-form}.
\end{claimproof}

We say that $H$ is \emph{compatible with} a branching state $B=(E_0,E_1)$
if $E_0 \subseteq E(H)$ and $E_1 \cap E(H) = \emptyset$.
Note that it follows that every edge of $E_1$ is deleted in $H$; indeed, by
Claim~\ref{claim:one-part-touches} every edge of~$E_1$ intersects
$V(H)$, and every edge intersecting $V(H)$ not present in $H$ is
deleted in $H$. Also say that~$H$ is \emph{domination compatible with}
$B$ if there is a balanced, connected subgraph $H'$ of $G$
rooted in~$v_0$ such that $H'$ dominates $H$ and is compatible with $B$.
Note that if $H$ is domination compatible with 
a leaf state in the branching tree, then the subgraph $G_i$
produced in this state dominates $H$.
Indeed, let $B(E_0,E_1)$ be the leaf branching state, and $G_i=G_B$.
By assumption there is a graph~$H'$ dominating $H$, compatible with $B$.
Then $\delta_G(G_B) \subseteq E_1$, and $E_1 \cap E(H')=\emptyset$.
Furthermore $E(G_B)=E_0 \subseteq E(H')$. Hence $V(H')=V(G_B)$,
and the cost of $G_B$ is optimal among all such graphs by the integrality of the LP solution $L_e(B)$.

We can now prove by induction that for every balanced, connected
subgraph $H$ of $G$ rooted in $v_0$ with $c_G(H) \leq k$, the
branching process will produce at least one balanced subgraph $G_i$
that dominates $H$. We claim by induction that for every level $\ell$
of the branching tree, either such a graph $G_i$ has been produced at
a preceding level or there is a state on level $\ell$ domination
compatible with $H$. 

In the root node, we have the initial branching state
$(\emptyset,\emptyset)$, where we can choose $H'=H$.
Inductively, first assume that $B=(E_0,E_1)$ is a branching state
domination compatible with $H$ via a graph $H'$ dominating $H$,
and let $B^*=X_1 + \tfrac 1 2 X_{1/2}$ be the optimum of $LP_e(B)$. 
Let $B'=(E_0',E_1')$ be the new resulting branching state. We will
show that $H'$ is domination compatible with $B'$, hence the same
holds for $H$.

Recall that $LP_e(B)$ is defined in the subgraph $G':=G-E_1$.
Let $G_B$ be the subgraph of~$G'$ corresponding to the optimum $B^*$.
By Lemma~\ref{edge-persistence} there is a balanced subgraph $H''$ of $G'$ 
that dominates $H'$ in $G'$, such that $G_B$ is a subgraph of $H''$
and $X_1 \cap E(H'')=\emptyset$. We need to show that $E_0' \subseteq E(H'')$,
that $E_1' \cap E(H'')=\emptyset$, that $c_G(H'') \leq c_G(H')$,
and that $V(H'') \supseteq V(H')$. It then follows that
$H''$ dominates $H'$ in $G$ and is compatible with $B'$.

For the first, as before we have $E_0 \subseteq E(G_B)$ by construction so $E_0'=E_0 \cup E(G_B)=E(G_B) \subseteq E(H'')$.
For the second, since $H''$ is a subgraph of $G-E_1$ disjoint from $X_1$,
we have $E_1' \cap E(H'')=\emptyset$. For the cost,
 we have $c_{G'}(H'') \leq c_{G'}(H')$ by Lemma~\ref{edge-persistence}.
As noted above, every edge of $E_1$ is deleted in $H'$;
hence $c_{G'}(H')=c_G(H')-|E_1|$. Similarly, since every
edge of $E_1$ intersects~$V(E_0')$ by Claim~\ref{claim:one-part-touches}
and $E_0' \subseteq E(H'')$, every edge of $E_1'$ is deleted in $H''$
with respect to $G$. Thus $c_G(H'')=c_{G'}(H'')-|E_1| \leq c_{G'}(H')-|E_1|=c_G(H')$.
Finally, $V(H'') \supseteq V(H')$ by Lemma~\ref{edge-persistence}.
Thus~$H'$ is domination compatible with $B'$.

The only remaining step to consider is when a branching state
is modified as $B=(E_0,E_1) \mapsto (E_0 \cup \{e\},E_1)$
or $(E_0,E_1) \mapsto (E_0,E_1 \cup \{e\})$ for some edge $e$
that is half-integral in $LP_e(B)$. However, by assumption there
exists a subgraph $H'$ that dominates $H$ and is compatible with $B$.
Then either $e \in E(H')$ or $e \notin E(H')$, and precisely one
of the two new branching states is compatible with $H'$.
Furthermore, by comparing Lemma~\ref{lemma:half-int-form}
to the definition of the cost function $c_G(H')$, it is clear
that the cost of the resulting state does not exceed $c_G(H') \leq
c_G(H) \leq k$. Hence by induction, there is a leaf in the branching
tree which is domination compatible with $H$.

Finally, we claim that the whole
process produces at most $4^k$ outputs and can consequently be performed in $\bigoh^*(4^k)$ time. 
To see this, we use an approach that is similar to the one used in~\cite{wahlstrom2017lp}. 
Consider the value of the ``LP gap'' $k-(|E_1|+k')$ computed in some node of 
the branching tree corresponding to the above computation. 
Clearly, this value is initially at most $k$, and if it is negative in a node,
then that branch of the computation is aborted. We claim that furthermore,
this gap decreases by at least $1/2$ from a branching state $B$ to both 
of its children $B_1$ and $B_2$.
In the branching state $B_1$,
$B_1^*$ is also a valid solution to the state $B$,
and in the branching state $B_2$, $B_2^*$ becomes
a valid solution to the state $B$ if we modify the value of
$e$ to $x_e=1$. In both cases, we get a valid LP solution to the
state $B$. We claim that these solutions cannot be optima for $LP_e(B)$.
On the one hand, if $E_0 \mapsto E_0 \cup \{e\}$ then the set of reachable
vertices $V(G_B)$ increases strictly. Since the extremal solution $B^*$ is 
chosen so that this set is maximal among all LP-optima, the result cannot
be an LP-optimum. On the other hand, if $E_1 \mapsto E_1 \cup \{e\}$ and the 
resulting branching state produces an optimal solution for $LP_e(B)$, 
then by Lemma~\ref{lemma:half-int-form} the endpoints of $e$ must be spanned
by the resulting set $E_0' \supseteq E_0$, which again contradicts the choice
of $V(G_B)$ as maximal. Thus, the cost of these solutions is
greater than the cost of $B^*$. Since the cost is half-integral
(given integral edge weights), this difference is at least $1/2$.
Hence the entire branching process will finish at depth at most
$2k$, producing at most $2^{2k}$ outputs. 
\end{proof}

\section{Graph Partitioning}
\label{sec:graphseparation}

As discussed in the introduction, the general strategy for our fpt algorithms
aims to reduce $\mintwolin$ over various domains to graph partitioning problems.
In this section we develop algorithms for two problems---\textsc{Partition Cut} and \PPC{}---which arise in the study of $\mintwolin$ over
fields and Euclidean domains, respectively.

\subsection{Partition Cut}

A partition $\cP$ of a finite set $N$ is a family of
pairwise disjoint subsets $B_1, \dots, B_m$ of $N$ such that 
$\bigcup_{i=1}^{m} B_i = N$.
For any $x, y \in N$,
we write $\cP(x) = \cP(y)$ if $x$ and $y$ appear
in the same subset of $\cP$, while
$\cP(x) \neq \cP(y)$ if they appear in distinct subsets.
If $\cP'$ is a partition of $N$ such that
$\cP'(x) = \cP'(y) \implies \cP(x) = \cP(y)$
for all $x,y \in N$,
then we say that $\cP'$ \emph{refines} $\cP$. 
All partitions of a finite set 
can be enumerated in $\bigoh(1)$ 
amortized time per partition~\cite{ichiro1984efficient}.

Let $G$ be an undirected graph, $T$ be a subset of its vertices 
called \emph{terminals},
and $\cP$ be a partition of $T$.
A subset of edges $X$ in $G$ is a \emph{$\cP$-cut} if
no component of $G-X$ contains
terminals from more than one subset of $\cP$.
Consider the following graph separation problem:

\pbDefP{Partition Cut}
{An undirected graph $G$ with positive integer 
edge weights $w_G : E(G) \rightarrow \naturals^+$, 
a set of terminals $T \subseteq V(G)$,
a partition $\cP$ of $T$,
and an integer $k$.}
{$k$.}
{Is there a $\cP$-cut in $G$ of total weight at most $k$?}

We may view this problem in the light of multiway cuts.

\pbDefP{(Edge) Multiway Cut}
{An undirected graph $G$ with positive integer 
edge weights $w_G : E(G) \rightarrow \naturals^+$, 
a set of vertices (terminals) $T \subseteq V(G)$ 
and an integer $k$.}
{$k$.}
{Is there a set of edges $X \subseteq E(G)$ of total weight at most $k$
such that every component of $G-X$ contains at most one vertex from $T$?}

One way to formulate the goal of the solution $X$ in {\sc Partition Cut}
is to ensure the partition of terminals into connected
components of ${G - X}$ refines $\cP$.
Thus, \textsc{Multiway Cut} is a special case of this problem
where every subset of $\cP$ is a singleton i.e. 
$X$ needs to separate all terminals. In fact, we can reduce from \textsc{Partition Cut} to \textsc{Multiway Cut} and thus show that
\textsc{Partition Cut} is in \FPT.
  
\begin{proposition}[Cygan~et~al.~\cite{cygan2013multiway}]
\label{pro:solveMWC}
  \textsc{Multiway Cut} is solvable in $\bigoh^*(2^k)$ time.
  If a solution exists, then the algorithm computes it in this time.
\end{proposition}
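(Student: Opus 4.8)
The plan is to invoke the LP-guided branching algorithm of Cygan et al., so the proof is essentially a pointer to \cite{cygan2013multiway}; for completeness I would recall the idea. First I would set up the standard LP-relaxation of edge \textsc{Multiway Cut}: assign each edge $e$ a fractional cut value $z_e \in [0,1]$ and impose, for every pair of terminals and every path $P$ between them, the constraint $\sum_{e \in P} z_e \ge 1$ (equivalently, the polynomial-size distance/potential formulation). The two structural facts that drive the algorithm are that this LP is \emph{half-integral} --- it admits an optimum $z^*$ with every $z^*_e \in \{0,\tfrac12,1\}$ --- and \emph{persistent} --- there is an integral optimum $X$ of the instance with $\{e : z^*_e = 1\} \subseteq X$ and $\{e : z^*_e = 0\} \cap X = \emptyset$; moreover, such an extremal optimum is computable in polynomial time.

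Given this, the algorithm is a bounded search tree. I would compute a half-integral extremal optimum $z^*$ of cost $\ell$; if $\ell > k$, reject. By persistence, every edge with $z^*_e = 1$ may be placed in the solution (decreasing the budget) and every edge with $z^*_e = 0$ contracted; if no half-integral edge remains then $z^*$ is integral and we output the union of the forced edges. Otherwise I would pick a half-integral edge $e$ and branch on whether $e$ belongs to the solution: in one branch $e$ is forbidden from the cut, which by extremality of $z^*$ strictly raises the LP-optimum; in the other branch $e$ is forced into the cut, lowering the budget. A suitable progress measure combining the residual budget with the LP-optimum strictly decreases in both branches, and the careful accounting of \cite{cygan2013multiway} bounds the resulting search tree by $\bigoh^*(2^k)$ nodes. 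Since the construction is explicit --- the output cut is the union of the edges forced into the solution along a successful root-to-leaf branch --- the ``compute it in this time'' clause follows as well.

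The only genuinely delicate ingredient is the half-integrality and persistence of the \textsc{Multiway Cut} LP: this is precisely the property that fails for general cut families and is the technical heart of \cite{cygan2013multiway}, so in this paper it is used as a black box. Everything else is routine LP-branching bookkeeping, which I would not reprove here beyond citing \cite{cygan2013multiway} (and the related approach of \cite{guillemot2011fpt}).
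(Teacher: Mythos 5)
Your proposal is correct and matches the paper's treatment: the paper simply cites Cygan~et~al.~\cite{cygan2013multiway} with no proof, and your sketch of the LP-branching argument (half-integrality, persistence, branching on half-integral edges with a $k-\mathrm{LP}$ progress measure) accurately summarizes what that reference does. Nothing to add or correct.
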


\begin{lemma} \label{lem:p-cut-fpt}
  \textsc{Partition Cut} is solvable in $\bigoh^*(2^k)$ time.
  If a solution exists, then the algorithm computes it in this time.
\end{lemma}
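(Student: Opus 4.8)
The plan is to give a polynomial-time, parameter-preserving reduction from \textsc{Partition Cut} to \textsc{Multiway Cut}, and then invoke Proposition~\ref{pro:solveMWC}. The intuition is straightforward: in \textsc{Partition Cut} we only need to separate terminals that lie in \emph{different} blocks of $\cP$, whereas \textsc{Multiway Cut} wants to separate \emph{all} terminals. So the natural move is to "merge" each block of $\cP$ into a single terminal. Given an instance $(G, w_G, T, \cP, k)$ with blocks $B_1, \dots, B_m$, construct $G'$ from $G$ by contracting each block $B_i$ to a single vertex $t_i$ (keeping all edges, possibly creating parallel edges and a loop when both endpoints of an edge lie in the same block; loops are simply discarded and parallel edges are kept with their individual weights, or merged by summing weights --- either is fine). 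Set $T' = \{t_1, \dots, t_m\}$, keep edge weights $w_G$ inherited in the obvious way, and keep the same budget $k$. This is clearly computable in polynomial time and preserves the parameter.

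The correctness argument has two directions. First I would observe that edges of $G$ incident to two vertices of the same block $B_i$ become loops in $G'$ and hence can never be part of any multiway-cut solution; more importantly, they are also irrelevant to \textsc{Partition Cut}, because in $G-X$ two terminals of the same block are allowed to be in the same component regardless. So without loss of generality we may assume no solution ever uses such an edge, and there is a natural bijection between edge sets of $G$ avoiding within-block edges and edge sets of $G'$. Under this bijection, I claim $X$ is a $\cP$-cut in $G$ of weight $\le k$ if and only if the corresponding $X'$ is a multiway cut for $(G', T')$ of weight $\le k$. For the forward direction: if some component of $G'-X'$ contained two terminals $t_i \ne t_j$, then lifting back to $G-X$ there would be a path between a vertex of $B_i$ and a vertex of $B_j$, contradicting that $X$ is a $\cP$-cut (one has to be slightly careful that a path in $G'$ through a contracted vertex $t_\ell$ lifts to a walk in $G$ that stays within block $B_\ell$ while passing through it, which holds because $G[B_\ell]$ need not be connected --- but any two vertices of $B_\ell$ that the path enters/exits at are joined by the lifted portion only via $X$-free edges, and if they are in different components of $(G-X)[B_\ell]$ then... hmm).

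Here is where I expect the only real subtlety: contracting a block $B_\ell$ can create spurious connectivity in $G'$ through $t_\ell$ that does not correspond to connectivity in $G-X$, because $(G-X)[B_\ell]$ might be disconnected. To handle this cleanly, the better construction is not to contract blocks but to \emph{add} for each block $B_i$ a fresh terminal vertex $t_i$ together with edges $\{t_i, v\}$ of weight $k+1$ for every $v \in B_i$ (so these edges are undeletable), leaving $G$ and all its weights otherwise intact; set $T' = \{t_1,\dots,t_m\}$ and keep budget $k$. Now an edge set $X \subseteq E(G)$ of weight $\le k$ is a multiway cut for $(G \cup \{t_i\}, T')$ iff no component of $(G\cup\{t_i\})-X$ contains two of the $t_i$, iff (since the $t_i$-edges are never deleted) no component of $G-X$ contains vertices from two different blocks, iff $X$ is a $\cP$-cut. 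This equivalence is now immediate and requires no connectivity bookkeeping. Finally, since the only edges of weight $> k$ are the added ones and they cannot be in any solution of weight $\le k$, a solution to the \textsc{Multiway Cut} instance is exactly a minimum-weight $\cP$-cut, and Proposition~\ref{pro:solveMWC} both decides the instance and outputs the cut in $\bigoh^*(2^k)$ time. The main (and only mildly fiddly) point to get right in writing this up is thus the choice between contraction and terminal-attachment; I would go with terminal-attachment to avoid the disconnected-block issue entirely.
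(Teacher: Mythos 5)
Your final construction (adding a fresh superterminal $t_i$ joined to every vertex of $B_i$ by weight-$(k+1)$ edges and solving \textsc{Multiway Cut} on the superterminals) is exactly the paper's proof. The detour through block-contraction and the correct diagnosis of why it is fiddly are fine exposition, but the construction you settle on and the correctness argument match the paper's essentially verbatim.
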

\begin{proof}
Let $(G, w_G, T, \cP, k)$ be an instance of \textsc{Partition Cut}, 
where $\cP = \{B_1, \dots, B_m\}$.
For every $i \in [m]$, introduce a superterminal vertex
$s_i$ and connect all terminals in $B_i$ to $s_i$
with edges of weight $k + 1$.
Let the resulting graph be $G'$,
the weight function $w_{G'}$,
and the set of superterminals be $S = \{s_1, \dots, s_m\}$.
Then the instance of \textsc{Multiway Cut} is $(G', w_{G'}, S, k)$.
Correctness of the reduction follows by 
noting that a cut in $G'$ is a solution
only if it partitions superterminals
into distinct connected components.
Since edges connecting any $s \in S$ to any $t \in T$ have weight $k+1$,
they cannot be included in the solution.
Hence, terminals are partitioned according to $\cP$ as well.
The reduction runs in polynomial time and the parameter is unchanged so
we obtain the desired running time via Proposition~\ref{pro:solveMWC}.
\end{proof}

\subsection{Pair Partition Cut}
\label{ssec:ppc}

For $\mintwolin$ over arbitrary Euclidean domains, our reduction 
leads to a more general graph separation problem.
Given a graph $G$ with a set of terminals $T \subseteq V(G)$,
a \emph{(disjunctive) pair cut request} is a tuple $(\{s,u\}, \{t,v\})$
where $s,t \in T$ and $u,v \in V(G)$.
A cut $X \subseteq E(G)$ \emph{fulfills} $(\{s,u\}, \{t,v\})$
if $G - X$ does not contain an $\{s,u\}$-path or a $\{t,v\}$-path.

\pbDefP{Pair Partition Cut}
{An undirected graph $G$ with positive integer 
edge weights $w_G : E(G) \rightarrow \naturals^+$, 
a set of vertices (terminals) $T \subseteq V(G)$,
a partition $\cP$ of $T$, 
a set $\cF$ of pair cut requests,
and an integer $k$.}
{$k$.}
{Is there a $\cP$-cut $X \subseteq E(G)$ of total weight at most $k$
that fulfills every pair cut request in $\cF$?}

We prove that this problem is in \FPT by casting it into the
constraint satisfaction framework.
A constraint satisfaction problem (CSP) is defined by a constraint language $\Gamma$,
which is a set of relation over a domain $D$.
A relation of arity $r$ is a subset of $D^r$.
An instance $I = (V, C)$ of $\csp{\Gamma}$ 
is a set of variables $V$ and a set of constraints $C$
of the form $R(v_1, \dots, v_r)$, where $R \in \Gamma$ is a relation of arity $r$.
The instance $I$ is consistent if it admits an assignment $\varphi : V(C) \rightarrow D$
that satisfied every constraint in $C$ 
i.e. $(\varphi(v_1), \dots, \varphi(v_r)) \in R$ holds for all constraints.
In the parameterized version $\mincsp{\Gamma}$ the input is an instance
$I = (V, C)$ of $\csp{\Gamma}$ together with a weight function $w_C : C \rightarrow \naturals^+$
and the parameter $k \in \naturals^+$,
and the goal is to check whether there is a subset 
$X \subseteq C$ of equations with total weight at most $k$
such that $(V, C \setminus X)$ is consistent.

For the intuition behind the reduction, 
consider an instance of \PPC with a solution $X$.
Assume without loss of generality that $G$ is connected.
Since $X$ contains at most $k$ edges, removing $X$ splits $G$ 
into at most $k+1$ connected components.
Enumerate connected components
of $G - X$ with integers from $0$ to $k$
so that terminals from subset $B_i$ of $\cP$
are in the $i$th connected component.
This is possible since $X$ is a $\cP$-cut.
We define a function 
$\phi : V(G) \rightarrow \range{0}{k}$
such that $\phi(x) = i$ whenever $x$ belongs to $i$th component of $G - X$.
Then for every pair cut request $(\{s,u\}, \{t,v\})$ with $s \in B_i$ and $t \in B_j$,
we either have $\phi(u) \neq i$ or $\phi(v) \neq j$.
This reasoning suggests that all requirements
of \PPC can be encoded using 
the following constraint language $\Gamma_k$
with domain $\range{0}{k}$ 
and relations:
\begin{itemize}
 \item unary relations $(x = i)$ for all $0 \leq i \leq k$, 
 \item binary equality relation $(x = y)$, and
 \item binary relation $(x \neq i) \lor (y \neq j)$ for all $1 \leq i,j \leq k$.
\end{itemize}
To solve CSP$(\Gamma_k)$,
we define another constraint
language $\Gamma'_k$ with domain $\{0,1\}$ and relations:
\begin{itemize}
  \item $(x = 0)$, $(x = 1)$,
  \item $R_k(x_1,y_1,\ldots,x_k,y_k) \equiv \bigwedge_{1 \leq i \leq k} (x_i=y_i) \land \bigwedge_{1 \leq i < j \leq k} (\neg x_i \lor \neg x_j)$,
  \item $(\neg x \lor \neg y)$.  
\end{itemize}

\begin{theorem}[Section 5.7 in \cite{KimKPW21flow-arXiv}]  The problem $\mincsp{\Gamma'_d}$ is fpt parameterized by $\ell=d+c$ where $c$ is total solution cost.
\end{theorem}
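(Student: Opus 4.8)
The plan is to reduce $\mincsp{\Gamma'_d}$ to a weighted directed cut problem of the form handled by the flow-augmentation technique of Kim et al.~\cite{KimKPW21flow-arXiv}. A clean way to open is iterative compression: we reduce to the disjoint version where we are given an instance $(V, C, w_C, c)$ together with a deletion set $X_0 \subseteq C$ of cost at most $c+1$ such that $(V, C \setminus X_0)$ is consistent, witnessed by an assignment $\varphi_0$, and we must find a disjoint deletion set $X$ of cost at most $c$ or report none. Since all weights are positive integers, both $X_0$ and a sought $X$ touch only $O(c)$ variables, and, branching on the restriction of a putative satisfying assignment $\varphi$ of $(V, C\setminus X)$ to the $O(c)$ variables in the scopes of $X_0$, we may assume a bounded set of variables is pinned to $0$ or to $1$; these pinned variables will act as the sinks and sources of the cut.

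Next I would build a $2$-SAT-style implication digraph: for each variable $x$ introduce literals $x,\bar x$; an equality constraint $(x=y)$ contributes the arcs $x\to y, y\to x, \bar x\to\bar y, \bar y\to\bar x$; a constraint $(\bar x\lor\bar y)$ contributes $x\to\bar y$ and $y\to\bar x$; a constant $(x=0)$ or $(x=1)$ pins $x$ as above. A retained set of constraints is satisfiable precisely when the corresponding sub-digraph is consistent with the pinning, and deleting a soft constraint removes its few arcs at its weight. The only non-binary ingredient is $R_d$, and it is exactly the reason $d$ must be a parameter: a soft $R_d$-constraint on $(x_1,y_1,\dots,x_d,y_d)$ is literally $\bigwedge_{i\in\range{1}{d}}(x_i=y_i)\ \land\ \bigwedge_{1\le i<j\le d}(\bar x_i\lor\bar x_j)$, i.e., a \emph{bundle} of $d+\binom{d}{2}$ crisp binary constraints that a solution must keep in full or delete together for one cost. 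The equality part of the bundle translates into the arcs above, while the clique of NANDs among $x_1,\dots,x_d$ is an ``at most one true'' atom on a set of size $d$; for unbounded $d$ this atom is what would make the problem hard (it can simulate independent-set-type constraints), so it is essential that $d$ is a parameter. Concretely, I would resolve each relevant bundle by guessing which one of $x_1,\dots,x_d$ (if any) is true in $\varphi$, at a cost of an $f(d)$ factor per bundle touched by $X$; only $O(c)$ bundles are touched, so the total overhead is a function of $\ell = d+c$.

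After these guesses the task is a pure weighted directed cut in the implication digraph (the bundles having been turned into ordinary arc-deletion costs), and here I would invoke flow augmentation~\cite{KimKPW21flow-arXiv}: in randomized fpt time we may assume the sought minimum cut is simultaneously a maximum flow, so that each of the at most $c$ flow paths crosses it exactly once, and then a bounded-depth branching over the flow paths---on each path guessing the crossing arc consistently with the bundle structure---either yields a solution of cost at most $c$ or certifies there is none; the randomization can be removed by standard means. I expect the main obstacle to be exactly the interface between $R_d$ and the flow-augmentation routine, i.e.\ arguing that every cost-$\le c$ solution interacts with the $R_d$-bundles through only $f(d+c)$ of their blocks so that the bundle-resolving branching is legitimate, and packaging the result as an instance of the precise directed-cut problem that flow augmentation consumes; this verification is the substance of Section~5.7 of~\cite{KimKPW21flow-arXiv}.
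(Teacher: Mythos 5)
This theorem is invoked in the paper as a black-box citation to Section~5.7 of Kim et al.'s flow-augmentation paper (the arXiv version); the present paper does not re-derive it, so there is no in-paper proof to compare against. Your sketch correctly identifies the opening move (iterative compression to a disjoint variant), the technique (flow augmentation), the decomposition of $R_d$ into $d$ equalities plus a clique of pairwise NANDs enforcing an at-most-one condition, and the reason $d$ must enter the parameter.

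That said, the middle of your sketch glosses over the step where the actual difficulty lives. Encoding the instance as a 2-SAT implication digraph and then ``applying flow augmentation'' is not itself a plan: minimum-weight 2-SAT deletion is NP-hard (and UGC-hard to approximate), and flow augmentation does not solve implication-digraph deletion in general---it yields FPT algorithms only for specific, carefully formulated directed cut problems. The structural feature of $\Gamma'_d$ that makes it amenable is the absence of any clause of the form $(x \lor y)$: the binary relations available are assignments, equalities and NANDs $(\neg x \lor \neg y)$, and this one-sided, ``anti-monotone'' structure is what permits casting the problem as a cut instance of the shape Kim et al.\ actually handle (a bundled/coupled directed min-cut with pairwise-linked deletable edges). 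Your bundle-resolving step, ``guess which $x_i$ is true in each touched bundle,'' is also under-specified: retained bundles, not just the at most $O(c)$ deleted ones, impose hard at-most-one constraints, and there may be unboundedly many retained $R_d$-constraints, so their effect must be absorbed into the cut formulation rather than branched away. You are candid that this is exactly the content of Kim et al.'s Section~5.7, and as a blind reconstruction the sketch is on the right track, but as written it would not constitute a proof without that deferred material.
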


The running time of $\mincsp{\Gamma'_k}$ is significant even
though it is fpt with parameter $\ell$.
It is not given explicitly by Kim et al.~\cite{KimKPW21flow},
but works out to $2^{\ell^{b}}$ where
$b \geq 12$~\cite[Lemma~6.14]{KimKPW21flow-arXiv}.
Since this is greater than any other running time contribution in this
paper, we treat this as a function $T(\ell)$ and give our other running
time bounds (where appropriate) in terms of $T(\ell)$.

While there is a directed reduction from
\PPC{} to $\mincsp{\Gamma'_k}$, we 
regard the following two-step reduction clearer
and more readable, and the intermediate problem
being an interesting example of a fixed-parameter
tractable \textsc{MinCSP}.

\begin{theorem} \label{thm:pair-mc-is-fpt}
\PPC is in \FPT.
\end{theorem}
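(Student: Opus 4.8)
The plan is to prove the statement by a pair of polynomial-time, weight- and parameter-preserving reductions --- first from \PPC{} to $\mincsp{\Gamma_k}$, then from $\mincsp{\Gamma_k}$ to $\mincsp{\Gamma'_k}$ --- and then to invoke the cited theorem of Kim et al.\ on $\mincsp{\Gamma'_k}$. The two constraint languages $\Gamma_k$ and $\Gamma'_k$ are already designed for this, so the work is entirely in setting up the reductions and checking equivalence.

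\textbf{From \PPC{} to $\mincsp{\Gamma_k}$.} Given an instance $(G, w_G, T, \cP, \cF, k)$ with $\cP = \{B_1,\dots,B_m\}$, observe that a $\cP$-cut of weight at most $k$ deletes at most $k$ edges and hence leaves at most $k+1$ components; thus we may assume $m\le k$ (if $m\ge k+2$ there is no solution, and the tight case $m=k+1$ is handled by a direct argument), and we give each part $B_i$ the distinct domain value $i\in\{1,\dots,k\}$, leaving value $0$ unused. The CSP has variable set $V(G)$: for each edge $\{u,v\}$ we add the \emph{soft} equality constraint $(x_u=x_v)$ of weight $w_G(\{u,v\})$; for each $t\in B_i$ we add the \emph{hard} unary $(x_t=i)$ of weight $k+1$; and for each request $(\{s,u\},\{t,v\})\in\cF$ with $s\in B_i$, $t\in B_j$ we add the \emph{hard} disjunctive constraint $(x_u\neq i)\lor(x_v\neq j)$, which is a relation of $\Gamma_k$ precisely because $i,j\in\{1,\dots,k\}$. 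For the equivalence: if deleting a set $Z$ of soft constraints with $w(Z)\le k$ leaves a consistent instance, then any satisfying assignment is constant on the components of $G-Z$, the hard unaries force each component to meet at most one part (so $Z$ is a $\cP$-cut), and if both an $\{s,u\}$-path and a $\{t,v\}$-path survived then the assignment would set $x_u=i$ and $x_v=j$, violating a disjunctive constraint; hence $Z$ is a \PPC-solution. Conversely, given a \PPC-solution $X$, labelling each vertex by the index of the unique part met by its component of $G-X$ (arbitrarily otherwise) satisfies all hard constraints and all soft constraints outside $X$, so the CSP has a solution of weight $w_G(X)$.

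\textbf{From $\mincsp{\Gamma_k}$ to $\mincsp{\Gamma'_k}$.} Replace each variable $v$ over $\{0,\dots,k\}$ by Boolean variables $v[1],\dots,v[k]$ with intended meaning ``$x_v=j$ iff $v[j]=1$ and every other $v[\cdot]=0$; $x_v=0$ iff all $v[\cdot]=0$''. For every $v$ add the hard constraint $R_k(v[1],v[1],\dots,v[k],v[k])$, which is exactly the at-most-one-hot condition enforcing a valid encoding. A hard $(x_v=j)$ becomes the hard unary $v[j]=1$ (for $j\ge 1$), respectively the hard unaries $v[i]=0$ for all $i$ (for $j=0$); a soft $(x_u=x_v)$ of weight $w$ becomes the single \emph{soft} constraint $R_k(u[1],v[1],\dots,u[k],v[k])$ of weight $w$, which under valid encodings is equivalent to $x_u=x_v$; and a hard $(x_u\neq i)\lor(x_v\neq j)$ becomes the hard constraint $(\lnot u[i]\lor\lnot v[j])$, using that under a valid encoding $x_u\neq i\Leftrightarrow u[i]=0$. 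Each soft $\Gamma_k$-constraint maps to exactly one soft $\Gamma'_k$-constraint of the same weight, so the optimum cost is preserved and the budget stays $k$.

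\textbf{Conclusion.} The final instance has $d=k$, and any solution has cost $c\le k$, so the parameter $\ell=d+c\le 2k$; by the theorem of Kim et al.\ (via flow augmentation) it is solvable in time $T(2k)\cdot n^{\bigoh(1)}$, which is fixed-parameter tractable in $k$. The step I expect to be the main obstacle is the first reduction: one must verify carefully that the $\cP$-cut condition (carried by the equality and unary constraints) and the pair cut requests (carried by the disjunctive constraints) are captured exactly in \emph{both} directions, and do the bookkeeping that keeps every part index occurring in a request inside $\{1,\dots,k\}$ so that the relations of $\Gamma_k$ genuinely apply. Once that is settled, the second reduction is a routine gadget translation and the running-time bound is immediate from the cited result.
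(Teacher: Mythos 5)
Your proposal follows the same proof strategy as the paper: a two‑step reduction from \PPC{} to $\mincsp{\Gamma_k}$ and then to $\mincsp{\Gamma'_k}$, followed by invoking the flow‑augmentation result of Kim et al.; the encodings agree modulo cosmetic choices (you express the at‑most‑one‑hot condition as $R_k(v[1],v[1],\dots,v[k],v[k])$ where the paper writes out the binary clauses $(\lnot v^{(i)}\lor\lnot v^{(j)})$, and both translate equalities, unaries and disjunctions identically). The only quibble is your justification for assuming $m\le k$: the step ``a $\cP$-cut of weight at most $k$ leaves at most $k+1$ components'' is valid only when $G$ is connected, so for a disconnected input with many blocks a small preprocessing argument is still needed --- but the paper also silently assumes the part indices fit in $\{1,\dots,k\}$, so this is not a place where you depart from its proof.
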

\begin{proof}
First, we spell out the reduction from \PPC
to $\mincsp{\Gamma_k}$.
Given an instance $(G, w_G, T, \cP, \cF, k)$ of \PPC, we
construct an instance $((V,C),w,k)$ of
$\mincsp{\Gamma_k}$.
Let $V = V(G)$ denote the set of variables. 
We define the set of constraints 
$C$ and the weight function $w$ as follows. 
Enumerate subsets in $\cP$ as $B_1, \dots, B_m$
and for every subset $B_i$, 
add the constraints $(t = i)$ for all $t \in B_i$ 
of weight $k + 1$.
For every edge $\{u,v\} \in E(G)$,
add the constraint $(u = v)$ of weight $w_G(\{u,v\})$.
Finally, for every pair cut request
$(\{u,s\}, \{v,t\})$ in $\cF$ with $s \in B_i$
and $t \in B_j$, add the constraint
$(u \neq i) \lor (v \neq j)$ of weight $k + 1$.
Clearly, the reduction can be carried out in polynomial time.
A solution $X$ to $((V,C),w,k)$ may only contain
equality equations because every other constraint
is assigned weight $k+1$,
and $\{ \{u,v\} \in E(G) \mid (u = v) \in X \}$
is a $\cP$-cut in $G$ that fulfills $\cF$.
To obtain a solution to $((V,C),w,k)$
from a solution to the \PPC instance, one may follow the same steps in the opposite direction.

We continue by reducing $\mincsp{\Gamma_k}$
to $\mincsp{\Gamma'_k}$.
Given an instance $I = ((V, C),w,k)$ 
of the former problem, we produce
an equivalent instance $I' = ((V', C'),w',k)$
of the latter, while keeping the parameter unchanged.
To this end, introduce variables $v^{(i)}$
for every $v \in V$ and $i \in \range{1}{k}$.
Intuitively, setting $v^{(i)} = 1$
corresponds to assigning value $i$ to $v$,
while setting $v^{(i)} = 0$ for all $i \in \range{1}{k}$ 
corresponds to assigning $0$ to $v$.
To ensure that $v^{(i)} = 1$ for at most one
value of $i$, add constraints 
$(\neg v^{(i)} \lor \neg v^{(j)})$
of weight $k + 1$ for all $1 \leq i < j \leq k$.
Every constraint $c$ in $C$ is replaced
by constraints in $C'$ of the same weight as follows:

\begin{enumerate}
\item
if $c$ is $t = i$ for $i \in \range{1}{k}$, then
add $t^{(i)} = 1$ to $C'$,

\item
if $c$ is $t = 0$, then add $t^{(i)} = 0$ for all $i \in \range{1}{k}$ to $C'$, each of weight $w(c)$,

\item
if $c$ is $(u = v)$, then add 
${R_k(u^{(1)}, v^{(1)}, \dots, u^{(k)}, v^{(k)})}$ to $C'$, and

\item
if $c$ is $(s \neq i) \lor (t \neq j)$, then add $(\neg s^{(i)} \lor \neg t^{(j)})$ to $C'$.

\end{enumerate}
This concludes the reduction.

Suppose $\phi$ is an assignment to $(V,C)$.
Define $\phi'$ by letting 
$\phi'(v^{(i)}) = 1$ if $\phi(v) = i$ for some $i \geq 1$,
and $\phi(v^{(i)}) = 0$ otherwise.
By construction, $\phi$ and $\phi'$ leave 
constraints of the same total weight unsatisfied.
Hence, if the set of constraints unsatisfied by $\phi$
is a solution to $I$, then the set of constraints
unsatisfied by $\phi'$ is a solution to $I'$.
The same argument works in the opposite direction:
given an assignment $\rho'$ to $(V',C')$,
define assignment $\rho$ to $(V,C)$
by letting $\rho(v) = i$ if 
$\rho'(v^{(i)}) = 1$ for some $i \in \range{1}{k}$,
and $\rho(v) = 0$ otherwise.
Constraints of the type 
$(\neg v^{(i)} \lor \neg v^{(j)})$
ensure that $\rho$ is well-defined.
Moreover, the total weight of constraints
unsatisfied by $\rho$ and $\rho'$ is the same.
Thus, the reduction is correct,
and the theorem follows.
\end{proof}

\section{Algorithm for Euclidean Domains}
\label{sec:edom-algorithm}

We let $\D = (D; +, \cdot)$ denote a Euclidean domain throughout this
section.
Our goal is to present an fpt algorithm for $\minlin{2}{\D}$.
We start by reviewing basic definitions 
and facts about Euclidean domains in Section~\ref{ssec:basic-edom}.
In Section~\ref{ssec:lin2graphs} we develop
a polynomial-time algorithm for $\lin{2}{\D}$
and prove several useful lemmas along the way,
building an understanding of the problem.
We note that polynomial-time algorithms for
$\lin{r}{\D}$ are known for arbitrary $r \in \naturals$
when $\D$ is finite~\cite[Section 6]{Arvind:Vijayaraghavan:cc2010} (in fact, this is true for arbitrary finite rings), the ring of integers~\cite{Kannan:Bachem:sicomp79}
or the ring of univariate polynomials over 
$\rationals$~\cite{kannan1985solving}. However,
we are unaware of such results for general Euclidean domains, even when $r=2$. 
The next three sections follow the common steps of
compression, cleaning, and cutting:
we simplify the problem by applying 
iterative compression in Section~\ref{ssec:edom-compression},
then simplify it even further by applying 
the important balanced subgraph machinery in Section~\ref{ssec:edom-cleaning},
and finally reduce the resulting problem to \PPC in Section~\ref{ssec:edom-alg-def},
giving an overview of the whole algorithm.
Finally, in Section~\ref{ssec:edom-correctness-time}
we prove correctness of the algorithm and analyze its time complexity.

\subsection{Basics of Euclidean Domains}
\label{ssec:basic-edom}

A \emph{Euclidean domain} is an abstract algebraic structure generalizing properties of the integers.
Informally, it is a commutative ring with integer division.
Formally, $\D = (D; +, \cdot)$ is a Euclidean domain if it is 
an {\em integral domain} equipped with a {\em Euclidean function}.
An integral domain is a commutative ring of size at least two where 
the product of any pair of nonzero elements is itself 
nonzero---that is, $\D$ does not contain a zero divisor.
A Euclidean function on $\D$ is a function 
$f : D \rightarrow \naturals_0$ such that $f(0) = 0$
and for any $a,b \in D$ where $b \neq 0$, 
there exist $q,r \in D$ such that $a = bq + r$ and $f(r) < f(b)$. 
One may view $q$ as a {\em quotient} and $r$ as a {\em remainder},
and write $a \equiv r \mod b$ to denote that $r$ is a remainder of 
integer division of $a$ by $b$.
All fields and the ring of integers $\integers$ are Euclidean domains; 
this follows from choosing the Euclidean function to be 
$f(x)=1$ for all $x \neq 0$
and $f(x)=\abs{x}$, respectively.
Further examples include Gaussian integers $\integers[i]$,
Eisenstein integers~$\integers[\omega]$ where $\omega$ is a primitive non-real cubic root of unity, 
the ring of polynomials $\F[x]$ over a field $\F$, and many more.
When working with Euclidean domains, we assume that they are
\emph{effective} i.e. $\D$ admits a reasonable representation of elements
such that basic operations
(addition, subtraction, multiplication, computing quotients and remainders)
requires polynomial time in the bit-size of the operands.
In addition, we require the following property.
Given an element $d \in D$, let $\norm{d}$ denote
the number of bits required to represent $d$.

\begin{property} \label{property:edom-product}
  In an effective ring $\D$, there is a polynomial function $p$ such that
  $\norm{d_1 \cdot ... \cdot d_m} \leq p(\norm{d_1} + ... + \norm{d_n})$ for arbitrary 
  $d_1, \dots, d_n \in \D$.
\end{property}

This is a natural requirement since otherwise we cannot
compute (or even write down) satisfying assignments to simple consistent instances of $\lin{2}{\D}$ like
$\{x_1 = d_1 x_2, x_2 = d_2 x_3, \dots, x_{n-1} = d_n x_n\} \cup \{x_n = 1\}$ in
polynomial time, we cannot perform efficient Gaussian elimination etc.
In many cases (including all examples of Euclidean domains given above), 
$p$ is the identity polynomial
i.e. representing the product of elements requires
at most as many bits as representing them individually.

It is important to note that quotients and remainders
are not unique in $\D$.
For a simple example, consider $\D = \integers$
with Euclidean function $f(x) = \abs{x}$,
let $a = 9$ and $b = 4$,
and note that $9 = 4 \cdot 2 + 1$
and $9 = 4 \cdot 3 + (-3)$.
Since $\abs{1} < \abs{4}$ and $\abs{-3} < \abs{4}$,
both $q = 2$, $r = 1$ and $q = 3$, $r = -3$ are valid quotient-remainder pairs.
However, if we fix the remainder $r$,
then there is at most one value for~$q$
such that $a = b \cdot q + r$.
As a corollary of this observation, 
if $b$ divides $a$ (i.e. $r = 0$), then
the result of dividing $a$ by $b$ is unique.
In such cases we write $a / b$ to denote the unique quotient. 
A \emph{unit} is an element $u$ of $D$
that admits a multiplicative inverse i.e. 
there is an element $v$ of $D$ such that $u v = 1$.
For all $a, b \in D$,
a {\em greatest common divisor} $\gcd(a, b)$
is a maximal (with respect to $f$)
element of $D$ that divides both $a$ and $b$.
If $\gcd(a, b)$ is a unit, then
$a$ and $b$ are \emph{co-prime}.
A {\em least common multiple} $\lcm(a, b)$
is a minimal (with respect to $f$)
element of $D$ that is divisible by~$a$ and~$b$.
Observe that while $\gcd(a, b)$ is not unique,
all greatest common divisors of $a$ and $b$ are
congruent up to multiplication by units:
if $g_1$ and $g_2$ are greatest common divisors
of $a$ and $b$, then there is a unit element $u$ in $D$
such that $a = b \cdot u$.
The same congruence holds for the least common multiples.
As a result, when discussing divisibility we can safely abuse notation by writing $\gcd(a, b)$ and $\lcm(a, b)$ to denote an arbitrary greatest common divisor or least common multiple of $a$ and $b$, respectively.
An analogue of the extended Euclidean algorithm works in effective Euclidean domains.

\begin{proposition}[{\cite[Theorem~4.10]{vonzurGathen:Gerhard:MCA}}] \label{prop:one-equation}
  An equation $ax + by = c$ with $a,b,c \in D$ has a solution in~$\D$
  if and only if $g=\gcd(a,b)$ divides $c$, and all
  satisfying assignments are of the form
  $(x_0 + (b / g) \cdot r,\allowbreak y_0 - (a / g) \cdot r)$ for some fixed $x_0,y_0 \in D$
  and arbitrary $r \in D$. Finally, there is a polynomial time algorithm 
  that checks this condition and computes $g$, $x_0$, and $y_0$.
\end{proposition}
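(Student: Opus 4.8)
The plan is to reduce the statement to the extended Euclidean algorithm, which the cited reference (von zur Gathen and Gerhard, Theorem~4.10) establishes for Euclidean domains. First I would handle the degenerate cases: if $a = b = 0$, then the equation has a solution iff $c = 0$, in which case $g = 0$ (or any unit, depending on convention) and every pair works; this is consistent with the claimed parametrization since $b/g$ and $a/g$ are then $0$. So assume at least one of $a, b$ is nonzero. Running the extended Euclidean algorithm on $a$ and $b$ produces a greatest common divisor $g = \gcd(a,b)$ together with B\'ezout coefficients $s, t \in D$ such that $as + bt = g$. The algorithm terminates because each step strictly decreases the value of the Euclidean function $f$ on the remainder, and by Property~\ref{property:edom-product} together with effectiveness of $\D$, each arithmetic operation is polynomial-time and the bit-sizes of intermediate quantities stay polynomially bounded, so the whole computation runs in polynomial time.

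Next I would establish the solvability criterion. If $ax + by = c$ has a solution, then since $g$ divides both $a$ and $b$ it divides $ax + by = c$, so $g \mid c$ is necessary. Conversely, if $g \mid c$, write $c = g \cdot m$ for the unique $m \in D$ (uniqueness of the quotient when the divisor divides the dividend, as noted in the preamble); then $x_0 = s m$, $y_0 = t m$ satisfies $a x_0 + b y_0 = (as + bt) m = g m = c$. This gives one explicit solution, and all three of $g$, $x_0$, $y_0$ are computed in polynomial time from the extended Euclidean run and one multiplication.

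Finally I would characterize the full solution set. Given the particular solution $(x_0, y_0)$, any other solution $(x, y)$ satisfies $a(x - x_0) + b(y - y_0) = 0$. Dividing through by $g$ (legitimate since $g \mid a$ and $g \mid b$, and $\D$ is an integral domain so we may cancel the common factor $g$ once we know $g \neq 0$) yields $(a/g)(x - x_0) = -(b/g)(y - y_0)$. Since $a/g$ and $b/g$ are co-prime, $a/g$ must divide $(y - y_0)$; writing $y - y_0 = -(a/g) r$ for some $r \in D$ and substituting back gives $(a/g)(x - x_0) = (b/g)(a/g) r$, so $x - x_0 = (b/g) r$ using that $\D$ has no zero divisors and $a/g \neq 0$. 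Conversely every pair of the claimed form is easily checked to satisfy the equation. The main obstacle is the co-primality divisibility step: one needs that in a Euclidean domain, $p \mid qr$ with $\gcd(p,q)$ a unit implies $p \mid r$ — this is the standard consequence of B\'ezout ($up + vq = 1$ for some $u,v$, multiply by $r$), so it poses no real difficulty, but it is the one place where the Euclidean (as opposed to merely integral-domain) structure is genuinely used beyond invoking the cited theorem. Everything else is bookkeeping that can be deferred to the reference.
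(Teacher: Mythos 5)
The paper itself gives no proof of this proposition—it is cited directly from von~zur~Gathen and Gerhard, Theorem~4.10—so there is no internal argument to compare against. Your reconstruction via the extended Euclidean algorithm (compute B\'ezout coefficients $s,t$ with $as+bt=g$, scale to get a particular solution when $g\mid c$, then parametrize the homogeneous kernel by coprimality of $a/g$ and $b/g$ plus cancellation in the integral domain) is the standard argument and is what the cited source does. Two small gaps in the write-up, both concerning degenerate coefficients. First, your treatment of $a=b=0$ is off: with the usual convention $\gcd(0,0)=0$, the quotients $a/g$ and $b/g$ are $0/0$, which is undefined rather than $0$, and the claimed parametrization cannot yield all of $D^2$; the paper sidesteps this by preprocessing such equations away, so it is cleaner to exclude $a=b=0$ from the proposition's scope outright than to claim the formula is ``consistent''. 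Second, and more substantively, your final cancellation step—``so $x - x_0 = (b/g)r$ using that $\D$ has no zero divisors and $a/g\neq 0$''—is invoked under the assumption only that $a$ and $b$ are not both zero. If $a=0$ and $b\neq 0$, then $a/g=0$ and the cancellation does not apply as written; the conclusion still holds (here $g$ is an associate of $b$, so $b/g$ is a unit, $y$ is forced to $y_0$, and $x_0 + (b/g)D = D$), but this case needs its own sentence rather than falling out of the general chain of deductions. Both are trivial to patch.
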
 

In light of Proposition~\ref{prop:one-equation}, we can assume that
the instances of $\minlin{2}{\D}$ that we are dealing with do not contain
inconsistent equations (since those can be removed in polynomial time during a preprocessing stage
where the parameter is decreased according to the weight of the equation).
Moreover, we may assume that in every equation $ax + by = c$ the
coefficients $a$ and $b$ are co-prime (since we may divide 
all coefficients by a $\gcd(a, b)$).
By further preprocessing, we may assume that equations of the form $0 \cdot x + 0 \cdot y = 0$
do not appear in the instances: since they are satisfied by any assignment,
they can be removed in advance without affecting the parameter.

We use the following distributive property 
of $\gcd$ and $\lcm$:

\begin{sloppypar}
\begin{proposition} \label{prop:edom-lattice}
  Let $\D$ be a Euclidean domain and 
  let $a_1, \dots, a_n, b \in D$.
  Then every
  $\lcm(\gcd(a_1, b), \dots, \gcd(a_n, b))$ is congruent to every
  $\gcd(\lcm(a_1,\dots,a_n), b)$ up to multiplication by a unit element.
\end{proposition}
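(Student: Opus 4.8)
The plan is to recognise the claimed identity as an instance of the distributivity of the divisibility lattice of a unique factorisation domain, and then reduce it to an elementary identity about minima and maxima in a chain. First I would recall that a Euclidean domain $\D$ is a principal ideal domain, and hence a UFD. Fix a set $\cP$ of representatives of the associate classes of irreducible elements of $\D$. Then every nonzero $d \in D$ has a unique factorisation $d = u_d \prod_{p \in \cP} p^{e_p(d)}$ with $u_d$ a unit, $e_p(d) \in \naturals_0$, and all but finitely many $e_p(d)$ equal to zero; I would extend this by setting $e_p(0) = \infty$ for every $p$, so that all exponents live in the totally ordered set $\naturals_0 \cup \{\infty\}$. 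With these conventions the standard facts apply to all elements (including $0$): $a$ divides $b$ if and only if $e_p(a) \le e_p(b)$ for every $p \in \cP$; an element $g$ is a $\gcd$ (respectively, an $\lcm$) of $c_1,\dots,c_m$ precisely when $e_p(g) = \min_i e_p(c_i)$ (respectively, $e_p(g) = \max_i e_p(c_i)$) for every $p$; and two elements are associate if and only if their exponent vectors coincide. Consequently it suffices to compare the exponent of each prime $p \in \cP$ on the two sides.

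Next I would fix a prime $p \in \cP$ and abbreviate $x_i = e_p(a_i)$ and $y = e_p(b)$. Reading off exponents, the element $\lcm(\gcd(a_1,b),\dots,\gcd(a_n,b))$ has $p$-exponent $\max_{1\le i\le n}\min(x_i,y)$, while $\gcd(\lcm(a_1,\dots,a_n),b)$ has $p$-exponent $\min\bigl(\max_{1\le i\le n} x_i,\, y\bigr)$. So the whole proposition reduces to the chain identity
\[
  \max_{1 \le i \le n} \min(x_i, y) \;=\; \min\Bigl(\max_{1 \le i \le n} x_i,\, y\Bigr)
  \qquad\text{for all } x_1,\dots,x_n,y \in \naturals_0 \cup \{\infty\}.
\]
This is immediate: since the index set is finite, $M := \max_i x_i$ is attained, say $M = x_{i_0}$; if $y \ge M$ then $\min(x_i,y)=x_i$ for all $i$ and both sides equal $M$, while if $y < M$ then the right-hand side equals $y$, and on the left $\min(x_{i_0},y)=y$ with $\min(x_i,y)\le y$ for all $i$, so the left-hand side also equals $y$. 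Combining the resulting equality of $p$-exponents over all $p \in \cP$ shows that $\lcm(\gcd(a_1,b),\dots,\gcd(a_n,b))$ and $\gcd(\lcm(a_1,\dots,a_n),b)$ have the same exponent vector, hence are associate, which is the claim.

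I do not expect a genuine obstacle here: all the mathematical content is the elementary chain identity above together with the exponent description of $\gcd$ and $\lcm$ in a UFD. The only points requiring a little care are the bookkeeping with units, so that the conclusion is phrased as congruence up to a unit rather than equality, and ensuring that the degenerate cases in which some $a_i$ or $b$ equals $0$ are covered — which is precisely what the convention $e_p(0)=\infty$ achieves, since $\naturals_0 \cup \{\infty\}$ is still a chain and the identity was established for it. (If one prefers, the same argument can be run ideal-theoretically in the PID $\D$, using $(\gcd(c_1,\dots,c_m)) = \sum_i (c_i)$ and $(\lcm(c_1,\dots,c_m)) = \bigcap_i (c_i)$ and proving $\bigcap_i\bigl((a_i)+(b)\bigr) = \bigl(\bigcap_i (a_i)\bigr) + (b)$; but verifying distributivity of the ideal lattice ultimately reduces to the same prime-by-prime computation.)
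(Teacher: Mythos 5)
Your proof is correct and takes the same route as the paper: the paper does not give a full proof but cites textbook exercises for the case $\D = \integers$, $n = 2$ and then notes that "it generalizes in a straightforward way to all Euclidean domains and all $n \in \naturals^+$ by noting that the elements of a Euclidean domain admit unique factorization up to multiplication by units." Your exponent-vector argument (with the $e_p(0)=\infty$ convention and the chain identity $\max_i \min(x_i,y) = \min(\max_i x_i, y)$) is precisely the detailed version of that unique-factorization reduction.
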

\end{sloppypar}

A proof of this statement for $\D = \integers$ and $n = 2$
is a common exercise in number theory and algebra textbooks
(see e.g. Exercise~$23 \varepsilon$~in~\cite{clark1984elements}
or Exercise~III.3~in~\cite{landau2021elementary}).
A proof can be found in~\cite{proofwiki}.
It generalizes in a straightforward way to all Euclidean domains
and all $n \in \naturals^+$ by noting that 
the elements of a Euclidean domain admit unique factorization
up to multiplication by units.

\subsection{$\twolin$ over Euclidean Domains}
\label{ssec:lin2graphs}

The main goal of this section is to present a polynomial-time
algorithm for $\lin{2}{\D}$.
Our approach
exploits
a particular graph (known as the {\em primal} or {\em Gaifman} graph) that describes the structure of 
$\lin{2}{\D}$ instances.
Thus, we begin by presenting algorithms for various graphs
such as paths, stars, and acyclic graphs,
where we use a result that resembles the Chinese Remainder Theorem (Lemma~\ref{lem:edom-acyclic-consistent}). 
Then we extend these results to \emph{flexible} instances, which can be viewed as
a generalization of acyclic instances.
Finally, we use the algorithm for flexible instances as the basis for a
polynomial-time algorithm
that checks consistency of general $\lin{2}{\D}$ instances (Lemma~\ref{lem:edom-algo}).
A useful simplification in our proofs is provided
by \emph{homogenization}---a procedure that transforms the solution
space while preserving the primal graph of the instance (Lemma~\ref{lem:homogenise}).
This technique will be used frequently in this and following sections.

Let $S$ be an instance of $\lin{2}{\D}$.
We associate a {\em primal graph} with $S$:
vertices of this graph correspond to the variables in $V(S)$, and
two vertices $x$ and $y$ are connected by an edge 
if $S$ contains an equation over $x$ and $y$.
We can think of an instance of $\lin{2}{\D}$ as a graph
with edges $\{x,y\}$ labelled by equations over $x$ and $y$. 
We may (without loss of generality) assume that the graph does not have self-loops
by introducing a zero variable $z_0$ and an auxiliary variable $z'_0$, 
adding equations $z'_0 + z_0 = 0$ and $z'_0 - z_0 = 0$,
and replacing single-variable equations $ax = b$ with $ax - z_0 = b$. 
Thus, we assume that the zero variable $z_0$ is available
in every instance of $\lin{2}{\D}$, and in $\minlin{2}{\D}$
equations $z'_0 + z_0 = 0$ and $z'_0 - z_0 = 0$
are given weight $k + 1$.
We use graph-related terminology
(such as connectedness, paths, cycles etc.)
to describe the structure of $S$ while having the primal graph in mind.

One useful trick to simplify consistent instances of $\lin{2}{\D}$
is \emph{homogenization}.
An equation $ax + by = c$ is \emph{homogeneous} if $c = 0$ and
an instance of $\lin{2}{\D}$ is homogeneous if every equation in the instance is homogeneous.
Note that any such system is consistent since it is satisfied by the all-zero assignment.
We show that by applying an invertible affine transformation 
to the solution space, we can turn every consistent instance $S$ of $\lin{2}{\D}$
into a homogeneous system with the same primal graph.
Define a mapping $\Phi$ that acts on every variable $x \in V(S)$ by setting
$x \mapsto a_x x' + b_x$ for some $a_x, b_x \in D$.
We refer to $\Phi$ as a \emph{variable substitution for $S$},
and write $\Phi(S)$ to denote the instance of $\lin{2}{\D}$ 
obtained by substituting every variable $x$ with $a_x x' + b_x$.
A variable substitution is \emph{homogenizing} if $\Phi(S)$ is homogeneous.

\begin{lemma} \label{lem:homogenise}
  Every consistent instance of $\lin{2}{\D}$ admits 
  a homogenizing variable substitution.
\end{lemma}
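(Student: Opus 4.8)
The plan is to fix a satisfying assignment $\varphi : V(S) \to D$ for the consistent instance $S$ and build the homogenizing substitution directly from it. For each variable $x \in V(S)$, I would set $b_x = \varphi(x)$ and take $a_x = 1$, i.e. define $\Phi$ by $x \mapsto x' + \varphi(x)$. This is clearly a variable substitution in the sense defined above, and it is invertible (its inverse is $x' \mapsto x - \varphi(x)$), so it preserves the primal graph: each equation of $S$ on the pair $\{x,y\}$ is mapped to an equation on the pair $\{x',y'\}$ and no new adjacencies are created or destroyed.

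The core computation is to check that $\Phi(S)$ is homogeneous. Take an arbitrary equation $ax + by = c$ of $S$. Substituting gives $a(x' + \varphi(x)) + b(y' + \varphi(y)) = c$, i.e. $ax' + by' = c - a\varphi(x) - b\varphi(y)$. Since $\varphi$ satisfies the original equation, $a\varphi(x) + b\varphi(y) = c$, so the right-hand side is $0$ and the transformed equation is $ax' + by' = 0$, which is homogeneous. As this holds for every equation of $S$, the whole system $\Phi(S)$ is homogeneous, and $\Phi$ is a homogenizing variable substitution.

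A couple of small bookkeeping points should be addressed for completeness. One must handle the distinguished zero variable $z_0$ (and its companion $z_0'$) that every $\lin{2}{\D}$ instance carries: any satisfying assignment forces $\varphi(z_0) = 0$, so the substitution fixes $z_0 \mapsto z_0'$ (with $b_{z_0} = 0$), which keeps the special equations $z_0' + z_0 = 0$ and $z_0' - z_0 = 0$ intact after renaming. Also, if one wants the substitution to use nonzero multipliers $a_x$ as the definition permits, any choice of units works equally well, since scaling the variables by units does not affect homogeneity of $ax' + by' = 0$; the simplest valid choice is $a_x = 1$ throughout.

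There is essentially no hard part here: the statement is a direct unwinding of the definition of ``satisfying assignment'' combined with an affine change of coordinates that translates the solution point to the origin. The only thing one needs to be slightly careful about is that the substitution is genuinely of the prescribed form $x \mapsto a_x x' + b_x$ and preserves the primal graph, both of which are immediate for a translation. This lemma will then be used repeatedly in later sections to reduce consistency and cut questions for $\lin{2}{\D}$ to the homogeneous case, where the all-zero assignment is available as a canonical baseline.
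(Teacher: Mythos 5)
Your proof is correct and matches the paper's argument essentially verbatim: both fix a satisfying assignment $\varphi$, set $\Phi(x) = x' + \varphi(x)$, and observe that substitution cancels the constant term $c$. The extra remarks on $z_0$ and on unit multipliers are harmless elaborations and do not change the approach.
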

\begin{proof}
Let $S$ be an instance of $\lin{2}{\D}$
satisfied by assignment $\varphi$.
Define $\Phi$ as $x \mapsto x' + \varphi(x)$ for all $x \in V(S)$.
Note that $\Phi$ is reversible by subtracting $\varphi(x)$.
Consider an equation $ax + by = c$ in $S$.
Note that $a\varphi(x) + b\varphi(y) = c$ since $\varphi$ satisfies the equation.
Its counterpart in $\Phi(S)$ is
$a(x' + \varphi(x)) + b(y' + \varphi(y)) = c$, which simplifies to $ax' + by' = 0$.
The right hand side in the obtained equation is $0$.
Thus, the variable substitution $\Phi$ is homogenizing.
\end{proof}

Now consider a path $P$ of length $\ell-1$ connecting variables $x$ and $y$ in $S$
i.e. a system of $\ell-1$ equations over
$\ell$ distinct variables $p_1, \dots, p_\ell$, where $x = p_1$ and $y = p_\ell$, 
with one equation relating $p_i$ and $p_{i+1}$ for all $i \in \{1,\ldots,\ell-1\}$.
If $\ell = 2$, then $P$ contains a single equation.
Otherwise, we may eliminate intermediate variables
to obtain an equation over $x$ and $y$ by recursively picking
the first two equations in $P$, say, 
$ap_1 + bp_2 = c$ and $a'p_2 + b'p_3 = c'$,
and taking their linear combination 
$a'(ap_1 + bp_2) - b(a'p_2 + b'p_3) = a'c - bc'$,
which simplifies to $(a'a)p_1 - (b'b)p_3 = a'c - bc'$.
We say that $P$ \emph{implies} the final equation over $x$ and $y$
obtained after eliminating all intermediate variables.
This equation is denoted by $e_P$.
For example, let $P$ be a path with 
two equations over $\integers$: $x - 2z = 2$ and $z - y = 1$. 
The equation implied by $P$ is obtained by eliminating $z$
so it is $x - 2y = 4$.
The following observation implies that variable elimination is safe.

\begin{observation} \label{obs:var-elim-safe}
  Every assignment that satisfies $P$ also satisfies $e_P$.
\end{observation}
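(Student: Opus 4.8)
The plan is to prove the statement by induction on the number of equations in $P$, i.e.\ on $\ell-1$. In the base case $\ell=2$, the path $P$ consists of a single equation over $p_1=x$ and $p_2=y$, and by definition $e_P$ is exactly that equation, so any assignment satisfying $P$ satisfies $e_P$ trivially.

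For the inductive step, assume $\ell\geq 3$ and write the first two equations of $P$ as $ap_1+bp_2=c$ and $a'p_2+b'p_3=c'$. The recursive construction of $e_P$ first replaces these two equations by the linear combination
\[
 a'(ap_1+bp_2)-b(a'p_2+b'p_3)=a'c-bc',
\]
which simplifies to $(a'a)p_1-(b'b)p_3=a'c-bc'$. Since $\D$ is a commutative ring, forming a $\D$-linear combination of equations is sound: any assignment $\varphi$ satisfying the two original equations also satisfies this combined equation. Let $P'$ be the path on the $\ell-1$ distinct variables $p_1,p_3,p_4,\dots,p_\ell$ obtained from $P$ by deleting the first two equations and inserting the combined equation as its new first equation; then $P'$ has $\ell-2$ equations and, by construction of $e_{(\cdot)}$, we have $e_{P'}=e_P$. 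Now if $\varphi$ satisfies every equation of $P$, then in particular it satisfies the first two, hence the combined equation, hence every equation of $P'$; by the induction hypothesis applied to $P'$, $\varphi$ satisfies $e_{P'}=e_P$, as required.

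The argument is entirely routine and I do not anticipate any real obstacle. The only points worth making explicit are that soundness of linear combinations of equations uses only the ring axioms (distributivity and commutativity of $\cdot$), and that the recursive definition of $e_P$ is arranged precisely so that eliminating $p_2$ first and then finishing the elimination on $P'$ produces the same equation $e_P$. In particular, we do not need $e_P$ to be independent of the order in which intermediate variables are eliminated: the statement concerns only the specific equation produced by the procedure described above.
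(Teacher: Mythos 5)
Your proof is correct. The paper states this as an observation and gives no proof at all, so there is nothing to compare against; your induction on the number of equations in $P$, following the recursive construction of $e_P$, is the natural argument and fills in exactly what the authors left implicit. You also correctly identify the only nontrivial point — that cancelling the $p_2$-term in the linear combination uses commutativity of multiplication — and that nothing beyond soundness of linear combinations and the specific recursive definition of $e_P$ is needed.
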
 

We say that an instance $S$ of $\lin{2}{\D}$ is \emph{flexible}
if for every pair of variables $x,y \in V(S)$,
every $\{x,y\}$-path in $S$ implies \emph{equivalent} equations on $x$ and $y$,
i.e. equations with the same set of satisfying assignments.
Otherwise, we say that $S$ is \emph{rigid}.
If $S$ is flexible, we write $e_{xy}(S)$ to denote the equation
implied by the $\{x,y\}$-paths in $S$.
A simple example of flexible instances are acyclic instances.
In the following lemma, we present 
a simple criterion for checking the consistency of such instances.
We start with a useful observation.
For a flexible instance $S$ and any $x \in V(S)$, we define the instance
$\STS(S, x) = \{ e_{xy}(S) \mid y \in V(S) \setminus \{x\} \}$ of $\lin{2}{\D}$.

\begin{lemma} \label{lem:acyclic=star}
  Let $S$ be a connected, acyclic instance of $\lin{2}{\D}$. For any $x \in V(S)$, $S$ and $\STS(S, x)$ have the same set of satisfying assignments.
\end{lemma}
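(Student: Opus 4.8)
The plan is to show the two systems have the same solution set by establishing containment in both directions. The easy direction is that every assignment satisfying $S$ also satisfies $\STS(S,x)$: since $S$ is acyclic and connected, for every $y \in V(S) \setminus \{x\}$ there is a unique $\{x,y\}$-path $P_{xy}$ in $S$, and by Observation~\ref{obs:var-elim-safe} any assignment satisfying $S$ (hence satisfying all equations of $P_{xy}$) also satisfies $e_{xy}(S) = e_{P_{xy}}$. Since this holds for every $y$, any solution of $S$ is a solution of $\STS(S,x)$.

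For the reverse direction, suppose $\varphi$ satisfies $\STS(S,x)$; I want to show $\varphi$ satisfies every equation of $S$. Take an arbitrary equation $e$ of $S$, say on variables $p$ and $q$. Because $S$ is a tree (connected and acyclic), $e$ lies on the unique $\{x,\cdot\}$-paths to its endpoints: more precisely, one of $p,q$ is the parent of the other in the tree rooted at $x$, say $q$ is the child, so the path $P_{xq}$ from $x$ to $q$ is exactly $P_{xp}$ followed by the edge $e$. I would argue by induction on the distance from $x$ in the tree that $\varphi$ satisfies every edge of $S$: the base case (edges incident to $x$) is immediate since the edge $\{x,q\}$ itself \emph{is} the path $P_{xq}$, so $e_{xq}(S)=e$ and $\varphi$ satisfies it by hypothesis. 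For the inductive step, consider edge $e=\{p,q\}$ with $q$ the child of $p$. Both $e_{xp}(S)$ and $e_{xq}(S)$ are satisfied by $\varphi$. The path $P_{xq}$ is $P_{xp}$ extended by $e$, so $e_{xq}(S)$ is (up to the linear-combination step used to define implied equations) obtained by eliminating $p$ from $\{e_{xp}(S), e\}$. The key point is to recover satisfaction of $e$ from satisfaction of $e_{xp}(S)$ and $e_{xq}(S)$: I would show that on the two variables $x$ and $q$, the equation $e_{xq}(S)$ together with the value $\varphi(p)$ forced through $e_{xp}(S)$ pins down $\varphi(q)$ consistently with $e$. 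Concretely, $\varphi(p)$ is determined (given $\varphi(x)$) by $e_{xp}(S)$ up to the flexibility slack, and I propagate along $e$ to get a value for $q$; this propagated value must satisfy $e_{xq}(S)$, and conversely since $\varphi$ already satisfies $e_{xq}(S)$ one checks the pair $(\varphi(p),\varphi(q))$ satisfies $e$.

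The main obstacle is the bookkeeping around the non-uniqueness of implied equations and of quotients in a Euclidean domain: the ``linear combination'' defining $e_P$ multiplies through by coefficients, so $e_{xq}(S)$ is not literally $e$ composed with $e_{xp}(S)$ but a scaled version, and I must be careful that no spurious solutions are introduced or lost by this scaling — in particular that the coefficients I divide/multiply by are nonzero, which holds because $\D$ is an integral domain and (by the preprocessing noted after Proposition~\ref{prop:one-equation}) equations have co-prime, not-both-zero coefficients. A cleaner route that avoids some of this is to use homogenization (Lemma~\ref{lem:homogenise}): since $\STS(S,x)$ being solvable would give a solution to work from, or more simply since we only care about equality of solution sets, I may assume $S$ is homogeneous, so every equation is $ap + bq = 0$ with $a,b$ co-prime; then each edge defines a bijection between the "directions" at its two endpoints, the tree structure lets me propagate a value from $x$ outward uniquely given $\varphi(x)$, and I verify that this propagated assignment is exactly the one forced by $\STS(S,x)$ and satisfies all of $S$. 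I expect the homogeneous version to make the induction essentially a one-line propagation argument, with the only remaining care being the reduction to the homogeneous case, which is routine via Lemma~\ref{lem:homogenise}.
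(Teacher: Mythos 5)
Your proposal follows essentially the same route as the paper's proof. The easy containment via Observation~\ref{obs:var-elim-safe} is identical. For the reverse, the paper gives a direct (non-inductive) argument: for an equation $e\in S$ on variables $y,z$ with $y,z\neq x$, the equation $e$ is recoverable from $e_{xy}(S)$ and $e_{xz}(S)$ as a linear combination, and since both of these lie in $\STS(S,x)$, $\varphi$ satisfies $e$. Your ``induction on distance from $x$'' layer is harmless but superfluous: your inductive step never actually invokes the inductive hypothesis --- it only uses that $e_{xp}(S)$ and $e_{xq}(S)$ both lie in $\STS(S,x)$ and are hence satisfied by $\varphi$ by hypothesis --- so the argument collapses to the paper's direct computation. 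Your caution about scaling is a reasonable instinct: strictly the linear combination recovers $b\cdot e$ rather than $e$ itself (where $b$ is the coefficient of $p$ in $e_{xp}(S)$), and one then cancels the factor $b$ using the integral-domain property; the paper condenses this to ``$e$ can be written as a linear combination of $e_{xy}(S)$ and $e_{xz}(S)$.'' Finally, your alternative via homogenization (Lemma~\ref{lem:homogenise}) would also work and would make the propagation step cleaner, but it is not the route the paper takes for this particular lemma; the paper invokes homogenization afterwards, in the proof of Lemma~\ref{lem:edom-acyclic-consistent}.
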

\begin{proof}
By Observation~\ref{obs:var-elim-safe}, an assignment that
satisfies $S$ also satisfies the equations implied by 
the paths in $S$. Consequently, it satisfies $\STS(S, x)$ for any $x$.
Now, suppose $\varphi$ is a satisfying assignment to $\STS(S, x)$
and consider an equation $e \in S$ over variables $y$ and $z$.
It suffices to show that $\varphi$ satisfies $e$.
Clearly, this holds if $y = x$ since then $e \in \STS(S, x)$.
Otherwise, by the construction of equations implied by the paths,
the equation $e$ can be written as
a linear combination of $e_{xy}(S)$ and $e_{xz}(S)$.
Since these two equations are present in $\STS(S, x)$,
$\varphi$ satisfies them, and hence also satisfies $e$.
\end{proof}

We can now present the consistency criterion.  
  
\begin{lemma} \label{lem:edom-acyclic-consistent}
  An acyclic instance of $\lin{2}{\D}$ is consistent if
  and only if it does not contain an inconsistent path.
\end{lemma}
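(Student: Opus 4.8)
If the acyclic instance $S$ has a satisfying assignment $\varphi$ and $P$ is any path in $S$, then $\varphi$ satisfies all equations of $P$, and since $e_P$ is obtained from these by iterated linear combinations, Observation~\ref{obs:var-elim-safe} gives that $\varphi$ satisfies $e_P$; hence $e_P$ is consistent, i.e.\ $P$ is not an inconsistent path. Contrapositively, an instance with an inconsistent path is inconsistent. So the work is in the converse: assuming $S$ has no inconsistent path, show $S$ is consistent. Since a disjoint union of consistent instances is consistent, we may assume $S$ is connected, hence a tree, and (by the preprocessing of Section~\ref{ssec:lin2graphs}) that every equation $ax+by=c$ has $a,b$ co-prime and nonzero. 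I will prove by induction on $|V(S)|$ the stronger statement that \emph{$S$ is consistent and, for every $x\in V(S)$, the projection set $\{\varphi(x):\varphi\text{ satisfies }S\}$ is a nonempty coset $\alpha_x+(m_x)$ of a principal ideal}. The base case $|V(S)|=1$ is trivial.

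\textbf{Inductive step.} Pick a leaf $\ell$ with neighbour $p$ and incident equation $e\colon a\ell+bp=d$, and let $T=S-\ell$; this is a smaller tree with no inconsistent path, so by induction $T$ is consistent and the projection set of $p$ equals $\mu+(m)$ for some $\mu,m$. By Lemma~\ref{lem:acyclic=star} (centre $p$), $T$ and $\STS(T,p)=\{e_{pv}\mid v\in V(T)\setminus\{p\}\}$ have the same satisfying assignments, so the projection set of $p$ is $\bigcap_v\{p\text{-values of solutions of }e_{pv}\}$; since each $P_{pv}$ is not inconsistent, $e_{pv}$ is consistent, and writing it in co-prime form $\alpha_v p+\beta_v v=\gamma_v$ (with $\alpha_v,\beta_v\neq 0$) Proposition~\ref{prop:one-equation} gives that this $p$-value set is $r_v+(\beta_v)$, where $\alpha_v r_v\equiv\gamma_v\pmod{\beta_v}$. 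As the intersection is nonempty, $m=\lcm_v\beta_v$ and $\mu\equiv r_v\pmod{\beta_v}$ for every $v$. Now a satisfying assignment of $T$ with $p\mapsto\pi$ extends over $e$ to $\ell$ iff $a\mid(d-b\pi)$; so $S$ is consistent iff there is $\pi\in\mu+(m)$ with $a\mid(d-b\pi)$, which (writing $\pi=\mu+mt$ and solving for $t$ via Proposition~\ref{prop:one-equation}, using $\gcd(a,b)=1$ so $\gcd(bm,a)=\gcd(m,a)$) holds iff $\gcd(m,a)\mid(d-b\mu)$.

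\textbf{The key divisibility.} Since $m=\lcm_v\beta_v$, Proposition~\ref{prop:edom-lattice} gives $\gcd(m,a)\cong\lcm_v\gcd(\beta_v,a)$, so it suffices to show $\gcd(\beta_v,a)\mid(d-b\mu)$ for each $v$. Fix $v$ and consider the path $R$ from $\ell$ to $p$ and then along $T$ to $v$; it is a path in $S$, hence not inconsistent, and eliminating $p$ from $e$ and (the co-prime form of) $e_{pv}$ exhibits an equation $(\alpha_v a)\ell-(b\beta_v)v=\alpha_v d-b\gamma_v$ that lies in the linear span of the equations of $R$ and is therefore consistent; thus $\gcd(\alpha_v a,\,b\beta_v)\mid(\alpha_v d-b\gamma_v)$. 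Now $\gcd(\beta_v,a)$ divides both $\alpha_v a$ and $b\beta_v$, hence divides $\gcd(\alpha_v a,b\beta_v)$, so $\gcd(\beta_v,a)\mid(\alpha_v d-b\gamma_v)$. Because $\gcd(\alpha_v,\beta_v)$ is a unit, so is $\gcd(\alpha_v,\gcd(\beta_v,a))$; combining $\alpha_v d\equiv b\gamma_v$ with $\gamma_v\equiv\alpha_v r_v$ modulo $\gcd(\beta_v,a)$ and cancelling $\alpha_v$ yields $\gcd(\beta_v,a)\mid(d-br_v)$, and finally $\mu\equiv r_v\pmod{\beta_v}$ (hence mod $\gcd(\beta_v,a)$) gives $\gcd(\beta_v,a)\mid(d-b\mu)$, as required. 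Therefore $S$ is consistent. The second half of the claim follows at once: by Lemma~\ref{lem:acyclic=star} the projection set of any $x$ equals an intersection of the sets $\{x\text{-values of }e_{xy}\}$, each of which is a coset of a principal ideal by Proposition~\ref{prop:one-equation}; the intersection is nonempty since $S$ is consistent, and a nonempty intersection of cosets of principal ideals is again such a coset (with modulus the lcm of the moduli).

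\textbf{Main obstacle.} The delicate point is the divisibility bookkeeping of the inductive step: recognising that the single ``global'' obstruction $\gcd(m,a)\mid(d-b\mu)$ splits, through the distributive law of Proposition~\ref{prop:edom-lattice}, into the ``local'' obstructions $\gcd(\beta_v,a)\mid(d-br_v)$, each certified exactly by the non-inconsistency of the path $\ell$--$p$--$\cdots$--$v$. Minor care is also needed in normalising each implied equation $e_{pv}$ to co-prime form before invoking Proposition~\ref{prop:one-equation}, and in the degenerate subcase where some $\beta_v$ is a unit (the corresponding congruence is vacuous and contributes nothing to the lcm).
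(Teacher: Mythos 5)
Your proof is correct and tracks the paper's argument closely in its core mechanism (leaf-peeling induction, Lemma~\ref{lem:acyclic=star}, Proposition~\ref{prop:one-equation}, and above all the $\gcd/\lcm$ distributivity of Proposition~\ref{prop:edom-lattice} to split the global extension condition into per-vertex divisibilities, each certified by a consistent path). The route you take differs in two worthwhile ways. First, you avoid homogenization entirely: the paper applies Lemma~\ref{lem:homogenise} to the subtree $S'$ so the projection set of the neighbour becomes an ideal $(B)$ and an explicit scaling assignment $x\mapsto Br$ can be written down; you instead carry the strengthened induction hypothesis that the projection set of every vertex is a nonempty coset $\mu+(m)$, which is a genuinely more informative invariant (it characterizes the solution space of every variable in any acyclic/flexible instance) and makes the argument self-contained without a transform/untransform step. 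Second, you argue abstractly that some $\pi\in\mu+(m)$ extends over the leaf edge iff $\gcd(m,a)\mid(d-b\mu)$, rather than exhibiting a parametric assignment; this is a minor stylistic difference but makes the reduction to Proposition~\ref{prop:edom-lattice} somewhat cleaner.

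One step deserves to be restated more precisely. You justify the consistency of the exhibited equation $(\alpha_v a)\ell-(b\beta_v)v=\alpha_v d-b\gamma_v$ by saying it ``lies in the linear span of the equations of $R$ and is therefore consistent.'' Being a linear combination of equations of $R$ does not, on its own, grant consistency. The correct justification is that variable elimination along a path is associative, so eliminating $p$ from $e$ and the \emph{unnormalised} implied equation $e_{pv}$ yields exactly the implied equation $e_R$; your exhibited equation is $e_R$ with every coefficient and the constant divided by $g=\gcd(\alpha_v',\beta_v')$, a nonzero element of an integral domain, so it has the same solution set as $e_R$, and $e_R$ is consistent precisely because $R$ is not an inconsistent path. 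Your final divisibility $\gcd(\alpha_v a, b\beta_v)\mid(\alpha_v d-b\gamma_v)$ then follows as you stated. The rest of the argument (dividing out $\gcd(\beta_v,a)$, cancelling $\alpha_v$ via Euclid's lemma using that $\gcd(\alpha_v,\gcd(\beta_v,a))$ is a unit, and passing from $r_v$ to $\mu$ modulo $\gcd(\beta_v,a)$) is sound, as is your remark that unit $\beta_v$ contributes nothing to the lcm.
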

\begin{proof}
One direction of the proof follows by Observation~\ref{obs:var-elim-safe}:
existence of an inconsistent path implies inconsistency of the whole system.
For the opposite direction we proceed by induction on the number of equations.
If a system contains only one equation, 
then the claims follow by Proposition~\ref{prop:one-equation}.
Now consider a system $S$ with $n + 1$ equations
where every path is consistent.
If $S$ contains more than one component, 
then the lemma follows 
by induction in each component.
If $S$ is connected (i.e. $S$ is a tree), 
then pick a leaf $z$ of $S$ and assume $x$ is the neighbour of $z$ in $S$.
By induction, the subtree $S' := S[V(S) \setminus \{z\}]$ without $z$ is consistent, 
so Lemma~\ref{lem:homogenise} applies and
there is a homogenizing substitution $\Phi$ to $S'$.
Note that when $\Phi$ is applied to $S$,
all equations except (possibly) 
the one involving $x$ and $z$ are homogenized.

Assume the variables in $V(S) \setminus \{x,z\}$ are $y_1,\dots,y_n$.
Then the equation $e_{xy_i}(\Phi(S'))$ can be written as 
$a_i x = b_i y_i$ for some co-prime $a_i, b_i \in D$.
Let $B = \lcm(b_1, \dots, b_n)$.
An assignment satisfying $\Phi(S')$ 
can be obtained 
by setting $x \mapsto B \cdot r$ and
$y_i \mapsto a_i (B / b_i) \cdot r$
for any $r \in D$.
The assignment clearly satisfies $\STS(\Phi(S'), x)$ so
Lemma~\ref{lem:acyclic=star} implies that it also satisfies $\Phi(S')$.

We now obtain an assignment for all equations in $S$.
Let the equation over $x$ and $z$ in $\Phi(S)$ be 
\begin{equation} \label{eq:xzequation}
  a \cdot x + b \cdot z = c.
\end{equation}
Since it is consistent, we may assume that $a$ and $b$ are co-prime by Proposition~\ref{prop:one-equation}.
We claim that the following holds by consistency of all paths in $\Phi(S)$.

\begin{claim} \label{claim:gcd}
  $\gcd(b, b_i)$ divides $c$ for all $i \in \{1,\ldots,n\}$.
\end{claim}
\begin{claimproof}
Consider $e_{zy_i}(\Phi(S))$.
Since the $\{z,y_i\}$-path in $\Phi(S)$ goes through $x$, 
the equation can be obtained by cancelling out $x$ 
from $ax + bz = c$ and $a_i x = b_i y_i$. 
Thus, we get 
\begin{equation} \label{eq:edom-acyclic-consistent:1}
a b_i \cdot y_i + a_i b \cdot z = a_i c.
\end{equation}
Assume without loss of generality that $a$ and $a_i$
are co-prime (otherwise divide all coefficients by $\gcd(a, a_i)$).
By assumption, Equation~\eqref{eq:edom-acyclic-consistent:1} is consistent and, by Proposition~\ref{prop:one-equation}, 
$\gcd(a b_i, a_i b)$ divides $a_i c$.
Note further that the pairs $(a,b)$, $(a_i,b_i)$ and $(a, a_i)$ are co-prime.
Hence, $\gcd(a b_i, a_i b) = \gcd(b, b_i)$.
Since $a_i, b_i$ are co-prime,
$\gcd(b, b_i)$ does not divide $a_i$ and it only divides $c$.
\end{claimproof}

To find a solution to the system $S$,
substitute in $B \cdot r$ instead of $x$ into
Equation~\ref{eq:xzequation}
obtaining $aB \cdot r + b \cdot z = c$, 
where $r$ is a fresh variable.
We claim that this equation is consistent.
By Proposition~\ref{prop:one-equation},
it suffices to show that $\gcd(aB, b)$ divides $c$.
First, note that since $a,b$ are co-prime, 
an element is a $\gcd(aB, b)$ if and only if 
it is a $\gcd(B, b)$.
Now, let $\gcd(b, b_i)$ be $g_i$.
By the definition of $B$ and
Proposition~\ref{prop:edom-lattice}, 
every $\gcd(B, b)$ and $\lcm(g_1, \dots, g_n)$
are congruent up to multiplication by units.
Claim~\ref{claim:gcd} implies that $g_i$ divides $c$ for all $i \in \{1,\ldots,n\}$,
so by the definition of lcm,
$\lcm(g_1, \dots, g_n)$ also divides $c$.
Thus, by choosing an appropriate value for $r$,
we obtain an assignment that satisfies Equation~\ref{eq:xzequation}
and $\Phi(S')$ simultaneously.
Hence, the assignment satisfies $\Phi(S)$, and
the lemma follows by applying the
inverse variable substitution $\Phi^{-1}$.
\end{proof}

As an aside, a corollary of this lemma is that $\minlin{2}{\D}$ 
on acyclic instances is in \FPT,
since the problem can be reduced to solving
(the edge-deletion variant of) \textsc{Multicut} on trees~\cite{guo2005fixed}, 
where endpoints of every inconsistent path form a cut request.
We also have the following algorithmic corollary.

\begin{corollary} \label{cor:edom-solve-acyclic}
  There is a polynomial-time algorithm that checks
  whether an acyclic instance of $\lin{2}{\D}$ is consistent,
  and if so, computes a satisfying assignment.
\end{corollary}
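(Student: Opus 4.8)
The plan is to observe that the proofs of Lemma~\ref{lem:homogenise}, Lemma~\ref{lem:acyclic=star} and, above all, Lemma~\ref{lem:edom-acyclic-consistent} are constructive, and that every step they perform can be carried out in polynomial time provided one keeps the bit-sizes of all intermediate ring elements under control; the tool for the latter is Property~\ref{property:edom-product}.

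First I would reduce to the case of a single tree. Since the instance is acyclic its primal graph is a forest (with at most one equation per pair of variables), so it suffices to handle each tree component $S$, say on vertices $p_1,\dots,p_n$, separately and return the union of the per-component assignments. For the consistency test, recall from Lemma~\ref{lem:edom-acyclic-consistent} that $S$ is consistent if and only if no path in $S$ is inconsistent, and that a tree has exactly one path between each pair of vertices. For every pair $\{u,v\}$ I would compute the implied equation $e_P$ by the iterated variable elimination of its definition (each step being a constant number of ring operations), and test it with Proposition~\ref{prop:one-equation}; if all $\binom{n}{2}$ tests succeed we continue, otherwise we report that $S$, hence the whole instance, is inconsistent. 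The coefficients of $e_P$ are, up to a unit, products of coefficients occurring on $P$ and its constant is a sum of $O(n)$ such products, so Property~\ref{property:edom-product} guarantees that they stay polynomially bounded and that the test runs in polynomial time.

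Once consistency is established, I would build a satisfying assignment by implementing the induction in the proof of Lemma~\ref{lem:edom-acyclic-consistent}. Pick a leaf $z$ with neighbour $x$, recursively compute a satisfying assignment $\varphi'$ of $S' := S - z$, and use $\Phi \colon y \mapsto y' + \varphi'(y)$ as the homogenizing substitution furnished by Lemma~\ref{lem:homogenise}, so that in $\Phi(S)$ every edge other than $xz$ is homogeneous. For each vertex $y_i \neq x,z$ the implied equation $e_{xy_i}(\Phi(S'))$ has the form $a_i x = b_i y_i$ with $a_i,b_i$ co-prime; with $B := \lcm(b_1,\dots,b_n)$, which divides $b_1\cdots b_n$ and hence is polynomially bounded by Property~\ref{property:edom-product}, substituting $x \mapsto Br$ into the $xz$-equation $ax+bz=c$ of $\Phi(S)$ yields $aBr+bz=c$, which Claim~\ref{claim:gcd} together with Proposition~\ref{prop:edom-lattice} shows to be solvable; use Proposition~\ref{prop:one-equation} to pick a witness $(r,z)$, set $x \mapsto Br$ and $y_i \mapsto a_i(B/b_i)r$, and finally undo $\Phi$ by adding back $\varphi'$. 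Each of the $n$ recursion levels performs $O(n)$ eliminations and one $\lcm$ of at most $n$ elements, all on polynomially sized operands, so the whole procedure runs in polynomial time.

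The main obstacle I anticipate is not conceptual but the uniform bit-size bookkeeping: one must ensure that the repeated variable eliminations along paths, the $\gcd$- and $\lcm$-normalisations, and the nested leaf-removal recursion together do not blow up the representations of the coefficients and constants. Every quantity that arises is, up to multiplication by a unit, a divisor of a product of input coefficients or a bounded sum of such divisors, so Property~\ref{property:edom-product} is precisely what makes the argument go through; checking this consistently across all steps is the only part needing genuine care, the rest being a transcription of proofs already in place.
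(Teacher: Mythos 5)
Your proposal is correct and takes essentially the same approach as the paper: observe that the proofs of Lemma~\ref{lem:homogenise}, Lemma~\ref{lem:acyclic=star}, and in particular the inductive leaf-removal argument of Lemma~\ref{lem:edom-acyclic-consistent} are constructive, and then invoke Property~\ref{property:edom-product} to keep the bit-sizes of all intermediate ring elements polynomially bounded. You spell the steps out in more detail (separate $O(n^2)$ path-consistency check, explicit recursion), but the underlying argument and the source of the polynomial bound are the same as the paper's.
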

\begin{proof}
Let $S$ be an acyclic instance of
$\lin{2}{\D}$ and
let $d_1,\dots,d_n$ be the 
coefficients appearing in~$S$.
The proof of Lemma~\ref{lem:edom-acyclic-consistent} is constructive --
it produces a concrete satisfying assignment to $S$.
The running time for constructing this solution is polynomial in $C n$,
where $n$ is the number of equations in the system
and $C$ is the maximum bit-size of an element from $\D$
computed during the course of the algorithm.
By Property~\ref{property:edom-product}, we have
$C \leq \norm{d_1 \cdot ... \cdot d_n} \leq 
p(\norm{d_1} + ... + \norm{d_n}) \leq p(\norm{S})$, 
where~$\norm{S}$ is the bit-size of $S$.
We conclude that the algorithm runs in polynomial time.
\end{proof}

The results for acyclic instances extend to flexible instances
by considering spanning forests. 

\begin{lemma} \label{lem:edom-flexible}
  A flexible instance $S$ of $\lin{2}{\D}$ is consistent if and only if it
  contains no inconsistent path.
  Moreover, if $S$ is connected, then $S$ and $\STS(S, x)$ 
  have the same set of satisfying assignments for every $x \in V(S)$.
\end{lemma}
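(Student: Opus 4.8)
The plan is to reduce the flexible case to the acyclic case, Lemma~\ref{lem:edom-acyclic-consistent}, by passing to a spanning forest. The forward direction of the first claim is immediate from Observation~\ref{obs:var-elim-safe}: if $\varphi$ satisfies $S$, then for any path $P$ in $S$ the equations of $P$ all lie in $S$, so $\varphi$ satisfies $P$ and $P$ is consistent; hence $S$ has no inconsistent path.

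For the converse I would first observe that a system is consistent if and only if each of its connected components is, so it suffices to treat connected $S$. Fix a spanning tree $T$ of $S$. Every path of $T$ is a path of $S$, so $T$ has no inconsistent path, and Lemma~\ref{lem:edom-acyclic-consistent} yields an assignment $\varphi$ satisfying $T$. The key step is to show that $\varphi$ in fact satisfies every equation $e \in S$: writing $e$ over variables $y,z$, both $e$ itself (viewed as the length-one $\{y,z\}$-path) and the $\{y,z\}$-path $P_{yz}$ of $T$ are $\{y,z\}$-paths in $S$, so by flexibility $e$ is equivalent to $e_{P_{yz}}$; since $\varphi$ satisfies all equations of $P_{yz}$, Observation~\ref{obs:var-elim-safe} gives that $\varphi$ satisfies $e_{P_{yz}}$ and hence $e$. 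This proves $S$ is consistent, and the same ``extend from the tree'' step is reused below.

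For the ``moreover'' part, let $S$ be connected and fix $x \in V(S)$. One inclusion is Observation~\ref{obs:var-elim-safe} again: each $e_{xy}(S) \in \STS(S,x)$ is implied by an $\{x,y\}$-path in $S$, so any satisfying assignment of $S$ satisfies $\STS(S,x)$. For the other inclusion, take a spanning tree $T$ of $S$; since the unique $\{x,y\}$-path of $T$ is an $\{x,y\}$-path of $S$, flexibility makes $e_{xy}(T)$ equivalent to $e_{xy}(S)$, so $\STS(T,x)$ and $\STS(S,x)$ have exactly the same satisfying assignments. Given $\varphi$ satisfying $\STS(S,x)$, it then satisfies $\STS(T,x)$, and $T$ being connected and acyclic, Lemma~\ref{lem:acyclic=star} shows that $\varphi$ satisfies $T$; the extension step from the previous paragraph then shows $\varphi$ satisfies all of $S$.

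I expect the only delicate point to be the repeated use of the flexibility hypothesis to transport a satisfying assignment of a tree path back to the original non-tree edge: one must be comfortable treating a single equation as a length-one path, and must use that ``implies equivalent equations'' means equality of solution sets. The remaining ingredients---the reduction to connected instances, the existence of spanning trees, and the already-established acyclic statements (Lemmas~\ref{lem:edom-acyclic-consistent} and~\ref{lem:acyclic=star})---are routine.
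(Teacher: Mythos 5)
Your proof is correct and takes essentially the same approach as the paper: pass to a spanning forest $T$, show via flexibility that any assignment satisfying $T$ also satisfies every non-tree edge (by comparing the edge, viewed as a length-one path, with the tree path between its endpoints), and then invoke Lemmas~\ref{lem:edom-acyclic-consistent} and~\ref{lem:acyclic=star}. The paper states this more compactly by first establishing that $S$ and $T$ have the same set of satisfying assignments and then deducing both claims at once, but the underlying argument is identical.
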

\begin{proof}
Let $S$ be a flexible instance of $\lin{2}{\D}$ and let $T$ be a
spanning forest of $S$. 
We claim that~$S$ and $T$ have the same set of satisfying assignments,
and then the lemma holds by
Lemma~\ref{lem:acyclic=star} and
Lemma~\ref{lem:edom-acyclic-consistent}.
First, note that any assignment satisfying $S$ also satisfies $T$
since it is a subinstance of~$S$.
On the other hand, let $\varphi$ be a satisfying assignment for $T$,
and consider an equation $e$ in $S \setminus T$ with variables $x$, $y$.
It suffices to show that $\varphi$ satisfies $e$.
Since $S$ is flexible, $e$ is equivalent to~$e_{xy}(S)$.
Furthermore, $T$ is a spanning forest, so it contains a path
connecting $x$ and $y$, and the path implies an equation
equivalent to $e_{xy}(S)$ and $e$.
Hence, $\varphi$ satisfies $e$ and the lemma holds.
\end{proof}

Analogously to Corollary~\ref{cor:edom-solve-acyclic},
we have the following algorithmic result.

\begin{corollary} \label{cor:edom-solve-flexible}
  There is a polynomial-time algorithm that checks
  whether a flexible instance of $\lin{2}{\D}$ is consistent,
  and if so, computes a satisfying assignment.
\end{corollary}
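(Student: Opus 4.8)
The plan is to reduce to the acyclic case already handled by Corollary~\ref{cor:edom-solve-acyclic}. Given a flexible instance $S$ of $\lin{2}{\D}$ (flexibility is promised, so we need not verify it), I would first split the primal graph into connected components and treat each separately: $S$ is consistent if and only if each component is, and the union of per-component satisfying assignments is a satisfying assignment for $S$. So it suffices to handle a connected, flexible instance.

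For such an instance, compute in polynomial time a spanning tree $T$ of the primal graph, and also write $T$ for the corresponding acyclic sub-instance of $S$. Run the algorithm of Corollary~\ref{cor:edom-solve-acyclic} on $T$; this decides in polynomial time whether $T$ is consistent and, if so, outputs a satisfying assignment $\varphi$. If $T$ is inconsistent, then $S$ is inconsistent too, since $T \subseteq S$, and we report this. If $T$ is consistent, I claim $\varphi$ satisfies all of $S$. The key observation is that a sub-instance of a flexible instance is again flexible — restricting to a sub-instance only removes paths, so the remaining paths between any pair of variables still imply equivalent equations — and therefore $S$ is connected and flexible with $T$ a spanning forest of it. By Lemma~\ref{lem:edom-flexible} (precisely, the argument in its proof that $S$ and a spanning forest share the same satisfying assignments), every equation $e \in S \setminus T$ on variables $x, y$ is equivalent to the equation implied by the $\{x,y\}$-path in $T$, which $\varphi$ satisfies; hence $\varphi$ satisfies $e$. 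For robustness the algorithm can additionally just verify directly that $\varphi$ satisfies every equation of $S$, which is a polynomial-time check that the preceding reasoning guarantees to succeed.

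The only quantitative point to track is that the bit-sizes of the domain elements produced do not blow up. This is handled exactly as in the proof of Corollary~\ref{cor:edom-solve-acyclic}: each per-component assignment is produced by that corollary's algorithm, and Property~\ref{property:edom-product} bounds the size of every product of coefficients arising in it by $p(\norm{S})$, so the whole procedure runs in time polynomial in $\norm{S}$. I do not expect a genuine obstacle here; all the real content lies in Lemma~\ref{lem:edom-flexible} and Corollary~\ref{cor:edom-solve-acyclic}, and the only mild subtlety is the remark that flexibility passes to sub-instances and in particular to connected components, which is what legitimizes the spanning-tree reduction.
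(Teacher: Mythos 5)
Your argument is the paper's intended one: compute a spanning forest $T$ of the (flexible) instance, run the acyclic algorithm of Corollary~\ref{cor:edom-solve-acyclic} on $T$, and invoke the claim established inside the proof of Lemma~\ref{lem:edom-flexible} that $S$ and $T$ have the same set of satisfying assignments. The paper omits a separate proof and simply says the corollary follows analogously, but the content is exactly what you wrote, including the Property~\ref{property:edom-product} bound on bit-sizes; the per-component split is harmless but unnecessary since a spanning forest already handles disconnected instances.
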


Lemma~\ref{lem:edom-flexible} suggests an fpt algorithm for solving
$\minlin{2}{\D}$ for flexible instances:
find a spanning forest and reduce to \textsc{Multicut}
by adding endpoints of every inconsistent path 
in the forest as a cut request.
This is also an important stepping stone towards
checking consistency of any instance of $\lin{2}{\D}$.
We provide a full algorithm below.

\begin{lemma} \label{lem:edom-algo}
  There is a polynomial-time algorithm that checks
  whether an instance of $\lin{2}{\D}$ is consistent,
  and if so, computes a satisfying assignment.
\end{lemma}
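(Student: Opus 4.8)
The plan is to reduce the general case to the flexible case handled by Corollary~\ref{cor:edom-solve-flexible}. The key structural fact I would establish is that a connected instance $S$ is consistent if and only if there is a way to ``close off'' all the cycles consistently, and that this closing-off can be detected by examining a spanning tree together with one implied equation per non-tree edge. Concretely, fix a spanning tree $T$ of $S$; for each variable $y$ let $e_{xy}(T)$ be the equation implied by the unique $x$--$y$ path in $T$, where $x$ is an arbitrarily chosen root. Then $T$ is consistent iff it has no inconsistent path (Lemma~\ref{lem:edom-acyclic-consistent}), and the full system $S$ is consistent iff $T$ is consistent \emph{and} every non-tree equation $e=(ay\cdot y + az\cdot z = c)$ is implied by the corresponding cycle, i.e.\ $e$ is compatible with the path equations $e_{xy}(T)$ and $e_{xz}(T)$.

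First I would homogenize. Assuming $S$ is connected (otherwise process each component separately), pick a root $x$, and suppose for the moment $T$ is consistent. By Lemma~\ref{lem:homogenise} applied to $T$ (after checking its consistency via Lemma~\ref{lem:edom-acyclic-consistent} and Corollary~\ref{cor:edom-solve-acyclic}), there is a homogenizing substitution $\Phi$ with $\Phi(T)$ homogeneous and with the same primal graph. Under $\Phi$ every tree equation becomes homogeneous, so each path equation $e_{xy}(\Phi(T))$ has the form $a_y\, x = b_y\, y$ with $a_y,b_y$ co-prime. Now iterate over the non-tree edges of $S$, processing them one at a time and maintaining a ``current flexible instance'' $S'$ that starts as $\Phi(T)$: for a non-tree edge on variables $y,z$ with equation $e$, the unique cycle in $T+e$ gives two path equations $a_y x = b_y y$ and $a_z x = b_z z$; eliminating $x$ (or working in $S'$ directly via $e_{yz}(S')$) produces an equation on $y,z$ that must be equivalent to $e$ after the substitution. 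Equivalently, $e$ closes the cycle consistently iff (i) the cycle equation formed from $e_{xy}(S')$, $e_{xz}(S')$ and $e$ has a solution, which I would check with Proposition~\ref{prop:one-equation}, and (ii) that solution is forced — adding $e$ to $S'$ either keeps it flexible (no new constraint on $x$) or pins $x$ to a value, which then must be propagated. The analysis in the proof of Lemma~\ref{lem:edom-acyclic-consistent} — in particular Claim~\ref{claim:gcd} and the $\gcd$/$\lcm$ interplay via Proposition~\ref{prop:edom-lattice} — is exactly what tells us when the resulting single equation on $x$ is solvable; so I would invoke that machinery to decide solvability and, if solvable, to record the new (possibly degenerate) equation on $x$, shrinking the flexible instance accordingly. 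If at any point the cycle equation is inconsistent, or pins $x$ to incompatible values, report ``inconsistent''; otherwise, after all non-tree edges are processed, the remaining instance is flexible, so Corollary~\ref{cor:edom-solve-flexible} yields a satisfying assignment, which we pull back through $\Phi^{-1}$ and through the elimination steps to recover an assignment to the original $S$.

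For the running time, there are $O(n)$ non-tree edges, each processed with a constant number of applications of Proposition~\ref{prop:one-equation} and $\gcd$/$\lcm$ computations on implied equations; the implied equations have coefficients that are products of at most $n$ input coefficients, so by Property~\ref{property:edom-product} their bit-sizes are polynomial in $\norm{S}$, exactly as in the proof of Corollary~\ref{cor:edom-solve-acyclic}. Hence everything runs in polynomial time. The main obstacle I anticipate is bookkeeping the interaction between non-tree edges: each one may constrain the root variable $x$ (turning the instance from flexible toward rigid), and a later non-tree edge must be checked against the \emph{updated} instance, not against $\Phi(T)$; one must argue that the order of processing does not matter and that once $x$ is pinned the remaining consistency question is again of the flexible type (so that Lemma~\ref{lem:edom-flexible} and Corollary~\ref{cor:edom-solve-flexible} apply). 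Making this invariant precise — ``after processing any set of non-tree edges, the current instance is flexible and equiconsistent with the corresponding sub-instance of $S$'' — and proving it by induction on the number of processed non-tree edges is the crux of the argument.
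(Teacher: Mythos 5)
Your plan is essentially the paper's: build a spanning tree $T$ and compare each non-tree edge $e$ on variables $u,v$ with the implied tree-path equation $e_{uv}(T)$. Where you diverge---and where you are creating work you do not need---is in proposing a running ``current flexible instance $S'$'' and an induction to show the instance stays flexible as edges are processed. The paper avoids this entirely. Taking the linear combination of $e=(a_1u+b_1v=c_1)$ and $e_{uv}(T)=(a_2u+b_2v=c_2)$ that eliminates $u$ yields a single equation $Av=B$ with $A=b_1a_2-a_1b_2$ and $B=c_1a_2-a_1c_2$, and then there is a clean trichotomy: (i) $A=B=0$, so $e$ is equivalent to $e_{uv}(T)$, continue; (ii) $Av=B$ has no solution in $\D$, so $S$ is inconsistent, halt; (iii) $A\neq 0$ and $A\mid B$, so any satisfying assignment of $S$ must set $v\mapsto B/A$. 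In case (iii) you never need to reason about an updated instance: since $\D$ is an integral domain and $S$ is connected, propagating this single pinned value along $T$ determines at most one candidate assignment (each step is a division that either fails or is unique), and you simply verify that assignment against \emph{all} of $S$ and stop. Every non-tree edge is therefore compared against $T$ itself, not against a running $S'$, which is precisely what makes the order of processing irrelevant and dissolves the ``crux'' you flag at the end. If every non-tree edge falls into case (i), the standard spanning-tree argument gives that $S$ is flexible, and Corollary~\ref{cor:edom-solve-flexible} finishes.

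Two smaller corrections. First, homogenizing $T$ is unnecessary: the $Av=B$ elimination works directly on the inhomogeneous equations. Second, your remark that ``once $x$ is pinned the remaining consistency question is again of the flexible type'' is not right and would mislead the argument---once a variable is pinned the instance is rigid, with at most one assignment, and the correct tool is direct propagation and verification, not Lemma~\ref{lem:edom-flexible} or Corollary~\ref{cor:edom-solve-flexible}.
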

\begin{proof}
Let $S$ be an instance of $\lin{2}{\D}$.
Without loss of generality, assume that it is connected,
and compute a spanning tree $T$.
We proceed by checking whether $S$ is flexible or not.
To do so, we consider the equations $e \in S - T$.
Assume that $e$ equals $a_1 x + b_1 y = c_1$ and
let $e_{xy}(T)$ equal $a_2 x + b_2 y = c_2$.
We want to check whether these two equations are equivalent.
To do so, we multiply the first one with $a_2$, the second one with $a_1$,
compute their difference and obtain $(b_1 a_2 - a_1 b_2) y = c_1 a_2 - a_1 c_2$.
For conciseness, let $A = b_1 a_2 - a_1 b_2$,
$B = c_1 a_2 - a_1 c_2$, and consider four cases:
\begin{itemize}
  \item If $A = 0$ and $B \neq 0$, then $Ay = B$ is inconsistent.
  \item If $A = 0$ and $B = 0$, then
  $Ay = B$ is satisfied by assigning any value to $y$.
  \item If $A \neq 0$ and $A$ does not divide $B$,
    then $Ay = B$ is inconsistent.
  \item If $A \neq 0$ and $A$ divides $B$,
  then $Ay = B$ is only satisfied by setting $y$ to $B / A$.
\end{itemize}
If $Ay = B$ is inconsistent, then no assignment can satisfy both $e$ and $e_{xy}(T)$,
hence $S$ is inconsistent.
If $A = B = 0$, then $e$ and $e_{xy}(T)$ are equivalent, and we proceed to the next equation in $S-T$.
Finally, if $Ay = B$ has only one satisfying assignment (namely $y \mapsto B / A$),
we assign this value to $y$ and propagate to the rest of the instance,
and then check whether the obtained assignment satisfies every equation.
Thus, the only case we need to consider further is when $S$ is flexible.
This case is handled by Corollary~\ref{cor:edom-solve-flexible} and this concludes the proof.
\end{proof}

\subsection{Iterative Compression}
\label{ssec:edom-compression}

We reduce $\minlin{2}{\D}$ to a simpler problem
by combining homogenization with {\em iterative compression}.
The latter method uses a compression routine that takes
a problem instance together with a solution as an input,
and either calculates a smaller solution 
or verifies that the provided one has minimum size. 
An optimal solution is then computed by 
iteratively building up the instance while 
improving the solution at each step.
If the compression routine runs in fpt time, 
then the whole algorithm also runs in fpt time.
A more comprehensive treatment of the method 
can be found in~\cite[Chapter~4]{cygan2015book}.
We use it to provide a reduction from $\minlin{2}{\D}$ to the following problem:

\pbDefP{Disjoint $\minlin{2}{\D}$ ($\DML(\D)$)}
{An instance $S$ of $\lin{2}{\D}$ with 
positive integer equation weights $w_S : S \rightarrow \naturals^+$, 
an inclusion-wise minimal set $X \subseteq S$ such that $S - X$ is homogeneous,
and an integer $k$ such that $w_S(X) \leq k+1$.}
{$k$.}
{Is there a set $Z \subseteq S - X$ of weight at most $k$ such that 
$S - Z$ is consistent?}

\begin{lemma} \label{lem:dml}
  If $\DML(\D)$ is solvable in $\bigoh^*(f(k))$ time, 
  then $\minlin{2}{\D}$ is solvable in $\bigoh^*(2^k f(k))$ time.
\end{lemma}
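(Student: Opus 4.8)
The plan is to combine \emph{iterative compression} with \emph{homogenisation} (Lemma~\ref{lem:homogenise}). Write the input system as $S=\{e_1,\dots,e_m\}$ and let $S_i=\{e_1,\dots,e_i\}$. Processing the equations one at a time, I would maintain the invariant that after step $i$ we have a set $Z_i\subseteq S_i$ with $w_S(Z_i)\le k$ and $S_i-Z_i$ consistent; since the weights are positive integers this also gives $|Z_i|\le k$. Starting from $Z_0=\emptyset$, at step $i+1$ the set $W:=Z_i\cup\{e_{i+1}\}$ is a solution for $S_{i+1}$ (as $S_{i+1}-W=S_i-Z_i$) of size at most $k+1$, and the compression routine must either turn $W$ into a solution for $S_{i+1}$ of weight at most $k$ or certify that none exists---in the latter case $S\supseteq S_{i+1}$ has no such solution and we reject. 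After the last step, $Z_m$ solves $S=S_m$.

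The $2^k$ factor comes from the compression routine. Given $S_{i+1}$, the reference solution $W$ with $S_{i+1}-W$ consistent, and budget $k$, I would branch over the set $Y:=W\cap Z^{\star}$, the intersection of $W$ with a hypothetical target solution $Z^{\star}$ of weight at most $k$; since $|W|\le k+1$ there are $\bigoh(2^{k})$ choices. Fixing $Y$: delete the equations of $Y$ (lowering the budget to $k':=k-w_S(Y)$, abandoning the branch if $k'<0$), forbid deletion of the equations of $\hat X:=W\setminus Y$, and put $\hat S:=S_{i+1}-Y$. It remains to decide the \emph{disjoint} problem: is there $Z'\subseteq\hat S-\hat X$ with $w_S(Z')\le k'$ and $\hat S-Z'$ consistent, given that $\hat S-\hat X=S_{i+1}-W$ is already consistent? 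A routine check shows $S_{i+1}$ admits a weight-$\le k$ solution iff this disjoint problem is a yes-instance for some $Y$, and that $Z^{\star}=Y\cup Z'$ then recovers it.

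The disjoint problem is an instance of $\DML(\D)$ once we homogenise. Since $\hat S-\hat X$ is consistent, Lemma~\ref{lem:homogenise} supplies a homogenising variable substitution $\Phi$ for it; being of the form $x\mapsto x'+\varphi(x)$, $\Phi$ is invertible and preserves, for every subsystem, both its consistency status and the equation weights. In $\Phi(\hat S)$ every equation outside $\Phi(\hat X)$ is homogeneous, so the unique inclusion-wise minimal set $X$ with $\Phi(\hat S)-X$ homogeneous---namely the set of non-homogeneous equations of $\Phi(\hat S)$---satisfies $X\subseteq\Phi(\hat X)$. Hence $(\Phi(\hat S),w_S,k')$ together with this $X$ is an instance of $\DML(\D)$ equivalent to the disjoint problem via $\Phi$, and from a solution of the former we recover a solution of the disjoint problem (hence of $S_{i+1}$) by applying $\Phi^{-1}$ and adding back $Y$. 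Running the assumed $\bigoh^*(f(\cdot))$ algorithm on each such instance, and taking $f$ monotone without loss of generality so that $f(k')\le f(k)$, the total cost is $m$ iterations times $\bigoh(2^k)$ branches times $\bigoh^*(f(k))$ per call, i.e.\ $\bigoh^*(2^{k}f(k))$.

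The delicate point---and the reason the reference solution is deliberately kept only one equation over budget---is checking that the instance handed to $\DML(\D)$ is admissible, i.e.\ that the weight of its non-homogeneous part $X$ does not exceed $k'+1$; this has to be maintained through the guess-and-delete branching in tandem with the budget. The remaining verifications---that $\Phi$, the deletion of $Y$, and the restriction to $\hat S$ preserve solutions in both directions, and that the iterative compression terminates with an optimal answer---are routine.
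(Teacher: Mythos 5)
Your overall strategy --- iterative compression, branching on $Y=W\cap Z^{\star}$, homogenising the residual consistent part to obtain a $\DML(\D)$ instance --- matches the paper. However, the ``delicate point'' you flag at the end is not a routine verification: it genuinely fails as you have set things up, and the fix requires an idea you omit. Adding whole weighted equations $e_{i+1}$ one at a time controls $|W|\le k+1$ (so the $\bigoh(2^k)$ branching is fine), but it does not control $w_S(W)$, which is what admissibility of the $\DML(\D)$ instance needs. After branching on $Y$, the undeletable part is $\hat X = W\setminus Y$ with $w_S(\hat X) = w_S(Z_i) + w_S(e_{i+1}) - w_S(Y)$ and budget $k' = k - w_S(Y)$; the minimal non-homogeneous set you pass satisfies only $w_S(X)\le w_S(\hat X)$, so $w_S(X)\le k'+1$ reduces to $w_S(Z_i)+w_S(e_{i+1})\le k+1$, which fails whenever $w_S(e_{i+1})>1$ and $w_S(Z_i)$ is near $k$. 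For a concrete failure, take $\D=\rationals$, $S=\{e_1\colon x=0,\ e_2\colon x=1\}$, $w_S(e_1)=1$, $w_S(e_2)=3$, $k=1$: this is a yes-instance (delete $e_1$), yet on step $2$ the only budget-feasible branch is $Y=\emptyset$, which gives $w_S(X)=3>k'+1=2$, so no admissible $\DML(\D)$ call is ever made and your scheme rejects.

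The paper avoids this by working with the multiset in which each equation $e$ is expanded into $w_S(e)$ unit-weight copies, so the reference set $X$ handed over by iterative compression is a multiset of cardinality exactly $k+1$ and one branches over multisubsets $Y\subseteq X$. After homogenising with respect to a satisfying assignment of the consistent multiset $S'-X$, an equation can be non-homogeneous only if \emph{all} of its copies lie in $X$, and its surviving multiplicity after removing $Y$ equals its multiplicity in $X-Y$; summing gives non-homogeneous weight at most $|X-Y|=k+1-|Y|=k'+1$. This unit-expansion (equivalently, the multiset bookkeeping) is not cosmetic: it is exactly what turns the admissibility bound into a counting argument rather than a false inequality, and your proof needs it.
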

\begin{proof}
Let $I = (S, w_S, k)$ be an instance of $\minlin{2}{\D}$.
In this context it is simpler to view equations as a multiset
$S'$ where every equation $e \in S$ is present with multiplicity $w_S(e)$.
Then by iterative compression, we may assume that apart from the input $I$,
we also have access to a multiset~$X$ such that $\abs{X} = k + 1$
and $S' - X$ is consistent.

Suppose $Z$ is an optimal solution to $S'$.
To reduce to $\DML(\D)$, we branch on the possible intersections
$Y = X \cap Z$ of the incoming solution with the optimal solution. 
Since there are $2^{\abs{X}} = 2^{k+1}$ options,
the branching step requires fpt time.
For every guess $Y$, consider the multisets $S' - Y$ and $X - Y$,
and convert them into sets $S_Y$ and $X_Y$, respectively,
defining the weight function $w_Y$ so that 
$w_Y(e)$ for all equations $e$ is the multiplicity of $e$ in $S_Y$.
Note that by definition $S_Y - X_Y$ is consistent,
so we may apply a homogenizing variable substitution to it
by Lemma~\ref{lem:homogenise}.
Finally, set the parameter to $k_Y = k - \abs{Y}$.
We obtain an instance $(S_Y, w_Y, k_Y, X_Y)$ of $\DML(\D)$.
If this instance has a solution, then combining that solution with $Y$
yields a solution to the instance $I$ of $\minlin{2}{\D}$.
On the other hand, if there is no solution for any option $Y$,
then by exhaustion $I$ is a no-instance.
Since we branch in $2^{k+1}$ directions,
and in each branch we solve an instance of $\DML(\D)$
with parameter bounded from above by $k$, 
we obtain the total running time of $\bigoh^*(2^k f(k))$.
\end{proof}

\subsection{Graph Cleaning for Euclidean Domains}
\label{ssec:edom-cleaning}

Lemma~\ref{lem:edom-flexible} provides us with
a good idea of how to solve $\minlin{2}{\D}$ restricted to acyclic and flexible instances.
To approach the general solution, we now need to consider cycles that make instances rigid.
If a consistent cycle is flexible, we say that it is an \emph{identity} cycle.
If a consistent cycle is consistent, but not flexible, then it is a \emph{non-identity} cycle.
There is an alternative characterization of
identity and non-identity cycles in terms of the number
of satisfying assignments.

\begin{lemma} \label{lem:cycles-and-assignments}
  A consistent cycle in $\lin{2}{\D}$ is identity
  if and only if it admits more than one satisfying assignment,
  while a consistent cycle is non-identity
  if and only if it admits a unique satisfying assignment.
\end{lemma}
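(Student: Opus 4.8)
The plan is to reduce to the homogeneous case and then establish the two implications ``identity $\Rightarrow$ at least two satisfying assignments'' and ``non-identity $\Rightarrow$ exactly one satisfying assignment''. Since every consistent cycle is either flexible (identity) or not (non-identity), and every consistent cycle has at least one satisfying assignment, the two alternatives on each side of the biconditionals partition the consistent cycles, so the two implications together yield both biconditionals. To set up, I would apply Lemma~\ref{lem:homogenise} to a consistent cycle $C$, obtaining a homogeneous cycle $\Phi(C)$ with the same primal graph. The substitution $\Phi$ is an invertible affine map on assignments, so it preserves the number of satisfying assignments; and since variable elimination along a path commutes with $\Phi$ and $\Phi$ restricts to a bijection on value pairs at the endpoints of a path, it maps equivalent path-implied equations to equivalent ones, hence preserves flexibility. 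So I may assume $C$ is homogeneous. After the standard preprocessing I may also assume that every edge of $C$ carries an equation $ax+by=0$ with $a,b$ nonzero and coprime; a short induction then shows that the equation $e_Q$ implied by any path $Q$ in $C$ is again of the form $\alpha u + \beta w = 0$ with $\alpha,\beta$ nonzero, since $\alpha$ and $\beta$ are products of nonzero coefficients in the integral domain $\D$.

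\emph{Identity case.} If $C$ is flexible, fix any vertex $x$. By Lemma~\ref{lem:edom-flexible}, $C$ and the star $\STS(C,x)$ have exactly the same satisfying assignments. Writing $e_{xy}(C)$ as $a_y x + b_y y = 0$ with $b_y \neq 0$, I would set $B = \lcm(b_y : y \in V(C) \setminus \{x\})$, which is nonzero as it is a common multiple of nonzero elements of a domain, and observe that $x \mapsto B$, $y \mapsto -a_y (B/b_y)$ is a well-defined assignment satisfying every equation of $\STS(C,x)$, hence of $C$. It is nonzero, so together with the all-zero assignment this gives at least two satisfying assignments.

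\emph{Non-identity case.} If $C$ is consistent but not flexible, there are distinct vertices $u,w$ and the two $uw$-paths $Q_1,Q_2$ in $C$ satisfy $e_{Q_1} \not\equiv e_{Q_2}$; write $e_{Q_i}$ as $a_i u + b_i w = 0$. The key sub-claim is that $a_1 b_2 = a_2 b_1$ would force $e_{Q_1} \equiv e_{Q_2}$: for any $(u_0,w_0)$ with $a_1 u_0 + b_1 w_0 = 0$ we have $b_1(a_2 u_0 + b_2 w_0) = b_2(a_1 u_0 + b_1 w_0) = 0$, and cancelling $b_1 \neq 0$ in $\D$ yields $a_2 u_0 + b_2 w_0 = 0$ (and symmetrically, using $b_1 \ne 0$ or the analogous computation). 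Hence $a_1 b_2 \neq a_2 b_1$. Now let $\varphi$ be any satisfying assignment of $C$. It satisfies $e_{Q_1}$ and $e_{Q_2}$ by Observation~\ref{obs:var-elim-safe}, so eliminating $w$ gives $(a_1 b_2 - a_2 b_1)\varphi(u) = 0$, whence $\varphi(u) = 0$. Propagating along $Q_1$ and $Q_2$ — at each edge $ap + bp' = 0$ with $b \neq 0$, $\varphi(p) = 0$ forces $\varphi(p') = 0$ — and using that $V(Q_1) \cup V(Q_2) = V(C)$, I conclude $\varphi \equiv 0$. So the all-zero assignment is the unique satisfying assignment of $C$.

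I expect the main obstacle to be the sub-claim in the non-identity case, and more broadly controlling which degeneracies can occur. The argument relies essentially on $\D$ being an integral domain (so that cross-multiplication can be cancelled, and so that $\alpha u + \beta w = 0$ with $\alpha$ or $\beta$ nonzero pins a coordinate once the other one vanishes) and on the homogenization step, which is what makes ``equal solution set'' reducible to ``proportional coefficients'' — this fails in the affine setting (compare $u-w=0$ with $u-w=1$: equal coefficient cross-products, different solution sets). One also has to confirm that no edge of the cycle carries an equation with a zero coefficient; this follows from the preprocessing, since an equation $0 \cdot x + by = c$ has $b$ a unit by coprimality and is a disguised single-variable constraint, handled via the zero variable $z_0$.
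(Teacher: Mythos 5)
Your proof is correct and follows essentially the same strategy as the paper's: homogenize via Lemma~\ref{lem:homogenise} so that the all-zero assignment is available, then show that an identity cycle admits a nonzero satisfying assignment while a non-identity cycle admits only the all-zero one. The non-identity case is nearly identical to the paper's (cross-multiply the two implied equations, get $(a_1 b_2 - a_2 b_1)\varphi(u)=0$, propagate zero around the cycle); you actually fill in a small gap the paper leaves implicit, namely proving that $a_1 b_2 = a_2 b_1$ would force $e_{Q_1}\equiv e_{Q_2}$ rather than merely asserting it. In the identity case you take a slightly different route: you invoke Lemma~\ref{lem:edom-flexible} to pass to $\STS(C,x)$ and build a nonzero assignment via an lcm of the denominators, whereas the paper deletes one edge $e$ from $C$, builds a nonzero assignment on the resulting path by a product construction ($\varphi(x_1)=b_1\cdots b_m$, then replace $b_i$ by $a_i$ successively), and uses flexibility to conclude $e$ is also satisfied. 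Both constructions are valid and equally elementary; the paper's avoids the detour through Lemma~\ref{lem:edom-flexible}, while yours reuses machinery already proved and needs no extra verification that the deleted edge is satisfied. One small typo: in the symmetric direction of your sub-claim you cancel $b_2\neq 0$, not $b_1\neq 0$.
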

\begin{proof}
Let $C$ be a consistent instance of $\lin{2}{\D}$ that is a cycle.
By Lemma~\ref{lem:homogenise}, we may assume 
without loss of generality that $C$ is homogeneous.
Then, $C$ is satisfied by the all-zero assignment.
To prove the lemma it suffices to show that
$C$ admits a non-zero assignment if and only if it is identity.

On the one hand, suppose that $C$ is an identity cycle.
Pick an arbitrary equation $e \in C$. 
Note that $P := C \setminus \{e\}$ is a path.
Let $\abs{P} = m$ and assume that the equations on $P$
are $a_i x_i = b_i x_{i+1}$ for $i \in \{1,\ldots,m\}$,
where $a_i, b_i \in D \setminus \{0\}$ and $x_i$ is a variable.
We define the assignment $\varphi$ using a particular \emph{product construction}:
set $\varphi(x_1) = b_1 \cdot b_2 \cdot ... \cdot b_m$, and
$\varphi(x_{i+1}) = (\varphi(x_i) / b_i) \cdot a_i$ for all $i \in \{1,\ldots,m\}$.
In other words, the value $\varphi(x_{i+1})$ is obtained from $\varphi(x_i)$ 
by replacing the factor~$b_i$ in the product by $a_i$.
Consequently, $\varphi(x_{m+1}) = a_1 \cdot a_2 \cdot ... \cdot a_m$.
Since all coefficients $a_i, b_i$ are nonzero, $\varphi$ is a nonzero assignment and
it is easy to verify that $\varphi$ satisfies $P$.
By Observation~\ref{obs:var-elim-safe}, 
it also satisfies the implied equation $e_P$.
Since equations $e$ and $e_P$ are equivalent,
assignment $\varphi$ satisfies $e$ and, therefore, 
it satisfies $P \cup \{e\} = C$.

On the other hand, suppose that $C$ is a non-identity cycle.
By definition, there are variables $x,y \in V(C)$
such that the $\{x,y\}$-paths $P_1$ and $P_2$ forming $C$ imply 
two non-equivalent equations $a_1 x = b_1 y$ and $a_2 x = b_2 y$, respectively.
Multiplying the first equation by $a_2$ and the second by $a_1$,
we obtain the same coefficient in front of $x$.
Since the equations are not equivalent,
the coefficients in front of $y$ must differ
i.e. $b_1 a_2 \neq a_1 b_2$.
Hence, any assignment satisfying $C$
also satisfies $(b_1 a_2 - a_1 b_2) \cdot y = 0$,
which can only be satisfied by setting $y$ to $0$.
The zero value propagates to all remaining variables,
so $C$ is only satisfied by the all-zero assignment.
\end{proof}

The following results allows us to use the graph cleaning machinery
to remove non-identity cycles.

\begin{lemma} \label{lem:non-id-has-theta}
  Let $G_S$ be the primal graph of a consistent instance $S$ of $\lin{2}{\D}$
  and $\B_S$ be the set of identity cycles in $S$.
  Then $(G_S, \B_S)$ is a biased graph.
\end{lemma}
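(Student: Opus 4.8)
The plan is to verify directly that the class $\B_S$ of identity cycles is \emph{linear}, i.e.\ that whenever two identity cycles $C,C'$ together form a theta graph, the third cycle of $C\cup C'$ is again an identity cycle. First I would reduce to the homogeneous case: since $S$ is consistent, Lemma~\ref{lem:homogenise} supplies a homogenizing substitution $\Phi$, which is a translation of the solution space and hence a bijection between the satisfying assignments of any sub-instance of $S$ and those of the corresponding sub-instance of $\Phi(S)$; in particular it preserves, for every cycle, the number of satisfying assignments, so by Lemma~\ref{lem:cycles-and-assignments} it preserves the identity/non-identity classification while leaving $G_S$ unchanged. Thus we may assume $S$ is homogeneous, so that every cycle of $G_S$ is consistent (satisfied by the all-zero assignment) and $\B_S$ is exactly the set of cycles admitting a nonzero satisfying assignment. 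We may also assume every equation $ax+by=0$ has $a,b\neq 0$: equations involving only one variable are routed through the zero variable $z_0$, and in any case an equation with a zero coefficient pins one of its two variables to $0$, forcing every variable of any cycle through that edge to $0$, so such an edge cannot lie on an identity cycle.

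Next I would pass to the fraction field $\F$ of $\D$ (which exists because $\D$ is an integral domain) and record two facts. (i) A homogeneous cycle $C$ is an identity cycle if and only if, regarded as a homogeneous linear system over $\F$, it has a nonzero solution: a nonzero solution over $\D$ is a nonzero solution over $\F$, and conversely a nonzero solution over $\F$ can be scaled by a common denominator to a nonzero solution over $\D$, using that $C$ is homogeneous and $\D$ has no zero divisors. (ii) For a path $Q=(x_0,\dots,x_m)$ whose consecutive equations are $a_i x_{i-1}+b_i x_i=0$ with $a_i,b_i\neq 0$, over $\F$ each equation reads $x_i=-(a_i/b_i)\,x_{i-1}$, so the solution space of $Q$ over $\F$ is one-dimensional and forces $x_m=\rho_Q\,x_0$ with $\rho_Q:=\prod_i(-a_i/b_i)\in\F\setminus\{0\}$.

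Now consider a theta graph with branch vertices $u,v$ and internally disjoint $u$--$v$ paths $P_1,P_2,P_3$, and set $\rho_i:=\rho_{P_i}$, the product along $P_i$ traversed from $u$ to $v$. I would prove that $C_{ij}:=P_i\cup P_j$ is an identity cycle if and only if $\rho_i=\rho_j$. Over $\F$, a solution of $C_{ij}$ must satisfy both $v=\rho_i u$ and $v=\rho_j u$; if $\rho_i\neq\rho_j$ this forces $u=v=0$, and propagating the values forced along $P_i$ makes every variable $0$, so by (i) the cycle is non-identity. If $\rho_i=\rho_j$, then for an arbitrary value of $u$ we may set $v=\rho_i u$ and propagate the forced values independently along $P_i$ and along $P_j$; this is consistent since $P_i$ and $P_j$ share only $u$ and $v$ and agree on both, so the solution space over $\F$ is one-dimensional and by (i) the cycle is an identity cycle. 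Finally, two identity cycles forming a theta graph share exactly one of its three paths, say $P_1$, so they are $C_{12}=P_1\cup P_2$ and $C_{13}=P_1\cup P_3$, with third cycle $C_{23}=P_2\cup P_3$; the criterion gives $\rho_1=\rho_2$ and $\rho_1=\rho_3$, hence $\rho_2=\rho_3$, so $C_{23}$ is an identity cycle. Therefore $\B_S$ is linear and $(G_S,\B_S)$ is a biased graph.

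The step I expect to require the most care is fact (ii) together with the propagation argument in the criterion: one must be sure that the values forced along a path are genuinely mutually consistent over $\F$, and that the discrepancy between solutions over $\D$ and solutions over $\F$ (denominators appearing in the interior variables of a path) is handled only through fact (i) and never quietly assumed away. This is exactly where Euclidean domains differ from fields, and where an attempt to argue directly with the $\D$-implied equations $e_P$ alone would break down.
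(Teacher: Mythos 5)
Your proof is correct, and it reaches the paper's conclusion by a mildly different route. The paper's proof also homogenizes first (via Lemma~\ref{lem:homogenise}) and then reduces the theta property to the claim that if $e_P$ and $e_R$ are equivalent for two $\{x,y\}$-paths $P,R$, then $P\cup R$ is an identity cycle; it proves this by building nonzero assignments $\varphi_P$ on $P$ and $\varphi_R$ on $R$ via the ``product construction'' of Lemma~\ref{lem:cycles-and-assignments} and gluing them by cross-multiplication, using the identity $\varphi_P(y)\cdot\varphi_R(x)=\varphi_R(y)\cdot\varphi_P(x)$ --- which is precisely a division-free rendering of your equality $\rho_P=\rho_R$ --- and concludes via contrapositive. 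Your version instead passes to the fraction field $\F$ of $\D$, observes (by clearing denominators in a domain with no zero divisors) that a homogeneous cycle is identity exactly when it has a nonzero $\F$-solution, and then reads the linearity of $\B_S$ as the transitivity of ``$\rho_i=\rho_j$''. This is a genuine, if modest, streamlining: the characterization ``$C_{ij}$ identity $\iff\rho_i=\rho_j$'' makes the theta closure immediate, at the cost of justifying the passage between $\D$- and $\F$-solutions, which you do correctly. The paper's version stays inside $\D$ and is a touch more opaque, but avoids invoking the fraction field explicitly. One small caveat: your backup justification that a zero-coefficient edge ``forces every variable of any cycle through that edge to $0$'' is not quite right as stated, since the propagation can stall if the cycle carries a second zero-coefficient edge; the clean argument is the one you give first, namely that the paper's preprocessing routes one-variable equations through $z_0$, after which every primal-graph edge carries an equation with both coefficients nonzero, exactly as Lemma~\ref{lem:cycles-and-assignments} silently assumes.
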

\begin{proof}
By Lemma~\ref{lem:homogenise}, it suffices to consider
a homogeneous instance $S$.
We want to verify that theta property holds for the family of unbalanced cycles in $(G_S, \B_S)$.
To this end, let $P$, $Q$, $R$ be three internally vertex-disjoint $\{x,y\}$-paths in $G_S$,
and assume $P \cup R$ is a non-identity cycle.
We claim that equations $e_P$ and $e_R$ are inequivalent.
Then equation $e_Q$ cannot be equivalent to both $e_P$ and $e_R$.
This implies that either $P \cup Q$ or $Q \cup R$ is non-identity, and the lemma follows.

To prove the claim, assume towards contradiction
that equations $e_P$ and $e_R$ are equivalent,
and $e_P$ is $ax = by$.
Using the product construction from the proof of 
Lemma~\ref{lem:cycles-and-assignments},
define nonzero assignments $\varphi_P$ and $\varphi_R$
satisfying all equations in $P$ and $R$, respectively.
Note that by Observation~\ref{obs:var-elim-safe},
they also satisfy $ax = by$ i.e.
\begin{align}
  \label{eq:phiP} a \cdot \varphi_P(x) &= b \cdot \varphi_P(y), \\
  \label{eq:phiR} a \cdot \varphi_R(x) &= b \cdot \varphi_R(y).
\end{align}
Therefore, after multiplying~\eqref{eq:phiP} by $\varphi_R(x)$ and~\eqref{eq:phiR} by $\varphi_P(x)$, we obtain: 
\begin{align}
  \label{eq:phiPmult} a \cdot \varphi_P(x) \cdot \varphi_R(x) &= b \cdot \varphi_P(y) \cdot \varphi_R(x), \\
  \label{eq:phiRmult} a \cdot \varphi_R(x) \cdot \varphi_P(x) &= b \cdot \varphi_R(y) \cdot \varphi_P(x).
\end{align}
Since the right hand sides of \eqref{eq:phiPmult} and \eqref{eq:phiRmult} 
are equal, we may equate the left hand sides and thus obtain
\[ \varphi_P(y) \cdot \varphi_R(x) = \varphi_R(y) \cdot \varphi_P(x). \]
This equation allows us to define a nonzero assignment 
$\varphi_{PR}$ that satisfies $P \cup R$
by scaling $\varphi_P$ and $\varphi_R$ so that
they agree on the values of $x$ and $y$, namely let
\[
  \varphi_{PR}(z) = 
  \begin{cases}
     \varphi_P(z) \cdot \varphi_R(x) & \text{if } z \in P, \\
     \varphi_R(z) \cdot \varphi_P(x) & \text{if } z \in R.
  \end{cases}
\]
Since $P \cup R$ admits the nonzero satisfying assignment $\varphi_{PR}$,
it is identity by Lemma~\ref{lem:cycles-and-assignments} 
and we arrive at a contradiction.
\end{proof}

By Lemma~\ref{lem:dml}, $\minlin{2}{\D}$ reduces to $\DML(\D)$ in fpt time.
Let $I=(S, w_S, X, k)$ be an instance of the latter problem.
Note that $S - X$ is consistent, so all cycles in it are either
identity or non-identity.
To apply graph cleaning, we construct
a \emph{rooted graph for $I$} as follows.

\begin{definition}\label{def:GI}
Let $I=(S, w_S, X, k)$ be an instance of $\DML(\D)$.
The \emph{rooted graph} for $I$ is a biased graph $(G_I,\B_I)$ defined as follows. The vertex set of $G_I$ is the set of
  the variables of $S-X$ extended with a
  fresh {\em root vertex} $s$.
  The edge set contains all edges in the primal graph of $S-X$ (with the corresponding weights given by $w_S$)
  together with an edge of weight $1$ from $s$ to every vertex in $V(X)$.
  Moreover, let $\B_I \subseteq 2^{E(G_I)}$ be the set of identity cycles in $S - X$.
\end{definition}

Observe that the family of cycles $\B_I$ above admits a polynomial-time oracle
e.g. by checking for every pair of vertices whether two paths connecting them
on the cycle imply the same equation.

\begin{lemma} \label{lem:biased}
  $(G_I, \B_I)$ is a biased graph. 
\end{lemma}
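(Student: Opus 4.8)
The plan is to reduce everything to Lemma~\ref{lem:non-id-has-theta}. First I would observe that $S-X$ is consistent: this is guaranteed by the definition of $\DML(\D)$, since there $X$ is chosen so that $S-X$ is homogeneous, and every homogeneous system is satisfied by the all-zero assignment. Hence Lemma~\ref{lem:non-id-has-theta} applies with the consistent instance being $S-X$; writing $G_{S-X}$ for the primal graph of $S-X$ and $\B_{S-X}$ for its set of identity cycles, we conclude that $(G_{S-X},\B_{S-X})$ is a biased graph, i.e.\ $\B_{S-X}$ is a linear class of cycles in $G_{S-X}$.

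Next I would relate this to $(G_I,\B_I)$. By Definition~\ref{def:GI}, $G_I$ is obtained from $G_{S-X}$ by adding a fresh root vertex $s$ together with an edge from $s$ to every vertex of $V(X)$; in particular $G_{S-X}$ is a subgraph of $G_I$. Also by Definition~\ref{def:GI}, $\B_I$ is precisely the set of identity cycles in $S-X$, so $\B_I=\B_{S-X}$. Consequently every member of $\B_I$ is a cycle of $G_{S-X}$, hence of $G_I$, and no edge of any $C\in\B_I$ is incident with the root $s$.

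Finally I would check the theta (linearity) property for $\B_I$ in $G_I$. Let $C,C'\in\B_I$ be two cycles that together form a theta graph in $G_I$, and let $C''$ be the third cycle of $C\cup C'$. Since $C$ and $C'$ avoid $s$, the entire subgraph $C\cup C'$—and therefore $C''$—lies in $G_{S-X}$, so $C''$ is a cycle of $G_{S-X}$. As $C,C'\in\B_{S-X}$ and $\B_{S-X}$ is linear by Lemma~\ref{lem:non-id-has-theta}, we obtain $C''\in\B_{S-X}=\B_I$. Hence $\B_I$ is a linear class of cycles in $G_I$, and $(G_I,\B_I)$ is a biased graph. The only step needing any attention is the observation that cycles in $\B_I$ never use the root edges, which is exactly what allows the theta relation to be pushed back into $G_{S-X}$ where Lemma~\ref{lem:non-id-has-theta} can be invoked; beyond this I do not expect a genuine obstacle.
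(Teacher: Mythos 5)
Your proof is correct and takes essentially the same approach as the paper: both reduce to Lemma~\ref{lem:non-id-has-theta} and exploit the fact that balanced cycles never use the root $s$. The paper phrases the argument in terms of the equivalent theta property on \emph{unbalanced} cycles (so it must separately consider unbalanced cycles through $s$ and non-identity cycles in $S-X$), whereas you verify linearity of $\B_I$ directly, which is marginally cleaner since it immediately forces the entire theta graph to lie in $G_{S-X}$.
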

\begin{proof}
Consider a cycle in $G_I$ that is outside of $\B_I$.
Such a cycle either contains the root vertex $s$ or is non-identity in $S - X$.
Adding a chordal path to a cycle of the first kind
creates two cycles at least one of which also contains $s$.
For the cycles of the second kind, invoke Lemma~\ref{lem:non-id-has-theta}.
\end{proof}

The following is an immediate algorithmic consequence of
Theorem~\ref{thm:important-subgraphs} and Lemma~\ref{lem:biased}.

\begin{observation} \label{obs:edom-rooted-enum}
  Let $q$ be a positive integer and let $\cG:= \cG(G_I, \B_I, q, s)$
  be the family of connected balanced subgraphs in $(G_I, \B_I)$
  rooted in $s$ with cost at most $q$. Then, in time $\bigoh^*(4^q)$
  we can compute a dominating family $\cH$ for $\cG$
  of size at most $4^q$. 
\end{observation}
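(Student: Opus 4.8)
The plan is to obtain the statement by a direct application of Theorem~\ref{thm:important-subgraphs} to the biased graph $(G_I,\B_I)$ with root vertex $v_0 = s$ and budget $k = q$; essentially nothing needs to be proved beyond checking that the hypotheses of that theorem are met. First I would verify its three requirements. That $(G_I,\B_I)$ is a biased graph is exactly Lemma~\ref{lem:biased}. That the edge weights are positive integers follows from Definition~\ref{def:GI}: the edges lying in the primal graph of $S-X$ inherit the weights $w_S$, which are positive integers by the definition of $\DML(\D)$, while each edge from $s$ to a vertex of $V(X)$ has weight $1$. Finally, Theorem~\ref{thm:important-subgraphs} (and the LP-branching inside its proof) only needs a polynomial-time membership oracle for $\B_I$; such an oracle was described just before the statement of the observation — given a cycle $C$, fix two of its vertices, read off the two arcs of $C$ as paths between them, eliminate the internal variables along each arc to obtain the implied two-variable equations, and declare $C \in \B_I$ iff these two equations are equivalent.

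Having checked the hypotheses, Theorem~\ref{thm:important-subgraphs} applied to $((G_I,\B_I), w, s, q)$ produces in $\bigoh^*(4^q)$ time a dominating family $\cH$ for the family of connected balanced subgraphs of $(G_I,\B_I)$ that contain $s$ and have cost at most $q$ — which is precisely $\cG := \cG(G_I,\B_I,q,s)$ — with $|\cH| \leq 4^q$, and moreover every member of $\cH$ is an important balanced subgraph of $(G_I,\B_I)$. This is exactly the desired conclusion.

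The only place where a little care is needed is the running time of the membership oracle: eliminating variables along a path introduces products of coefficients of $S$, so one must bound their bit-size. This is handled exactly as in the proof of Corollary~\ref{cor:edom-solve-acyclic} via Property~\ref{property:edom-product}, which guarantees that these products have bit-size polynomial in $\norm{S}$; hence the oracle runs in polynomial time in the input size and the overall bound remains $\bigoh^*(4^q)$. I do not expect any genuine obstacle beyond this bookkeeping — the substance of the observation lies entirely in Theorem~\ref{thm:important-subgraphs} and Lemma~\ref{lem:biased}, and the role of this step is simply to package them into the form needed later.
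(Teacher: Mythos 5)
Your proof is correct and matches the paper's approach exactly: the paper states this observation with no separate proof, treating it as an immediate consequence of Theorem~\ref{thm:important-subgraphs} and Lemma~\ref{lem:biased}, and you simply fill in the routine hypothesis-checking (biased graph, positive integer weights, polynomial-time membership oracle via Property~\ref{property:edom-product}) that the paper leaves implicit. One small remark: the paper's description of the oracle checks \emph{every} pair of vertices on the cycle, which is the most conservative reading of the definition of flexibility, whereas you check a single pair; this still works here because preprocessing guarantees both coefficients in every equation of $S-X$ are nonzero (hence so are the coefficients of every implied arc equation), so arc-equivalence at one pair forces a nonzero satisfying assignment and thus, by Lemma~\ref{lem:cycles-and-assignments}, identity of the whole cycle.
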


Now we characterize yes-instances of $\minlin{2}{\D}$.
To this end, let $Z$ be an optimal solution to $I$ disjoint from $X$,
and $\varphi$ be an assignment satisfying $S - Z$.
The variables in $V(S)$ are partitioned into two sets by $\varphi$:
$V_0 = \{ v \in V(S) \mid \varphi(v) = 0 \}$ and
$V_\emptyset = \{ v \in V(S) \mid \varphi(v) \neq 0 \}$, i.e.
those assigned zero and non-zero values, respectively.

\begin{lemma} \label{lem:zero-nzero-comps}
  Let $K \subseteq V(S)$ be a connected component of $S - (X \cup Z)$.
  \begin{enumerate}[1.]
    \item \label{lem:zero-nzero-comps:zero-or-nonzero}
    Either $K \subseteq V_0$ or $K \subseteq V_\emptyset$.
    \item \label{lem:zero-nzero-comps:nonzero-flexible}
    If $K \subseteq V_\emptyset$, then $(S - (X \cup Z))[K]$ is flexible.
    \item \label{lem:zero-nzero-comps:shadow-zero}
    If $K \cap V(X) = \emptyset$, we may assume without loss of generality that $K \subseteq V_0$.
  \end{enumerate}
\end{lemma}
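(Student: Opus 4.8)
The plan is to prove the three items in order. I would first record the two facts the whole argument rests on: $S-(X\cup Z)$ is a subinstance of $S-X$, so it is homogeneous and, after the standard preprocessing (coefficients made coprime, one-variable equations absorbed via the zero variable $z_0$), every equation $ax+by=0$ occurring in it has both $a$ and $b$ nonzero; and $S-(X\cup Z)$ is a subinstance of $S-Z$, so $\varphi$ satisfies every equation of $(S-(X\cup Z))[K]$. For item~1, I would show that $\varphi(x)=0$ iff $\varphi(y)=0$ for every edge $\{x,y\}$ of $(S-(X\cup Z))[K]$; connectedness of $K$ then puts all of $K$ into $V_0$ or all of it into $V_\emptyset$. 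Indeed, the edge carries $ax+by=0$ with $a,b\ne 0$, so from $a\varphi(x)=-b\varphi(y)$ and the fact that $\D$ is an integral domain, $\varphi(x)=0$ forces $\varphi(y)=0$, and symmetrically.

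For item~2, assume $K\subseteq V_\emptyset$, fix $x,y\in K$, and let $P_1,P_2$ be two $\{x,y\}$-paths in $(S-(X\cup Z))[K]$; flexibility amounts to showing that the implied equations $e_{P_1},e_{P_2}$ are equivalent. Variable elimination preserves homogeneity and turns nonzero coefficients into nonzero ones, so in coprime form $e_{P_i}$ reads $a_ix+b_iy=0$ with $a_i,b_i\ne0$. By Observation~\ref{obs:var-elim-safe}, $\varphi$ satisfies both, i.e. $a_1\varphi(x)=-b_1\varphi(y)$ and $a_2\varphi(x)=-b_2\varphi(y)$; multiplying these by $b_2$ and $b_1$ respectively and subtracting gives $(a_1b_2-a_2b_1)\varphi(x)=0$, hence $a_1b_2=a_2b_1$ since $\varphi(x)\ne0$ and $\D$ is a domain. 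Coprimality of $(a_1,b_1)$ and of $(a_2,b_2)$ then forces $a_1$ and $a_2$, and likewise $b_1$ and $b_2$, to agree up to one common unit, so $e_{P_1}$ and $e_{P_2}$ have the same solution set. (This is the rescaling idea from the proof of Lemma~\ref{lem:non-id-has-theta}; alternatively one may note that every cycle inside $K$ is homogeneous and is satisfied by the nonzero assignment $\varphi|_K$, hence is an identity cycle by Lemma~\ref{lem:cycles-and-assignments}.)

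For item~3, suppose $K\cap V(X)=\emptyset$. Every equation of $X$ has both variables in $V(X)$, so none of them is incident to $K$; adding the equations of $X$ back to $S-(X\cup Z)$ therefore attaches nothing to $K$, and $K$ is a connected component of $S-Z=(S-(X\cup Z))\cup X$ as well. Consequently every equation of $S-Z$ that mentions a variable of $K$ lies inside $(S-Z)[K]=(S-(X\cup Z))[K]$, which is homogeneous and so is satisfied by the all-zero assignment. I would then take the assignment $\varphi'$ equal to $\varphi$ off $K$ and identically $0$ on $K$: it still satisfies $S-Z$ (the equations inside $K$ are homogeneous, those outside $K$ are untouched, and none straddles $K$), and $Z$ is still optimal, so $\varphi'$ is a legitimate certificate and witnesses $K\subseteq V_0$. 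The same replacement may be performed simultaneously for every component of $S-(X\cup Z)$ disjoint from $V(X)$.

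I expect item~3 to be the delicate point: everything hinges on first observing that $K$, being untouched by $X$, is an entire connected component of $S-Z$, which is exactly what makes it safe to overwrite $\varphi$ on $K$ in isolation. Items~1 and~2 are then routine once one is working over an integral domain with homogeneous equations whose coefficients are nonzero and coprime.
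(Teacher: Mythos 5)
Your proof is correct and follows essentially the same route as the paper's: item~1 is zero‑propagation along homogeneous two‑variable equations with nonzero coefficients (using that $\D$ is a domain), item~2 amounts to showing that a rigid homogeneous component forces the zero assignment (you phrase it directly as equivalence of $e_{P_1},e_{P_2}$ via the determinant condition $a_1b_2=a_2b_1$, the paper phrases it contrapositively, but the computation is the same), and item~3 is exactly the observation that $K$ is a whole connected component of $S-Z$ and can be zeroed out in isolation. The paper's proof is much terser, leaving the coprimality and preprocessing assumptions implicit where you spell them out; nothing in your write‑up deviates from what the paper is actually relying on.
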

\begin{proof}
First, note that $S - X$ is homogeneous, and so is 
the subset of equations in $S - (X \cup Z)$ induced by $K$. 
Statement~1 follows by observing that if one variable is assigned zero in a two-variable homogeneous system, 
then every connected variable must be assigned zero as well.
For statement~2, note that if $K$ is rigid, it can only be satisfied by the all-zero assignment.
Finally, for statement~3, if $K \cap V(X) = \emptyset$, then 
$K$ also induces a homogeneous connected component in $S - Z$,
which can be satisfies by the all-zero assignment
independently of all other variables.
\end{proof}

We now introduce the {\em zero-free subgraph} $H_\emptyset$ of the rooted graph $(G_I,\B_I)$.
\begin{definition}\label{def:ZFS}
  Let $I=(S, w_S, X, k)$ be an instance of $\DML(\D)$, $Z$ be an optimal solution of $I$, $\varphi$ be a satisfying assignment of $S - Z$, and $(G_I, \B_I)$ be the rooted graph for $I$.
  Then, the \emph{zero-free subgraph} $H_\emptyset := H_\emptyset(I, Z, \varphi)$ of $G_I$ 
  (with distinguished vertex $s$) is defined as follows.
  Let $V(H_\emptyset) = V_\emptyset \cup \{s\}$.
  Add every edge from $(G_I - Z)[V_\emptyset]$ to $E(H_\emptyset)$.
  Finally, for each zero-free component $K$ of $S-(X\cup Z)$, pick one vertex $x \in K \cap V(X)$ 
  (which exists by Lemma~\ref{lem:zero-nzero-comps})
  and add the edge $\{s,x\}$ to $E(H_\emptyset)$.
\end{definition}

\begin{lemma} \label{lem:balanced}
  $H_\emptyset$ is a connected balanced subgraph of $(G_I, \B_I)$,
  and $c_{G_I}(H_\emptyset) \leq 3k + 1$.
\end{lemma}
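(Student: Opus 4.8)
The plan is to verify the three assertions in turn. Throughout I fix an optimal solution $Z$ of $I$ (disjoint from $X$, since $Z\subseteq S-X$ by definition of $\DML(\D)$), a satisfying assignment $\varphi$ of $S-Z$ chosen so that every $X$-free component of $S-(X\cup Z)$ lies in $V_0$ (possible by Lemma~\ref{lem:zero-nzero-comps}), and the induced partition $V(S)=V_0\cup V_\emptyset$. Recall that $V(H_\emptyset)=V_\emptyset\cup\{s\}$, that $H_\emptyset$ contains every edge of $(G_I-Z)[V_\emptyset]$, and that it contains exactly one edge from $s$ into each zero-free component of $S-(X\cup Z)$; write $r$ for the number of such components. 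For connectivity: each vertex of $V_\emptyset$ is a vertex of $S-(X\cup Z)$ and hence lies in some component $K$ of that instance; since $K$ meets $V_\emptyset$, Lemma~\ref{lem:zero-nzero-comps} gives $K\subseteq V_\emptyset$, so $K$ is zero-free. All edges of $S-(X\cup Z)$ inside $K$ are edges of $(G_I-Z)[V_\emptyset]$ and thus belong to $H_\emptyset$, so $K$ is connected in $H_\emptyset$, and $H_\emptyset$ also contains the edge from $s$ to the chosen representative of $K$; hence $H_\emptyset$ is connected.

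For balancedness, I would first argue that no cycle of $H_\emptyset$ passes through $s$: the only edges of $H_\emptyset$ incident to $s$ go to representatives of distinct zero-free components, and distinct such components lie in distinct connected components of $H_\emptyset-s=(G_I-Z)[V_\emptyset]$, so no path of $H_\emptyset-s$ joins two $s$-neighbours. Consequently any cycle $C$ of $H_\emptyset$ uses only edges of $(G_I-Z)[V_\emptyset]$, so $C$ is a cycle of the homogeneous instance $S-X$ with all vertices in $V_\emptyset$, and $C\subseteq S-Z$, so $\varphi$ satisfies $C$. Thus $C$ is a consistent cycle admitting both the all-zero assignment and the restriction of $\varphi$ (nonzero on $V(C)$), hence has more than one satisfying assignment, hence is an identity cycle by Lemma~\ref{lem:cycles-and-assignments}; i.e.\ $C\in\B_I$. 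So $H_\emptyset$ is balanced.

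For the cost bound I would split the deleted edges of $H_\emptyset$ into primal edges of $S-X$ and $s$-edges. Every deleted primal edge has an endpoint in $V_\emptyset$; if such an edge were not in $Z$ it would lie in $S-(X\cup Z)$, forcing its other endpoint into $V_\emptyset$ as well by Lemma~\ref{lem:zero-nzero-comps}, and then it would belong to $(G_I-Z)[V_\emptyset]\subseteq E(H_\emptyset)$ and not be deleted --- a contradiction. Hence the deleted primal edges all lie in $Z$, contributing at most $w_S(Z)\le k$. Each $s$-edge has weight $1$ and goes to $V(X)$; exactly $r$ of them (one per zero-free component) lie in $H_\emptyset$, so the deleted $s$-edges contribute $|V(X)|-r$. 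It remains to show $|V(X)|-r\le 2k+1$: if $X=\emptyset$ this is $0$; otherwise minimality of $X$ makes every equation of $X$ non-homogeneous, and since $X\subseteq S-Z$ the assignment $\varphi$ satisfies each of them, so each such equation has a variable in $V_\emptyset$, which lies in a zero-free component, giving $r\ge 1$, while $|V(X)|\le 2|X|\le 2w_S(X)\le 2(k+1)$. Summing, $c_{G_I}(H_\emptyset)\le w_S(Z)+(|V(X)|-r)\le k+(2k+1)=3k+1$.

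The main obstacle is the cost accounting rather than connectivity or balancedness: one must check that the decomposition of the deleted edges into a ``$Z$-part'' and an ``$s$-edge part'' is simultaneously exhaustive and non-overlapping --- this is exactly where Lemma~\ref{lem:zero-nzero-comps} is used, to exclude primal edges of $S-X\setminus Z$ that cross between $V_\emptyset$ and $V_0$ --- and that the saving of $r$ from retaining one $s$-edge per zero-free component is enough to close the gap between the trivial bound $2(k+1)$ on $|V(X)|$ and the required $2k+1$, which in turn relies on the minimality of $X$ to guarantee $r\ge 1$ whenever $X\neq\emptyset$.
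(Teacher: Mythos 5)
Your proof is correct and follows essentially the same line as the paper's: connectivity via the star of $s$-edges to representatives of the zero-free components, balancedness from the fact that any cycle of $H_\emptyset$ avoids $s$ and is satisfied by both $\varphi$ and the all-zero assignment (the paper instead cites flexibility of zero-free components, Lemma~\ref{lem:zero-nzero-comps}.\ref{lem:zero-nzero-comps:nonzero-flexible}, which comes to the same thing), and the cost split into a $Z$-part bounded by $k$ and an $s$-edge part bounded by $\abs{V(X)}-k_\emptyset\le 2k+1$. You are in fact a bit more careful than the paper at two points the paper states without justification: why $k_\emptyset\ge 1$ (your argument via inclusion-minimality of $X$, non-homogeneity of its equations, and $X\subseteq S-Z$ is exactly what is needed), and treating the cost formula as an inequality rather than the paper's claimed equality, which fails if $Z$ has edges entirely inside $V_0$.
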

\begin{proof}
Note that by construction, $H_\emptyset$ contains edges from $G_I - Z$
and edges connecting $s$ to $V(X)$ which are also present in $G_I$,
hence it is a subgraph of $G_I$.
$H_\emptyset$ is clearly connected through the vertex $s$.
To see that all cycles in $H_\emptyset$ are balanced, 
consider a zero-free component $K$ in $S - (X \cup Z)$.
By Lemma~\ref{lem:zero-nzero-comps}, $(S - (X \cup Z))[K]$ is flexible so
$H_\emptyset[K]$ is a balanced subgraph of $(G_I, \B_I)$ whenever $K$ is zero-free.
Finally, the vertex $s$ has exactly one neighbour in each component $K$ with $V(X)\cap K=\emptyset$,
so $H_\emptyset$ does not contain any new cycle going through $s$.

The cost of $H_\emptyset$ in $G_I$ is 
$c_{G_I}(H_\emptyset) = \abs{Z} + \abs{V(X)} - k_\emptyset$,
where $k_\emptyset$ is the number of zero-free components in $S - (X \cup Z)$.
Since $1 \leq \abs{Z} \leq k$, $\abs{V(X)} = 2k + 2$, and $k_\emptyset \geq 1$,
we have that $c_{G_I}(H_\emptyset) \leq 3k + 1$.
\end{proof}

\subsection{Algorithm for $\mintwolin$ over Euclidean Domains}
\label{ssec:edom-alg-def}

In the end of this section we will present our fpt algorithm for $\minlin{2}{\D}$.
By Lemma~\ref{lem:dml}, it suffices to prove that $\DML(\D)$ is in \FPT.
To this end, let $(S, w_S, X, k)$ be an instance of $\DML(\D)$, 
$Z$ be a minimum solution, and $\varphi_Z$ be a satisfying assignment to $S - Z$.
Further, assume $F \subseteq S - (X \cup Z)$ is a set of equations
such that every rigid component of 
$S' := S - (X \cup F)$ is zero under $\varphi_Z$.
We call vertices in $V(X \cup F)$ \emph{terminals},
and refer to $F$ as a \emph{cleaning set with respect to $\varphi_Z$}.
We will later show how to obtain a cleaning set using Observation~\ref{obs:edom-rooted-enum}.

\begin{figure}[bt]
  \centering
  \begin{tikzpicture}
    \tikzstyle{gn}=[draw, fill=black, circle, line width=1pt, inner sep=1pt]
    \tikzstyle{tn}=[draw, rectangle, line width=1pt, minimum width=8.25cm, minimum height=2cm, inner sep=0pt]
    \tikzstyle{cn}=[draw, rectangle, line width=1pt, lightgray]
    \tikzstyle{sl}=[draw, line width=1pt, dotted]
    \tikzstyle{ee}=[draw, line width=1pt]
    \tikzstyle{xe}=[draw, lightgray, line width=1pt]
    \tikzstyle{ln}=[]

    \draw
    node[tn] (T) {}
    (T.north) +(0cm,0cm) node[anchor=south, ln] (Tlab) {$V(X\cup F)$}
    
    (T.south west) +(1.05cm, 1.9cm) node[anchor=north, cn, minimum width=1.8cm, minimum height=3.5cm] (C1) {}
    (C1.south) +(0cm,0cm) node[anchor=south,ln] (C1lab) {$C_1$}
    (T.north west) +(2.1cm,0.1cm) node[] (P1lu) {}
    (P1lu) +(0cm,-2.25cm) node[] (P1ll) {}
    (P1lu) edge[sl] (P1ll)

    (P1lu) +(.1cm, -0.2cm) node[anchor=north west, cn, minimum width=1.8cm, minimum height=3.5cm] (C2) {}
    (C2.south) +(0cm,0cm) node[anchor=south,ln] (C2lab) {$C_2$}
    (P1lu) +(2.05cm,0cm) node[] (P2lu) {}
    (P2lu) +(0cm,-2.25cm) node[] (P2ll) {}
    (P2lu) edge[sl] (P2ll)

    (P2lu) +(.1cm, -0.2cm) node[anchor=north west, cn, minimum width=1.8cm, minimum height=3.5cm] (C3) {}
    (C3.south) +(0cm,0cm) node[anchor=south,ln] (C3lab) {$C_3$}
    (P2lu) +(2.05cm,0cm) node[] (P3lu) {}
    (P3lu) +(0cm,-2.25cm) node[] (P3ll) {}
    (P3lu) edge[sl] (P3ll)

    (P3lu) +(.1cm, -0.2cm) node[anchor=north west, cn, minimum width=1.8cm, minimum height=3.5cm] (C4) {}
    (C4.south) +(0cm,0cm) node[anchor=south,ln] (C4lab) {$C_4$}

    (P1lu) +(1.025cm,0cm) node[] (P5lu) {}
    (P5lu) +(0cm,-2.25cm) node[] (P5ll) {}
    (P5lu) edge[sl] (P5ll)

    ;
    \draw
    (C1.north) +(-0.6cm, -0.8cm) node[gn] (gn1) {}
    (C1.north) +(0.6cm, -0.8cm) node[gn] (gn2) {}
    (C1.north) +(0cm, -1.5cm) node[gn] (gn3) {}
    ;
    \draw[ee]
    (gn1) -- (gn2)
    (gn2) -- (gn3)
    (gn1) -- (gn3)
    ;
    \draw
    (C2.north) +(-0.45cm, -0.6cm) node[gn] (gn4) {}
    (gn4) +(0cm, -0.6cm) node[gn] (gn5) {}
    ;
    \draw[ee]
    (gn4) -- (gn5)
    ;

    \draw
    (C2.north) +(+0.45cm, -0.9cm) node[gn] (gn6) {}
    ;

    \draw
    (C3.north) +(-0.6cm, -0.8cm) node[gn] (gn7) {}
    (C3.north) +(0.6cm, -0.8cm) node[gn] (gn8) {}

    (C4.north) +(0cm, -0.8cm) node[gn] (gn9) {}
    ;
    \draw
    (T.east) +(1cm, -0.4cm) node[gn, label=above:$z_0$] (z) {}
    ;
    \draw[ee]
    (z) -- (gn7)
    (z) -- (gn8)
    (z) -- (gn9)
    ;

    \draw[xe]
    (gn3) -- (gn4)
    (gn3) -- (gn5)
    (gn2) -- (gn5)
    (gn1) edge[bend left] (gn2)
    (gn6) -- (gn5)
    (gn7) edge[bend left=25] (gn8)
    (gn6) -- (gn7)
    (gn8) edge[bend left=25] (gn9)
    ;

  \end{tikzpicture}
  \caption{An illustration of the auxiliary instance $H_\cP = H(S, X, F, \cP)$. Here, $C_1,\dotsc, C_4$ are
  all components of $S'=S-(X\cup F)$ with $C_3$ and $C_4$ being the
  only rigid components. Black circular vertices represent the
  vertices of the primal graph of $H_\cP$, i.e. the terminals in
  $V(X\cup F)$. Moreover, edges in
  light grey represent (possible) edges in $X\cup F$ and black edges
  represent constraints added to $H_\cP$, more specifically, black edges
  inside $C_1$ and $C_2$ represent the constraints $e_{xy}(S')$ and
  black edges incident with $z$ represent the constraints $x-z_0=0$ and $x+z_0=0$.
  The dotted vertical
  lines within the rectangle for $V(X\cup F)$ give the partition $\cP$ of $V(X\cup F)$, which is a refinement of
  the partition $\cP'$ given by the components $C_1\dotsc,C_4$.}
  \label{fig:hcp}
\end{figure}
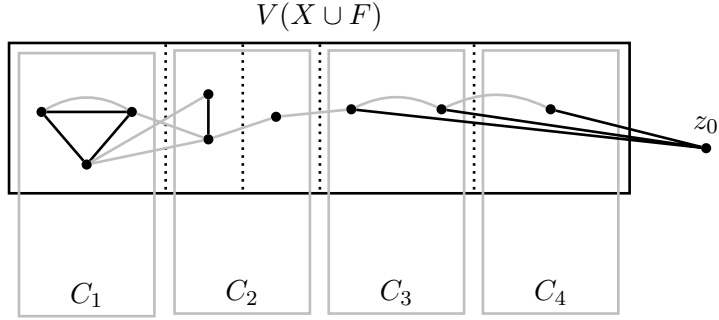

Let $\cP'$ be the partition of terminals into connected components of $S'$
i.e. $\cP'(x) = \cP'(y)$ if and only if $x$ and $y$ are in
the same connected component of $S'$.
For every partition $\cP$ that refines $\cP'$,
we describe the construction of 
an auxiliary instance $H_\cP = H(S, X, F, \cP)$ of $\lin{2}{\D}$
that is used in the algorithm (see Figure~\ref{fig:hcp} for an illustration).
$H_\cP$
contains all variables in $V(X\cup F)$ plus an additional zero
variable $z_0$. Moreover, $H_\cP$ contains all equations in $X\cup
F$ plus the following additional equations:
\begin{itemize}
\item For every terminal $x$ that is in a rigid component of $S'$,
  the equations $x - z_0 = 0$ and $x + z_0 = 0$.
\item For every pair of terminals $x, y$ such that 
  $\cP(x) = \cP(y)$  and $x,y$ appear in a flexible component of $S'$,
  the equation $e_{xy}(S')$.
\end{itemize}
This completes the construction of $H_\cP$.
We distinguish between different kinds of terminals:
terminals appearing in rigid components of $H_\cP$ are called \emph{determined}, 
while those appearing in flexible components are called \emph{undetermined}.
Note that all terminals appearing in the connected component of zero variable $z_0$
are determined since equations connecting $z_0$ and $z'_0$ form a non-identity cycle.
We call them \emph{zero-determined} terminals.
Observe further that not all determined terminals have to be zero-determined
as $H_\cP$ may contain rigid components apart from the one including $z_0$.

If $Z$ is a solution to $(S, w_S, X, k)$ and 
$\cP_Z$ is the partition of terminals into 
connected components of $S' - Z$,
then, intuitively, $H_{\cP_Z}$ serves as the ``projection'' of $S - Z$ 
onto the terminals i.e. it encapsulates
all constraints in $S - Z$ between the pairs of terminals.
This intuition is formalized below.

\begin{lemma} \label{lem:HP-facts}
  Let $(S, w_S,X, k)$ be an instance of $\DML(\D)$,
  $Z$ be a solution, and $\varphi_Z$ be a satisfying assignment to $S - Z$.
  Let $F \subseteq S - (X \cup Z)$ be a cleaning set with respect to $\varphi_Z$, and
  ${\cP_Z}$ be the partition of $V(X \cup F)$ into 
  connected components of $S' - Z$, where $S': = S - (X \cup F)$.
  Then the following statements hold:
  \begin{enumerate}[1.]
  \item \label{lem:HP-facts:consistent} 
    $H_{\cP_Z}$ is consistent.
  \item \label{lem:HP-facts:determined}
    If a terminal $x \in V(X \cup F)$ is determined, then $\varphi(x)=\varphi_Z(x)$ for every satisfying
    assignment $\varphi$ of $H_{\cP_Z}$.
  \end{enumerate}
\end{lemma}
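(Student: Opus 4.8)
The plan is to verify the two statements separately, in both cases using the fact that $H_{\cP_Z}$ is meant to be the ``projection'' of $S-Z$ onto the terminal set $V(X\cup F)$, combined with the structural results on flexible instances (Lemma~\ref{lem:edom-flexible}) and the characterization of rigid components (Lemma~\ref{lem:cycles-and-assignments} and Lemma~\ref{lem:zero-nzero-comps}).

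For statement~1, I would exhibit a satisfying assignment of $H_{\cP_Z}$ explicitly, namely the restriction $\varphi_Z|_{V(X\cup F)}$ together with $z_0\mapsto 0$. Every equation of $H_{\cP_Z}$ must then be checked: (a) equations from $X\cup F$ are satisfied because $\varphi_Z$ satisfies all of $S-Z$ and $X\cup F\subseteq S-Z$ (recall $Z$ is disjoint from $X$, and $F\subseteq S-(X\cup Z)$, so $Z\cap(X\cup F)=\emptyset$); (b) the equations $x\pm z_0=0$ attached to terminals $x$ lying in rigid components of $S'$ are satisfied because, by the defining property of a cleaning set $F$ (every rigid component of $S'=S-(X\cup F)$ is zero under $\varphi_Z$), we have $\varphi_Z(x)=0$, and $z_0$ is set to $0$; (c) the equations $e_{xy}(S')$ for $\cP_Z(x)=\cP_Z(y)$ in a flexible component of $S'$ are satisfied since $\varphi_Z$ satisfies the whole component $(S'-Z)[K]$ containing $x$ and $y$, and by Observation~\ref{obs:var-elim-safe} it therefore satisfies the implied equation $e_{xy}(S'-Z)$, which equals $e_{xy}(S')$ because $x,y$ lie in the same component $K$ of $S'-Z$ (so the $\{x,y\}$-path used to form the implied equation survives deletion of $Z$) and because in a flexible instance all such implied equations are equivalent.

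For statement~2, let $x$ be a determined terminal, i.e. $x$ lies in a rigid connected component $R$ of $H_{\cP_Z}$, and let $\varphi$ be any satisfying assignment of $H_{\cP_Z}$. I want $\varphi(x)=\varphi_Z(x)$. The approach is to split into two cases according to whether $x$ is zero-determined. If the component $R$ of $x$ in $H_{\cP_Z}$ contains $z_0$, then I use that $R$ is rigid and consistent, hence by Lemma~\ref{lem:cycles-and-assignments} (applied through the homogenizing substitution / the argument that a rigid component forces all its variables to a single value once one is pinned) every satisfying assignment of $R$ has $x$ equal to the value forced by $z_0=0$; since the equations $x'-z_0=0,x'+z_0=0$ appear and force $z_0=0$ and each relevant $x'=0$, we get $\varphi(x)=0=\varphi_Z(x)$, the last equality again from the cleaning-set property. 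If $x$ is determined but not zero-determined, then $R$ is a rigid component not containing $z_0$; still, $H_{\cP_Z}$ is consistent (statement~1) and $R$ is rigid, so $R$ admits a unique satisfying assignment, and both $\varphi|_R$ and $\varphi_Z|_{V(R)}$ are satisfying assignments of $R$ (for $\varphi_Z$ this needs that every equation of $R$ — coming from $X\cup F$, or $e_{yz}(S')$ for terminals in a flexible component, or possibly $y\pm z_0=0$ — is satisfied by $\varphi_Z$ with $z_0\mapsto 0$, which is exactly what was shown in statement~1's verification). Uniqueness then forces $\varphi(x)=\varphi_Z(x)$.

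The main obstacle I anticipate is statement~2 in the non-zero-determined case: one must argue carefully that a rigid connected component of $H_{\cP_Z}$ has a genuinely unique satisfying assignment (not just that it is rigid in the sense of having inequivalent implied equations), and then that $\varphi_Z$ restricted to the terminals is one of its satisfying assignments. The first part should follow from Lemma~\ref{lem:cycles-and-assignments} applied to a non-identity cycle inside $R$ together with propagation along a spanning tree of $R$ (pinning one variable determines all others); the second part is the bookkeeping from statement~1. A subtlety worth stating explicitly is that $e_{xy}(S')$ is well-defined and equals $e_{xy}(S'-Z)$ on pairs with $\cP_Z(x)=\cP_Z(y)$: this uses flexibility of the component of $S'$ together with the fact that $x,y$ remain connected in $S'-Z$, so the relevant path is undisturbed.
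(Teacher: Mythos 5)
Your proof is correct and takes essentially the same route as the paper: statement~1 is established by verifying that $\varphi_Z$ extended with $z_0\mapsto 0$ satisfies each of the three kinds of equations in $H_{\cP_Z}$, and statement~2 by exploiting that a rigid, connected, consistent $\lin{2}{\D}$ instance admits a unique satisfying assignment. One remark: the specific intermediate claim $\varphi(x)=0=\varphi_Z(x)$ for zero-determined $x$ (which also appears in the paper's own proof) is not true in general --- a terminal can be connected to $z_0$ inside $H_{\cP_Z}$ through a non-homogeneous equation of $X$, in which case it need not lie in a rigid component of $S'$ and need not be assigned zero by $\varphi_Z$. This is harmless, however: the uniqueness argument you spell out for the not-zero-determined case (the rigid component $R$ has a unique satisfying assignment, and both $\varphi|_R$ and $\varphi_Z$ extended with $z_0\mapsto 0$ restricted to $R$ are such assignments, the latter by statement~1) applies verbatim to zero-determined terminals too, giving $\varphi(x)=\varphi_Z(x)$ directly without any claim that the common value is zero. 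Your explicit appeal to uniqueness is in fact a cleaner presentation than the paper's rather terse phrasing of statement~2.
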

\begin{proof}
  {\em Statement~\ref{lem:HP-facts:consistent}.} 
  We show that the assignment $\varphi$ obtained from $\varphi_Z$
  after setting $\varphi(z_0)=0$ for the zero variable $z_0$ satisfies
  $H_{\cP_Z}$. To this end, let $e$ be an equation of $H_{\cP_Z}$. If $e
  \in X\cup F$, then $e \in S-Z$ and $\varphi$ satisfies $e$. If $e$
  contains $z_0$, then $e$ is equal to $x-z_0=0$ or $x+z_0=0$, where $x$
  is contained in a rigid component of $S'$. Because $F$ is a cleaning set
  with respect to $\varphi_Z$, it holds that $\varphi(x)=\varphi_Z(x)=0$ and
  therefore $e$ is satisfied by $\varphi$. Otherwise, $e$ is equal to
  $e_{xy}(S')$ for some terminals $x$ and $y$ with $\cP_Z(x)=\cP_Z(y)$ that appear together in
  some flexible component $K$ of $S'$. 
  Because $\cP_Z(x)=\cP_Z(y)$ and
  $K$ is flexible, it holds that $e_{xy}(S'-Z)$ is equivalent to
  $e_{xy}(S')$ and $e$, and therefore $\varphi$ satisfies $e$.
  
  {\em Statement~\ref{lem:HP-facts:determined}.} 
  If $x$ is a zero-determined terminal, then $\varphi(x) = 0$ for every
  satisfying assignment $\varphi$ to $H_{\cP_Z}$.
  Moreover, $\varphi_Z(x) = 0$ since $x$ is in a rigid component of $S'$
  and $F$ is a cleaning set.
  On the other hand, if $x$ is not zero-determined, 
  then by construction of $H_{\cP_Z}$, 
  $x$ is contained in an equivalent non-identity cycle in $S - Z$,
  so $\varphi$ and $\varphi_Z$ agree on all terminals in these cycles.
  Therefore, in both cases we have $\varphi(x)=\varphi_Z(x)$.
\end{proof}

Lemma~\ref{lem:HP-facts}.\ref{lem:HP-facts:consistent} 
suggests that the algorithm
for $\DML(\D)$ can start by guessing 
the partition $\cP$ of the terminals
and checking whether $H_\cP$ is consistent.
If yes, then a $\cP$-cut $Y$ in $S'$ of size $k$ can be computed in fpt time
(or we can correctly report that no such cut exists).
However, $S - Y$ is not necessarily consistent.
The reason is that some paths of equations in $S - Y$ 
may be inconsistent.
Thus, the cut needs to fulfil an additional set of requirements
to ensure that it is a solution.
The key insight for computing these requirements is that all paths
avoiding $X$ are homogeneous (hence they imply consistent equations
satisfied by setting all variables to zero), so 
it is sufficient to take care of the paths
containing a variable from $V(X)$.
Then there are two kinds of inconsistent paths:
those confined to a component connecting a terminal and a non-terminal
and those connecting two non-terminals in different components
using at least one equation from $X$.
We show that these requirements can be handled using \PPC{}. For this
we will construct the set $\cF_\cP = \cF(S, X, F, \cP)$ of pair cut
requests one needs to fulfil as follows.
Let $\varphi_H$ be a satisfying assignment to $H_\cP$. Then, the set
$\cF_\cP = \cF(S, X, F, \cP)$ of pair cut request contains the
following pairs. For every determined terminal $x$ that is in a flexible component $K$ of $S'$, 
consider every non-terminal $v$ in $K$ and compute $e_{xv}(S')$.
Plug in $\varphi_H(x)$ for $x$ into the equation $e_{xv}(S')$. 
If there is no value for $v$
that satisfies the equation,
then add $(\{x,v\}, \{x,v\})$ to $\cF_\cP$. 

Now, for every flexible component $K$ of $H_\cP$,
consider every pair of terminals $x,y \in K$
such that $\cP(x) \neq \cP(y)$.
Note that $x$ and $y$ are undetermined.
Let $K'_1$ and $K'_2$ be the (not necessarily distinct) components of $S'$
that contain $x$ and $y$, respectively.
Note that $S'[K'_1]$ and $S'[K'_2]$ are flexible
(otherwise, by construction of $H_\cP$,
variables $x$ and $y$ would form non-identity cycles with $z_0$).
For every pair of non-terminals $u \in K'_1$ and $v \in K'_2$,
compute $e_{ux}(S'[K'_1])$, $e_{xy}(H_\cP[K])$, $e_{yv}(S'[K'_2])$,
and let $e_{uv}$ be the equation implied by composing them
(i.e. treating them as parts of a path, and computing the implied equation).
If $e_{uv}$ has no solution, 
then add $(\{u,x\}, \{y,v\})$ to $\cF_\cP$.
This concludes the definition of $\cF_\cP$.

The algorithm for $\DML(\D)$ can now be summarized as follows. Let $I=(S,
w_S,X, k)$ be an instance of $\DML(\D)$.
\begin{enumerate}
\item Construct the rooted graph $(G_I,\B_I)$ for $I$ as described in
  Definition~\ref{def:GI}. Assume that $s$ is the root of
  $(G_I,\B_I)$.
\item\label{alg:compcH} Let $\cG:= \cG(G_I, \B_I, k, s)$
  be the family of connected balanced subgraphs in $(G_I, \B_I)$
  rooted in $s$ with cost at most $3k+1$. Compute a dominating family
  $\cH$ for $\cG$ using Observation~\ref{obs:edom-rooted-enum}.
\item\label{alg:setF}
  For every $H \in \cH$, let $F_H$ be the set of deleted edges excluding those incident to
  $s$. Guess the intersection $F_Z$ with a
  solution, i.e. for every $F_Z\subseteq F_H$ with $w_S(F_Z)\leq k$, 
  do the following. Let $I'=(S',w_{S},X, k')$ be the
  instance obtained from $I$ by removing all edges in $F_Z$ from $S$
  and
  decreasing $k$ by the weight of $F_Z$. Let
  $F=F_H\setminus F_Z$, $T=V(X\cup F)$, $S''=S'-(X\cup F)$, and 
  $\cP'$ be the partition of $T$ in $S''$. Then, for every partition
  $\cP$ that refines $\cP'$, proceed as follows:
  \begin{enumerate}
  \item Construct the auxiliary instance $H_\cP = H(S', X, F, \cP)$
    of $\lin{2}{\D}$ as described above.
  \item Use Lemma~\ref{lem:edom-algo} to decide whether $H_\cP$ is
    consistent and if so to compute a satisfying assignment
    $\varphi_H$ for $H_\cP$. If $H_\cP$ is inconsistent, then disregard the
    current partition $\cP$ and continue with the next partition. 
  \item Use $\varphi_H$ to construct the set $\cF_\cP = \cF(S', X, F,
    \cP)$ of pair cut requests as described above. Let $I_\cP$ be
    the instance $(S'', w_S, T, \cP, \cF_\cP, k)$ of \PPC.
  \item\label{alg:EDout} Use Theorem~\ref{thm:pair-mc-is-fpt} to solve $I_\cP$. If
    $I_\cP$ has a solution $Y$, then use
    Lemma~\ref{lem:edom-algo} to check whether $S'-Y$ is consistent. If
    so, output $Y\cup F_Z$ as the
    solution for $\DML(\D)$, otherwise disregard the current partition
    $\cP$ and continue with the next partition.
  \end{enumerate}
\item If no solution was output at Step~\ref{alg:EDout}, then reject.
\end{enumerate}

\subsection{Correctness Proof and Complexity Analysis}
\label{ssec:edom-correctness-time}

We will now prove that the algorithm presented in Section~\ref{ssec:edom-alg-def} is correct and
we will analyze its time complexity.
The correctness proof is based on an auxiliary result (Lemma~\ref{lem:solution})
that show the connection between the cleaned
$\DML(\D)$ instance and the \PPC{} instances that are computed in step 3 of the algorithm.
The proof of Lemma~\ref{lem:solution} is simplified
with the aid of the following lemma.

\begin{lemma}\label{lem:z-is-solution}
  Let $(S, w_S, X, k)$ be an instance of $\DML(\D)$ with solution $Z$
  and let $\varphi_Z$ be a satisfying assignment of $S - Z$.
  Let $F \subseteq S - (X \cup Z)$ be a cleaning set with respect to $\varphi_Z$, and
  ${\cP_Z}$ be the partition of $V(X \cup F)$ into 
  connected components of $S' - Z$, where $S'=S-(X\cup F)$.
  Then $Z$ is a ${\cP_Z}$-cut in $S'$ that fulfills $\cF_{\cP_Z}$.
\end{lemma}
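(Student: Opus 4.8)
The plan is to establish the two halves of the statement separately. That $Z$ is a $\cP_Z$-cut in $S'$ will be essentially definitional: since $Z$ is a solution of $\DML(\D)$ we have $Z\subseteq S-X$, and since $F\subseteq S-(X\cup Z)$ we have $Z\cap F=\emptyset$, so $Z\subseteq S'=S-(X\cup F)$; but $\cP_Z$ is by definition the partition of $T=V(X\cup F)$ induced by the connected components of $S'-Z$, hence no component of $S'-Z$ meets two blocks of $\cP_Z$, which is exactly what it means for $Z$ to be a $\cP_Z$-cut in $S'$. Everything else concerns the cut requests in $\cF_{\cP_Z}$, and the two tools I would use throughout are Observation~\ref{obs:var-elim-safe} (safety of variable elimination) and Lemma~\ref{lem:HP-facts}.

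Before splitting into cases I would record two facts. First, letting $\psi$ be the extension of $\varphi_Z$ by $\psi(z_0)=0$, the proof of Lemma~\ref{lem:HP-facts}.\ref{lem:HP-facts:consistent} already shows that $\psi$ satisfies $H_{\cP_Z}$. Second, if $x$ is a determined terminal, then $\varphi_H(x)=\varphi_Z(x)$, where $\varphi_H$ is the satisfying assignment of $H_{\cP_Z}$ used in the construction of $\cF_{\cP_Z}$; this is Lemma~\ref{lem:HP-facts}.\ref{lem:HP-facts:determined}. Now take a type-1 request $(\{x,v\},\{x,v\})$. It was added because $x$ is a determined terminal lying in a flexible component $K$ of $S'$, $v$ is a non-terminal of $K$, and $e_{xv}(S')$ has no solution once $x$ is fixed to $\varphi_H(x)=\varphi_Z(x)$. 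Suppose for contradiction that $S'-Z$ contains an $\{x,v\}$-path $P$. Since $x\in K$, $P$ lies in $S'[K]$, and as $S'[K]$ is flexible, $e_P$ is equivalent to $e_{xv}(S')$; since $\varphi_Z$ satisfies $S-Z$ and $P\subseteq S-Z$, Observation~\ref{obs:var-elim-safe} gives that $(\varphi_Z(x),\varphi_Z(v))$ satisfies $e_{xv}(S')$, contradicting that this equation has no solution with $x=\varphi_Z(x)$. Hence $S'-Z$ has no $\{x,v\}$-path, so $Z$ fulfils this request.

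For a type-2 request $(\{u,x\},\{y,v\})$, recall it is produced from a flexible component $K$ of $H_{\cP_Z}$ and terminals $x,y\in K$ with $\cP_Z(x)\neq\cP_Z(y)$, from their $S'$-components $K'_1\ni u$ and $K'_2\ni v$ (both flexible, $u$ and $v$ non-terminals), and from the fact that the equation $e_{uv}$ obtained by composing $e_{ux}(S'[K'_1])$, $e_{xy}(H_{\cP_Z}[K])$ and $e_{yv}(S'[K'_2])$ along the walk $u$--$x$--$y$--$v$ has no solution. I would assume for contradiction that $S'-Z$ contains both a $\{u,x\}$-path and a $\{y,v\}$-path. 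The first lies in $S'[K'_1]$ and the second in $S'[K'_2]$, so flexibility and Observation~\ref{obs:var-elim-safe} show that $\varphi_Z$ satisfies $e_{ux}(S'[K'_1])$ and $e_{yv}(S'[K'_2])$. Since $K$ is a flexible component of $H_{\cP_Z}$ it does not contain $z_0$, so $\psi$ agrees with $\varphi_Z$ on $K$; and as $\psi$ satisfies $H_{\cP_Z}$, it satisfies $H_{\cP_Z}[K]$, hence (Observation~\ref{obs:var-elim-safe}) $e_{xy}(H_{\cP_Z}[K])$. Thus $\varphi_Z$ simultaneously satisfies the three equations $e_{ux}(S'[K'_1])$, $e_{xy}(H_{\cP_Z}[K])$, $e_{yv}(S'[K'_2])$; eliminating $x$ and then $y$ (each occurs in exactly two of these, so each elimination is safe) shows that $(\varphi_Z(u),\varphi_Z(v))$ satisfies $e_{uv}$, contradicting that $e_{uv}$ has no solution. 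So at least one of the two paths is absent in $S'-Z$, i.e.\ $Z$ fulfils the request; together with the $\cP_Z$-cut property this completes the argument.

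The step I expect to require the most care is the bookkeeping in the type-2 case: checking that the three equations genuinely compose along a walk on $u,x,y,v$ even when these vertices are not all distinct (for instance $u=v$ or $K'_1=K'_2$), that $e_{ux}(S'[K'_1])$, $e_{xy}(H_{\cP_Z}[K])$ and $e_{yv}(S'[K'_2])$ are well defined (which is precisely why the construction insists the relevant components be flexible), and that $\psi$ and $\varphi_Z$ coincide on everything that is used---the last point resting on the fact that the $z_0$-component of $H_{\cP_Z}$ is rigid, so $z_0$ never lies in a flexible component of $H_{\cP_Z}$. None of this is deep, but a hasty write-up could slip here.
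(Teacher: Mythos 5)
Your proof is correct and follows essentially the same route as the paper's: the $\cP_Z$-cut claim is dispatched definitionally, and the two kinds of requests in $\cF_{\cP_Z}$ are handled by contradiction via Lemma~\ref{lem:HP-facts} and Observation~\ref{obs:var-elim-safe}. The only stylistic difference is that you argue explicitly through the assignment $\psi$ agreeing with $\varphi_Z$ on the relevant component of $H_{\cP_Z}$, where the paper phrases the same point in terms of a persisting path in $S-Z$ implying $e_{xy}(H_{\cP_Z}[K])$; both lead to the same contradiction.
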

\begin{proof}
  Clearly, $Z$ is a ${\cP_Z}$-cut. Suppose now for a contradiction
  that $Z$ does not fulfil $\cF_{\cP_Z}$. First consider the case
  that $Z$ does not fulfil a cut request $(\{x,v\}, \{x,v\})$ in $\cF_{\cP_Z}$,
  where $x$ is a determined terminal. Because of
  Lemma~\ref{lem:HP-facts}.\ref{lem:HP-facts:consistent}, we know
  that $H_{\cP_Z}$ is consistent. 
  Let $\varphi_H$ be a satisfying assignment to $H_{\cP_Z}$ and let 
  $K$ contain the connected component
  of $S'$ that contains $x$ and $v$.
  By Lemma~\ref{lem:HP-facts}.\ref{lem:HP-facts:determined},
  $\varphi_H(x) = \varphi_Z(x)$.
  Since $Z$ does not separate $x$ and $v$ in $S'$,
  at least one path implying the equation $e_{xv}(S'[K])$ persists in
  $S - Z$. However, due to the construction of $\cF_{\cP_Z}$ this
  implies that $\varphi_Z$ does not satisfy $e_{xv}(S'[K])$ and this
  contradicts our assumption that $S-Z$ is consistent.

  Now consider the only remaining case that $Z$ does not fulfil
  a cut request $(\{u,x\}, \{y,v\})$ in $\cF_{\cP_Z}$,
  where $x$ and $y$ are undetermined terminals.
  Let $K'_1$ and $K'_2$ be the connected components of $S'$
  such that $\{u,x\} \subseteq K'_1$ and $\{y,v\} \subseteq K'_2$.
  Further, let $K$ be the connected component of 
  $H_{\cP_Z}$ that contains $x$ and $y$.
  Since $Z$ does not disconnect $u,x$ or $y,v$ in $S'$, 
  a path implying $e_{ux}(S'[K'_1])$ and 
  a path implying $e_{yv}(S'[K'_2])$ persist 
  in $S - Z$.
  Moreover, by the construction of $H_{\cP_Z}$,
  a path implying $e_{xy}(H_{\cP_Z}[K])$ 
  exists in $S - Z$.
  Finally, the construction of $\cF_{\cP_Z}$ ensures that the
  composition of these equations
  does not have a solution in $\D$.
  We conclude that $S - Z$ is inconsistent and this leads to a contradiction.
\end{proof}

\begin{lemma}\label{lem:solution}
  Let $I=(S, w_S, X, k)$ be an instance of $\DML(\D)$ with solution $Z$
  and let $\varphi_Z$ be a satisfying assignment of $S - Z$.
  Let $F \subseteq S - (X \cup Z)$ be a cleaning set with respect to $\varphi_Z$, and let
  ${\cP_Z}$ be the partition of $V(X \cup F)$ into 
  connected components of $S'-Z$, where $S'=S-(X\cup F)$.
  Then every minimum ${\cP_Z}$-cut $Y$ in $S'$ that fulfills $\cF_{\cP_Z}$
  is a solution to $I$.
\end{lemma}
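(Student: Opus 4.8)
The goal is to show that every minimum ${\cP_Z}$-cut $Y$ in $S'$ fulfilling $\cF_{\cP_Z}$ yields a solution to $I$, i.e. $w_S(Y) \leq k$ and $S - Y$ is consistent. First I would dispatch the weight bound: by Lemma~\ref{lem:z-is-solution}, the given solution $Z$ is itself a ${\cP_Z}$-cut in $S'$ that fulfils $\cF_{\cP_Z}$, and $w_S(Z) \le k$ since $Z$ is a solution to $I$; hence any \emph{minimum} such cut $Y$ satisfies $w_S(Y) \le w_S(Z) \le k$. (Note $Y \subseteq S'$ so $Y$ is disjoint from $X$, which matches the $\DML$ format where the solution avoids $X$.) So the real content is showing that $S - Y$ is consistent.

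For consistency I would argue that $S-Y$ has no inconsistent cycle and no inconsistent path, invoking Lemma~\ref{lem:edom-algo} (or rather its correctness analysis: an instance is consistent iff, after computing a spanning forest, all non-tree equations are compatible, which amounts to having no inconsistent cycle and no inconsistent path). Since $Y$ is a ${\cP_Z}$-cut in $S' = S - (X\cup F)$, the components of $S' - Y$ refine $\cP_Z$, which is the partition induced by $S' - Z$; this is the key structural fact letting us compare the cut $Y$ with $Z$. Crucially, all equations in $S - X$ are homogeneous (by definition of $\DML$), so any cycle or path of $S - Y$ that avoids $V(X)$ is homogeneous and hence consistent (satisfied by the all-zero assignment on that part, using Lemma~\ref{lem:edom-flexible} / the acyclic-consistency lemmas). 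Thus the only potential obstructions in $S - Y$ pass through a variable of $V(X)$, or more generally through a terminal of $V(X \cup F)$. I would then build an explicit satisfying assignment $\varphi_Y$ for $S - Y$ as follows: take a satisfying assignment $\varphi_H$ of $H_{\cP_Z}$ (which exists by Lemma~\ref{lem:HP-facts}.\ref{lem:HP-facts:consistent}), assign each terminal $x \in V(X\cup F)$ the value $\varphi_H(x)$, and then propagate along $S' - Y$ within each component — consistently, because within a flexible component all $\{x,y\}$-paths imply equivalent equations, and because the pair cut requests in $\cF_{\cP_Z}$ guarantee that inconsistent propagation paths (terminal-to-non-terminal inside a flexible component, and non-terminal-to-non-terminal across components via an $X$-edge) have been cut by $Y$. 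For components of $S'-Y$ containing no terminal, set all variables to $0$; these are homogeneous so this is valid and compatible with the $X\cup F$-edges by the shadow argument (Lemma~\ref{lem:zero-nzero-comps}.\ref{lem:zero-nzero-comps:shadow-zero}).

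The main obstacle I expect is the case analysis verifying that $\varphi_Y$ satisfies \emph{every} equation of $S - Y$, in particular that the propagation is well-defined and consistent. Concretely: (i) equations of $X \cup F$ are satisfied because $H_{\cP_Z}$ contains them and $\varphi_H$ satisfies $H_{\cP_Z}$; (ii) equations inside a flexible component of $S'-Y$ are satisfied because the two endpoints lie in the same $\cP_Z$-class (as $Y$ refines $\cP_Z$), the component is flexible, so $e_{xy}(S')$ is the implied equation and $\varphi_H$ satisfies the corresponding $e_{xy}$-constraint of $H_{\cP_Z}$; the non-terminal values are then forced by propagation, and the flexibility means this is independent of the path chosen; (iii) the remaining danger — that propagation from a \emph{determined} terminal $x$ forces an impossible value on some non-terminal $v$ reachable from $x$ in $S'-Y$, or that composing paths across two components through $z_0$ / an $X$-edge gives an inconsistent equation $e_{uv}$ — is precisely ruled out because $Y$ fulfils $\cF_{\cP_Z}$: the definition of $\cF_{\cP_Z}$ added exactly those pair cut requests $(\{x,v\},\{x,v\})$ and $(\{u,x\},\{y,v\})$ whose corresponding composed equations have no solution, and a cut fulfilling them destroys the relevant paths. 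For zero-determined terminals, note $\varphi_H$ sets them to $0$ and their whole component in $S'-Y$ is homogeneous, so propagation keeps everything at $0$, consistently with the $x \pm z_0 = 0$ equations. Assembling these observations, every equation of $S-Y$ is satisfied by $\varphi_Y$, so $S-Y$ is consistent and $Y$ is a solution to $I$. $\qed$
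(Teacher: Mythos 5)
Your overall structure is close to the paper's: fix a satisfying assignment $\varphi_H$ of $H_{\cP_Z}$, build $\varphi_Y$ component by component, use the first type of pair cut request for determined terminals, the second type for cross‑component obstructions, $0$ for terminal‑free and rigid components, and finally derive the weight bound from Lemma~\ref{lem:z-is-solution}. The weight bound argument, the treatment of determined terminals, and the rigid/terminal‑free components are all fine.

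However, there is a genuine gap in how you handle the \emph{undetermined} terminals. You propose to set every terminal $x\in V(X\cup F)$ to $\varphi_H(x)$ and then ``propagate'' to the non-terminals of its component of $S'-Y$, and you invoke $\cF_{\cP_Z}$ to rule out inconsistent propagation ``terminal-to-non-terminal inside a flexible component.'' But by definition $\cF_{\cP_Z}$ contains requests of the form $(\{x,v\},\{x,v\})$ only for \emph{determined} terminals $x$; there is nothing preventing a flexible component of $S'-Y$ from containing an undetermined terminal $x$ and a non-terminal $v$ such that $e_{xv}(S')$ has no solution in $\D$ once $x$ is pinned to $\varphi_H(x)$ (over $\integers$, think of a path forcing $2v=x$ with $\varphi_H(x)=1$). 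So the propagation you describe can fail outright. The paper avoids this by \emph{not} forcing the undetermined terminals to take their $\varphi_H$ values: it collects all variables in flexible components of $S'-Y$ having only undetermined terminals into a set $U$, proves (using the second type of pair cut request) that $(S-Y)[U]$ is flexible and consistent, and lets $\varphi_Y$ on $U$ be \emph{any} satisfying assignment of $(S-Y)[U]$. Since this may disagree with $\varphi_H$ on the undetermined terminals, the paper then proves a separate claim that the resulting $\varphi_Y$ still satisfies $H_{\cP_Z}$ (and in particular all equations of $X\cup F$), which is the step entirely missing from your write-up. Without that claim, you also have not justified step (i): you assert $\varphi_Y$ matches $\varphi_H$ on all terminals, which is exactly what cannot be maintained.
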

\begin{proof}
  We know that $H_{\cP_Z}$ is consistent by Lemma~\ref{lem:HP-facts}.\ref{lem:HP-facts:consistent} and we let
  $\varphi_H$ denote a satisfying assignment.
  We construct an assignment $\varphi_Y$ based on $\varphi_H$ 
  and prove that $\varphi_Y$ satisfies 
  ${S'' := S'-Y}$,
  considering one connected component of $S''$ at a time.
  Then we show that $\varphi_Y$ also satisfies $X \cup F$, and conclude that it satisfies $S - Y$.

  First note that every connected component of $S''$ 
  is a subset of a component of $S'$.
  If a variable $v$ appears in a rigid component of 
  $S'$, then let $\varphi_Y(v) = 0$.
  If $v$ appears in a component that does not 
  contain any terminal, then let $\varphi_Y(v) = 0$.
  Note that all equations in $S - X$ are homogeneous so $\varphi_Y$ satisfies all equations inside the components of $S''$ considered so far.
  
  Now consider a flexible component $K$ of $S''$ 
  that contains a determined terminal $x$.
  Set $\varphi_Y(x) = \varphi_H(x)$.
  Since $Y$ fulfills $\cF_{\cP_Z}$,
  for every $v \in K$ the equation
  $e_{xv}(S'')$ has a solution where 
  $x \mapsto \varphi_H(x)$. Therefore, we can extend $\varphi_Y$, by assigning every variable $v
  \in K$ with $v \neq x$ to the unique value satisfying $e_{xv}(S'')$
  if $x$ is set to $\varphi_Y(x)$.
  It follows that $\varphi_Y$
  satisfies $\STS(K,x)$, which due to
  Lemma~\ref{lem:edom-flexible} implies that $\varphi_Y$ satisfies $S''[K]$.

  All remaining variables appear in flexible components of $S''$
  that only contain undetermined terminals.
  Let $U$ be the set of all vertices appearing in these components.
  \begin{claim}
    $(S-Y)[U]$ is flexible and consistent.
  \end{claim}
  \begin{claimproof}
    Towards showing that $(S-Y)[U]$ is flexible, first note that
    $S''[U]$ is flexible. Moreover, $H_{\cP_Z}[U \cap V(X \cup F)]$ is also flexible,
    since all terminals in $U$ are undetermined.
    Thus, $(S-Y)[U]$ does not contain any non-identity cycle avoiding $X \cup F$.
    Furthermore, if there were a non-identity cycle in $(S-Y)[U]$ intersecting $X \cup F$,
    then by construction there would also be such a cycle in $H_{\cP_Z}[U \cap V(X \cup F)]$,
    which would be a contradiction.
    Hence, $(S-Y)[U]$ cannot contain a non-identity cycle and it is indeed flexible. 

    We now show that $(S-Y)[U]$ is consistent. Because $(S-Y)[U]$
    is flexible, we obtain from
    Lemma~\ref{lem:edom-flexible} that it suffices to show
    that $(S-Y)[U]$ contains no inconsistent path. Suppose for a contradiction that $(S-Y)[U]$ contains an
    inconsistent path $P$ between say $u$ and
    $v$. We know that $S-X$ is consistent so we can assume that $P$
    intersects $X$. Let $x \in V(X \cup F)$ and $y \in V(X\cup F)$ be the closest terminals to 
    $u$ and $v$ on $P$, respectively. Then, the equation $e_{ux}(P)$ is equivalent
    to $e_{ux}(S')$ and similarly the equation $e_{yv}(P)$ is
    equivalent to $e_{yv}(S')$. Moreover, an equation
    equivalent to the equation $e_{xy}(P)$ is implied by the
    $\{x,y\}$-paths in $H_{\cP_Z}$ due to the construction of
    $H_{\cP_Z}$. Therefore, if $P$ is inconsistent, then so is the
    equation obtained by combining $e_{ux}(S')$, $e_{xy}(P)$, and
    $e_{yv}(S')$, which implies that $(\{x,u\}, \{y,v\})$ is a pair
    cut request in $\cF_{\cP_Z}$. But this contradicts our assumption
    that $P$ is in $(S-Y)[U]$ because $Y$ fulfills $\cF_{\cP_Z}$ and
    therefore intersects $P$.
  \end{claimproof}
  
  Using the claim above, we can now extend $\varphi_Y$ to $U$ using
  any satisfying assignment $\varphi_U$ of $(S-Y)[U]$ by setting
  $\varphi_Y(u) = \varphi_U(u)$ for all $u \in U$. We show that
  $\varphi_Y$ obtained in this manner satisfies not only $X \cup F$
  but also $H_{\cP_Z}$.
  \begin{claim}
    The assignment $\varphi_Y$ satisfies $H_{\cP_Z}$.
  \end{claim}
  \begin{claimproof}
    Let $K$ be a connected component of $H_{\cP_Z}$.
    If $z_0 \in K$, then $\varphi_H(v) = 0$ for all $v \in K$.
    By the construction of $H_{\cP_Z}$, $K \setminus \{z_0\}$
    is a subset of a rigid component of $S'$ so 
    $\varphi_Y(v) = 0$ for all $v \in K$.
    If $z_0 \notin K$ and $K$ is rigid, then $K$ only contains
    determined terminals and it follows that $\varphi_Y$ 
    agrees with $\varphi_H$ on $K$ by construction.
    Finally, if $K$ is flexible, then consider arbitrary $x, y \in K$.
    By the construction of $H_{\cP_Z}$, there is a path in $(S - Y)[U]$
    that implies $e_{xy}(H_{\cP_Z}[K])$.
    Hence, $\varphi_Y$ satisfies $e_{xy}(H_{\cP_Z}[K])$ for all $x,y \in K$.
    We have thus exhausted all cases and the claim holds.
  \end{claimproof}
  
  We have shown that $\varphi_Y$ satisfies both
  $S'' = S'-Y$ and $X \cup F \subseteq H_{\cP_Z}$. Therefore, $S - Y$
  is consistent and it only remains to show that $\abs{Y}\leq k$. Lemma~\ref{lem:z-is-solution} implies that $Z$ is a ${\cP_Z}$-cut in $S'$ that fulfills
  $\cF_{\cP_Z}$. Moreover, $Z$ is a solution of $I$ so $|Z|\leq k$. 
  It follows that if $Y$ is a minimum such ${\cP_Z}$-cut,
  then $\abs{Y}\leq k$ and $Y$ is a solution of $I$.
\end{proof}

We are now ready to prove correctness
and to provide the time complexity analysis of the algorithm.
For the analysis of
the run-time, we will use $Q(k)$ to denote the run-time dependency on
the parameter $k$ for the algorithm for \PPC{}, i.e. $\bigoh^*(Q(k))$ is the
run-time for the algorithm for \PPC{} given in
Theorem~\ref{thm:pair-mc-is-fpt}. This makes it clear that the main
bottleneck for our algorithm is the underlying algorithm for \PPC{}.
\begin{theorem}
  $\minlin{2}{\D}$ is in \FPT and can be solved in time
  $\bigoh^*(k^{\bigoh(k)}Q(k))$. 
\end{theorem}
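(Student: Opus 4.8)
The plan is to apply Lemma~\ref{lem:dml} and reduce the claim to showing that $\DML(\D)$ can be solved in time $\bigoh^*(k^{\bigoh(k)}Q(k))$; the stated bound for $\minlin{2}{\D}$ then follows, since the extra $2^k$ factor from iterative compression is absorbed into $k^{\bigoh(k)}$. The algorithm to analyse is the one described in Section~\ref{ssec:edom-alg-def}, so the work consists of a correctness proof plus a running-time count. Soundness will be immediate: whenever the algorithm outputs $Y\cup F_Z$ in Step~\ref{alg:EDout} it has already verified with Lemma~\ref{lem:edom-algo} that $(S\setminus F_Z)-Y = S-(Y\cup F_Z)$ is consistent, and by construction $Y\cup F_Z\subseteq S-X$ has total weight at most $k$.

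For completeness I would fix a solution $Z$ of $I=(S,w_S,X,k)$ disjoint from $X$ and a satisfying assignment $\varphi_Z$ of $S-Z$, and let $V_\emptyset$ be the set of variables $\varphi_Z$ maps to nonzero values. The zero-free subgraph $H_\emptyset=H_\emptyset(I,Z,\varphi_Z)$ of Definition~\ref{def:ZFS} is, by Lemma~\ref{lem:balanced}, a connected balanced subgraph of $(G_I,\B_I)$ through the root $s$ of cost at most $3k+1$, so by Observation~\ref{obs:edom-rooted-enum} the family $\cH$ from Step~\ref{alg:compcH} contains some $H$ that dominates $H_\emptyset$; in particular $V(H)\supseteq V(H_\emptyset)\supseteq V_\emptyset$ and $c_{G_I}(H)\le 3k+1$. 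Letting $F_H$ be the deleted edges of $H$ not incident to $s$, $F_Z:=F_H\cap Z$ and $F:=F_H\setminus F_Z$ --- all of which match guesses the algorithm makes in Step~\ref{alg:setF} --- I would then prove the key claim: in the reduced instance $I'=(S\setminus F_Z,w_S,X,k')$ with $k'=k-w_S(F_Z)$ and solution $Z':=Z\setminus F_Z$ (which still satisfies $(S\setminus F_Z)-Z'=S-Z$), the set $F$ is a cleaning set with respect to $\varphi_Z$. The argument: with $S''=(S\setminus F_Z)-(X\cup F)=S-(X\cup F_H)$, any edge of $S-X$ outside $F_H$ with both endpoints in $V(H)$ must lie in $E(H)$, so $S''[V(H)]$ is a subgraph of the balanced graph $H$ and is balanced; moreover no connected component of $S''$ meeting $V(H)$ can leave it, because its boundary edges belong to $\delta_{G_I}(V(H))\subseteq F_H$ and are deleted. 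Hence every rigid component of $S''$ is disjoint from $V(H)\supseteq V_\emptyset$, so all of its variables are zero under $\varphi_Z$ --- exactly the cleaning-set condition.

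Once $F$ is known to be a cleaning set for $I'$, I would take $\cP=\cP_{Z'}$, the partition of $V(X\cup F)$ into connected components of $S''-Z'$; this refines the partition induced by $S''$, so it is one of the partitions the algorithm enumerates. For this $\cP$ the auxiliary instance $H_{\cP_{Z'}}$ is consistent by Lemma~\ref{lem:HP-facts}.\ref{lem:HP-facts:consistent}, so the algorithm obtains a satisfying assignment and the request set $\cF_{\cP_{Z'}}$; Lemma~\ref{lem:z-is-solution} then shows $Z'$ is a $\cP_{Z'}$-cut in $S''$ fulfilling $\cF_{\cP_{Z'}}$ of weight at most $k'$, so the associated \PPC{} instance has a solution of weight at most $k'$, and a minimum-weight such solution $Y$ is, by Lemma~\ref{lem:solution}, a solution to $I'$; the algorithm verifies consistency of $(S\setminus F_Z)-Y$ and outputs $Y\cup F_Z$, a solution to $I$ of weight at most $k$. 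For the running time I would count: $\cH$ has size at most $4^{3k+1}$ and is computed in $\bigoh^*(4^{3k+1})$ time by Observation~\ref{obs:edom-rooted-enum}; since edge weights are positive integers, $|F_H|\le c_{G_I}(H)\le 3k+1$, giving at most $2^{3k+1}$ choices of $F_Z$; and $|V(X\cup F)|\le |V(X)|+2|F|=\bigoh(k)$, so the number of refinements $\cP$ is at most the Bell number of an $\bigoh(k)$-element set, i.e.\ $k^{\bigoh(k)}$. For each triple $(H,F_Z,\cP)$ the construction of $H_\cP$, the consistency test (Lemma~\ref{lem:edom-algo}, with bit-sizes controlled by Property~\ref{property:edom-product}) and the assembly of $\cF_\cP$ are polynomial, solving the \PPC{} instance costs $\bigoh^*(Q(k))$ by Theorem~\ref{thm:pair-mc-is-fpt}, and the final check is polynomial, for a total of $\bigoh^*(2^{\bigoh(k)}\cdot k^{\bigoh(k)}\cdot Q(k))=\bigoh^*(k^{\bigoh(k)}Q(k))$; Lemma~\ref{lem:dml} carries this bound over to $\minlin{2}{\D}$.

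The step I expect to be the main obstacle is exactly the cleaning-set claim above: it forces one to combine the domination guarantee $V(H)\supseteq V_\emptyset$ with balancedness of the important balanced subgraph $H$ to conclude that deleting the edges incident to $V(H)$ that are not in $H$ destroys every non-identity cycle on a component that reaches a nonzero variable, and then to thread the iterative-compression guess $F_Z=F_H\cap Z$ and the partition guess $\cP_{Z'}$ through so that Lemmas~\ref{lem:z-is-solution} and~\ref{lem:solution} apply without change. The rest is bookkeeping and direct appeals to the cited results.
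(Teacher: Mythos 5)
Your proposal is correct and follows essentially the same approach as the paper's proof: fix an optimal solution $Z$ and satisfying assignment $\varphi_Z$, use Lemma~\ref{lem:balanced} and Observation~\ref{obs:edom-rooted-enum} to find a dominating important balanced subgraph $H$, show that the corresponding $F$ is a cleaning set, and then invoke Lemmas~\ref{lem:HP-facts}, \ref{lem:z-is-solution} and~\ref{lem:solution} to conclude that the \PPC{} solution found in Step~\ref{alg:EDout} works, with the running time coming from $4^{\bigoh(k)}$ subgraphs, $2^{\bigoh(k)}$ guesses of $F_Z$, $k^{\bigoh(k)}$ partitions, and $Q(k)$ per \PPC{} instance. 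One small point in your favour: your justification that $F$ is a cleaning set (boundary edges of $H$ lie in $F_H$, so every component of $S''$ meeting $V(H)$ is trapped inside the balanced graph $H$ and hence flexible) is more carefully spelled out than the paper's one-line assertion, which contains what appears to be a typo writing $V(G_I)$ where $V(H)$ is intended.
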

\begin{proof}
We start by analyzing the algorithm for $\DML(\D)$ presented in Section~\ref{ssec:edom-alg-def}.
Let $I=(S, w_s, X, k)$ be an arbitrary instance of $\DML(\D)$. We 
show that the algorithm accepts if and only if $I$ is a yes-instance. The forward direction is
simple because if the algorithm returns a solution $Y$, then $\abs{Y}\leq
k$ and $S-Y$ is consistent because of Step~\ref{alg:EDout} of the
algorithm. 

Towards showing the reverse direction, suppose that $I$ is a
yes-instance having a solution $Z$. Let $\varphi_Z$ be a satisfying
assignment of $S - Z$ and define $V_0$ and $V_\emptyset$ accordingly.
Let $H_\emptyset$ denote the zero-free subgraph of $G_I$ as given in Definition~\ref{def:ZFS}.
By Lemma~\ref{lem:balanced}, $H_\emptyset$ is 
balanced, connected, and $c_{G_I}(H_\emptyset) \leq 3k + 1$.
Because the family $\cH$ that is computed in Step~\ref{alg:compcH} of the algorithm is a
dominating family for $\cG$, there is an
(important) balanced subgraph $H\in \cH$ that dominates
$H_\emptyset$. Moreover, because $H\in \cH$, $H$ is considered by the
algorithm in Step~\ref{alg:setF}.

Let $F_H$ be the corresponding set of deleted edges in $G_I - \{s\}$,
let $F_Z=F_H\cap Z$, and let $F=F_H\setminus F_Z$. Let $I'=(S',
w_{S},X, k')$ be the instance obtained from $I$ by removing all edges in $F_Z$ from $S$ and
decreasing $k$ by the weight of $F_Z$.
Note that $I$ has a
solution if and only if  $I'$ has a solution. Moreover, note that $F$ is considered by the algorithm because the algorithm
considers all subsets of $F_Z$ of $F_H$ of weight at most $k$ in
Step~\ref{alg:setF}.
Let $T=V(X\cup F)$, $S''=S'-(X\cup F)$ and let $\cP'$ be the partition
of $T$ in $S''$. Let $\cP_Z$ be the partition of $T$ in
$S''-Z$. Then, because the algorithm considers all refinements of
$\cP'$, it also considers the partition $\cP_Z$.
Finally, note that
$F$ is a cleaning set with respect to $\varphi_Z$ in $S'$. This is because $V_\emptyset
\subseteq V(H_\emptyset) \subseteq V(H)$
and all components in $S''[V(G_I) \setminus \{s\}]$ are flexible.
Hence, all variables in the rigid components of $S''$
are assigned zero values by $\varphi_Z$.
Therefore, $Z\setminus F_Z$, $\varphi_Z$, $\cP_Z$, and $F$ satisfy all conditions
of Lemma~\ref{lem:HP-facts}.\ref{lem:HP-facts:consistent} for the
instance $I'$, which
implies that $H_{\cP_Z}$ is consistent. Moreover, $Z\setminus F_Z$, $\varphi_Z$,
$\cP_Z$, and $F$ also satisfy all conditions of
Lemma~\ref{lem:solution} on the instance $I'$ and therefore every ${\cP_Z}$-cut $Y$ in $S''$
that fulfills $\cF_{\cP_Z}$ is a solution for $I'$. Therefore, the set
$Y\cup F_Z$ returned by the algorithm in Step~\ref{alg:EDout}
is a solution for $I$.

We continue by analyzing the run-time of the algorithm. The algorithm
starts by computing a dominating family $\cH$ of
$\cG:=\cG(G_I,\B_I,k,s)$ of size at most $4^{3k+1}$ in time $\bigoh^*(4^{3k+1})$ using
Observation~\ref{obs:edom-rooted-enum}. Let $H \in \cH$ and
let $F_H$ be the set of
deleted edges for $H$ excluding those incident with $s$. Then,
for every $F_H$, the algorithm considers at most $2^{|F_H|}\leq
2^{3k+1}$ (because $c_{G_I}(H)\leq 3k+1$) subsets $F_Z$ and computes the updated instance $I'=(S',w_S,X,k')$ in
Step~\ref{alg:setF} in polynomial-time. Let $F=F_H\setminus F_Z$, $S''=S'-(X\cup F)$,
$T=V(X\cup F)$, and let $\cP'$ be the partition of $T=V(X \cup F)$ in
$S''$. The algorithm then enumerates all refinements $\cP$ of
$\cP'$. Because the number of such refinements $\cP$ is at most
$|T|^{|T|}\leq (4k)^{4k}$, this can be achieved in time
$\bigoh((4k)^{4k})$. For each $\cP$, the algorithm then constructs 
$H_\cP=H(S', X, F, \cP)$ in polynomial-time and decides whether $H_\cP$ is consistent
in polynomial time using Lemma~\ref{lem:edom-algo}. If $H_\cP$ is not
consistent, the algorithm stops, otherwise it constructs the set of
pair-cut requests $\cF_\cP=\cF(S', X, F, \cP)$ and the instance
$I_\cP$ of \PPC{} in polynomial-time. Finally, the algorithm solves $I_\cP=(S'', w_S, T, \cP, \cF, k)$
using Theorem~\ref{thm:pair-mc-is-fpt} in fpt-time with respect to
$k'\leq k$, i.e. in
time $\bigoh^*(Q(k))$. Therefore, the total time
required by the algorithm is at most
\[\bigoh^*(4^{3k+1}2^{3k+1}(4k)^{4k}Q(k))=\bigoh^*(k^{\bigoh(k)}Q(k))\]
which
shows that $\DML(\D)$ is fpt with respect to $k$.
By Lemma~\ref{lem:dml}, there is another factor of $2^k$ in the running time
of the algorithm for $\minlin{2}{\D}$, which is dominated by $k^{\bigoh(k)}$,
so asymptotically we obtain the same running time for $\minlin{2}{\D}$. 
\end{proof}

\section{Faster Algorithm for Fields}
\label{sec:fields-algorithm}

Let $\F$ be an effective field.
In this section we present improved 
fpt algorithms for $\minlin{2}{\F}$---an $\bigoh^*(k^{\bigoh(k)})$ time algorithm for arbitrary fields and
an $\bigoh^*((2p)^{k})$ time algorithm for finite $p$-element fields. 
The improvements use the nicer structural properties of fields, 
mainly the fact that every nonzero element has a multiplicative inverse.
Section~\ref{ssec:lin2f} demonstrates how $\minlin{2}{\F}$ differs
from the more general $\mintwolin$ over Euclidean domains and
several useful observations are derived from this.
In Section~\ref{ssec:fields-algo-def}, we present the algorithm for fields and continue in Section~\ref{ssec:correctness-arbfield} by
proving its correctness and analysing its running time.
Finally, we present
a faster algorithm for $\mintwolin$ over finite fields
in Section~\ref{ssec:finfields-algorithm}.

\subsection{$\twolin$ over Fields}
\label{ssec:lin2f}

Since the quotient of any two nonzero elements is an element of the field $\F$, 
instances of $\lin{2}{\F}$ enjoy rather pleasant properties 
that do not necessarily hold in arbitrary Euclidean domains.
First, note that any single equation $ax + by = c$ over $\F$
is consistent unless $a = b = 0$ and $c \neq 0$.
By preprocessing, we may assume that
such equations do not occur in our instances.
Hence, we may assume that all paths in our instances are consistent.
This implies the following via 
Lemmas~\ref{lem:acyclic=star} and~\ref{lem:edom-acyclic-consistent}.

\begin{corollary} \label{cor:flexible-propagate}
  Every flexible instance $S$ of $\lin{2}{\F}$ is consistent.
  Moreover, for any variable $z \in V(S)$ and any element $d$ in $\F$,
  there is an assignment that satisfies $S$ and sets $z$ to $d$. 
\end{corollary}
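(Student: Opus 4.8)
The plan is to derive this as a quick corollary of the machinery already built up for flexible instances over Euclidean domains, specializing to the case where $\D$ is a field $\F$. The key simplification over the general Euclidean-domain setting is that over a field every single equation $ax+by=c$ with $(a,b)\neq(0,0)$ is consistent — given any value for one of the variables, the other is determined (when its coefficient is nonzero) or unconstrained (when it is zero). After the preprocessing remark above, we may assume no equation has the form $0\cdot x + 0 \cdot y = c$ with $c \neq 0$, hence \emph{every} path in $S$ is consistent: eliminating intermediate variables along a path only ever multiplies coefficients together, and since $\F$ is an integral domain a product of nonzero coefficients is nonzero, so the implied equation $e_P$ is again of the non-degenerate form and thus consistent.

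First I would establish consistency of $S$. Since $S$ is flexible and contains no inconsistent path (indeed no path is inconsistent, as just argued), Lemma~\ref{lem:edom-flexible} immediately gives that $S$ is consistent. This handles the first sentence of the statement.

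Next I would establish the propagation claim. Fix a variable $z \in V(S)$ and an element $d \in \F$. Without loss of generality $S$ is connected (otherwise handle each connected component separately; the component containing $z$ is treated as below, and the others are consistent by the first part). Apply the second part of Lemma~\ref{lem:edom-flexible}: since $S$ is connected and flexible, $S$ and $\STS(S,z) = \{\,e_{zy}(S) \mid y \in V(S)\setminus\{z\}\,\}$ have the same set of satisfying assignments. So it suffices to find a satisfying assignment of $\STS(S,z)$ with $z \mapsto d$. Every equation $e_{zy}(S)$ has the form $a_y z + b_y y = c_y$ with $(a_y,b_y)\neq(0,0)$. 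In fact since every such equation is implied by a path and is consistent, and since $z$ is a variable that can be assigned freely along a consistent path over a field, for the fixed value $z = d$ there is at least one value of $y$ satisfying $a_y d + b_y y = c_y$: if $b_y \neq 0$ this value is $y = b_y^{-1}(c_y - a_y d)$, and if $b_y = 0$ then $a_y \neq 0$, but then consistency of this single equation for \emph{some} assignment of $z$ together with it being implied by a path in a flexible instance forces $c_y = a_y d'$ to be solvable — more carefully, I would argue that in a flexible instance the equation $e_{zy}(S)$ cannot have $b_y = 0$ unless it also has $a_y = 0$, because two internally disjoint $z$–$y$ paths would then be forced to pin $z$, contradicting flexibility; so in the relevant case $b_y \neq 0$ and $y$ is determined. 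Setting $z = d$ and each $y$ to the corresponding determined value yields a satisfying assignment of $\STS(S,z)$, hence of $S$, with $z \mapsto d$, as required.

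The main obstacle — really the only point needing care — is the sub-case analysis for $e_{zy}(S)$: one must rule out (or correctly handle) the possibility that the coefficient of $y$ in $e_{zy}(S)$ vanishes, which would make $y$ either unconstrained or overconstrained given $z = d$. I expect this to follow cleanly from flexibility: if $e_{zy}(S)$ were of the form $a_y z = c_y$ with $a_y \neq 0$, it would determine $z$, and combined with any other $z$–$w$ equation this would propagate a fixed value throughout, but flexibility together with the freedom already present (e.g., choosing a different value for $z$ gives a different satisfying assignment, which exists by the first part applied to $S$ with that choice) gives the contradiction; alternatively one invokes Lemma~\ref{lem:cycles-and-assignments}-style reasoning on the cycle formed by a $z$–$y$ path through $w$ together with the edge implying $e_{zy}$. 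Once that point is settled the rest is routine substitution.
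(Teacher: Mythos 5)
Your route is the same as the paper's: rule out inconsistent paths via preprocessing, invoke Lemma~\ref{lem:edom-flexible} (which packages Lemmas~\ref{lem:acyclic=star} and~\ref{lem:edom-acyclic-consistent}) to get consistency and to pass to $\STS(S,z)$, and then argue that $z=d$ extends. The first sentence and the reduction to $\STS(S,z)$ are correct. The gap is in the propagation step. What actually makes it work is that \emph{both} coefficients of $e_{zy}(S)$ are nonzero, and you already state the right ingredient (eliminating intermediate variables multiplies coefficients together, and a product of nonzero elements of $\F$ is nonzero). But that observation only yields what you need once you know that every edge equation $ax+by=c$ of the instance has $a\neq 0$ \emph{and} $b\neq 0$, not merely $(a,b)\neq(0,0)$. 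The stronger hypothesis is what the $z_0$-preprocessing of Section~\ref{ssec:lin2graphs} provides (it routes every genuinely one-variable constraint through the zero variable), whereas the preprocessing you invoke only discards $0\cdot x+0\cdot y=c$ with $c\neq 0$. Without the stronger hypothesis the statement is actually false: the single-edge flexible instance $\{1\cdot z + 0\cdot y = 0\}$ pins $z$ to $0$. Commit to the $z_0$-preprocessing, apply your product observation to conclude $b_y\neq 0$, and $y=b_y^{-1}(c_y-a_yd)$ finishes.

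The fallback argument in your last paragraph does not repair the gap and should be dropped. The claim that "two internally disjoint $z$--$y$ paths would be forced to pin $z$, contradicting flexibility" presupposes there \emph{are} two such paths; a single edge implying $a_y z = c_y$ does not by itself violate flexibility. And the step "choosing a different value for $z$ gives a different satisfying assignment, which exists by the first part applied to $S$ with that choice" is circular: the ability to prescribe the value of $z$ is precisely the second sentence of the corollary you are trying to prove, not something that follows from consistency alone. Similarly, the aside "$z$ is a variable that can be assigned freely along a consistent path over a field" assumes the conclusion. The nonzero-coefficient invariant is the correct (and the only non-circular) route here.
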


Flexible instances have another useful property.
We call a variable substitution $\Phi$ \emph{equalising}
if every equation in $\Phi(S)$ is equality, i.e.
it has the form $x=y$.

\begin{lemma} \label{lem:equalise}
  Every flexible instance of $\lin{2}{\F}$ admits 
  an equalising variable substitution.
\end{lemma}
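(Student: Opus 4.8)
The plan is to first reduce to the homogeneous case and then build the substitution as a per-component rescaling of the variables. Since a flexible instance $S$ is consistent by Corollary~\ref{cor:flexible-propagate}, Lemma~\ref{lem:homogenise} supplies a homogenizing substitution $\Phi_1$, which is an invertible affine change of variables; it is straightforward to check that such a change of variables preserves flexibility (it maps equivalent equations to equivalent equations and commutes with variable elimination along paths) and that the composition of two variable substitutions is again a variable substitution with $(\Phi_2\circ\Phi_1)(S)=\Phi_2(\Phi_1(S))$. Hence it suffices to produce an equalising substitution $\Phi_2$ for the homogeneous flexible instance $\Phi_1(S)$, and from now on I would assume $S$ itself is homogeneous and flexible.

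Next I would record the shape of the equations. After the standard preprocessing no equation is trivial or inconsistent, so every equation has the form $\alpha u+\beta v=0$; and since $S$ is flexible, Corollary~\ref{cor:flexible-propagate} says every variable takes every value in some satisfying assignment, which rules out $\alpha=0$ or $\beta=0$ (either would pin one of the two variables). Applying the same reasoning to the implied equation $e_P$ of a path $P$ via Observation~\ref{obs:var-elim-safe} shows that every $e_P$, and in particular every $e_{uv}(S)$, also has both coefficients nonzero, hence is equivalent to a relation $v=\mu\,u$ for a unique $\mu\in\F\setminus\{0\}$.

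Then I would define $\Phi_2$ componentwise. Fix a connected component $K$ and a root $r\in K$; for $v\in K$ let $\mu_v\in\F\setminus\{0\}$ be the scalar with $e_{rv}(S)\equiv(v=\mu_v r)$ (with $\mu_r=1$), and set $\Phi_2\colon v\mapsto \mu_v v'$ on $K$ (a pure rescaling, no shift), doing this independently on each component; since every $\mu_v\neq0$, $\Phi_2$ is invertible. It remains to check that every equation becomes an equality. Take an edge $\{u,v\}$ of $K$ carrying the equation $e\colon\alpha u+\beta v=0$. By flexibility $e\equiv e_{uv}(S)$; and by Lemma~\ref{lem:edom-flexible} the instance $S[K]$ has the same solutions as $\STS(S[K],r)$, so its solution set projected onto $(u,v)$ is exactly the line $\{(\mu_u t,\mu_v t):t\in\F\}$, whence $e_{uv}(S)$ — a nontrivial homogeneous two-variable equation with that solution set — is equivalent to $\mu_v u-\mu_u v=0$. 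Therefore $(\alpha,\beta)=\lambda(\mu_v,-\mu_u)$ for some $\lambda\neq0$, and applying $\Phi_2$ turns $e$ into $\alpha\mu_u u'+\beta\mu_v v'=\lambda\mu_u\mu_v(u'-v')=0$, i.e. (after dividing out the nonzero scalar $\lambda\mu_u\mu_v$) the equality $u'=v'$. Thus $\Phi_2(S)$ consists entirely of equalities and $\Phi_2\circ\Phi_1$ is the desired equalising substitution.

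The main obstacle is precisely the verification in the last step: that each individual edge equation is \emph{compatible} with the star/tree structure used to define the rescaling, i.e. that $e_{uv}(S)$ is genuinely \emph{equivalent} to $\mu_v u-\mu_u v=0$ rather than merely implied by it. This is where flexibility is essential — without it, two different $\{u,v\}$-paths could force incompatible linear relations and no global rescaling could work. I would handle this through the solution-set description of $\STS(S[K],r)$ from Lemma~\ref{lem:edom-flexible}, combined with the fact established above that over a field every relevant equation has nonzero coefficients, so that "same solution set" upgrades to "equivalent equation".
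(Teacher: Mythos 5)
Your proof is correct and takes essentially the same route as the paper's: reduce to the homogeneous case via Corollary~\ref{cor:flexible-propagate} and Lemma~\ref{lem:homogenise}, then rescale each variable $v$ by the coefficient $\mu_v$ occurring in the path-implied equation $e_{rv}(S)$ to a fixed root $r$, and conclude that all equations become equalities. You are somewhat more explicit than the paper at the final step, verifying directly via Lemma~\ref{lem:edom-flexible} that each individual edge equation (not just the implied equations $e_{rv}$) becomes an equality under the rescaling, which is a detail the paper's proof leaves slightly implicit; this is a welcome bit of extra care, but the core argument is the same.
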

\begin{proof}
Let $S$ be a connected flexible instance of $\lin{2}{\F}$.
The instance $S$ is consistent by Corollary~\ref{cor:flexible-propagate} so 
Lemma~\ref{lem:homogenise} allows us to 
assume that $S$ is homogeneous.
We may additionally assume (by division of field elements) that
every equation is of 
the form $x = ay$ for some $a \in \F$.
Pick an arbitrary variable $z \in V(S)$ and 
construct a spanning tree $T \subseteq S$ rooted in $z$.
Define a variable substitution $\Phi$ by
$x \mapsto a_x x'$, where $x = a_x z$ is the equation $e_{xz}(S)$.
Note that this map is reversible since division is available in $\F$.
Clearly, $\Phi(S)$ is homogeneous.
Moreover, equation $e_{x',z'}(\Phi(S))$ is $a_x x' = a_x z'$
which simplifies $x' = z'$.
We conclude that every equation in $\Phi(S)$ is equality.
\end{proof}

Yet another consequence of division in $\F$ 
is the following lemma that allows us to remove a factor of $2^{\bigoh(k)}$ from 
the time complexity of iterative compression.

\begin{lemma} \label{lem:field-dml}
  If $\DML(\F)$ is solvable in $\bigoh^*(f(k))$ time, 
  then $\minlin{2}{\F}$ is solvable in $\bigoh^*(f(k))$ time.
\end{lemma}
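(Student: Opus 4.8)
The plan is to invoke iterative compression, exactly as in Lemma~\ref{lem:dml}, but to carry out each compression step with a single call to $\DML(\F)$ instead of the $2^{\bigoh(k)}$ calls used there. Concretely, I would process the equations of $S$ one at a time (handling weights via the multiset device of Lemma~\ref{lem:dml}), maintaining after each prefix $S'$ a minimum-weight solution $Z'$, or a certificate that $S'$ has no solution of weight at most $k$. When the next equation $e$ is appended, $W := Z' \cup \{e\}$ is a solution for $S' \cup \{e\}$ of weight at most $k+1$, and the task reduces to the \emph{compression step}: decide whether $S' \cup \{e\}$ admits a solution of weight at most $k$.

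For the compression step over a field, I would first branch on whether $e$ is deleted in the sought solution; in the branch where $e$ is deleted the answer is fixed by optimality of $Z'$ (positive iff $w(Z') + w(e) \le k$), so that branch is essentially free. In the branch where $e$ is kept, the field structure lets us avoid the subset guessing that Lemma~\ref{lem:dml} performs: since every non-trivial single equation over $\F$ is consistent, $(S' \cup \{e\}) - W = S' - Z'$ is consistent, so Lemma~\ref{lem:homogenise} supplies a homogenising substitution $\Phi$; after applying $\Phi$ the only non-homogeneous equations lie inside $W$, and taking $X \subseteq W$ to be exactly those gives a legitimate $\DML(\F)$ instance $(\Phi(S' \cup \{e\}), w, X, k)$ with $w(X) \le k+1$. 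The crucial point — and this is where division in $\F$ is used — is that this single $\DML(\F)$ instance already decides the compression step: the nontrivial direction, that a positive compression instance has a witness disjoint from $X$, should follow because over a field the flexible parts of an instance propagate values freely (Corollary~\ref{cor:flexible-propagate}, Lemma~\ref{lem:equalise}) and all obstructions are confined to cycles, so any deletion inside the protected set $X$ can be traded for a deletion outside it without increasing the weight or creating new obstructions. Undoing $\Phi$ turns a $\DML(\F)$ solution back into a solution of $S' \cup \{e\}$.

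Summing over the polynomially many compression steps, each costing $\bigoh^*(f(k))$ (a constant number of $\DML(\F)$ calls with parameter at most $k$, plus polynomial-time bookkeeping), and noting that $\bigoh^*(\cdot)$ absorbs this polynomial factor, yields an $\bigoh^*(f(k))$-time algorithm for $\minlin{2}{\F}$.

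The step I expect to be the main obstacle is the rerouting claim in the ``$e$ kept'' branch: one must fix the correct protected set $X \subseteq W$ and prove, using specifically that $\F$ is a field, that whenever the compression step has a positive answer it has one witnessed by a solution disjoint from $X$. The remaining ingredients — the multiset handling of weights and the iterative-compression scaffolding — are routine and already appear in the proof of Lemma~\ref{lem:dml}.
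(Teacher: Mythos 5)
You correctly identify the rerouting claim as the crux, but that claim is false, so the proposal as written does not prove the lemma. Consider the following instance over $\F_2$ with $k=1$: the prefix $S'$ consists of $e_1\colon a-b=1$ (weight $1$), $e_2\colon b-c=1$ (weight $3$), $e_3\colon c-a=1$ (weight $3$), and the iterative-compression process maintains the unique minimum solution $Z'=\{e_1\}$ of weight $1$. Now append $e=e_4\colon a-d=0$ (weight $1$), so $W=Z'\cup\{e_4\}$ has weight $k+1=2$. Every assignment satisfying $S'-Z'=\{e_2,e_3\}$ forces $a=b$ and therefore violates $e_1$, so any homogenising substitution $\Phi$ for $S'-Z'$ leaves $e_1$ non-homogeneous and puts $e_1\in X$. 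But the only weight-$\le 1$ candidates outside $X$ are at most $\{e_4\}$, and deleting $e_4$ does not break the unbalanced triangle $e_1e_2e_3$, while $e_2,e_3$ exceed the budget; so the $\DML(\F_2)$ call returns NO. The ``$e_4$ deleted'' branch also fails, as $w(Z')+w(e_4)=2>k$. Yet $Z=\{e_1\}$ is a valid compression-step solution of weight $1$, so your algorithm would incorrectly reject. The deletion inside $X$ cannot be traded away precisely because $e_1$ is the only cheap edge on the obstruction cycle, and the flexibility/propagation properties you cite give no handle on this.

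The device the paper actually uses to avoid the $2^{\bigoh(k)}$ intersection-guessing of Lemma~\ref{lem:dml} is \emph{equation subdivision}, which you did not consider: before running iterative compression, each equation $ax+by=c$ is replaced by the pair $\{x=z_e,\ az_e+by=c\}$ on a fresh variable $z_e$, each half inheriting the original weight. A minimal solution never needs both halves of a pair, so whenever the incoming solution $X$ and a sought solution $Z$ must break the same original constraint, one of them can take $x=z_e$ and the other $az_e+by=c$; hence one may assume $Z\cap X=\emptyset$ outright and a single $\DML(\F)$ call suffices. In your counterexample, subdividing $e_1$ lets $X$ contain one half and $Z$ the other, which is exactly the escape hatch your rerouting argument was trying, unsuccessfully, to manufacture out of the field structure alone.
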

\begin{proof}
Given an instance $(S, w_S, k)$ of $\minlin{2}{\D}$,
apply \emph{(equation) subdivision} to it:
for every equation $e$ of the form $ax + by = c$ in the instance,
introduce a new variable $z_e$ and replace the original equation by
a \emph{subdivided pair} of equations 
$P_e = \{ x = z_e, az_e + by = c \}$.
Both equations in the pair
are assigned the same weight as the original one.

Clearly, any minimal solution only needs to contain one equation
from each subdivided pair.
Hence, the resulting instance has a solution of weight $k$
if and only if the original instance has one.
Moreover, when applying iterative compression to $(S, w_S, k)$
and having a suboptimal but minimal solution $X$ at hand,
we may safely assume that the optimal solution $Z$ to the instance
is disjoint from $X$ (e.g. if $X$ and $Z$ 
need to separate the same pair of original variables, they 
may choose different equations from the subdivided pair).
Hence, there is no need to branch on the intersection of $X$ and $Z$
and the instance can be solved directly by passing it to the $\DML(\F)$ algorithm.
\end{proof}

\subsection{Algorithm for $\mintwolin$ over Fields}
\label{ssec:fields-algo-def}

Let $I = (S, w_s, X, k)$ be an instance of $\DML(\F)$.
By Lemma~\ref{lem:dml}, it suffices to construct an fpt algorithm for the latter problem.
The opening of the algorithm is equation subdivision which allows for speeding up iterative compression by Lemma~\ref{lem:field-dml}.
In fact, we apply subdivision twice to replace every equation with three new ones, i.e. two fresh variables $z_1$ and $z_2$ are introduced and
$ax + by = c$ is replaced by $\{x = z_1, z_1 = z_2, az_2 + by = c \}$. 
This allows us to avoid several branching steps---more details are given after the algorithm description.
Then we construct the rooted graph $(G_I,\B_I)$ and compute 
a dominating family of important balanced subgraphs to obtain a cleaning set $F$.
In contrast to the algorithm for Euclidean domains, 
the following steps are simplified by the additional structure of fields.
In the iterative compression step, it
is ensured that the solution
is disjoint from $X \cup F$ simply by using subdivision as in Lemma~\ref{lem:field-dml}.
The cutting step is simplified even more dramatically:
it turns out that guessing the correct partition of the terminals $\cP$
and computing a minimum $\cP$-cut is sufficient since there are no
inconsistent paths in the instances of $\lin{2}{\F}$.
We claim that the following algorithm solves the
instance $I=(S, w_S, k, X)$ of $\DML(\F)$ in 
$O^*(2^{O(k \log k)})$ time. 

\begin{enumerate}

\item Apply equation subdivision (like in Lemma~\ref{lem:field-dml}) 
twice to $(S, w_S, k)$ so that every equation is divided into three equations.

\item Construct the rooted graph $(G_I,\B_I)$  for $I$ as described in Definition~\ref{def:GI}. Assume that $s$ is the root.

\item Let $\cG:= \cG(G_I, \B_I, k, s)$
  be the family of connected balanced subgraphs in $(G_I, \B_I)$
  rooted in $s$ with cost at most $3k+1$. Compute a dominating family
  $\cH$ for $\cG$ using Observation~\ref{obs:edom-rooted-enum}.
\item \label{alg:Fout}
  For every $H \in \cH$, let $F_H$ be the set of deleted edges excluding those incident to
  $s$.
  For each partition $\cP$ of $V(X \cup F_H)$,
  check if there is a $\cP$-cut $Y$ in $S - (X \cup F_H)$ of size at most $k$
  using Lemma~\ref{lem:p-cut-fpt}.
  If $Y$ exists and $S-Y$ is consistent, then output $Y$. Otherwise
  continue with the next partition.
\item If no solution was output in the previous step, then reject $I$.
\end{enumerate}

The double subdivision in step 1 allows us to assume that
an optimal solution $Z$, the set $X$, and 
the current cleaning set $F_H$ are pairwise disjoint.
We can thus avoid branching on their intersections
(analogously to how the iterative compression algorithm for fields
presented in Lemma~\ref{lem:field-dml} 
avoids the branching step in 
the general compression algorithm from Lemma~\ref{lem:dml}).

\subsection{Correctness Proof and Complexity Analysis}
\label{ssec:correctness-arbfield}

We start with a lemma that will help us prove correctness of the algorithm.
This lemma can be viewed as an analogue
of Lemma~\ref{lem:solution} but its proof
is noticeably different.

\begin{lemma} \label{lem:support}
  Let $I=(S, w_S, X, k)$ be an instance of $\DML(\F)$ with solution $Z$
  and let $\varphi_Z$ be a satisfying assignment of $S - Z$.
  Let $F \subseteq S - (X \cup Z)$ be a cleaning set with respect to $\varphi_Z$, and let
  $\cP$ be the partition of $V(X \cup F)$ into 
  connected components of $S'-Z$, where $S'=S-(X\cup F)$.
  Then every minimum $\cP$-cut in $S'$ is a minimum solution for $(S, w_S, X, k)$. 
\end{lemma}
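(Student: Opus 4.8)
The plan is to show two inequalities: (i) every minimum $\cP$-cut $Y$ in $S'$ satisfies $|Y| \le |Z|$, so it is no larger than the optimal solution; and (ii) if $Y$ is a $\cP$-cut in $S'$, then $S - Y$ is consistent, so $Y$ is a genuine solution. Combining these gives that a minimum $\cP$-cut is a minimum solution. For (i), I would first argue that $Z$ itself is a $\cP$-cut in $S'$: by definition $\cP$ is the partition of $V(X\cup F)$ into connected components of $S' - Z$, so removing $Z$ from $S'$ separates the terminals exactly according to $\cP$, hence $Z$ refines $\cP$ trivially and is a $\cP$-cut. Therefore a minimum $\cP$-cut $Y$ has $|Y| \le |Z| \le k$.

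The substance is in (ii): reconstructing a satisfying assignment of $S - Y$ from the data at hand. I would follow the structure of the proof of Lemma~\ref{lem:solution}, but exploiting Corollary~\ref{cor:flexible-propagate} and Lemma~\ref{lem:equalise} to avoid the pair-cut machinery entirely: over a field there are no inconsistent paths, so flexibility is the only obstruction, and $\cP$-cutting suffices. Concretely, since $F$ is a cleaning set with respect to $\varphi_Z$, every rigid component of $S' = S - (X\cup F)$ is zero under $\varphi_Z$. I would set $\varphi_Y(v) = 0$ for every variable $v$ in a rigid component of $S'$ or in a component of $S'$ containing no terminal (these are homogeneous since $S - X$ is homogeneous, so the all-zero assignment works locally and is consistent with the terminals forced to $0$). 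The double subdivision in step~1 of the algorithm ensures $Z$, $X$, $F$ are pairwise disjoint, which I would use to guarantee that cutting along $Y$ inside $S'$ does not interfere with the equations of $X \cup F$.

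The heart of the argument, and the main obstacle, is handling the flexible components. Since $\cP$ is the component partition of $S' - Z$ and $Y$ is a $\cP$-cut in $S'$, the partition of $V(X\cup F)$ induced by $S' - Y$ refines $\cP$; I need that on each flexible component $K$ of $S' - Y$ containing terminals, the forced values from $\varphi_Z$ are \emph{mutually consistent}. Here I would invoke that in $S' - Z$ all terminals in one $\cP$-class lie in one flexible (or rigid-zero) component, so $\varphi_Z$ restricted to the terminals of $K$ is the trace of a consistent flexible instance; since $S'-Y$ restricted to $K$ is also flexible and its terminal set is a subset of such a $\cP$-class, Lemma~\ref{lem:edom-flexible} (equivalently $\STS$-reduction) lets me extend $\varphi_Z|_{\text{terminals of }K}$ to all of $K$. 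Components of $S'-Y$ with no terminal are homogeneous and get the all-zero assignment. Finally I would verify that the resulting $\varphi_Y$ satisfies the equations of $X\cup F$: each such equation connects two terminals whose values I have set equal to $\varphi_Z$'s values, and $\varphi_Z$ satisfies $X\cup F \subseteq S - Z$, so these equations hold. Hence $S - Y$ is consistent, completing (ii). The delicate point to get right is exactly the compatibility of terminal values across a flexible component of $S'-Y$, i.e. that refining $\cP$ never forces two incompatible terminal values together — this follows because $\varphi_Z$ already witnesses a globally consistent assignment on $S - Z \supseteq S' - Z$, whose components refine into those of $S' - Y$.
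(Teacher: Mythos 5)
Your proof is correct and follows essentially the same strategy as the paper's: construct $\varphi_Y$ by taking $\varphi_Z$'s values on the terminals, propagating through the components of $S'-Y$ that contain terminals, and setting everything else to zero; then argue this is well-defined and satisfies every equation, using that rigid components of $S'$ are zero under $\varphi_Z$ (cleaning set) and that $Z$ is itself a $\cP$-cut for the optimality claim.

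The one genuine difference is that the paper first applies the equalising substitution (Lemma~\ref{lem:equalise}) so that, after normalisation, every equation inside a flexible component of $S'$ is an equality. With that in hand, $\varphi_F$ is literally constant on each $\cP$-class and on each component of $S'-Z$, so the assignment $\varphi_Y$ that is constant on each component of $S'-Y$ is trivially consistent and well-defined, and the verification over equations becomes a short case analysis. You avoid equalisation and instead argue compatibility directly: terminals lying in the same component $K$ of $S'-Y$ lie in the same $\cP$-class, hence in the same component $K'$ of $S'-Z$, and since both $K$ and $K'$ are subgraphs of the same flexible component $K_0$ of $S'$, the implied equations $e_{xy}(K)$ and $e_{xy}(K')$ coincide, so $\varphi_Z$ already satisfies them and can be propagated. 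This is sound, but your final sentence, ``whose components refine into those of $S'-Y$,'' does not quite say this — there is in general no containment between the components of $S'-Z$ and of $S'-Y$; what you actually use (and need) is only the terminal-level refinement coming from the $\cP$-cut property, plus the shared flexible component $K_0$ of $S'$. The paper's equalisation step sidesteps this care, which is why it is there; in exchange, your version is slightly more self-contained.
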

\begin{proof}
Let $K$ be a component of $S'$ which does not contain any
non-identity cycles. Then $K$ is flexible, and by
Lemma~\ref{lem:equalise}, we can perform a substitution $\Phi(S)$ on
$S$ such that $K$ becomes equalised (i.e. all equations of
$\Phi(S)[K]$ except those in $X \cup F$ are equalities).  Perform
this substitution for all flexible components $K$ of $S'$, and 
let $\varphi_F$ be the updated satisfying assignment to $S-Z$.
Observe that $\varphi^{-1}(0)=\varphi_F^{-1}(0)$ since $S-X$ is
homogeneous. As before, we refer to the vertices of $V(X \cup F)$ as terminals.
Consider a set $B \in \cP$.
Lemma~\ref{lem:zero-nzero-comps} implies that if one variable in $B$
is assigned the zero value, then all variables in $B$ are assigned the
zero value by $\varphi_F$.
On the other hand, if no variable in $B$ is assigned zero value,
then, by variable substitution, 
all paths connecting variables in $B$ imply equalities 
between all variables in $B$.
Hence, $\varphi_F$ is constant on every $B \in \cP$ and every
connected component of $S' - Z$.

Now, let $Y$ be a minimum $\cP$-cut. 
We show that $Y$ is a solution by constructing
an assignment $\varphi_Y$ that satisfies $S - Y$.
Let $\varphi_Y(v) = \varphi_F(v)$ for any terminal
$v \in V(X \cup F)$, propagate values so that every connected
component of $S'-Y$ takes the same value on every vertex, and set
$\varphi_Y(v)=0$ for any vertex $v$ in a connected component of
$S'-Y$ without terminals. We note that $\varphi_Y$ is well-defined.
Indeed, if $u$ and $v$ are terminals such that
$\varphi_F(u) \neq \varphi_F(v)$, then
$u$ and $v$ are in different parts of $\cP$. Since $Y$ is a
$\cP$-cut, no component of $S'-Y$ contains both $u$ and $v$.

Consider an arbitrary equation $e \in S-Y$.
If $e \in X \cup F$, then $\varphi_Y$ matches
$\varphi_F$ on $e$. Since $e \notin Z$ by assumption, this implies
that $\varphi_Y$ satisfies $e$. Next, assume that $e$ is in a
flexible connected component $K$ of $S'$. We know that
$e \notin X \cup F$ and $e$ is equality so by
construction both variables of $e$ take the same value in
$\varphi_Y$. Finally, assume that $e$ appears in a rigid component $K$ of $S'$.
By assumption, $\varphi_F$ assigns zero to $K$.
Assume first that there exists a path $P$ in $S'-Y$ connecting $e$ to a
terminal $v$. Then necessarily $P$ is contained in $K$ and $\varphi_Y(v)=\varphi(v)=0$. If no such path exists, then the
variables in $e$ take the value zero by default. In both cases, 
the variables in $e$ are assigned zero and $e$ is satisfied by $\varphi_Y$.
This exhausts the cases and shows that $Y$ is a solution.

Since $Y$ is a minimum-weight
$\cP$-cut and $Z$ is a $\cP$-cut by definition, 
we conclude that $Y$ is an optimal solution. 
\end{proof}

Now we are ready to present the correctness proof and the analysis of
the running time of the algorithm.

\begin{theorem} \label{thm:min2linQ-fpt}
$\minlin{2}{\F}$ can be solved in in $\bigoh^*(2^{\bigoh(k \log k)})$ time.
\end{theorem}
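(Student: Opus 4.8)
The plan is to apply Lemma~\ref{lem:field-dml}, which reduces the statement to showing that $\DML(\F)$ admits an algorithm running in $\bigoh^*(2^{\bigoh(k\log k)})$ time, and then to prove correctness and bound the running time of the procedure described in Section~\ref{ssec:fields-algo-def}. Write $I=(S,w_S,X,k)$ for the instance of $\DML(\F)$ after the double subdivision of Step~1; this blows up the instance size only by a constant factor and leaves $k$ unchanged. The forward direction of correctness is immediate: whenever the algorithm outputs a set $Y$ in Step~\ref{alg:Fout}, it has already checked via Lemma~\ref{lem:edom-algo} that $w_S(Y)\le k$ and $S-Y$ is consistent.

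For the reverse direction, suppose $I$ is a yes-instance, fix a solution $Z$ with a satisfying assignment $\varphi_Z$ of $S-Z$, and split $V(S)$ into $V_0=\varphi_Z^{-1}(0)$ and $V_\emptyset=V(S)\setminus V_0$. As in Lemma~\ref{lem:field-dml}, the double subdivision of Step~1 lets us assume that $Z$ is disjoint from $X$ and from every candidate set $F_H$ the algorithm produces (for a common original equation, $Z$ and such a set may pick distinct equations of the corresponding triple), so no branching on intersections with $Z$ is needed. I would then form the zero-free subgraph $H_\emptyset=H_\emptyset(I,Z,\varphi_Z)$ of the rooted graph $(G_I,\B_I)$ (Definitions~\ref{def:GI} and~\ref{def:ZFS}); by Lemma~\ref{lem:balanced} it is connected, balanced, and of cost at most $3k+1$, hence lies in the family $\cG$ of connected balanced subgraphs rooted in $s$ of cost at most $3k+1$, and is therefore dominated by some $H\in\cH$ computed in Step~3. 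Next I would observe that every component of $S-(X\cup F_H)$ lies either entirely inside $V(H)$, where it is flexible since it is a subgraph of the balanced $H$, or entirely outside $V(H)\supseteq V(H_\emptyset)\supseteq V_\emptyset$, where it is contained in $V_0$; hence every rigid component of $S-(X\cup F_H)$ is zero under $\varphi_Z$, and together with $F_H\cap Z=\emptyset$ this shows that $F:=F_H$ is a cleaning set with respect to $\varphi_Z$. Let $\cP_Z$ be the partition of $V(X\cup F)$ into connected components of $S'-Z$, where $S'=S-(X\cup F)$. Step~\ref{alg:Fout} enumerates all partitions of $V(X\cup F_H)$, in particular $\cP_Z$; since $Z$ itself is a $\cP_Z$-cut of weight at most $k$, Lemma~\ref{lem:p-cut-fpt} returns a minimum $\cP_Z$-cut $Y$ with $w_S(Y)\le k$, which by Lemma~\ref{lem:support} is a solution to $I$, so $S-Y$ is consistent and the algorithm outputs $Y$.

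For the running time, Observation~\ref{obs:edom-rooted-enum} computes the dominating family $\cH$ in $\bigoh^*(4^{3k+1})$ time with $|\cH|\le 4^{3k+1}$. For each $H\in\cH$, positivity of the edge weights gives $w_S(X)\le k+1$ and $w_S(F_H)\le c_{G_I}(H)\le 3k+1$, so $|V(X\cup F_H)|=\bigoh(k)$, and the number of partitions of $V(X\cup F_H)$ is at most the corresponding Bell number, i.e. $k^{\bigoh(k)}=2^{\bigoh(k\log k)}$. For each such partition we solve \textsc{Partition Cut} in $\bigoh^*(2^k)$ time by Lemma~\ref{lem:p-cut-fpt} and test consistency in polynomial time by Lemma~\ref{lem:edom-algo}. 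Altogether $\DML(\F)$ is solved in $\bigoh^*(4^{3k+1}\cdot 2^{\bigoh(k\log k)}\cdot 2^k)=\bigoh^*(2^{\bigoh(k\log k)})$ time, and the same bound follows for $\minlin{2}{\F}$ by Lemma~\ref{lem:field-dml}.

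I expect the main obstacle to be the bookkeeping in the reverse direction: verifying that taking $F_H$ \emph{directly} --- without any guess of $F_H\cap Z$, justified by the double subdivision --- still yields a valid cleaning set, and that the particular partition $\cP_Z$ needed to invoke Lemma~\ref{lem:support} is among those enumerated. Once this is settled, the analysis is routine, and the $2^{\bigoh(k\log k)}$ bottleneck is exactly the enumeration of partitions of the $\bigoh(k)$-size terminal set $V(X\cup F_H)$ --- which is affordable here, rather than forcing an extra exponential branching factor as in the Euclidean-domain algorithm, precisely because subdivision over a field removes the need to guess $X\cap Z$ and $F_H\cap Z$ and because no inconsistent paths (hence no pair cut requests) arise.
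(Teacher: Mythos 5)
Your proposal is correct and follows essentially the same route as the paper's own proof: reduce to $\DML(\F)$ via Lemma~\ref{lem:field-dml}, use the zero-free subgraph $H_\emptyset$ and the dominating family $\cH$ to identify a cleaning set $F_H$ (with pairwise disjointness of $X$, $F_H$, $Z$ guaranteed by the double subdivision rather than branching), invoke Lemma~\ref{lem:support} on the correct partition $\cP_Z$, and observe that the $k^{\bigoh(k)}$ enumeration of partitions of the $\bigoh(k)$-size terminal set is the dominant cost. The argument that every rigid component of $S-(X\cup F_H)$ is zero under $\varphi_Z$ (components inside $V(H)$ inherit flexibility from the balanced $H$, components outside $V(H)$ lie in $V_0$) matches the paper's reasoning, as does the run-time tally $\bigoh^*\bigl(4^{\bigoh(k)}\cdot k^{\bigoh(k)}\cdot 2^{\bigoh(k)}\bigr)=\bigoh^*(2^{\bigoh(k\log k)})$.
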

\begin{proof}
By Lemma~\ref{lem:field-dml}, it suffices to analyze the algorithm 
for $\DML(\D)$ that was presented at the end of Section~\ref{ssec:fields-algo-def}.
Let $I=(S, w_s, X, k)$ be an arbitrary instance of this problem.
We  show that the algorithm accepts if and only if $I$ is a yes-instance. 
For the forward direction, note that if the algorithm finds a solution $Y$, 
then $\abs{Y} \leq k$ and $S-Y$ is consistent because of Step~\ref{alg:Fout} of the
algorithm.

Towards showing the reverse direction, 
suppose that $I$ is a yes-instance and
$Z$ is an optimal solution. 
Let $\varphi$ be an assignment satisfying $S - Z$, and
define $V_0$ and $V_\emptyset$ as in
Section~\ref{ssec:edom-cleaning} i.e.
$V_0 = \{ v \in V(S) \mid \varphi(v) = 0 \}$ and
$V_\emptyset = \{ v \in V(S) \mid \varphi(v) \neq 0 \}$.
Let $H_\emptyset$ denote the zero-free subgraph of $G_I$ (see Definition~\ref{def:ZFS}).
By Lemma~\ref{lem:balanced}, the
subgraph $H_\emptyset$ is balanced and connected,
and $c_G(H_\emptyset) \leq 3k + 1$.
Hence, there is an important balanced subgraph $H \in \cH$ 
considered by the algorithm in line~\ref{alg:Fout} that dominates $H_\emptyset$.
Let $F_H$ be the corresponding set of deleted edges in $G_I - \{s\}$ 
and let $\cP_Z$ be the partition of the terminals $T = V(X \cup F_H)$
into connected components of $S - (X \cup Z)$.
The algorithm exhaustively considers all possible partitions $\cP$ of $T$
and tries to compute a minimum $\cP$-cut in $S' := S - (X \cup F_H)$.
We wish to apply Lemma~\ref{lem:support} to prove that 
such a cut exists so we verify that the preconditions of the lemma are met.
By subdividing equations into three parts in the first step of the algorithm, 
we can assume without loss of generality that $X$, $F_H$ and $Z$ are
pairwise disjoint.
Further, we note that $V_\emptyset \subseteq V(H_\emptyset) \subseteq V(H)$
and all components in $S'[V(H) \setminus \{s\}]$ are flexible.
Hence, all variables in the rigid components of $S'$
are assigned zero values by $\varphi$,
the set $F_H$ is indeed a cleaning set with respect to $\varphi_Z$,
and the lemma applies. 
We conclude that the algorithm accepts the instance $I$.

We continue by analysing the time complexity of the algorithm.
Using Observation~\ref{obs:edom-rooted-enum}, the algorithm computes a
dominating family $\cH$ of $\cG$ of size at most $4^{3k+1}$
in time $\bigoh^*(4^{\bigoh(k)})$.
Let $H \in \cH$ and let $F_H$  be corresponding set of deleted edges
excluding those incident to vertex $s$. Note that $c_{G_I}(H)\leq 3k+1$.
For each $H$, every partition $\cP$ of $V(X \cup F_H)$ is computed in line 4.
Recall that $\abs{X} = k+1$ and $\abs{F_H} \leq 3k + 1$ by~Lemma~\ref{lem:balanced} 
so $\abs{V(X \cup F_H)} \leq 4k$ and the enumeration of partitions requires 
$\bigoh^*((4k)^{\bigoh(4k)})$ time. \
Computing the $\cP$-cut requires at most $\bigoh^*(2^{4k})$ time
by Lemma~\ref{lem:p-cut-fpt} and the total running time is
\[\bigoh^*(4^{\bigoh(k)})+ \bigoh^*(4^{\bigoh(k)}) \cdot \bigoh^*((4k)^{\bigoh(4k)}) \cdot \bigoh^*(2^{4k}) \in \bigoh^*(2^{\bigoh(k \cdot \log k)}).\]
\end{proof}

\subsection{Even Faster Algorithm for Finite Fields}
\label{ssec:finfields-algorithm}

Let $\F_p$ be a finite $p$-element field with $p \geq 3$. 
For $\minlin{2}{\F_2}$ a $\bigoh^*(1.977^k)$ time algorithm
can be obtained using the approach of~\cite{pilipczuk2019edge}.
Every finite field obviously has an effective representation so
we assume without loss of generality that $\F_p$ is effective.
Wedderburn's Little Theorem (see, for instance, \cite{Herstein:amm61}) 
implies that if $\D$ is a finite Euclidean domain, then $\D$ is a field.
Hence, the results in this section cover $\mintwolin$ for every finite Euclidean domain.
As mentioned in the introduction, $\minlin{2}{\F_p}$
is a special case of ULC with a finite alphabet, 
so it can be solved in $\bigoh^*(p^{2k})$ time
by the currently best algorithm for ULC~\cite{iwata2016half}.
In this section we present a faster algorithm for $\minlin{2}{\F_p}$
that runs in $\bigoh^*((2p)^k)$ time. 
By equation subdivision and Lemma~\ref{lem:field-dml}, 
the problem can be reduced to polynomially many instances of $\DML(\F_p)$.
Let $(S, w_s, X, k)$ be an instance of the latter problem.
The key to improved running time of our algorithm
is the fact that $X$ has at most $p^k$
satisfying assignments, and an optimal assignment to $S$
must extend one of these assignments.
Suppose $\alpha : V(X) \rightarrow \F_p$ is an assignment that satisfies $X$.
Then the problem can be solved by checking whether $S - X$
admits an assignment that extends $\alpha$ and leaves unsatisfied 
equations of total weight at most $k$.
A reduction to \RBGCE allows us to answer this question in $\bigoh^*(2^k)$ time.
The reader should note that this approach avoids using the method of important balanced subgraphs.

We continue with some definitions. Given an instance $S$ of $\lin{2}{\D}$, a subset of equations $X$
such that $S - X$ is consistent, and an assignment $\alpha$ satisfying $X$,
we define $S_\alpha$ as follows:
start with all equations of $S - X$, introduce two new variables $s$ and $t$, and 
add equations $x = s \cdot \alpha(x)$ of weight $k+1$ for all $x \in V(X)$ where $\alpha(x) \neq 0$,
and $x=t$ of weight $k+1$ for all $x \in V(X)$ where $\alpha(x)=0$. 
Finally, add two more variables $t'$, $t''$ and equations $t'= \gamma t$, $t''=t'$, $t=t''$ each of weight $k+1$,
where $\gamma$ is any element in $\F_p \setminus \{0, 1\}$.
We refer to $S_\alpha$ as the \emph{restriction of $S$ to $\alpha$}.
Note that $S_\alpha$ is homogeneous by construction.
Furthermore, setting $s$ to $1$ and $t$ to $0$ implies that the variables in $V(X)$
are assigned the values in accordance with $\alpha$.
Let $G_\alpha$ be the primal graph of $S_\alpha$ and 
define $\B_\alpha$ to be the family of identity cycles in $S_\alpha$.
Since all equations in $S_\alpha$ are homogeneous, we can view it
as a group-labelled graph with the group being
$\F_p^*$ i.e. the multiplicative group of the field.
Hence, we immediately obtain the following.

\begin{lemma}[\cite{Zaslavsky:jctb89}] \label{lem:minlin-fin-to-rbgce}
  $(G_\alpha, \B_\alpha)$ is a biased graph.
\end{lemma}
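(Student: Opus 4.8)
The plan is to realise $S_\alpha$ as a group-labelled graph over the multiplicative group $\F_p^* = \F_p \setminus \{0\}$ and then appeal to the standard fact, recalled in the introduction and going back to Zaslavsky~\cite{Zaslavsky:jctb89}, that the balanced cycles of a group-labelled graph form a linear class.

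First I would observe that every equation of $S_\alpha$ involves two distinct variables and has both coefficients nonzero. This follows from the preprocessing conventions for $\lin{2}{\F_p}$ (removal of trivial and inconsistent equations, and the replacement of every single-variable equation $ax=b$ by $ax - z_0 = b$ to eliminate self-loops), together with the explicit shape of the equations added when building $S_\alpha$, namely $x - \alpha(x)s = 0$ with $\alpha(x) \neq 0$, $x - t = 0$, $t' - \gamma t = 0$ with $\gamma \neq 0$, $t'' - t' = 0$ and $t - t'' = 0$. Since $S_\alpha$ is moreover homogeneous, each of its equations has the form $ax + by = 0$ with $a,b \in \F_p^*$, equivalently $x = (-ba^{-1}) y$. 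Orienting the corresponding edge $e = \{x,y\}$ of $G_\alpha$ from $x$ to $y$, assign it the value $\lambda_e = -ba^{-1} \in \F_p^*$; reversing the orientation replaces $\lambda_e$ by $\lambda_e^{-1}$, which is harmless since $\F_p^*$ is abelian and only products around cycles will matter.

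Next I would check that $\B_\alpha$ is exactly the family of balanced cycles of this labelling. For a cycle $C = (v_1,\dots,v_n)$ of $G_\alpha$, let $\mu(C) \in \F_p^*$ be the product of the labels $\lambda_{e_i}^{\pm 1}$ picked up while traversing $C$. Eliminating variables around $C$ (Observation~\ref{obs:var-elim-safe}, exactly as in the proof of Lemma~\ref{lem:cycles-and-assignments}) shows that the equations of $C$ are jointly equivalent to $(\mu(C) - 1)\, v_1 = 0$. Hence $C$ has more than one satisfying assignment precisely when $\mu(C) = 1$, and by Lemma~\ref{lem:cycles-and-assignments} this is exactly the condition for $C$ to be an identity cycle. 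Thus $\B_\alpha = \{ C : \mu(C) = 1 \}$ is the family of balanced cycles of the group-labelled graph $G_\alpha$.

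Finally, to see that $\B_\alpha$ is linear, suppose $C, C' \in \B_\alpha$ form a theta graph, with common path $Q$ and private paths $R$, $R'$; the third cycle of $C \cup C'$ is $R \cup R'$. Using commutativity of $\F_p^*$ and cancelling the contribution of $Q$, the product $\mu(R \cup R')$ can be written as the product of $\mu(C)$ and $\mu(C')$ (after inverting the $Q$-factor in one of them), hence equals the identity, so $R \cup R' \in \B_\alpha$. Therefore $(G_\alpha, \B_\alpha)$ is a biased graph. The one step I would be most careful about is the first, ensuring that no equation of $S_\alpha$ degenerates into a one-variable equation with a zero coefficient (which would carry no group label); this is precisely what the preprocessing and the explicit form of the added equations guarantee. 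Alternatively, since $S_\alpha$ is homogeneous and hence consistent, the statement is also an immediate instance of Lemma~\ref{lem:non-id-has-theta} applied to $S_\alpha$, so one may simply quote that lemma.
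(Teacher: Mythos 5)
Your proof is correct and follows essentially the same approach the paper relies on: the paper states Lemma~\ref{lem:minlin-fin-to-rbgce} by observing that $S_\alpha$ is homogeneous, viewing it as a group-labelled graph over $\F_p^*$, and citing Zaslavsky's result that the balanced cycles of a group-labelled graph form a linear class; you spell out the same construction, carefully verify that every equation yields a genuine two-variable edge with a well-defined label, and check the theta condition directly. Your closing remark that Lemma~\ref{lem:non-id-has-theta} applied to the consistent homogeneous instance $S_\alpha$ already gives the result is also a valid, self-contained shortcut within the paper.
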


Clearly, there is a polynomial time algorithm that checks whether a cycle is identity
since we can multiply all labels along the cycle and check whether the result equals identity.
Now we are ready to prove the theorem.

\begin{theorem} \label{thm:min2lin-finite-fpt}
  $\minlin{2}{\F_p}$, where $\F_p$ is a finite $p$-element field with $p \geq 3$, 
  is in \FPT and solvable in $\bigoh^*((2p)^k)$ time.
\end{theorem}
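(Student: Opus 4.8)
The plan is to combine iterative compression with the biased-graph machinery in a lighter way than for Euclidean domains, exploiting that every nonzero field element is invertible. By equation subdivision and Lemma~\ref{lem:field-dml}, it suffices to solve $\DML(\F_p)$ in $\bigoh^*((2p)^{k})$ time. On an instance $(S,w_S,X,k)$ of $\DML(\F_p)$ I would branch over the restriction of a putative solution assignment to the subsystem $X$ and, for each branch, solve a single instance of \RBGCE{} via Proposition~\ref{pro:rbgce}; the point is that this avoids important balanced subgraphs entirely.

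In more detail: every solution $Z$ of $\DML(\F_p)$ satisfies $Z\subseteq S-X$, so any assignment $\varphi$ witnessing consistency of $S-Z$ also satisfies $X$, and $\varphi$ restricted to $V(X)$ is one of the satisfying assignments of $X$, of which there are at most $p^{k}$ (as recalled above); if $X$ is inconsistent we reject. For each such $\alpha\colon V(X)\to\F_p$, enumerated in $\bigoh^*(p^{k})$ time (the rigid connected components of $X$ are pinned to a unique value by Lemma~\ref{lem:cycles-and-assignments}, and we take independent choices over the flexible ones), we form the restriction $S_\alpha$ of $S$ to $\alpha$ and the pair $(G_\alpha,\B_\alpha)$, which is a biased graph by Lemma~\ref{lem:minlin-fin-to-rbgce} and whose class $\B_\alpha$ of identity cycles admits a polynomial-time oracle (multiply the labels around the cycle and test equality with $1$).

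The core step is the equivalence: $S-Z$ has a satisfying assignment extending $\alpha$ if and only if $Z$, read as the set of $G_\alpha$-edges coming from equations of $S-X$, is a solution of the \RBGCE{} instance $((G_\alpha,\B_\alpha),w,s,k)$, where $w$ carries the weights $w_S$ on the $S-X$-edges and weight $k+1$ on all gadget edges. Since a solution has weight at most $k<k+1$ it never deletes a gadget edge; in particular the triangle on $t,t',t''$ survives, is non-identity because its label product equals $\gamma\in\F_p\setminus\{0,1\}$ (which exists exactly when $p\geq 3$), and therefore forces $t=0$ in every satisfying assignment of $S_\alpha-Z$, hence $x=0$ for each $x\in V(X)$ with $\alpha(x)=0$. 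For the forward direction, extend a satisfying assignment of $S-Z$ by $s\mapsto 1$ and $t,t',t''\mapsto 0$: this satisfies $S_\alpha-Z$ and is nonzero at $s$, so by Lemma~\ref{lem:cycles-and-assignments} no non-identity cycle can lie in the connected component of $s$ (such a cycle would pin its whole homogeneous component, and hence $s$, to $0$), i.e.\ that component is balanced. Conversely, if the component of $s$ in $G_\alpha-Z$ is balanced then it is flexible and homogeneous (as $S_\alpha$ is), so Corollary~\ref{cor:flexible-propagate} gives a satisfying assignment with $s\mapsto 1$; completing it with $0$ on the remaining (homogeneous) components of $S_\alpha-Z$ yields an assignment that, along the edges $x=s\,\alpha(x)$ and $x=t$ (with $t=0$ as above), forces exactly $\alpha$ on $V(X)$ and whose restriction to $V(S)$ witnesses consistency of $S-Z$.

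Putting the pieces together, for each of the $\bigoh^*(p^{k})$ assignments $\alpha$ we run the algorithm of Proposition~\ref{pro:rbgce} on $((G_\alpha,\B_\alpha),w,s,k)$ in $\bigoh^*(2^{k})$ time; $(S,w_S,X,k)$ is a yes-instance of $\DML(\F_p)$ precisely when some branch succeeds, so $\DML(\F_p)$ --- and, by Lemma~\ref{lem:field-dml}, also $\minlin{2}{\F_p}$ --- is solvable in $\bigoh^*((2p)^{k})$ time. I expect the main obstacle to be the correctness of the restriction gadget: one must check that the triangle pins $t$ (and the $\alpha$-zero variables of $V(X)$) to $0$ without also pinning $s$, and that ``the component of $s$ is balanced'' is the exact translation of ``$\alpha$ extends to a satisfying assignment of $S-Z$''; the $p^{k}$ bound on the relevant assignments of $X$ and the running-time accounting are comparatively routine.
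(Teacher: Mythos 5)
Your proposal is correct and follows essentially the same route as the paper's own proof: iterative compression via equation subdivision (Lemma~\ref{lem:field-dml}), enumeration of the satisfying assignments of $X$, and for each such $\alpha$ a reduction of the residual problem to \RBGCE{} on the restriction gadget $(G_\alpha,\B_\alpha)$ solved via Proposition~\ref{pro:rbgce}. The one mild deviation is cosmetic: where the paper splits the forward direction into two cases depending on whether the offending non-identity cycle passes through $s$, you handle both uniformly by observing that any non-identity cycle in the homogeneous system $S_\alpha-Z$ pins its whole connected component (hence $s$) to $0$ via Lemma~\ref{lem:cycles-and-assignments}, which is a perfectly valid simplification.
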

\begin{proof}
By equation subdivision and Lemma~\ref{lem:field-dml}, 
we can focus on $\DML(\F_p)$.
Let $(S, w_S, k, X)$ be an instance of this problem.
Pick one variable from each equation in $X$ and 
place them into a set $U$.
Note that $\abs{U} \leq \abs{X} \leq k + 1$.
Enumerate assignments $\alpha : U \rightarrow \F_p$.
For each $\alpha$, propagate the values from the variables in
$U$ to $V(X) \setminus U$ according to the equations of $X$.
If no conflict arises, i.e. if $\alpha$ satisfies $X$, 
then construct the restriction $S_\alpha$ of $S$ to $\alpha$.
Recall that $G_\alpha$ is the primal graph of $S_\alpha$.
In the following we identify the edges of $G_\alpha$ 
and the equations of $S_\alpha$.

\begin{claim} \label{claim:direction-one}
  Suppose there exists $Z \subseteq S_\alpha$ such that
  $\sum_{e \in Z} w_S(e) \leq k$, 
  and an assignment $\varphi$ that satisfies
  $S_\alpha - Z$ and sets $\varphi(s) = 1$ and $\varphi(t) = 0$.
  Then $(G_\alpha, \B_\alpha, s, k)$ is a yes-instance of \RBGCE.
\end{claim}
\begin{claimproof}
    We claim that $Z$ is a solution for $(G_\alpha, \B_\alpha, s, k)$. Suppose for a contradiction
    that this is not the case, i.e. there is a non-identity cycle $C \notin \B_{\alpha}$ that is reachable from $s$ in $G_\alpha-Z$.
    There are two cases. If $s \notin V(C)$, then since $S_\alpha$ is homogeneous,
    $C$ is only satisfied by the all-zero 
    assignment. But $\varphi(s)=1$ and the nonzero value
    propagates to $C$, which contradicts $S_\alpha-Z$ being satisfied by $\varphi$.
    Otherwise, let $P$ be the path resulting from deleting $s$ from $C$. Let $x_1$ and $x_2$ be the endpoints of $P$.
    Let $\beta$ be the result of multiplying the edge labels of $P$ from $x_1$ to $x_2$. Since $C$ is non-identity, $\alpha(x_1) \cdot \beta \neq \alpha(x_2)$.
    Then, $P$ is a path in $S_\alpha-Z$ incompatible with setting $\varphi(x_1)=\alpha(x_1)$ and $\varphi(x_2)=\alpha(x_2)$, which is again a contradiction.
\end{claimproof}

\begin{claim} \label{claim:direction-two}
  Suppose there exists a subset $Z$ of edges of $G_\alpha$ such that 
  the connected component of the vertex $s$ in $G_\alpha - Z$ is balanced.
  Then $S_\alpha - Z$ admits a satisfying assignment $\varphi$
  such that $\varphi(s) = 1$ and $\varphi(t) = 0$.
\end{claim}
\begin{claimproof}
  Let $H$ be the connected component of $s$ in $G_\alpha-Z$. Since edges connecting $t,t',t''$ 
  form a high-weight non-identity cycle, $t \notin V(H)$.
  Since every cycle of $H$ is identity, $H$ viewed as a subset of $S_\alpha$ is flexible. 
  Then in particular there is a satisfying
  assignment $\varphi$ to $H$ setting $\varphi(s)=1$ by Corollary~\ref{cor:flexible-propagate}. 
  For every variable $v$ not in $H$
  we can safely set $\varphi(v)=0$ since $S_\alpha$ is homogeneous.
  Then $\varphi$ satisfies $S_\alpha-Z$ and sets $\varphi(s) = 1$ and $\varphi(t) = 0$.
\end{claimproof}

Together, Claims~\ref{claim:direction-one}~and~\ref{claim:direction-two}
imply that the assignment $\alpha$ can be extended to $S$ so that it
leaves equations of total weight at most $k$ unsatisfied if and only if
$(G_\alpha, \B_\alpha, s, k)$ is a yes-instance of \RBGCE.
Note that the algorithm considers all satisfying assignments to $X$
so by exhaustion $(S, w_S, k, X)$ is a yes-instance
if and only if the algorithm finds a suitable assignment $\alpha$.
There are $p^{k+1}$ candidates for $\alpha$.
Computing the instance $(G_\alpha, \B_\alpha, s, k)$
requires polynomial time and \RBGCE 
can be solved in $\bigoh^*(2^k)$ time by 
Proposition~\ref{pro:rbgce}
so the total running time is $\bigoh^*((2p)^k)$.
\end{proof}

\section{Hardness Results}
\label{sec:hardness}

Let $\D=(D;+,\cdot)$ be a commutative ring.
The reduction from \textsc{Multicut} presented in the introduction
shows that $\minlin{r}{\D}$ is NP-hard (for $r \geq 2$), and it rules out the possibility that our fixed-parameter tractable algorithms can be improved to
polynomial-time algorithms.
In Section~\ref{sec:w-hardness}, we show $\Weft[1]$-hardness for $r \geq 3$ whenever
$(D;+)$ is an abelian group with at least
two elements. This result consequently
covers all (commutative and non-commutative) rings except the trivial
zero ring.
We continue in Section~\ref{sec:non-integral} by studying $\minlin{2}{\D}$ for commutative rings $\D$ that contain 
a zero divisor (i.e. an element $\alpha \neq 0$ such that there exists an
element
$\beta \neq 0$ and $\alpha \cdot \beta = 0$). We show that  $\minlin{2}{\D}$ is \W{1}-hard
for many such structures.
We note that hardness results for certain special cases have appeared
earlier in the literature---for instance, Crowston et al.~\cite{crowston2013parameterized} prove \W{1}-hardness for $\minlin{3}{\F_2}$.

For proving the hardness results, we use \emph{parameterized reductions} (or fpt-reductions).
Consider two parameterized problems $L_1, L_2 \subseteq \Sigma^* \times \naturals$.
A mapping $P: \Sigma^* \times \naturals \rightarrow \Sigma^* \times \naturals$
is a parameterized reduction from $L_1$ to $L_2$ if

\begin{enumerate}[(1)]
  \item $(x, k) \in  L_1$ if and only if $P((x, k)) \in L_2$, 
  \item the mapping can be computed in $f(k) \cdot n^{O(1)}$ time for some computable function $f$, and 
  \item there is a computable function $g : \naturals \rightarrow \naturals$  
  such that for all $(x,k) \in \Sigma^* \times \naturals$, if $(x', k') = P((x, k))$, then $k' \leq g(k)$.
\end{enumerate}

The class $\Weft[1]$ contains all problems that are fpt-reducible to \textsc{Independent Set} 
parameterized by the solution size, i.e. the number of vertices in the independent set.
Showing $\Weft[1]$-hardness (by an fpt-reduction) for a problem rules out the existence of an fpt
algorithm under the standard assumption that $\FPT \neq \Weft[1]$.

\subsection{Three Variables per Equation}
\label{sec:w-hardness}

Let
$\G=(D;+)$ denote an arbitrary abelian group.
An expression $x_1+\dots+x_r=c$ is an {\em equation over} $\G$ if
$c \in D$ and $x_1,\dots,x_r$ are either variables or inverted variables with domain $D$.
We say that it is an {\em $r$-variable equation} if it contains at most $r$
distinct variables. We consider the following group-based variant of
the $\minlin{r}{\D}$ problem.

\pbDefP{$\minlin{r}{\G}$}
{A system $S$ of equations over $\G$, a weight function 
$w : S \rightarrow \naturals^+$, and an integer $k$.}
{$k$.}
{Is there a set $Z \subseteq S$ such that $S - Z$ is consistent and $w(Z) \leq k$?}

The crux of the proof is essentially the same as the \W{1}-hardness proof
for \textsc{Odd Set} presented in Theorem 13.31 of~\cite[Section 13.6.3]{cygan2015book},
however many details are different. 
The reduction is based on the following \W{1}-hard 
problem~\cite[Lemma 1]{Fellows:etal:tcs2009}.

\pbDefP{Multicoloured Clique}
{A graph $G = (V,E)$ with vertices partitioned into $k$ colour classes $V_1, \dots, V_k$.}
{$k$.}
{Does $G$ contain a clique with exactly 
one vertex from each $V_i$, $1 \leq i \leq k$?}

\begin{theorem} \label{thm:min3lin-is-w1hard}
Let $\G=(D;+)$ denote a group with at least two elements. Then,
$\minlin{r}{\G}$ is \W{1}-hard for any $r \geq 3$
even if all equations have weight 1.
\end{theorem}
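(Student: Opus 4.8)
The plan is to give a parameterized reduction from \textsc{Multicoloured Clique} to $\minlin{3}{\G}$, mirroring the classical \W{1}-hardness proof for \textsc{Odd Set} but adapting it to work over an arbitrary group with at least two elements. Fix two distinct group elements, which we may take to be the identity $0$ and some $g \neq 0$; the equations we build will only ever use the constant $0$ on the right-hand side together with variables and their inverses, so the construction is genuinely group-theoretic. Given an instance $(G, V_1, \dots, V_k)$ of \textsc{Multicoloured Clique}, the new parameter will be bounded by a function of $k$ (roughly $\bigoh(k^2)$), so condition (3) of an fpt-reduction is met; the construction is clearly polynomial-time, giving (2).

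The heart of the construction is a \emph{selection gadget} for each colour class $V_i$ and each pair of colour classes $\{V_i, V_j\}$. For a colour class $V_i$, introduce a variable $x_v$ for every $v \in V_i$, all sharing a common ``bus'' structure, together with equations of weight $1$ that force, in any solution deleting at most $k'$ equations, exactly one vertex variable per class to be ``activated'' (set to $g$, say, with the rest set to $0$). This is the step where the group structure must be handled carefully: over $\F_2$ one uses $x_1 + \dots + x_m = 1$ directly, but over a general group the standard trick is to lay out the class as a path or cycle of equations of the form $x_a - x_b = 0$ (equality constraints) plus one ``offset'' equation, so that breaking exactly one equation of the cycle corresponds to choosing where the ``jump'' from $0$ to $g$ happens — i.e. to choosing the selected vertex. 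One must verify that any cheaper deletion pattern either fails to make the system consistent or can be rerouted to a canonical single-deletion pattern, so that the budget exactly counts selections. Then, for each pair $\{i,j\}$, add edge-encoding equations (again of weight $1$, each a $3$-variable equation relating the selector variables of $V_i$, of $V_j$, and an auxiliary consistency variable) that are simultaneously satisfiable precisely when the vertex chosen from $V_i$ and the vertex chosen from $V_j$ are adjacent in $G$. Non-edges must translate into an unavoidable extra inconsistency, forcing the budget up; the cleanest way is to give non-adjacency-detecting equations weight $1$ and show that if a non-edge is selected then at least one additional equation must be deleted beyond the selection budget.

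Concretely, I would set the deletion budget $k' = k + \binom{k}{2}$ (or a similar exact count): $k$ deletions to make the $k$ vertex-selection gadgets consistent, and $\binom{k}{2}$ deletions for the pair gadgets, and argue that $S - Z$ can be made consistent with $w(Z) \le k'$ if and only if the selections can be made pairwise compatible, i.e. $G$ has a multicoloured clique. The forward direction takes a clique, activates the corresponding variables, deletes the canonical single equation per gadget, and checks consistency by propagation along the (now acyclic, or flexible) remaining system. The reverse direction is the delicate one: given a solution $Z$ of weight at most $k'$, one shows by a counting / exchange argument that $Z$ must spend exactly one deletion on each selection gadget and exactly one on each pair gadget, that this forces a well-defined vertex choice per class, and that consistency of the surviving pair gadgets certifies adjacency.

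The main obstacle I anticipate is precisely the selection gadget over an arbitrary (possibly infinite, possibly non-abelian-in-spirit but here abelian) group: one cannot rely on the $\F_2$ identity trick, and one must design a cycle-of-equalities-plus-offset gadget whose minimum-cost repair is in bijection with ``pick one vertex'', while ensuring the gadget interacts correctly with the $3$-variable edge equations without blowing up the number of variables per equation beyond $3$ or the budget beyond $g(k)$. A secondary subtlety is making the ``non-edge costs extra'' argument airtight: one has to rule out clever global deletion patterns that repair a non-edge inconsistency by deleting equations in a shared part of the construction, which typically requires keeping the gadgets edge-disjoint in the primal graph and arguing locally. Once the gadget is pinned down, the rest is the standard bookkeeping of the \textsc{Odd Set}-style reduction, and since only the constant $0$ and variable inverses appear, the result holds for every group with at least two elements and hence (by the earlier remark) for $\minlin{r}{\D}$ over every nontrivial ring and for every $r \ge 3$.
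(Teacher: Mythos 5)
Your high-level plan (reduce from \textsc{Multicoloured Clique}, budget $k' = k + \binom{k}{2}$, one deletion per colour class and one per pair) matches the paper's. But the core technical content is missing, and the one gadget you do sketch does not have the semantics you need.

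Concretely, your selection gadget — a cycle of equality constraints $x_a - x_b = 0$ among the vertex variables of $V_i$, plus one ``offset'' equation — does not select a vertex. Deleting one equation of such a cycle produces a partition of $V_i$ into two arcs, with all variables on one arc equal to $0$ and all on the other equal to $g$; there is no single distinguished ``activated'' vertex, and the assignment in which every $x_v$ equals $0$ (delete the offset) or every $x_v$ equals $g$ also survives. You explicitly acknowledge this gadget is the ``main obstacle'' and leave it unresolved, so the proof has a genuine hole at exactly the point where the group structure matters. The same is true of the pair gadgets: you assert that there exist $3$-variable, weight-$1$ equations ``simultaneously satisfiable precisely when the two chosen vertices are adjacent'' but never exhibit them, and it is not at all clear how to couple a partition-point gadget on $V_i$ with one on $V_j$ using only $3$-variable equations.

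The paper's proof works because of two devices you don't have. First, it does not try to encode selection with unit-weight $2$-variable equations at all: it reduces to an intermediate problem with one long equation $\sum_{v\in V_i} x_v = 1$ (and analogues for edges and for vertex–edge incidence) of weight $k+1$, alongside unit-weight equations $x_v = 0$ and $y_e = 0$. Over any group, satisfying the long equation requires at least one $x_v \neq 0$, and the tight budget forces at most one deletion per class, so exactly one $x_v = 0$ is deleted — no clever gadget needed. Second, it separately reduces that intermediate instance to arity $3$ and unit weights: weight-$(k+1)$ equations are simulated by $k+2$ copies chained with equalities (making it strictly cheaper to satisfy than to delete), and a long equation $\sum_i v_i = 1$ is replaced by a telescoping chain $a_{i} + v_{i+1} - a_{i+1} = 0$ through fresh auxiliary variables so that violating the original equation costs exactly one unit. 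Your sketch has neither the high-weight trick nor the auxiliary telescoping chain, and without them the claimed bijection between minimum-cost repairs and clique selections cannot be established.
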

\begin{proof}

The reduction is presented in two steps:
given an arbitrary instance $(G,k,(V_1,\dots,V_k))$ of {\sc Multicoloured Clique}, we first
compute an instance $(S,w,k')$ of $\minlin{s}{\G}$ where $s = |V(G)|+|E(G)|$, and
then we transform this instance into an instance of  $\minlin{3}{\G}$ with unit weights.
We let $0$ denote the identity element in $\G$ and let $1$ be any non-identity element.

\smallskip

\noindent
{\em Step 1.}
Consider the arbitrarily chosen instance $(G,k,(V_1,\dots,V_k))$ of {\sc Multicoloured Clique}.
We will now reduce it to an instance of $\minlin{s}{\G}$.
We let $E_{ij}$ denote the set of edges in $E(G)$ with one endpoint in $V_i$ and another in $V_j$, and
we let $E_{ijv}$ be the subset of $E_{ij}$ containing all edges incident to a vertex $v$.
We define an instance $(S,w,k')$ of $\minlin{s}{\G}$ as follows.
Introduce variables $x_v$ for all $v \in V(G)$ and $y_e$ for all $e \in E(G)$.
Set the parameter $k' = k + \binom{k}{2}$.
Let $S$ contain the following equations:
\begin{enumerate}[(1)]
  \item \label{case:31} $x_v = 0$ for all $v \in V(G)$.
  \item \label{case:32} $y_e = 0$ for all $e \in E(G)$.
  \item \label{case:33} $\sum_{v \in V_i} x_v = 1$ for all $1 \leq i \leq k$.
  \item \label{case:34} $\sum_{e \in E_{ij}} y_e = 1$ for all $1 \leq i < j \leq k$.
  \item \label{case:35} $\sum_{u \in V_j \setminus \{v\}} x_u + \sum_{e \in E_{ijv}} y_e = 1$
  for all $v \in V(G)$.
\end{enumerate}
The equations in~\eqref{case:33}--\eqref{case:35} are assigned weight $k+1$, while all others
are given unit weight.
Thus, only equations in~\eqref{case:31}~and~\eqref{case:32} may appear in a solution to $(S,w,k')$.
Observe that the equations in~\eqref{case:31}--\eqref{case:34} imply that
exactly one variable in $\{x_v \; | \; v \in V_i\}$ for each $1 \leq i \leq k$ and
exactly one variable in $\{y_e \; | \; e \in E_{ij}\}$ for each pair $1 \leq i < j \leq k$
may be set to $1$ since the budget $k + \binom{k}{2}$ is tight.

Now consider the equations in~\eqref{case:35}.
Intuitively, for any variable $v \in V_i$, 
the corresponding equation implies that either $x_v$ is set to $0$
or at least one $y_e$ for an edge $e$ incident to $v$ is set to $1$.
Formally, let $\varphi$ be an assignment that does not satisfy 
$k'$ constraints in $S$.
If $\varphi(x_v) = 1$, then $\varphi(x_u) = 0$ for all $u \in V_i \setminus \{v\}$.
Hence, $\sum_{u \in V_j \setminus \{v\}} \varphi(x_u) = 0$
and $\varphi(y_{\{v,w\}}) = 1$ for some $w \in V_j$.
Moreover, $\varphi(y_e) = 0$ for all edges $e \in E_{ij} \setminus \{v,w\}$.
In the equation for $w$ in~\eqref{case:35} we have
$\sum_{e \in E_{ijw}} y_e = 1$.
Hence, $\varphi(u) = 0$ for all $u \in {V_j \setminus \{w\}}$, and $\varphi(w) = 1$.
On the other hand, if $\varphi(x_v) = 0$, then
there is exactly one $u \in V_i \setminus \{v\}$ such that $\varphi(x_u) = 1$ so
$\varphi(y_e) = 0$ for all edges $e \in E(G)$ incident to $v$.
We conclude that the reduction is correct
and it can clearly be
carried out in polynomial time.

\smallskip

\noindent
{\em Step 2.}
We continue by transforming the instance $(S,w,k')$ into an instance
of $\minlin{3}{\G}$ with unit weights.
Consider an equation $\sum_{i=1}^{r} v_i = 1$ in $S$.
We first show how to make it
undeletable without assigning it the weight $k+1$.
To this end, introduce variables $v_i^{(j)}$ 
for all $1 \leq i \leq r$ and $1 \leq j \leq k+2$.
Create a system of equations $S'$ by adding equations
$\sum_{i=1}^{r} v_i^{(j)} = 1$ for all $j$
and equations $v_i^{(j)} - v_i^{(j')} = 0$ for all $i$ and all $j < j'$.
We claim that any assignment that does not set $v_i^{(1)},\dots,v_i^{(k+2)}$
to the same value
does not satisfy at least $k+1$ constraints.
Suppose an assignment sets $\ell$ copies of the variable
to one value and $k + 2 - \ell$ remaining copies to another.
Then at least $(k + 2 - \ell) \ell$ equations are not satisfied by the assignment.
For $1 \leq \ell \leq k$, this quantity is minimized by $\ell = 1$ and it equals $k + 1$.
Thus, any assignment that does not satisfy at most 
$k$ constraints also satisfies $S'$.

Finally, we show how to reduce 
the number of variables in each equation 
to at most three.
Again, consider an equation of the form $\sum_{i=1}^{r} v_i = 1$.
Introduce auxiliary variables $a_i$ for 
$i \in \range{1}{r}$ and replace
the equation with the following system:
\begin{equation*}
  \begin{cases}
    v_1 + (-a_i) = 0, & \\
    a_i + v_{i+1} + (-a_{i+1}) = 0 & \text{for } i \in \range{1}{r-1} \\
    a_{r} + v_{r} = 1 &
  \end{cases}
\end{equation*}
where each equations is given weight 1.
Observe that the sum of all equations above telescopes and
the auxiliary variables cancel out,
leaving exactly the equation $\sum_{i=1}^{r} v_i = 1$.
Hence, an assignment that satisfies all equations
in the system also satisfies the original equation.
Moreover, any assignment $\varphi$ 
that does not satisfy the original equation 
can be extended to the auxiliary variables to satisfy all but one equation by setting
$\varphi(v_1) = \varphi(a_1)$ and
$\varphi(v_{i+1}) = \varphi(a_i) + \varphi(v_{i})$
for all $i \in \range{1}{r-1}$.
Hence, replacing every long equation
in this way reduces the initial instance to an instance of
$\minlin{3}{\G}$ with unit weights.
\end{proof} 

\subsection{Rings with Zero Divisors}
\label{sec:non-integral}

Recall that a Euclidean domain cannot contain a zero divisor.
Next, we give examples of commutative rings $\K$ with zero divisors such that
\minlin{2}{\K} is \W{1}-hard. Our starting point is the
following problem, which has previously been used as a source of
\W{1}-hardness for \textsc{MinCSP} problems~\cite{KimKPW21flow,marx2009constant}.

\pbDefP{Paired Min Cut}{A graph $G$, vertices $s, t \in V(G)$,
and an integer $k$, where the $st$-max flow in $G$ is $2k$;
a set of disjoint edge pairs $C \subseteq \binom{E(G)}{2}$}
{$k$}
{Is there an $st$-mincut $X \subseteq E$ 
which is the union of $k$ pairs from $C$?} 

We will consider a restricted variant of {\sc Paired Min Cut} in the following.
We say that an instance of \textsc{Paired Min Cut} is \emph{split} if the
following statements hold.
\begin{enumerate}
  \item There are two induced subgraphs $G_1=G[U_1]$ and $G_2=G[U_2]$ of $G$
  such that $U_1 \cup U_2 = V(G)$, $U_1 \cap U_2 = \{s,t\}$
  and $G - \{s,t\}$ is the disjoint union of $G_1 - \{s,t\}$ and $G_2 - \{s,t\}$
  \item For every pair $\{e_1,e_2\} \in C$, 
  one edge lies in $G_1$ and the other lies in $G_2$
\end{enumerate}

\begin{lemma} \label{lm:graph-cut-split}
  \textsc{Paired Min Cut} is \W{1}-hard, even for split instances.
\end{lemma}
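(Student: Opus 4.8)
The plan is to give a parameterized reduction from \textsc{Multicoloured Clique} to split instances of \textsc{Paired Min Cut}, mirroring the known \W{1}-hardness proof for unrestricted \textsc{Paired Min Cut} but taking care to route the two "halves" of each edge pair through two vertex-disjoint sides of the graph. Given an instance $(G,k,(V_1,\dots,V_k))$ of \textsc{Multicoloured Clique}, I would build a flow network with source $s$ and sink $t$ in which a set of $2\binom{k}{2}$ parallel "lanes" must be cut, one lane for each ordered pair of colour classes. For colours $i<j$, the lane associated with $(i,j)$ on side $G_1$ encodes the choice of the endpoint in $V_i$, and the lane on side $G_2$ encodes the endpoint in $V_j$; pairing edges across the two sides forces the two choices to be consistent with a common edge of $G$. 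The standard gadget is a layered construction where a single cut edge in lane $\ell$ corresponds to picking one vertex/edge, and the pairing constraint $C$ links "pick $v\in V_i$ in lane $(i,j)$" with "pick $v$'s incident edges in the complementary lane", exactly as in the textbook \textsc{Odd Set}/\textsc{Paired Min Cut} reductions. I would set the budget so that $k'=\binom{k}{2}$ pairs (equivalently a cut of $2k'$ edges, with max-flow $2k'$, after adjusting the normalisation so that the "$k$" of \textsc{Paired Min Cut} equals $\binom{k}{2}$) must be chosen, and consistency across lanes forces these choices to describe a clique.

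Concretely, the first step is to recall/adapt the existing reduction to \textsc{Paired Min Cut} (as used in~\cite{KimKPW21flow,marx2009constant}) so that $G-\{s,t\}$ naturally decomposes into two parts: one part carrying "vertex-selection" gadgets and the other carrying "edge-selection" gadgets, with every pair in $C$ straddling the two parts. The second step is to verify the split conditions: define $U_1$ to be $\{s,t\}$ together with all internal vertices of the vertex-side gadgets, $U_2$ to be $\{s,t\}$ together with all internal vertices of the edge-side gadgets, check $U_1\cup U_2=V(G)$, $U_1\cap U_2=\{s,t\}$, and that no edge of $G$ runs between $U_1\setminus\{s,t\}$ and $U_2\setminus\{s,t\}$ — this is automatic if every gadget's internal wiring stays within its own side and only the terminals $s,t$ are shared. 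The third step is to confirm that each pair $\{e_1,e_2\}\in C$ has exactly one edge in $G_1$ and one in $G_2$; by construction each pair links a vertex-gadget edge to an edge-gadget edge, so this holds. Finally I would prove correctness: a clique with one vertex per colour yields, for each lane, a canonical cut edge, and these $2\binom{k}{2}$ edges form $\binom{k}{2}$ valid pairs whose removal is an $st$-mincut; conversely any $st$-mincut that is a union of $\binom{k}{2}$ pairs from $C$ must, by the flow bottlenecks, cut exactly one edge per lane, and the pairing constraints force the selected vertices to be pairwise adjacent, hence a multicoloured clique.

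The main obstacle I expect is engineering the network so that \emph{three} properties hold simultaneously: (a) the $st$-max flow is exactly $2k'$ so that every mincut has size $2k'$ and no "slack" edges can be wasted, (b) the only way to realise a size-$2k'$ cut as a union of pairs from $C$ is to pick one edge from each of the $2k'$ lanes in the canonical way (ruling out "diagonal" cuts that mix lanes), and (c) the split structure is respected, i.e. the gadget enforcing "endpoint $v$ chosen on side $1$ $\Leftrightarrow$ edge incident to $v$ chosen on side $2$" can be implemented using only pair constraints in $C$ plus internal edges that stay within their own side. In the non-split version one can freely route edges between gadgets; here I must instead push all cross-side coordination into the pair set $C$, which may require duplicating edges (taking several parallel copies of a would-be cross edge, half on each side, all pairs themselves) so that cutting them is forced by budget rather than by an explicit $s$--$t$ path crossing sides. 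Getting the capacities and multiplicities right so that (a)--(c) all hold is the delicate bookkeeping part; once the gadget is fixed, verifying the split conditions and the clique $\leftrightarrow$ mincut correspondence should be routine.
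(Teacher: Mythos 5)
Your plan is structurally different from the paper's, and it also has a genuine gap. The paper does not go back to \textsc{Multicoloured Clique} at all: it gives a short self-reduction from (unrestricted) \textsc{Paired Min Cut} to its split variant. Starting from an arbitrary instance $(G,s,t,k,C)$, one takes two disjoint copies $G_1,G_2$ of $G$, subdivides every edge once (so $e=\{u,v\}$ becomes $e'=\{u,x_e\}$, $e''=\{x_e,v\}$), identifies the two copies of $s$ and of $t$, and replaces each pair $p=\{e,f\}\in C$ by the four pairs $\{e_1,e_2\},\{e_1',f_2\},\{f_1,e_2'\},\{f_1',f_2'\}$ with new budget $k'=4k$. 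The split condition is immediate (the two sides share only $s$ and $t$, and every new pair straddles them), and the argument that any paired min cut of the new instance selects, on each side, the \emph{same} set of $2k$ original edges follows from the max-flow count and the way the four pairs interlock. Nothing about the inner structure of the W[1]-hardness gadget needs to be revisited.

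By contrast, what you have written is a plan, not a proof. You describe the intended shape of a direct reduction from \textsc{Multicoloured Clique} -- ``lanes'' per colour pair, one side for vertex choice and one for edge choice, pairs in $C$ doing the cross-side synchronisation -- but you do not actually define the graph, the edges, the pairs, or the capacities, and you explicitly flag the hard parts (properties (a)--(c)) as still open. Those are exactly the parts that would constitute the proof. In particular, (b) (that every min cut realisable as a union of pairs is ``canonical'', one edge per lane on each side) and (c) (that all synchronisation can be carried solely by pair constraints and never by $s$--$t$ connectivity crossing sides) are nontrivial and are precisely where a careless gadget would fail; your own discussion of ``diagonal cuts'' and of needing to ``duplicate edges'' hints at this without resolving it. The paper's self-reduction avoids this entirely because it never has to re-derive the hardness gadget: it inherits it wholesale from the known hardness of unrestricted \textsc{Paired Min Cut} and only has to enforce the split property, which is cheap to do by duplication and pairing. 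If you want to complete your line of argument, you would need to exhibit a concrete network and pair set and actually prove (a)--(c); as it stands, the proposal is a sketch of an approach that is plausibly completable but is not itself a correct proof.
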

\begin{proof}
  It is well known that \textsc{Paired Min Cut} is \W{1}-hard in its standard
  form~\cite{KimKPW21flow,marx2009constant}. We show that we can also
  impose the split property. Thus, let $I=(G, s, t, k, C)$ be an arbitrary
  instance of \textsc{Paired Min Cut}. We construct an instance
  $I'=(G',s,t,k',C')$ of \textsc{Paired Min Cut} where $I'$ is split and $k'=4k$.

  Create two graphs $G_1$ and $G_2$ on disjoint vertex sets, each a
  copy of $G$, and let $G'$ be their union. For every edge $e=\{u,v\}$ in $G'$,
  introduce a new vertex $x_e$ and the two edges $e'=\{u,x_e\}$ and $e''=\{x_e,v\}$.
  For an edge or vertex $z$ of~$G$ and $i \in \{1,2\}$, let $z_i$ denote the copy of $z$ in $G_i$.
  For every pair $p=\{e,f\}$ in $C$, place the four pairs
  \[
 \{e_1,e_2\}, \{e_1', f_2\}, \{f_1,e_2'\}, \{f_1', f_2'\}
  \]
  in $C'$ (thereby keeping the pairs in $C'$ disjoint). 
  Finally, identify $s_1$ with $s_2$ as $s$
  and $t_1$ with $t_2$ as $t$. This finishes the description of our
  output $I'=(G', s,t,k',C')$. Note that $G'$ is split, and that
  the $st$-max flow in $G'$ is $8k=2k'$.

We show that $I$ is a yes-instance if and only if $I'$ is a yes-instance.
  First, let $X \subseteq E(G)$ be a solution to $I$.
  Let $X'=\{e_1,e_1',e_2,e_2' \mid e \in X\}$.
  Then~$X'$ is the union of precisely four pairs for every pair in
  $X$, and it is clear that~$X'$ is an $st$-cut. 

  On the other hand, assume that $I'$ has a solution $X'=X_1' \cup X_2'$
  (where $X_i' \subseteq E(G_i)$, $i \in \{1, 2\}$).  We claim that $X_1'$
  and $X_2'$ represent the same edge set~$X$ in $G$.
  By assumption, $X'$ partitions into edge pairs, and since
  the $st$-max flow in $G$ is $2k'$, $X'$ must be an $st$-min cut.
  In particular, by the structure of the pairs, for every $e \in E(G)$, $X'$ contains $e$ if and only
  if it contains $e'$, and therefore also the other endpoint
  of the pair $p' \in C'$ containing the respective edge. 
  Hence for every edge $e$ represented in $X'$
  there must be a pair $\{e,f\} \in C$
  such that all four pairs $\{e_1,f_1\} \times \{e_2,f_2\}$
  are represented in $X'$. Hence
  \[
    X=\{e \in E(G) \mid \{e_1,e_1',e_2,e_2'\} \subseteq X'\}.
  \]
  defines a set of $2k$ edges in $G$, which partitions into pairs from $C$.
  Furthermore, $X$ is an $st$-cut, since $X_1'$ is an $st$-cut in $G_1$
  and $G_1$ was created as a copy of~$G$. 
\end{proof}

We now show a general \W{1}-hardness result for \minlin{2}{\K}.  
 
\begin{figure}
 \centering
  \begin{tikzpicture}
    \coordinate (u1) at (0,3);
    \coordinate (v1) at (4,3);
    \coordinate (u2) at (0,0);
    \coordinate (v2) at (4,0);
    \coordinate (xp) at (1,3/2);
    \coordinate (yp) at (3,3/2);

    \filldraw[black] (u1) circle (2pt) node[anchor=east]{$u_1$};
    \filldraw[black] (v1) circle (2pt) node[anchor=west]{$v_1$};
    \filldraw[black] (u2) circle (2pt) node[anchor=east]{$u_2$};
    \filldraw[black] (v2) circle (2pt) node[anchor=west]{$v_2$};
    \filldraw[black] (xp) circle (2pt) node[anchor=east]{$x_p$};
    \filldraw[black] (yp) circle (2pt) node[anchor=west]{$y_p$};

    \draw[dashed] (u1) -- (v1);
    \draw[dashed] (u2) -- (v2);
    \draw[thick] (u1) -- (xp) node[midway, left,  color=gray] {$\alpha_1$};
    \draw[thick] (u2) -- (xp) node[midway, left,  color=gray] {$\alpha_2$};
    \draw[thick] (xp) -- (yp) node[midway, above, color=gray] {$=$};
    \draw[thick] (yp) -- (v1) node[midway, right, color=gray] {$\alpha_1$};
    \draw[thick] (yp) -- (v2) node[midway, right, color=gray] {$\alpha_2$};
  \end{tikzpicture}
 \caption{System of equations obtained from a pair of edges $p = \{e_1, e_2\}$ 
 where $e_i = \{u_i, v_i\}$ in the fpt reduction from \textsc{Paired Min Cut}.
 Edges $e_1$ and $e_2$ are illustrated by dashed lines, while the equations
 are illustrated by solid lines with labels describing 
 equations between connected variables.}
 \label{fig:gadget}
\end{figure}
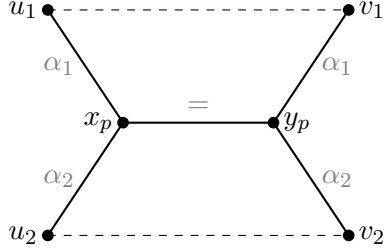
 
\begin{theorem}
\label{thm:non-integral-hard}
  Let $\K=(K;+, \cdot)$ be a commutative ring with additive neutral element 0.
  If there are elements $\alpha_1, \alpha_2 \in K$
  such that $\alpha_1^2 \neq 0$, $\alpha_2^2 \neq 0$,
  and $\alpha_1 \cdot \alpha_2 = 0$,
  then \minlin{2}{\K} is \W{1}-hard. 
 \end{theorem}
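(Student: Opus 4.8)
The plan is to give an fpt-reduction from split \textsc{Paired Min Cut} (which is \W{1}-hard by Lemma~\ref{lm:graph-cut-split}) to \minlin{2}{\K}, using the two zero-divisor elements $\alpha_1,\alpha_2$ to encode the edge pairs. First I would take a split instance $I=(G,s,t,k,C)$ with the two sides $G_1=G[U_1]$, $G_2=G[U_2]$, $U_1\cap U_2=\{s,t\}$, max flow $2k$, and every pair $p=\{e_1,e_2\}\in C$ split with $e_1\in E(G_1)$, $e_2\in E(G_2)$. The variable set of the \minlin{2}{\K} instance $S$ will contain one variable $z_v$ for every vertex $v\in V(G)$ together with gadget variables introduced per pair. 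For each pair $p=\{e_1,e_2\}$ with $e_i=\{u_i,v_i\}$ I introduce two fresh variables $x_p,y_p$ and the five equations depicted in Figure~\ref{fig:gadget}: $\alpha_1 z_{u_1}=\alpha_1 x_p$, $\alpha_2 z_{u_2}=\alpha_2 x_p$, $x_p=y_p$, $\alpha_1 y_p=\alpha_1 z_{v_1}$, $\alpha_2 y_p=\alpha_2 z_{v_2}$. Finally I force $z_s$ and $z_t$ apart by adding high-weight ($k+1$) equations, e.g. $z_s=z_s'$, $\alpha z_s'=0$-type anchoring so that in any consistent deletion of cost $\le k$ we have $z_s\ne z_t$ in a suitable sense; concretely I would add undeletable equations that pin $z_s$ to a value and $z_t$ to a different value, using that $K$ has at least two elements — or, more robustly, reuse the standard trick of adding a big-weight non-identity cycle through $s$ and $t$ so that any cheap solution must ``separate'' them. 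All five gadget equations of every pair are given weight $1$; the budget is $k'=5k$ (or whatever the gadget cost per separated pair turns out to be — I would tune this after writing the gadget). The dashed edges $e_1,e_2$ of $G$ are \emph{not} themselves equations; the only way to ``cut'' pair $p$ in $S$ is to delete some of its five gadget equations, and the split structure guarantees the gadgets for distinct pairs share only the vertex-variables $z_v$.

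The heart of the correctness argument is the algebra of the gadget. The key point is that $\alpha_1\alpha_2=0$ but $\alpha_i^2\ne 0$, so multiplication by $\alpha_1$ and by $\alpha_2$ are ``independent projections'': from $\alpha_1 a=\alpha_1 b$ one can conclude nothing about the $\alpha_2$-component, and vice versa. In the gadget, if all five equations survive then chaining gives $\alpha_1 z_{u_1}=\alpha_1 x_p=\alpha_1 y_p=\alpha_1 z_{v_1}$ and $\alpha_2 z_{u_2}=\alpha_2 y_p=\alpha_2 z_{v_2}$, i.e. the gadget ``propagates'' an $\alpha_1$-flow across $e_1$ and an $\alpha_2$-flow across $e_2$ simultaneously and cannot be split cheaply: I would show that to destroy both propagations one must delete at least two of the five equations, but — crucially — deleting the single equation $x_p=y_p$ does \emph{not} split the gadget in the needed way because $\alpha_1 z_{u_1}=\alpha_1 z_{v_1}$ etc. can still be forced through the other paths; the combinatorics here must be checked so that ``cut pair $p$'' corresponds to a fixed number $c$ of gadget-equation deletions and the budget $k'=ck$ is tight, forcing exactly $k$ pairs to be cut. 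Given an $st$-mincut $X=\bigcup_{i=1}^k p_i$ in $G$ (a union of $k$ pairs), I delete the corresponding $ck$ gadget equations; the surviving system is consistent because the remaining gadget propagations only connect $z$-variables within $s$-side or $t$-side components of $G-X$, and within each side one can set all $z_v$ to a common value satisfying the anchoring — here I invoke that $S$ becomes (essentially) flexible/acyclic after the cut, using Lemma~\ref{lem:edom-flexible}-style reasoning adapted to $\K$ (noting $\K$ need not be a domain, so I would argue consistency directly via the all-to-two-values assignment rather than citing the Euclidean-domain machinery).

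For the reverse direction I assume $S$ has a consistent subsystem after deleting weight $\le k'=ck$. The high-weight anchoring equations are undeletable, so $z_s$ and $z_t$ take prescribed distinct values in any satisfying assignment $\varphi$; hence $\varphi$ restricted to $\alpha_1 z_v$ and to $\alpha_2 z_v$ must change somewhere along every $s$–$t$ path in $G_1$ and in $G_2$ respectively. I would argue that for each pair $p$ the gadget is either ``fully alive'' — in which case it forces the combined $(\alpha_1,\alpha_2)$-propagation across both $e_1$ and $e_2$ — or costs $\ge c$ deletions to make ``passable''; a short case analysis on which of the five equations are deleted establishes this. Since the budget only allows $k$ such cuts and the $s$–$t$ flow in $G$ is $2k$ (so an $st$-cut needs exactly $2k$ edges, i.e. exactly $k$ pairs when restricted to unions of pairs), the set of cut pairs must be an $st$-mincut of $G$ that is a union of $k$ pairs from $C$. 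The main obstacle I anticipate is precisely calibrating the gadget so that ``cutting a pair'' has a uniform, forced cost and so that no cheaper ``mixed'' deletion (e.g. deleting one $\alpha_1$-equation from pair $p$ and one $\alpha_2$-equation from pair $q$) can sabotage the counting argument; this is why I route everything through $x_p,y_p$ and the single equality $x_p=y_p$, mirroring the four-pairs construction in Lemma~\ref{lm:graph-cut-split}, and why I would verify the gadget's behaviour by a finite case check over which subsets of its five equations are present. A secondary subtlety is handling $\K$'s possible lack of multiplicative identity or the possibility that $\alpha_1=\alpha_2$; the hypotheses $\alpha_1^2\ne0$, $\alpha_2^2\ne0$, $\alpha_1\alpha_2=0$ already force $\alpha_1\ne\alpha_2$ (else $\alpha_1^2=0$), so that case does not arise, and the anchoring can be done using $\alpha_1,\alpha_2$ themselves to avoid relying on a unit.
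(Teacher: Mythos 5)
Your high-level strategy is the one the paper uses (reduce from split \textsc{Paired Min Cut}, with one $x_p,y_p$ gadget per pair and the five equations in Figure~\ref{fig:gadget}), but there are two concrete gaps in how you set it up, both of which the paper's construction sidesteps.

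\paragraph{Weights and budget.} You propose giving \emph{all five} gadget equations weight $1$ and then ``tuning'' a budget $k'=ck$, conceding that the case analysis over which subsets of the five equations are deleted is unfinished. This is exactly where your plan is not yet a proof: with everything at weight $1$ you must rule out ``mixed'' deletions, argue a uniform per-pair cost, and ensure the surviving system cannot be rescued by deleting non-$x_p{=}y_p$ gadget equations. The paper avoids all of this. It gives the four $\alpha_i$-projection equations weight $k+1$ (undeletable), so the \emph{only} deletable equations in the whole instance are the $x_p=y_p$'s, each at weight $1$, with budget exactly $k$. Deleting $x_p=y_p$ simultaneously severs both halves of the pair, the budget forces exactly $k$ pairs, and since the $st$-max flow in $G$ is $2k$ the cut is an $st$-mincut that is a union of $k$ pairs. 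No tuning, no case analysis. You should adopt this asymmetric weight assignment; your ``$k'=5k$ or whatever it turns out to be'' version would need substantial extra work and may not be recoverable as stated.

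\paragraph{Anchoring.} Your plan to pin $z_s$ to ``a value'' and $z_t$ to ``a different value,'' appealing only to $|K|\ge 2$, is not sufficient. What you actually need is an anchoring such that the $\alpha_1$-image and the $\alpha_2$-image of the endpoints differ, i.e.\ $\alpha_1 z_s\ne\alpha_1 z_t$ \emph{and} $\alpha_2 z_s\ne\alpha_2 z_t$; merely $z_s\ne z_t$ gives you neither. In particular, your suggestion to use $\alpha_1,\alpha_2$ ``themselves'' fails on a single $s$: if $z_s=\alpha_1$ and $z_t=0$ then $\alpha_2 z_s=\alpha_2\alpha_1=0=\alpha_2 z_t$, so the $\alpha_2$-propagation through $G_2$ is never forced. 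The alternative of a ``big-weight non-identity cycle through $s$ and $t$'' is a Euclidean-domain device and does not give the two inequalities you need either. The paper resolves this neatly by splitting $s$ into $s_1$ (attached only to $G_1$) and $s_2$ (attached only to $G_2$), forcing $s_1=\alpha_1$, $s_2=\alpha_2$, and $t=0$; then $\alpha_i s_i=\alpha_i^2\ne 0=\alpha_i t$. If you insist on keeping a single $s$, the one anchoring value that would work is $z_s=\alpha_1+\alpha_2$ (since $\alpha_i(\alpha_1+\alpha_2)=\alpha_i^2\ne 0$ by the hypothesis $\alpha_1\alpha_2=0$), but you never write that down, and ``has at least two elements'' will not get you there.

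\paragraph{A smaller omission.} You never say what happens to the edges of $G$ that are \emph{not} in any pair of $C$. They also have to appear as weight-$(k+1)$ equality constraints $u=v$, otherwise the instance does not model $G$ at all and a ``cut'' in your system is unrelated to an $st$-cut in $G$. The paper states this up front.

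In short: same reduction target, same gadget shape, but the asymmetric weight assignment and the split of $s$ are the two load-bearing choices that make the paper's proof close, and your plan currently has neither.
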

\begin{proof}
  We reduce from an arbitrary split instance of \textsc{Paired Min Cut}. 
  Let $I=(G, s, t, k, C)$ be the input instance
  and let $G=G_1 \cup G_2=G[U_1] \cup G[U_2]$ form the split.
  Divide the source $s$ into two vertices $s_1$, $s_2$
  where $N_G(\{s_i\}) = N_G(\{s\}) \cap U_i$
  for $i\in\{1, 2\}$, but keep
  the sink $t$ intact. We compute an instance of \minlin{2}{\K} as follows.
  Introduce one variable for every vertex in the resulting graph
  and initially turn every edge $\{u,v\}$ into an equation $u=v$ of
  weight $k+1$.
  Force $s_1=\alpha_1$, $s_2=\alpha_2$ and $t=0$ by equations
  of weight $k+1$ each. Finally, for every pair $p \in C$,
  say $p=\{e_1,e_2\}$ where $e_i \subseteq U_i$ for $i\in\{1,2\}$, 
  do the following. Remove the equations corresponding to the
  edges in $p$. Create two new variables $x_p$, $y_p$.
  Let $e_1=\{u_1,v_1\}$ and $e_2=\{u_2,v_2\}$.
  Create equations
  \begin{align*}
      \alpha_1 \cdot u_1 &= \alpha_1 \cdot x_p, \\
      \alpha_2 \cdot u_2 &= \alpha_2 \cdot x_p, \\
      \alpha_1 \cdot y_p &= \alpha_1 \cdot v_1, \\
      \alpha_2 \cdot y_p &= \alpha_2 \cdot v_2  
  \end{align*}
  of weight $k+1$ each,
  and an equation $x_p = y_p$ of weight 1. 
  See Figure~\ref{fig:gadget} for an illustration.
  Perform this for every
  pair in $C$. Let $S$ be the resulting set of equations.
  We claim that $S$ has a solution of cost at most $k$ if and only if
  $I$ is a yes-instance.

  On the one hand, let $X \subseteq E(G)$
  be the union of $k$ pairs of edges,
  and let $X'=\{x_p=y_p \mid p \in C \; {\rm and} \; p \subseteq X\}$.
  We claim that $S - X'$
  is satisfiable. For a vertex $v \in U_i$, set $v=\alpha_i$ if $v$ is reachable
  from $s$ in $G-X$, and $v=0$ otherwise. Consider a pair
  of vertices $x_p, y_p$ for $p \in C$, and suppose that the
  assignment above cannot be consistently extended to $x_p$
  and $y_p$. 
  Then this implies that there is an edge
  $e_i=\{u,v\} \in p$ such that $u, v \in U_i$ and $\alpha_i \cdot u \neq \alpha_i \cdot v$.
  Since the value assigned to $u$ and $v$ is 
  either $0$ or $\alpha_i$,
  we have $u \neq v$.
  This implies that $e_i$ crosses the cut in $G-X$, contradicting our
  assumption that $e_i \notin X$. Hence, $S - X'$ is
  satisfiable. 
    
  On the other hand, suppose that there is a solution where equations
  of cost at most $k$ are not satisfied, and let $X'$ be the set of these
  equations. Then clearly every equation in $X'$ is
  of the form $x_p=y_p$ for some pair $p \in C$. Let $X \subseteq C$
  be the union of edges participating in these pairs. 
  We claim that $X$ is an $st$-cut. Assume to the contrary that $G-X$
  contains a path $P$ from $s$ to $t$. Then that path corresponds 
  to a chain of equations in $S$, from $s_i$ to $t$ ($i \in \{1,2\}$),
  where every edge $\{u,v\}$ of the path corresponds to either
  an equation $u=v$ or a chain of equations $\alpha_i u = \alpha_i x$, 
  $x=y$, $\alpha_i y = \alpha_i v$, where every equation in the chain is satisfied.
  Since $\alpha_i^2 \neq 0$, we have $\alpha_i \cdot s_i \neq \alpha_i \cdot 0$ so 
  every variable on the path is assigned
  a non-zero value.
  This contradicts that $S - X'$ is
  satisfiable since $t = 0$.
\end{proof}
 
 We illustrate Theorem~\ref{thm:non-integral-hard} with an
 example.
 The {\em direct product} of two rings 
 $\K_1=(K_1;+_1,\cdot_1)$ and $\K_2=(K_2;+_2,\cdot_2)$ 
 is denoted $\K_1 \times \K_2 = (K; +, \cdot)$.
 Its domain $R$ consists of the ordered pairs 
 $\{(d_1,d_2) \mid d_1 \in K_1, d_2 \in K_2\}$ 
 and the operations are defined coordinate-wise:
 $(d_1,d_2)+(d'_1,d'_2)=(d_1+_1 d'_1,d_2+_2 d'_2)$ and
 $(d_1,d_2) \cdot (d'_1,d'_2)=(d_1 \cdot_1 d'_1,d_2 \cdot_2 d'_2)$.
 We claim that whenever $\K = \K_1 \times \K_2$
 and $\K_1, \K_2$ are commutative rings that are not zero rings, then
 \minlin{2}{\K} is \W{1}-hard. 
 To see this, let $0_1 \in K_1,0_2 \in K_2$ denote the additive identities and $1_1 \in K_1,1_2 \in K_2$ denote the
 multiplicative identities.
 By setting $\alpha_1=(0_1,1_2)$ and $\alpha_2=(1_1,0_2)$,
 Theorem~\ref{thm:non-integral-hard} is applicable and
 \minlin{2}{\K} is \W{1}-hard. 
 This argument can easily be extended to products of several commutative rings.
 The ring $\integers/m \integers$ (i.e. the ring based on standard arithmetic modulo $m$) is isomorphic to a direct product of non-trivial commutative rings whenever
 $m$ is not a prime power. For example, $\integers / 6\integers \cong \integers / 2\integers \times \integers / 3\integers$. Hence, $\minlin{2}{\integers/6 \integers}$ and more generally $\minlin{2}{\integers/m \integers}$ where $m$ is not a prime power is \W{1}-hard.

\section{Conclusions and Discussion}
\label{sec:conclusion-and-discussion}

We have proved that $\minlin{2}{\D}$
is fixed-parameter tractable (with parameter $k$ being the number of unsatisfied
equations) when $\D$ is an efficient Euclidean domain.
We additionally proved that $\minlin{r}{\D}$
is \W{1}-hard when $r \geq 3$ and this result
holds for all rings.
Furthermore, we demonstrated that there exist commutative rings
$\K$ (that are not Euclidean domains) such that $\minlin{2}{\K}$
is \W{1}-hard.

The borderline between fixed-parameter tractable and \W{1}-hard
$\mintwolin$ problems is not clear, 
and this is true even for finite commutative rings. 
Wedderburn's Little Theorem (see, for instance, \cite{Herstein:amm61} for a proof) states that if $\K$ is a finite ring, then either 
(1) $\K$ is a field (and $\minlin{2}{\K}$ is in \FPT) or 
(2) $\K$ contains zero divisors. 
We know that there are $\K$ with zero divisors such that $\minlin{2}{\K}$ is \W{1}-hard, but it is an open question whether the problem is 
always \W{1}-hard when $\K$ contains zero divisors, even in the finite case.
A concrete question is the following: what is the parameterized 
complexity of $\minlin{2}{\integers/4 \integers}$ or more generally
$\minlin{2}{\integers/p^n \integers}$ where $p$ is a prime and $n \geq 2$?
Resolving these cases would give us a complete understanding of $\minlin{2}{\integers/m \integers}$
for every $m$.
However, there are still many
open cases left, even for small commutative rings.
A noticeable example is the four-element commutative ring  $\F_2[x]/(x^2+x)$
whose elements can be viewed as arrays
\[\left( \begin{array}{cc} x & 0 \\ y & x \end{array} \right)\]
with $x,y \in {\mathbb F}_2$.

We suspect that our fixed-parameter tractable algorithm for \mintwolin\
over Euclidean domains can be improved with respect to running time.
The slowest part in it is solving the {\sc Pair Partition Cut} problem.
We solve this problem via a reduction to a finite-domain {\sc MinCSP} problem
that is solved by flow augmentation, but there may be alternative ways of doing this.
However, as the problem is a strict generalization of \textsc{(Edge) Multicut},
a running time of, say, $\bigoh^*(2^{\bigoh(k \log k)})$ or better would be a significant challenge.
There is also room for improvements in 
the $\mintwolin$ algorithms for fields.
Consider our $\bigoh^*(2^{\bigoh(k \log k)})$ time algorithm
for arbitrary fields.
After iterative compression and cleaning,
the problem reduces to the following:
\pbDefP{$\lin{2}{\F}$ Compatibility }
{Two instances $S_1,S_2$ of $\lin{2}{\F}$ and an integer $k$ such that
$V(S_1) \subseteq V(S_2)$,
$\abs{S_1} \leq 3k$, and 
$S_2$ only contains equalities.}
{$k$.}
{Is there a set $Z \subseteq S_2$ such that $\abs{Z} \leq k$ and
$(S_1 \cup S_2) - Z$ is consistent?}
This problem is the bottleneck for our $\minlin{2}{\F}$ algorithm, since
it is the only part that requires more than single-exponential time.
Can it be solved in single-exponential time in $k$?
For the finite field $\F_p$ with $p$ elements
we show that $\minlin{2}{\F_p}$ can be solved in $\bigoh^*((2p)^k)$ time.
Is there an $\bigoh^*(c^k)$ algorithm for $\minlin{2}{\F_p}$, where
$c$ is a universal constant that does not depend on $p$?
Or is there at least a constant $d < 2$
such that $\minlin{2}{\F_p}$ is solvable in $\bigoh^*((dp)^k)$ time?

Another more general question is about the utility of the method of important balanced subgraphs. 
Important separators
are a key component of many classical fpt algorithms for graph separation problems, and important balanced
subgraphs appear to be a significant, and unexpected, generalization of them. It would be interesting
to see more applications of the method. 
We have used it to avoid random sampling of important separators, speeding up our $\mintwolin$ algorithm for fields from $\bigoh^*(2^{\bigoh(k^3)})$ to $\bigoh^*(2^{\bigoh(k \log k)})$, and simplifying the algorithm for Euclidean domains. What other problems can be solved using this method? Can we use it to obtain simpler algorithms or improve upper bounds for other parameterized deletion problems?
Other questions include generalizations or improvements on the method of important balanced subgraphs itself. We have provided the result only for edge deletion problems; is there an equivalent statement for vertex deletion?
Furthermore, the polynomial factor in the running time of the algorithm producing a dominating family is significant, since it comes
from solving an LP given only oracle access to the constraints. However, the optima computed by the LP
are extremal half-integral solutions with an inherent structure that can probably be exploited. A combinatorial method
for computing such optima could substantially improve the polynomial factor. 
Such a result was developed in the algorithm for 0/1/all CSPs by Iwata et al.~\cite{iwata201801all}, where a previous method based on half-integral LP-relaxations was replaced by a linear-time combinatorial solver.
Can a similar method be developed for the \textsc{Rooted Biased Graph Cleaning} problem, perhaps for special cases such as biased graphs coming from group-labelled graphs or
the biased graphs used in the \minlin{2}{\D} algorithm? 

\section*{Acknowledgements}

The second and the fourth authors were supported by
the Wallenberg AI, Autonomous Systems and Software Program (WASP) funded
by the Knut and Alice Wallenberg Foundation. In addition, the second
author was partially supported by the Swedish Research Council (VR)
under grants 2017-04112 and 2021-04371. The first and third authors acknowledge support from the
Engineering and Physical Sciences Research Council (EPSRC, project EP/V00252X/1).

% \bibliographystyle{plain}
% \bibliography{references}

\end{document}